\title{Twin-width VIII: delineation and win-wins}
\titlerunning{Twin-width VIII: delineation and win-wins}
\author{\'{E}douard Bonnet}{Univ Lyon, CNRS, ENS de Lyon, Université Claude Bernard Lyon 1, LIP UMR5668, France \and \url{http://perso.ens-lyon.fr/edouard.bonnet/}}{edouard.bonnet@ens-lyon.fr}{https://orcid.org/0000-0002-1653-5822}{}
\author{Dibyayan Chakraborty}{Univ Lyon, CNRS, ENS de Lyon, Université Claude Bernard Lyon 1, LIP UMR5668, France}{dibyayan.chakraborty@ens-lyon.fr}{}{}
\author{Eun Jung Kim}{Universit\'{e} Paris-Dauphine, PSL University, CNRS UMR7243, LAMSADE, Paris, France}{eun-jung.kim@dauphine.fr}{https://orcid.org/0000-0002-6824-0516}{}
\author{Noleen K\"ohler}{Universit\'{e} Paris-Dauphine, PSL University, CNRS UMR7243, LAMSADE, Paris, France}{noleen.kohler@dauphine.psl.eu}{}{}
\author{Raul Lopes}{Universit\'{e} Paris-Dauphine, PSL University, CNRS UMR7243, LAMSADE, Paris, France}{raul-wayne.teixeira-lopes@dauphine.psl.eu}{https://orcid.org/0000-0002-7487-3475}{}
\author{St\'{e}phan Thomass\'{e}}{Univ Lyon, CNRS, ENS de Lyon, Universit\'{e} Claude Bernard Lyon 1, LIP UMR5668, France}{stephan.thomasse@ens-lyon.fr}{}{}
\authorrunning{\'E. Bonnet, D. Chakraborty, E. J. Kim, N. K\"ohler, R. Lopes, S. Thomassé}
\keywords{Twin-width, intersection graphs, visibility graphs, monadic dependence and stability, first-order model checking}
\newtheorem*{rep@theorem}{\rep@title}
\newcommand{\newreptheorem}[2]{%
\newenvironment{rep#1}[1]{%
 \def\rep@title{#2 \ref{##1}}%
 \begin{rep@theorem}}%
 {\end{rep@theorem}}}
\newcommand{\Ram}[2]{\text{Ram}(#1,#2)}
\newcommand{\Bipram}[2]{\text{Bip-Ram}(#1,#2)}
\newcommand{\col}{\text{cols}}
\newcommand{\mat}[2]{\mathcal M_{#1}(#2)}
\newcommand{\adj}[2]{\text{Adj}_{#1}(#2)}
\tikzset{draw half paths/.style 2 args={%
  decoration={show path construction,
    lineto code={
      \draw [#1] (\tikzinputsegmentfirst) -- 
         ($(\tikzinputsegmentfirst)!0.5!(\tikzinputsegmentlast)$);
      \draw [#2] ($(\tikzinputsegmentfirst)!0.5!(\tikzinputsegmentlast)$)
        -- (\tikzinputsegmentlast);
    }
  }, decorate
}}
\renewcommand{\geq}{\geqslant}
\renewcommand{\leq}{\leqslant}
\newcommand{\ie}{i.e.}
\theoremstyle{definition}
\newcommand{\tww}{\text{tww}}
\newcommand{\tw}{\text{tw}}
\renewcommand{\C}{{\mathcal C}}
\newcommand{\white}{{\sf white}}
\newcommand{\gray}{{\sf gray}}
\newcommand{\black}{{\sf black}}
\newcommand{\reserved}{{\sf reserved}}
\newcommand{\parent}{{\sf parent}}
\newcommand{\gr}{\text{gr}}
\newcommand{\high}{\textsf{high}\xspace}
\newcommand{\low}{\textsf{low}\xspace}
\newcommand{\I}{\mathcal I}
\newcommand\abs[1]{\lvert #1\rvert}
\begin{document}

\maketitle

\begin{abstract}
  We introduce the notion of delineation. 
  A graph class $\mathcal C$ is said~\emph{delineated by twin-width} (or simply, \emph{delineated}) if for every hereditary closure $\mathcal D$ of a subclass of~$\mathcal C$, it holds that $\mathcal D$ has bounded twin-width if and only if $\mathcal D$ is monadically dependent.
  An effective strengthening of \emph{delineation} for a class $\mathcal C$ implies that tractable FO model checking on $\mathcal C$ is perfectly understood: On hereditary closures $\mathcal D$ of subclasses of $\mathcal C$, FO model checking is fixed-parameter tractable (FPT) exactly when $\mathcal D$ has bounded twin-width.  
  Ordered graphs [BGOdMSTT, STOC '22] and permutation graphs [BKTW, JACM '22] are effectively delineated, while subcubic graphs are not.
  On the one hand, we prove that interval graphs, and even, rooted directed path graphs are delineated.
  On the other hand, we observe or show that segment graphs, directed path graphs (with arbitrarily many roots), and visibility graphs of simple polygons are not delineated.

  In an effort to draw the delineation frontier between interval graphs (that are delineated) and axis-parallel two-lengthed segment graphs (that are not), we investigate the twin-width of restricted segment intersection classes.
  It was known that (triangle-free) pure axis-parallel unit segment graphs have unbounded twin-width~[BGKTW, SODA '21].
  We show that $K_{t,t}$-free segment graphs, and axis-parallel $H_t$-free unit segment graphs have bounded twin-width, where $H_t$ is the half-graph or ladder of height $t$.
  In contrast, axis-parallel $H_4$-free two-lengthed segment graphs have unbounded twin-width.
  We leave as an open question whether unit segment graphs are delineated.

  More broadly, we explore which structures (large bicliques, half-graphs, or independent sets) are responsible for making the twin-width large on the main classes of intersection and visibility graphs.  
  Our new results, combined with the FPT algorithm for FO model checking on graphs given with $O(1)$-sequences [BKTW, JACM '22], give rise to a variety of algorithmic win-win arguments.
  They all fall in the same framework: If $p$ is an FO definable graph parameter that effectively functionally upperbounds twin-width on a class $\mathcal C$, then $p(G) \geqslant k$ can be decided in FPT time $f(k) \cdot |V(G)|^{O(1)}$.
  For instance, we readily derive FPT algorithms for \textsc{$k$-Ladder} on visibility graphs of 1.5D terrains, and \textsc{$k$-Independent Set} on visibility graphs of simple polygons.
  This showcases that the theory of twin-width can serve outside classes of bounded twin-width.
\end{abstract}
\maketitle

\section{Introduction}\label{sec:intro}

A~\emph{trigraph} $G$ has a vertex set $V(G)$, and two disjoint edge sets, the black edge set $E(G)$ and the red edge set $R(G)$.  
A~(vertex) \emph{contraction} consists of merging two (non-necessarily adjacent) vertices, say, $u, v$ into a~vertex~$w$, and keeping every existing edge $wz$ black if and only if $uz$ and $vz$ were previously black edges.
The other edges incident to $w$ turn red (if not already), while the rest of the trigraph remains the same.
A~\emph{contraction sequence} of an $n$-vertex (tri)graph $G$ is a sequence of trigraphs $G=G_n, \ldots, G_1=K_1$ such that $G_i$ is obtained from $G_{i+1}$ by performing one contraction.
A~\mbox{\emph{$d$-sequence}} is a contraction sequence wherein all trigraphs have red degree at most~$d$.
The~\emph{twin-width} of $G$, denoted by $\tww(G)$, is the minimum integer~$d$ such that $G$ admits a $d$-sequence.
See~\cref{fig:contraction-sequence} for an illustration of a 2-sequence of a graph.

\begin{figure}[h!]
  \centering
  \resizebox{400pt}{!}{
  \begin{tikzpicture}[
      vertex/.style={circle, draw, minimum size=0.68cm}
    ]
    \def\s{1.2}
    \foreach \i/\j/\l in {0/0/a,0/1/b,0/2/c,1/0/d,1/1/e,1/2/f,2/1/g}{
      \node[vertex] (\l) at (\i * \s,\j * \s) {$\l$} ;
    }
    \foreach \i/\j in {a/b,a/d,a/f,b/c,b/d,b/e,b/f,c/e,c/f,d/e,d/g,e/g,f/g}{
      \draw (\i) -- (\j) ;
    }

    \begin{scope}[xshift=3 * \s cm]
    \foreach \i/\j/\l in {0/0/a,0/1/b,0/2/c,1/0/d,2/1/g}{
      \node[vertex] (\l) at (\i * \s,\j * \s) {$\l$} ;
    }
    \foreach \i/\j/\l in {1/1/e,1/2/f}{
      \node[vertex,opacity=0.2] (\l) at (\i * \s,\j * \s) {$\l$} ;
    }
    \node[draw,rounded corners,inner sep=0.01cm,fit=(e) (f)] (ef) {ef} ;
    \foreach \i/\j in {a/b,a/d,b/c,b/d,b/ef,c/ef,c/ef,d/g,ef/g,ef/g}{
      \draw (\i) -- (\j) ;
    }
    \foreach \i/\j in {a/ef,d/ef}{
      \draw[red, very thick] (\i) -- (\j) ;
    }
    \end{scope}

    \begin{scope}[xshift=6 * \s cm]
    \foreach \i/\j/\l in {0/1/b,0/2/c,2/1/g,1/1/ef}{
      \node[vertex] (\l) at (\i * \s,\j * \s) {$\l$} ;
    }
    \foreach \i/\j/\l in {0/0/a,1/0/d}{
      \node[vertex,opacity=0.2] (\l) at (\i * \s,\j * \s) {$\l$} ;
    }
    \draw[opacity=0.2] (a) -- (d) ;
    \node[draw,rounded corners,inner sep=0.01cm,fit=(a) (d)] (ad) {ad} ;
    \foreach \i/\j in {ad/b,b/c,b/ad,b/ef,c/ef,c/ef,ef/g,ef/g}{
      \draw (\i) -- (\j) ;
    }
    \foreach \i/\j in {ad/ef,ad/g}{
      \draw[red, very thick] (\i) -- (\j) ;
    }
    \end{scope}

    \begin{scope}[xshift=9 * \s cm]
    \foreach \i/\j/\l in {0/2/c,2/1/g,0.5/0/ad}{
      \node[vertex] (\l) at (\i * \s,\j * \s) {$\l$} ;
    }
    \foreach \i/\j/\l in {0/1/b,1/1/ef}{
      \node[vertex,opacity=0.2] (\l) at (\i * \s,\j * \s) {$\l$} ;
    }
    \draw[opacity=0.2] (b) -- (ef) ;
    \node[draw,rounded corners,inner sep=0.01cm,fit=(b) (ef)] (bef) {bef} ;
    \foreach \i/\j in {ad/bef,bef/c,bef/ad,c/bef,c/bef,bef/g}{
      \draw (\i) -- (\j) ;
    }
    \foreach \i/\j in {ad/bef,ad/g,bef/g}{
      \draw[red, very thick] (\i) -- (\j) ;
    }
    \end{scope}

    \begin{scope}[xshift=11.7 * \s cm]
    \foreach \i/\j/\l in {0/2/c}{
      \node[vertex] (\l) at (\i * \s,\j * \s) {$\l$} ;
    }
     \foreach \i/\j/\l in {0.5/0/adg,0.5/1.1/bef}{
      \node[vertex] (\l) at (\i * \s,\j * \s) {\footnotesize{\l}} ;
    }
    \foreach \i/\j in {c/bef}{
      \draw (\i) -- (\j) ;
    }
    \foreach \i/\j in {adg/bef}{
      \draw[red, very thick] (\i) -- (\j) ;
    }
    \end{scope}

    \begin{scope}[xshift=13.7 * \s cm]
    \foreach \i/\j/\l in {0.5/0/adg,0.5/1.1/bcef}{
      \node[vertex] (\l) at (\i * \s,\j * \s) {\footnotesize{\l}} ;
    }
    \foreach \i/\j in {adg/bcef}{
      \draw[red, very thick] (\i) -- (\j) ;
    }
    \end{scope}

    \begin{scope}[xshift=15 * \s cm]
    \foreach \i/\j/\l in {1/0.75/abcdefg}{
      \node[vertex] (\l) at (\i * \s,\j * \s) {\tiny{\l}} ;
    }
    \end{scope}
    
  \end{tikzpicture}
  }
  \caption{A 2-sequence witnesses that the initial 7-vertex graph has twin-width at most~2.}
  \label{fig:contraction-sequence}
\end{figure}
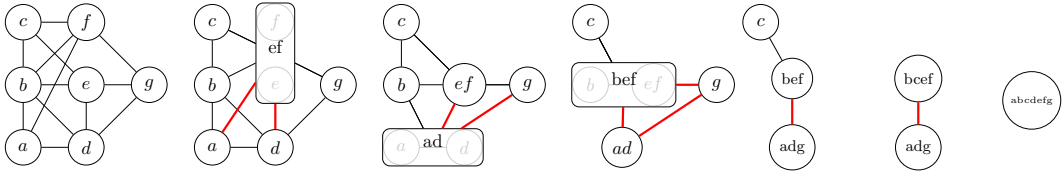

A graph class $\mathcal C$ has then \emph{bounded twin-width} if there is a constant $t$ such that every graph $G \in \mathcal C$ satisfies $\tww(G) \leqslant t$. 
In the present paper, the following characterization will be more important than the original definition. 
\begin{theorem}[\cite{twin-width4}]\label{thm:mixed-number-gen}
  A class $\mathcal C$ has bounded twin-width if and only if there is an integer $k$ such that every graph of $\mathcal C$ admits an adjacency matrix \emph{without} rank-$k$ division, i.e., $k$-division such that every cell has combinatorial rank at least~$k$.
\end{theorem}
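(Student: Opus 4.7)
The plan is to prove the two directions separately, linking a contraction sequence to a vertex ordering of the adjacency matrix in one direction, and using a Marcus--Tardos style extraction argument in the other. The guiding intuition is that red edges in a trigraph record exactly which pairs of parts interact in a ``mixed'' way, and a rank-$k$ division is the matrix avatar of an unavoidable obstruction to making these interactions globally sparse.

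For the forward direction, I would take a $d$-sequence $G = G_n, \dots, G_1$ and convert it into a contraction tree $T$ whose leaves are the vertices of $G$ and whose internal nodes correspond to the performed contractions. Ordering the leaves of $T$ by any DFS traversal produces an ordering $\sigma$ of $V(G)$; I would then show that in the adjacency matrix $A_\sigma$ no rank-$f(d)$ division exists, for an appropriate $f$. The key point is that any interval of rows/columns of $A_\sigma$ corresponds to the vertex set below some node $x$ of $T$, and inside any cell of a $k$-division, only the ancestors of $x$ contribute ``mixed'' rows, whose number is bounded by the red degree along the path from $x$ to the root, hence by $O(d \log n)$. A careful pigeonhole on the contraction tree turns the bound on red degree into a bound on the combinatorial rank of a cell, showing that a rank-$k$ division with $k$ large enough forces the red degree to exceed $d$ at some step, a contradiction.

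For the converse, suppose every $G \in \mathcal C$ admits an adjacency matrix without rank-$k$ division. I would construct a contraction sequence of bounded width by iterating a local ``coarsening'' step on the ordered vertex set: starting from the singleton partition, I would pair up consecutive parts greedily while maintaining bounded red degree. The combinatorial engine here is a strengthening of the Marcus--Tardos theorem to higher-rank patterns: if the coarsened partition has too many mixed pairs (i.e., too many red edges), then there must be a submatrix that itself contains a rank-$k$ division, contradicting the hypothesis. Iterating this step $O(\log n)$ times collapses the entire vertex set into a single part while never exceeding a red-degree bound depending only on $k$.

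The main obstacle is the backward direction, specifically formulating and applying the right Marcus--Tardos style statement for rank-$k$ patterns rather than the classical all-ones $k \times k$ patterns. The classical theorem only controls submatrix containment of a fixed 0/1 pattern, whereas here one needs a ``robust'' version quantifying over all sufficiently rich submatrices simultaneously; this typically requires an induction on $k$ and a careful accounting of how many mixed column-types each part can create under the ordering. Once this combinatorial lemma is in place, the rest of the argument, including the translation between red degree in trigraphs and mixed cell count in matrices, is essentially bookkeeping on the contraction tree.
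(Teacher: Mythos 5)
This statement is not proved in the paper at all: it is imported verbatim from the earlier paper in the series (\emph{Twin-width IV}), so there is no in-paper argument to compare yours against. Your plan does follow the general strategy of that line of work (contraction tree / laminar vertex orders in one direction, a Marcus--Tardos extraction in the other), but as written both directions have real gaps.

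In the forward direction, the claim that ``any interval of rows/columns of $A_\sigma$ corresponds to the vertex set below some node $x$ of $T$'' is false: an interval of a DFS leaf order decomposes into a union of up to $2\cdot\mathrm{depth}(T)$ maximal subtrees, and the contraction tree of a $d$-sequence can have depth $\Theta(n)$. Worse, the intermediate bound you derive, $O(d\log n)$ distinct rows per cell, cannot yield the theorem: you need the absence of a rank-$k$ division for some $k=f(d)$ \emph{independent of $n$}. The correct argument chooses $\sigma$ so that every part of every trigraph $G_i$ is a $\sigma$-interval (possible since the parts form a laminar family), then, given a $k$-division with all cells mixed, locates the first trigraph in which some part contains two consecutive row parts of the division; that part is then mixed with, hence joined by red edges to, at least $(k-2)/2$ parts, forcing $k\leq 2d+2$. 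No logarithmic loss appears, and no pigeonhole on the tree is needed.

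In the backward direction you correctly identify the crux but leave it entirely unproved: the ``strengthening of Marcus--Tardos to higher-rank patterns'' \emph{is} the theorem, not a lemma you can defer. The known proof does not need a new robust version proved by induction on $k$; it applies the classical Marcus--Tardos theorem to an auxiliary $0,1$-matrix whose entry $(i,j)$ records whether the $(i,j)$-cell of the current division is mixed (resp.\ of high rank). A $k_0\times k_0$ all-ones grid minor in that auxiliary matrix is exactly a rank division of the original matrix, so the hypothesis bounds the number of mixed cells per row and column of every division, which in turn bounds the red degrees along the dyadic merging scheme you describe. Without spelling out this reduction (and the bookkeeping showing that merging two consecutive parts creates red edges only towards parts sharing a mixed cell), the backward direction is a statement of intent rather than a proof.
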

In the previous theorem, a $k$-division of a matrix is a partition of its column (resp.~row) set into $k$ intervals, called \emph{column} (resp.~\emph{row}) \emph{parts}, of consecutive columns (resp.~rows).
A~$k$-division naturally defines $k^2$ \emph{cells} (contiguous submatrices) made by the entries at the intersection of a column part with a row part. 
\cref{thm:mixed-number-gen} is effective: There is a computable function $f$, such that, given a vertex ordering along which the adjacency matrix of a graph $G$ has no rank-$k$ division, one can efficiently find an $f(k)$-sequence for $G$, witnessing that $\tww(G) \leqslant f(k)$. 

Classes of graphs (or more generally of binary structures, since the definition of twin-width extends to them) with bounded twin-width include classes with bounded clique-width, $H$-minor free graphs for any fixed $H$, posets with antichains of bounded size, strict subclasses of permutation graphs, map graphs, bounded-degree string graphs~\cite{twin-width1}, as well as $\Omega(\log n)$-subdivisions of $n$-vertex graphs, and some particular classes of cubic expanders~\cite{twin-width2}.
In contrast, (sub)cubic graphs, interval graphs, triangle-free unit segment graphs, unit disk graphs have unbounded twin-width~\cite{twin-width2}.

The main algorithmic application of twin-width is that first-order (FO) model checking, that is, deciding if a first-order sentence $\varphi$ holds in a graph $G$, can be decided in fixed-parameter time (FPT) $f(|\varphi|,d) \cdot |V(G)|$ for some computable function $f$, when given a $d$-sequence of $G$~\cite{twin-width1}. 
As for almost all classes known to have bounded twin-width, one can compute $O(1)$-sequences in polynomial time for members of the class, the latter result unifies and extends several known results~\cite{Flum01,Gajarsky14,Ganian15,Gajarsky15,Guillemot14} for hereditary (i.e., closed under removing vertices), but not necessarily monotone, classes.
Faster FPT algorithms with (almost) single-exponential dependence in the parameter are known for specific FO definable problems (like \textsc{$k$-Independent Set} and \textsc{$k$-Dominating Set}) on graphs of bounded twin-width given with an $O(1)$-sequence~\cite{twin-width3}.  

For monotone (i.e., closed under removing vertices and edges) classes, the FPT algorithm of Grohe, Kreutzer, and Siebertz~\cite{Grohe17} for FO model checking on nowhere dense classes, is complemented by W$[1]$-hardness on classes that are somewhere dense (i.e., \emph{not} nowhere dense)~\cite{Dvorak13}, and even AW$[*]$-hardness on classes that are \emph{effectively} somewhere dense~\cite{Kreutzer09}.
The latter two results imply that, for monotone classes, FO model checking is unlikely to be FPT beyond nowhere dense classes.
Thus the classification of monotone classes admitting an FPT FO model checking is over.
However such a classification remains an active line of work for the more general hereditary classes of graphs and binary structures~\cite{GajarskyHOLR20,GajarskyKNMPST20,Eickmeyer20}.
It is conjectured (see for instance \cite[Conjecture 8.2]{GajarskyHOLR20}) that:
\begin{conjecture}\label{conj:fomc}
  For every hereditary class $\mathcal C$ of structures, FO model checking is FPT on $\mathcal C$ if and only if $\mathcal C$ is monadically dependent.\footnote{A model-theoretic notion which roughly says that not every graph $G$ can be built from a nondeterministic $O(1)$-coloring of some $\mathcal S \in \mathcal C$ by means of a first-order formula $\varphi(x,y)$, in the relations of $\mathcal S$ and the added colors, imposing the edge set of $G$; see~\cref{sec:prelim} for a definition.}
\end{conjecture}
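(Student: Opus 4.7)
The conjecture splits into two directions. For the forward direction, suppose $\mathcal C$ is not monadically dependent. Then, by definition, one can first-order interpret all finite graphs in $O(1)$-vertex-colored members of $\mathcal C$ via a fixed formula $\varphi(x,y)$. Since FO interpretations preserve the parameterized complexity of FO model checking up to a constant blow-up in the sentence, and since FO-MC on arbitrary finite graphs is AW$[*]$-hard, an FPT algorithm on $\mathcal C$ would violate standard parameterized complexity assumptions. This direction is essentially captured by the hardness machinery underlying the results of Kreutzer and coauthors that the paper cites.

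The harder direction — monadically dependent implies FPT FO-MC — is the real content. Working within the paper's framework, I would attempt a two-stage plan. \textbf{Stage A (structure):} show that every hereditary class is delineated, so that on any such $\mathcal C$ monadic dependence is equivalent to bounded twin-width. \textbf{Stage B (algorithm):} invoke the FPT algorithm for FO-MC on graphs supplied with an $O(1)$-sequence; to turn this into a bona fide algorithm rather than a non-constructive existence statement, combine it with an \emph{effective} delineation, i.e., a polynomial-time procedure that produces a bounded-width contraction sequence from the input graph.

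This plan, however, is provably insufficient in its pure form. The paper recalls that (sub)cubic graphs have unbounded twin-width, yet every class of bounded degree is nowhere dense, hence monadically stable and in particular monadically dependent. So universal delineation fails, and the conjecture cannot be settled through twin-width alone. The main obstacle is therefore to unify two qualitatively distinct sources of tractability — sparsity (Grohe--Kreutzer--Siebertz for nowhere dense classes) and twin-width (for dense but monadically dependent classes) — into a single structural theorem. A natural refined plan is to prove a dichotomy for every hereditary monadically dependent class: a canonical nowhere dense "skeleton" is interpretable in the class, while the remaining dense behavior is captured by a bounded number of first-order flips of a bounded-twin-width structure. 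Finding such a decomposition, and making it algorithmic, is where I expect the true difficulty to lie; it likely demands a new model-theoretic analogue of bounded expansion that is sensitive to dense but well-behaved structure.
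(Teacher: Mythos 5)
The statement you are addressing is \Cref{conj:fomc}, which the paper explicitly presents as a \emph{conjecture} (attributed to earlier work, e.g.\ Conjecture~8.2 of Gajarsk\'y et al.). The paper offers no proof of it, and none is known; the paper's contribution is only to verify the conjecture on specific classes (interval graphs, rooted directed path graphs, etc.) via delineation. Your text is likewise not a proof: you sketch a plan, correctly observe that it fails (Stage~A is refuted by subcubic graphs, exactly as the paper notes in its discussion of non-delineated classes), and conclude that the real difficulty remains. That is an accurate assessment of the state of the art, but it leaves the statement entirely unproven, so there is nothing here that could stand in for a proof.

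One further gap worth flagging: even your ``forward direction'' is overclaimed. Monadic independence means $\mathcal C$ \emph{transduces} the class of all graphs, and a transduction involves a nondeterministic choice of unary colorings; this does not by itself yield a complexity-theoretic reduction. The paper's \Cref{thm:AW-hard} requires that $\mathcal C$ \emph{efficiently interprets} all graphs --- one must be able to compute, in polynomial time, a preimage structure in $\mathcal C$ --- which is strictly stronger than monadic independence. For monotone classes the hardness side is settled (somewhere dense $\Rightarrow$ W$[1]$-hard), but for general hereditary classes both directions of the conjecture are open. So neither direction of your argument closes.
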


The missing element for an FPT FO model-checking algorithm on any class of bounded twin-width is a polynomial-time algorithm and a computable function $f$, that given a constant integer bound $c$ and a graph $G$, either finds an $f(c)$-sequence for $G$, or correctly reports that $\tww(G) > c$.
The runtime of the algorithm could be $n^{g(c)}$, for some function $g$.
However to get an FPT algorithm in the combined parameter \emph{size of the sentence + bound on the twin-width}, one would further require that the approximation algorithm takes FPT time in $c$ (now thought of as a parameter), i.e., $g(c) n^{O(1)}$.
Such an algorithm exists on ordered graphs (more generally, ordered binary structures)~\cite{twin-width4}, graphs of bounded clique-width, proper minor-closed classes~\cite{twin-width1}, but not on general graphs.
Let us observe that exactly computing the twin-width, and even distinguishing between $\tww(G) = 4$ and $\tww(G) = 5$, is NP-complete~\cite{Berge21}.

A parallel can be drawn with treewidth and clique-width.
Both are NP-hard to compute~\cite{Arnborg87,Fellows09}, but --unlike twin-width-- treewidth admits an FPT algorithm~\cite{Bodlaender96}, while clique-width can be approximated in FPT time~\cite{Oum08}.
Celebrated theorems by Courcelle~\cite{Courcelle90} and by Courcelle, Makowsky, and Rotics~\cite{Courcelle00}, together with the latter two results, imply that monadic second-order (MSO) model checking is FPT on classes (of incidence graphs) with bounded treewidth, and on classes with bounded clique-width.

In this paper, we get around the two main caveats of using twin-width for algorithm design.
Namely:
\begin{compactitem}
\item an FPT (or XP) approximation of twin-width is still missing, and
\item a priori \emph{only} classes of bounded twin-width are concerned.  
\end{compactitem}
We consider geometric graph classes with \emph{unbounded} twin-width: interval graphs, (rooted) directed path graphs, segment graphs, visibility graphs of polygons and terrains.
For all these classes, a vertex-ordering procedure either comes naturally\footnote{Think, ordering by increasing left endpoints in interval graphs, or following the boundary of polygons or terrains.} or can be worked out and efficiently computed.
Two cases are to be considered.
Either the obtained adjacency matrix has no rank-$k$ division --and we get a favorable contraction sequence by~\cref{thm:mixed-number-gen}-- or it does have such a division. 
In the latter case, a large structured object of variable complexity may be found, such as a biclique, a half-graph (or ladder), or even an obstacle to an FPT FO model checking in the form of a \emph{transversal pair of half-graphs} (or \emph{transversal pair}, for short) or some variant of it; see the 3 leftmost graphs in~\cref{fig:bip}, \cref{sec:prelim} for a formal definition, and for why transversal pairs indeed are such obstacles.

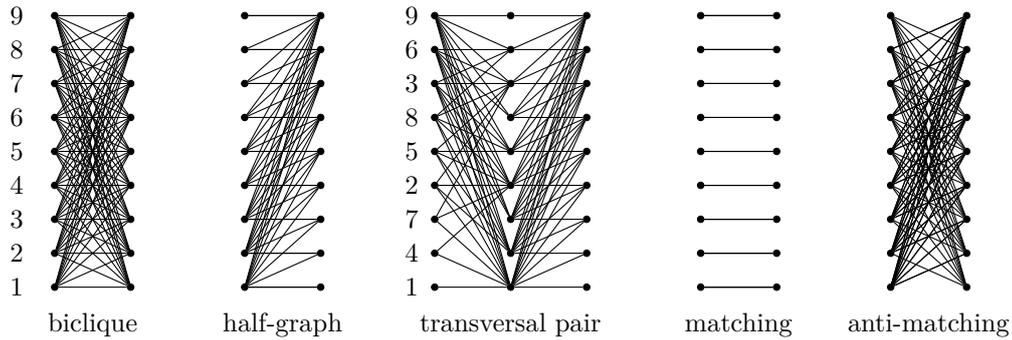
\begin{figure}
  \centering
  \begin{tikzpicture}[vertex/.style={draw,fill,circle,inner sep=0.03cm}]
    \def\n{9}
    \def\s{1}
    \def\z{0.45}
    \foreach \i in {1,...,\n}{
       \node at (-0.5,\i * \z - \n * \z - \z) {\i} ;
    }

    \foreach \symb/\symbp/\symbpp/\xsh/\sign in {{<}/{>}/{=}/0/biclique,{>}/{=}/{=}/2.5/{half-graph},{=}/{=}/{=}/8.5/matching,{<}/{>}/{>}/11/{anti-matching}}{
      \begin{scope}[xshift=\xsh cm]
        \node at (\s / 2,-\n * \z-0.5) {\sign} ;
    \foreach \i in {1,...,\n}{
      \node[vertex] (l\i) at (0,-\i * \z) {} ;
      \node[vertex] (r\i) at (\s,-\i * \z) {} ;
    }
    \foreach \i in {1,...,\n}{
      \foreach \j in {1,...,\n}{
        \ifthenelse{\i \symb \j}{\draw (l\i) -- (r\j);}{}
        \ifthenelse{\i \symbp \j}{\draw (l\i) -- (r\j);}{}
        \ifthenelse{\i \symbpp \j}{\draw (l\i) -- (r\j);}{}
      }
    }
      \end{scope}
    }
    \begin{scope}[xshift=6 cm]
        \node at (0,-\n * \z-0.5) {transversal pair} ;
        \foreach \i in {1,...,\n}{
      \node[vertex] (c\i) at (-\s,-\i * \z) {} ;
      \node[vertex] (a\i) at (0,-\i * \z) {} ;
      \node[vertex] (b\i) at (\s,-\i * \z) {} ;
        }
        \foreach \i/\j in {9/1,8/4,7/7,6/2,5/5,4/8,3/3,2/6,1/9}{
          \node at (-\s-0.3,-\i * \z) {$\j$} ;
        }
    \foreach \i in {1,...,\n}{
      \foreach \j in {1,...,\n}{
        \ifthenelse{\i = \j}{\draw (a\i) -- (b\j);}{}
        \ifthenelse{\i > \j}{\draw (a\i) -- (b\j);}{}
      }
    }
    \foreach \i/\j in {9/9,8/9,8/6,7/9,7/6,7/3, 6/9,6/6,6/3,6/8, 5/9,5/6,5/3,5/8,5/5, 4/9,4/6,4/3,4/8,4/5,4/2, 3/9,3/6,3/3,3/8,3/5,3/2,3/7, 2/9,2/6,2/3,2/8,2/5,2/2,2/7,2/4, 1/9,1/6,1/3,1/8,1/5,1/2,1/7,1/4,1/1}{
      \draw (c\i) -- (a\j) ;
    }
    \end{scope}
  \end{tikzpicture}
  \caption{Biclique, half-graph (or ladder), transversal pair of half-graphs, matching, anti-matching, all of height 9. Bicliques and half-graphs are semi-induced by default. The number next to each leftmost vertex $v$ of the transversal pair indicates the height of the neighbor of $v$ in the central column which is not also a neighbor of the vertex just below $v$.}
  \label{fig:bip}
\end{figure}

\medskip

\textbf{Delineation.}
When, on a class $\mathcal C$, a~(variant of a) transversal pair can systematically be found as a result of unbounded twin-width, it shows that the classification of FPT FO model checking for hereditary subclasses of $\mathcal C$ is entirely settled by the algorithm on graphs of bounded twin-width~\cite{twin-width1}. 

More generally if for every hereditary closure $\mathcal D$ of a subclass of $\mathcal C$,\footnote{The reason we do not simply quantify over hereditary subclasses of $\mathcal C$ is to have a notion that is also meaningful when $\mathcal C$ is not hereditary.} $\mathcal D$ has bounded twin-width if and only if $\mathcal D$ is monadically dependent, we say that $\mathcal C$ is \emph{delineated by twin-width} (or simply, \emph{delineated}).
As every class of bounded twin-width is monadically dependent~\cite{twin-width1}, equivalently, we can only require that $\mathcal D$ has bounded twin-width if $\mathcal D$ is monadically dependent.

Although not stated in those terms, permutation graphs were already proven to be delineated~\cite{twin-width1}, as well as ordered graphs~\cite{twin-width4}.
We add interval graphs and rooted directed path graphs (see~\cref{subsec:rdpg} for a definition) to the list of delineated classes.
For every hereditary subclass of these classes the classification of FPT FO model checking, \cref{conj:fomc}, is now provably settled.\footnote{We actually need an effective strengthening of delineation that also holds for these classes and will be defined in~\cref{sec:prelim}.}

\begin{theorem}\label{thm:delineation-informal}
  Interval graphs, and more generally rooted directed path graphs, are delineated.
\end{theorem}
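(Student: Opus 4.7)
\medskip
\noindent\textbf{Proof proposal.} Since every class of bounded twin-width is known to be monadically dependent, it suffices to establish the converse: if a hereditary closure $\mathcal D$ of a subclass of interval / rooted directed path graphs has unbounded twin-width, then $\mathcal D$ is not monadically dependent. I will prove this by exhibiting, effectively, transversal pairs of half-graphs of unbounded height in $\mathcal D$. The starting point is a canonical vertex ordering: for interval graphs, the order by nondecreasing left endpoint; for rooted directed path graphs, an order inherited from a left-to-right DFS of the host tree, so that the vertices whose representing path starts in any fixed subtree form a contiguous block. Both orderings are computable in polynomial time from the geometric/tree witness.

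Assume $G \in \mathcal D$ has huge twin-width. By \cref{thm:mixed-number-gen}, the adjacency matrix $M$ of $G$ along the canonical ordering admits a rank-$k$ division for $k$ arbitrarily large. Pick one such division and fix a single cell $X \times Y$ with $X,Y$ intervals of the ordering. The heart of the argument is that, thanks to the ordering, adjacency between $i \in X$ and $j \in Y$ is governed by a \emph{one-sided} inequality between a single parameter of $i$ and a single parameter of $j$ (for interval graphs, $r_i \ge l_j$ whenever $i$ precedes $j$; for rooted directed path graphs, a comparable statement about the top end of the paths relative to their lowest common ancestor). Consequently, each row of the cell is a monotone pattern $1^*0^*$ (or $0^*1^*$) in the column order; combinatorial rank~$\ge k$ then forces $\Omega(k)$ distinct switch points, hence a semi-induced half-graph of height~$\Omega(k)$ inside $X \times Y$.

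With this local structure in hand, the remaining task is to \emph{upgrade} half-graphs, which are by themselves not an obstruction to monadic dependence, into a transversal pair, which is. I plan a Ramsey-style extraction on a column part $C$ and on the two surrounding row parts $R^- < C < R^+$: on one side I harvest a half-graph of height $\Omega(\sqrt k)$ from $R^- \times C$, on the other side a half-graph from $C \times R^+$, and then I sub-select indices in $C$ on which the two orderings induced by the two half-graphs coincide (or realise a fixed permutation). Such a synchronisation is possible precisely because in interval graphs the endpoint parameters controlling the left-adjacency pattern and the right-adjacency pattern of a single central vertex are linked by a single coordinate (the interval), and similarly in rooted directed path graphs by a single tree-path. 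The resulting superposition is a transversal pair of half-graphs of height $\Omega(k^{1/4})$, which first-order interprets all graphs on the ``central'' vertex set and therefore witnesses non-monadic-dependence of~$\mathcal D$.

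\medskip
\noindent\textbf{Expected main obstacle.} Steps 1 and 3 are essentially bookkeeping, and the rank-to-half-graph step is standard once the ordering is chosen. The genuine difficulty is Step 4, the synchronisation of the two half-graphs around a central column part. Producing two unrelated half-graphs is cheap; forcing them to share the same indexing of the middle vertices, so as to form the $\textsf{transversal}$ pattern depicted in \cref{fig:bip}, requires exploiting that the geometric object attached to a middle vertex controls its neighbourhoods on both sides simultaneously. For interval graphs this is a one-parameter coupling and will be manageable with a single Ramsey application; for rooted directed path graphs, the coupling is mediated by the tree structure and I expect to need an additional Ramsey layer over the ancestor relation, which will be the technically heaviest part of the argument.
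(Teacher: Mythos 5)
Your overall architecture (canonical vertex order, rank division via \cref{thm:mixed-number-gen}, extraction of a transversal pair, monadic independence via the universality of transversal pairs) is exactly the paper's, and you correctly identify the synchronisation of the two half-graphs as the crux. But your plan for that step points in the wrong direction. A transversal pair is \emph{not} obtained by making the two orders induced on the middle part ``coincide (or realise a fixed permutation)'': if the two orders agree, or differ by any monotone re-indexing --- which is precisely what a Ramsey or Erd\H{o}s--Szekeres extraction will hand you --- the resulting three-layer structure is a consistently ordered chain of half-graphs, which is monadically dependent and even has bounded twin-width (the paper notes this explicitly when discussing why clique-width cannot replace twin-width). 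What makes $T_t$ universal is that the middle vertices are doubly indexed by $[t]\times[t]$ and the two sides see them in the two \emph{transposed} lexicographic orders $(i,j)\leqslant_{\text{lex}}$ and $(j,i)\leqslant_{\text{lex}}$; this crossing is what lets a transduction recover an arbitrary permutation and hence all ordered bipartite matchings. You cannot extract this crossing generically --- you have to build it. The paper's \cref{lem:transversalPairInIntervalGraphs} does so by bookkeeping rather than Ramsey: it places $b_{i,j}$ in part number $(j-1)t+i$ and $a_{i,j}$ in part number $(i-1)t+j$ of a $t^2\times t^2$ grid of mixed cells, and --- crucially --- the second half-graph does not come from the rank division at all, but from the left-endpoint order together with the \emph{minimality} of the interval representation, which supplies blocker vertices $c_{i,j}$ separating consecutive left endpoints. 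As written, your Step~4 would terminate with a degenerately coupled pair and prove nothing.

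The second gap is that you drastically underestimate the rooted directed path case. Your claim that adjacency there is ``governed by a one-sided inequality between a single parameter'' of each vertex fails as soon as the representing paths branch off a common spine: a vertex of the middle layer can have its path descend into one of many incomparable subtrees below a node $p$, and no single linear parameter controls both its left- and right-neighbourhoods. The paper's \cref{thm:rooted_directed_path_graphs_delineation} needs a designated path $P$ through all high endpoints, a tie-broken lex-DFS order, and a three-way case analysis on whether the relevant endpoints form long chains along $P$ or concentrate at a single node; in the concentrated case it can only produce a \emph{generalized} transversal pair $T_{t,2}$ (with two extra matching layers routed through the children of $p$), which is why \cref{thm:tk-mi} is stated for $T_{t,\ell}$ rather than just $T_t$. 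An ``additional Ramsey layer over the ancestor relation'' does not substitute for this: the output object itself has to change.
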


In contrast, we rule out delineation for directed path graphs (with multiple roots), intersection graphs of pure axis-parallel segments with two distinct lengths, and visibility graphs of simple polygons.
If segment graphs and visibility graphs of simple polygons do not yield in their subfamilies of unbounded twin-width complex enough structures to settle~\cref{conj:fomc}, unbounded twin-width still imply in those classes that some other graph parameters are unbounded.
This gives rise to a win-win approach to compute these parameters.
To give some context, we first draw a parallel with what happens with treewidth.  

\medskip

\textbf{Treewidth win-wins.}
The algorithmic use of a parameter like treewidth extends beyond classes wherein treewidth is bounded.
Any problem admitting an FPT algorithm parameterized by treewidth (like MSO definable problems~\cite{Courcelle90}), and a trivial answer (such as a systematic YES or a systematic NO) when the treewidth is large, subjects itself to a straightforward win-win argument.
Either the treewidth is below a specific threshold and one runs the supposed FPT algorithm, or the treewidth is above that threshold, and one reports the corresponding trivial answer.
In the case of treewidth, the latter answer is most often ensured by the existence of a large grid minor (implied for instance by~\cite{ROBERTSON198692,RobertsonST94,ChuzhoyT21}).

In proper minor-closed classes, and even more so in planar graphs, the dependence between treewidth and the size of a largest grid minor is sharply understood. 
A planar graph $G$ has a maximum square grid minor of dimension $\Theta(\tw(G)) \times \Theta(\tw(G))$~\cite{RobertsonST94}.
This is at the basis of the so-called \emph{bidimensionality} theory~\cite{bidim}.
Since a problem like~\textsc{$k$-Vertex Cover}\footnote{where one seeks a subset of $k$ vertices incident to every edge} admits a $2^{\tw(G)} n^{O(1)}$-time algorithm~\cite{complexity} and a systematic NO answer in presence of a, say, $(2 \sqrt k+1) \times (2 \sqrt k+1)$ grid minor, one then derives for this problem an FPT algorithm running in time $2^{O(\sqrt k)} n^{O(1)}$ in planar graphs.

Let us forget one moment the intermediary role of the grid minor.
Efficiently computing a parameter $p(G)$ --like the \emph{vertex cover number $\tau(G)$}-- can boil down to establishing an upperbound of the form $\tw(G) \leqslant f(p(G))$ for some function $f$ --such as, in the previous example and the special case of planar graphs, $\tw(G) \leqslant c \cdot \sqrt{\tau(G)}$ for some constant $c$.

\medskip

\textbf{Twin-width win-wins.}
The second main motivation of the present paper (besides delineation) is to explore such upperbounds, and resultant win-wins, with twin-width in place of treewidth.
This endeavor is incomparable with its treewidth counterpart.
On the one hand, bounded twin-width being strictly more general than bounded treewidth (and than bounded clique-width), twin-width can more often be upperbounded by another parameter.
On the other hand, due to its greater generality, fewer problems are FPT in classes of bounded twin-width than in classes of bounded treewidth.
Notably, any problem expressible in FO falls in the former category, while any problem definable in the richer MSO falls in the latter. 

Given two graph parameters $p, q$, and a graph class $\mathcal C$, we will write \emph{$p \sqsubseteq q$ on $\mathcal C$} to signify that there is a computable function $f$ such that $\forall G \in \mathcal C$, $p(G) \leqslant f(q(G))$.
By a similar argument to what was presented in the previous paragraphs, one gets the following.

\begin{theorem}[informal]\label{thm:inf-win-win}
  Let $\mathcal C$ be a graph class and $p$ be a graph invariant such that
  \begin{compactenum}
  \item \label{it1} computing $p$ is FPT in the combined parameter $p+\tww$ on $\mathcal C$, and
  \item \label{it2} $\tww \sqsubseteq p$ on $\mathcal C$.
  \end{compactenum}
  Then, computing $p$ is FPT on $\mathcal C$.  
\end{theorem}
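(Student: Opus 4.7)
The plan is a direct composition of the two hypotheses: the running-time bound from~(\ref{it1}) depends on both $p(G)$ and $\tww(G)$, and~(\ref{it2}) lets me absorb the second dependence into a (larger) computable function of the first. Concretely, on input $G \in \mathcal{C}$ I would simply run the FPT algorithm promised by~(\ref{it1}); by definition, it terminates within time $f_1(p(G) + \tww(G)) \cdot |V(G)|^{O(1)}$ for some computable $f_1$. Note that this algorithm need not be told $p(G)$ or $\tww(G)$ in advance, since the bound is a property of its execution on $G$.

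By~(\ref{it2}) there is a computable $f_2$ such that $\tww(G) \leq f_2(p(G))$ for every $G \in \mathcal{C}$. Substituting this into the preceding running-time bound yields $f_1(p(G) + f_2(p(G))) \cdot |V(G)|^{O(1)} = h(p(G)) \cdot |V(G)|^{O(1)}$, where $h(x) := f_1(x + f_2(x))$ is computable. This is precisely FPT in $p$ on $\mathcal{C}$, as claimed.

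The one place requiring care, and the main (mild) obstacle, is the precise reading of~(\ref{it1}). If in a concrete instantiation the FPT algorithm is only known to run when a low-red-degree contraction sequence of $G$ is provided as auxiliary input, then I would additionally need to produce one. In that case I would iterate $d = 1, 2, \ldots$, at each step attempting to find a $d$-sequence with whatever sequence-finding routine is available on $\mathcal{C}$, and invoking the algorithm of~(\ref{it1}) as soon as a sequence is obtained. Assumption~(\ref{it2}) guarantees that a sequence of red-degree at most $f_2(p(G))$ exists, so the search halts by the time $d$ reaches $f_2(p(G))$, and the overall running time remains of the form $h'(p(G)) \cdot |V(G)|^{O(1)}$ for a computable $h'$. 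Either way, computing $p$ on $\mathcal{C}$ is FPT in $p$.
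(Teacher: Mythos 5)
Your opening composition is the right idea and is essentially how the paper argues: run an algorithm whose cost is controlled by $p+\tww$, then use $\tww\sqsubseteq p$ to collapse the bound to a function of $p(G)$ alone. At the level of the informal statement this is fine, and you correctly locate the only real difficulty in "the precise reading of~(1)": in the intended instantiation (the formal \cref{thm:win-win}), the algorithm behind hypothesis~(1) is the FO model-checking algorithm of \cref{thm:fomc}, which must be \emph{handed} a $d$-sequence; it is not a self-contained procedure whose running time merely happens to be bounded by $f_1(p(G)+\tww(G))\cdot|V(G)|^{O(1)}$.

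Your proposed fix for that case is where the argument breaks. Iterating $d=1,2,\ldots$ and "attempting to find a $d$-sequence with whatever sequence-finding routine is available on $\mathcal C$" presupposes exactly the ingredient that is missing in general: the paper stresses that no FPT (or even XP) approximation of twin-width is known, so on an arbitrary class there is no routine that, given $d$, either finds a $d$-sequence or certifies $\tww(G)>d$. The paper resolves this not by searching over $d$ but by strengthening hypothesis~(2) to its \emph{effective} form $\tww\sqsubseteq^{\mathcal C}_{\text{eff}} p$: by definition this supplies an algorithm that, in time $h(p(G))\cdot|V(G)|^{O(1)}$ and without knowing $p(G)$, outputs a $g(p(G))$-sequence of $G$. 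The proof of \cref{thm:win-win} then runs this algorithm once, compares the red degree $d$ of the returned sequence with $g_{\mathcal C}(k)$ (answering $p(G)\geqslant k$ immediately if $d$ is too large, and otherwise model-checking the sentences $\varphi_0,\ldots,\varphi_{k-1}$ on the sequence). So your high-level substitution is sound, but the statement is only true under the effective reading of~(2); with the non-effective $\tww\sqsubseteq p$ and an algorithm in~(1) that needs a contraction sequence as input, your fallback does not go through.
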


First-order logic yields a natural pool of invariants $p$ that are fixed-parameter tractable with respect to $p+\tww$ (even on general graphs)~\cite{twin-width1}.
As a first example of~\cref{it2}, we show the following.

\begin{theorem}\label{thm:biclique-free-seg}
  Biclique-free segment graphs have bounded twin-width.
  Furthermore, if a geometric representation is given, an $O(1)$-sequence of the graph is found in polynomial time.
\end{theorem}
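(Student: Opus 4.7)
The plan is to apply \cref{thm:mixed-number-gen} via a geometric vertex ordering. Given a $K_{t,t}$-free segment graph $G$ with its segment representation, order the segments by the $x$-coordinate of their leftmost endpoint, breaking ties by slope; call this the sweep ordering $\sigma$. The algorithmic side is then automatic: $\sigma$ is computable in $O(n \log n)$ time from the representation, and the effective form of \cref{thm:mixed-number-gen} converts a no-rank-$k$-division certificate into an $f(k)$-sequence in polynomial time.

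The heart of the proof is the claim that, for some $k = k(t)$, the adjacency matrix $A_\sigma(G)$ admits no rank-$k$ division. Suppose otherwise and focus on a single cell of combinatorial rank at least $k$: it encodes the bipartite intersection pattern between two sets of segments $A$ and $B$ whose leftmost endpoints live in disjoint vertical strips, $I_A$ entirely to the left of $I_B$, with at least $k$ pairwise distinct neighborhood traces of elements of $B$ on $A$. A Ramsey-theoretic step reduces this situation to one of a few canonical sub-patterns: a $t \times t$ all-ones block (which is exactly a $K_{t,t}$ and directly contradicts the hypothesis), a large half-graph, or a large generalized permutation (anti-matching) pattern. The all-ones case is immediate; the remaining cases must be eliminated by a geometric argument exploiting the horizontal separation of $A$ from $B$ together with the bounded VC dimension of segment intersections, showing that any such pattern of size $\Omega(h(t))$ already embeds a $K_{t,t}$.

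The main obstacle is precisely this last geometric step: ruling out large half-graph or permutation-like patterns between two sweep-separated families of segments without an accompanying $K_{t,t}$. The expected ingredient is a Zarankiewicz-type density statement in the spirit of K\H{o}v\'{a}ri--S\'{o}s--Tur\'{a}n, as refined by Fox--Pach for intersection graphs of convex sets in the plane: once the bipartite intersection graph between $A$ and $B$ contains such a structured pattern of size polynomial in $t$ and $A, B$ are horizontally separated, the total number of crossings is forced to exceed the Fox--Pach threshold, producing a $K_{t,t}$. Choosing $k(t)$ above the quantitative Ramsey threshold coming from this analysis closes the no-rank-$k$-division claim and, through \cref{thm:mixed-number-gen}, yields both the twin-width bound and the polynomial-time $O(1)$-sequence.
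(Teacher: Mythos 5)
Your proposal takes a completely different route from the paper, and unfortunately the central claim it rests on is false. You propose to order the segments by leftmost endpoint and show that this sweep order $\sigma$ admits no rank-$k(t)$ division. But a $K_{2,2}$-free segment graph can already force arbitrarily large grid rank along $\sigma$: take $n$ horizontal segments $a_1,\dots,a_n$ at distinct heights with left endpoints at $x=1,\dots,n$, all extending far to the right, and $n$ short vertical segments $b_1,\dots,b_n$ with left endpoints at $x=n+1,\dots,2n$, where $b_j$ crosses exactly $a_{\pi^{-1}(j)}$. This is an induced matching (twin-width at most~$1$, trivially biclique-free), yet the biadjacency matrix of $(A,B)$ along $\sigma$ is the permutation matrix of an arbitrary $\pi$, which for a generic $\pi$ has grid rank $\Omega(n^{1/3})$. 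So the sweep order is simply not a witness of bounded twin-width, and no choice of $k(t)$ closes your no-rank-division claim. Relatedly, the "main obstacle" you flag cannot be overcome with the tools you name: Kővári--Sós--Turán/Fox--Pach arguments bound the \emph{number} of edges, but the irreducible obstruction among the universal patterns of \cref{thm:rd-to-up} is precisely the permutation pattern $\mat{k}{0}$, which has only $k^2$ ones and therefore never violates any density threshold. (The half-graph patterns and $\mat{k}{1}$ do contain large bicliques, so those cases of your Ramsey reduction are fine; it is exactly the sparse permutation case that survives and kills the argument.)

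For contrast, the paper does not order the segments directly at all. It uses the $O(t\log t)$-degeneracy of $K_{t,t}$-free segment graphs \cite{Lee17} to cut each segment at its few intersections with earlier segments in a degeneracy order, encodes the resulting arrangement as a planar graph augmented with a nicely aligned matching on a packing of facial cycles, proves that such matching-augmented planar graphs admit a carefully constructed BFS-like vertex ordering of bounded grid rank, and finally recovers $G$ by an FO transduction, invoking \cref{thm:transduction}. The detour through planarity and transductions is exactly what sidesteps the problem your sweep order runs into: the "good" order lives on the planar encoding, not on the segments themselves. If you want to salvage a direct ordering argument, you would need an order that already untangles permutation patterns between separated families, which is essentially as hard as the theorem itself.
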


This principally uses the degeneracy of these graphs~\cite{Lee17}, their interpretability in planar graphs augmented at their faces by well-behaved matchings, and the fact that first-order interpretations and transductions\footnote{transformations based on FO formulas, and related to dependence and monadic dependence; see~\cref{sec:prelim} for definitions} preserve bounded twin-width~\cite{twin-width1}.

A reformulation is that, in segment graphs, twin-width is upperbounded by a function of the largest biclique; or, denoting by $\beta(G)$ the largest integer $t$ such that $G$ admits a biclique  $K_{t,t}$, it holds that $\tww \sqsubseteq \beta$ on segment graphs.
The corresponding problem \textsc{$k$-Biclique} was famously shown W$[1]$-hard by Lin~\cite{Lin18}, after its parameterized complexity has been open for over a decade~\cite{df13}.
From~\cref{thm:win-win,thm:biclique-free-seg} one obtains that \textsc{$k$-Biclique} is FPT on segment graphs given with a geometric representation.
However this fact can be alternatively derived since $K_{t,t}$-free segment graphs have truly sublinear balanced separators (see for instance~\cite{Lee17}), hence polynomial expansion~\cite{Dvorak16}, and one thus concludes with an FPT algorithm for FO model checking on sparse classes (\cite{Dvorak13} or \cite{Grohe17}).

\medskip

The counterpart of the large grid minor (in treewidth win-wins) is a large rank division in \emph{every} adjacency matrix of the graph (recall~\cref{thm:mixed-number-gen}).
Large twin-width in a class $\mathcal C$ in particular implies a large rank division in the adjacency matrix along a vertex ordering that, at least partially, respects the structure of $\mathcal C$.
In turn, this complex structure --despite being along a canonical order-- may help lowerbounding other parameters (like the grid minor was lowerbounding the vertex cover number in our example).
We give two such examples, both on classes of visibility graphs.

A \emph{simple polygon} is a polygon without holes.
Two vertices (more generally, points) of a polygon \emph{see} each other if the line segment defined by these vertices (or points) is entirely contained in the polygon.
The problem of~\cref{thm:kis-polygon} is sometimes advertised as \emph{hiding (people) in polygons}, and its solution is called a~\emph{hidden set}.
It is NP-complete~\cite{Shermer89}, even APX-hard~\cite{Eidenbenz02}, and can be equivalently defined as \textsc{$k$-Independent Set} in visibility graphs of simple polygons given with a representation.
We turn a large rank division in the adjacency matrix along a Hamiltonian path describing the boundary of the polygon into a large independent set.
This uses Ramsey-like arguments, and the forbidden pattern that off the diagonal of such adjacency matrices no 0 entry is surrounded by four 1 entries, one in every direction.  
In conclusion: $\tww \sqsubseteq \alpha$ in visibility graphs of simple polygons (where $\alpha(G)$ is the independence number of~$G$), and we immediately obtain the following rather surprising result.

\begin{theorem}\label{thm:kis-polygon}
  Given a simple polygon $\mathcal P$ and an integer $k$, finding $k$ vertices of $\mathcal P$ pairwise not seeing each other is FPT. 
\end{theorem}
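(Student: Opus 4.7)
The plan is to apply the win-win framework of Theorem~\ref{thm:inf-win-win} with $p = \alpha$, the independence number. The property $\alpha(G) \ge k$ is first-order definable by a sentence of length $O(k^2)$, so by the FO model checking algorithm on graphs given with an $O(1)$-sequence~\cite{twin-width1}, the first condition of that framework holds on every class. What remains to prove is that $\tww \sqsubseteq \alpha$ on visibility graphs of simple polygons, in an effective way: given the visibility graph $G$ of a simple polygon $\mathcal P$ together with the representation, we must either produce an $O(1)$-sequence or exhibit an independent set of size $k$.

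Let $v_1, \ldots, v_n$ be the vertices of $\mathcal P$ in order along its boundary, and let $M$ be the adjacency matrix of $G$ in this order. By the effective version of Theorem~\ref{thm:mixed-number-gen}, there is a computable function $h$ such that if $M$ has no rank-$h(k)$ division, then $\tww(G)$ is bounded by a function of $k$ and a witnessing contraction sequence can be computed in polynomial time. In that case we invoke FO model checking on the sentence expressing ``$\alpha(G) \ge k$'' and are done. We may therefore assume that $M$ admits a rank-$h(k)$ division, and the task reduces to recovering from it $k$ pairwise non-seeing vertices of $\mathcal P$.

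The first ingredient is the geometric forbidden pattern of $M$: whenever $M_{ij} = 0$ with $i \ne j$, visibility between $v_i$ and $v_j$ is blocked by a reflex part of the boundary lying strictly between them along one of the two boundary arcs. A short case analysis then shows that no off-diagonal $0$-entry of $M$ can be ``surrounded'' by $1$-entries in all four cardinal directions simultaneously. The second ingredient is a Ramsey-type cleaning of the rank-$h(k)$ division: each cell of combinatorial rank at least $h(k)$ contains, after permuting rows and columns inside the cell, a uniform $t \times t$ sub-pattern among biclique, half-graph, matching, and anti-matching (cf.~Figure~\ref{fig:bip}), and a further Ramsey step over the $h(k)^2$ cells lets us assume the same pattern type on a large subgrid of cells.

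The core step is to overlay the two ingredients: the forbidden four-corners configuration is incompatible with the biclique and half-graph regimes off the diagonal, so the uniform pattern must be an anti-matching of $0$-entries, producing indices $i_1 < \ldots < i_k < j_1 < \ldots < j_k$ with $M_{i_aj_b} = 0$ for all $a, b \in \{1,\ldots,k\}$; the set $\{v_{i_1}, \ldots, v_{i_k}\}$ is then the desired hidden set. The main obstacle lies precisely in this overlay: quantifying how the four-corners restriction precludes high combinatorial rank in off-diagonal cells, and converting this qualitative incompatibility into an explicit function $h$. Once $h$ is in hand, both algorithmic branches --contract-and-model-check, or directly output the extracted hidden set-- run in FPT time in $k$, proving the theorem.
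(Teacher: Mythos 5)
Your overall architecture --- the win-win via \cref{thm:win-win} with $p=\alpha$, the boundary order $\prec$, and the observation that no off-diagonal $0$ entry of $\adj{\prec}{G}$ can have a $1$ entry on each of its four sides (the paper's ``double-X'' property) --- matches the paper's proof of \cref{thm:alpha-polygon}. The gap is in your core ``overlay'' step, where you have the incompatibility exactly backwards. Check which universal patterns actually contain a $0$ entry surrounded by four $1$ entries: a permutation matrix $\mat{k}{0}$ has a single $1$ per row, so no $0$ entry has $1$s both to its left and to its right; in each half-graph pattern $\mat{k}{\uparrow},\mat{k}{\downarrow},\mat{k}{\leftarrow},\mat{k}{\rightarrow}$ every row and column is a prefix or suffix of $1$s, so again no $0$ entry is surrounded; and an all-$1$ biclique block has no $0$ entries at all. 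The only pattern excluded by the four-corners property is $\mat{k}{1}$, the complement of a permutation (\cref{fig:forb-LNk-polygon}) --- precisely the ``anti-matching of $0$-entries'' that you claim must be the surviving regime. So after applying \cref{thm:canonicaler} you are left with a large matching or half-graph consisting of \emph{edges}, from which no hidden set can be read off. Moreover, even if you did obtain an all-$0$ off-diagonal block between $\{v_{i_1},\dots,v_{i_k}\}$ and $\{v_{j_1},\dots,v_{j_k}\}$, the set $\{v_{i_1},\dots,v_{i_k}\}$ need not be independent: nothing controls visibility \emph{within} one side.

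The missing content is the argument disposing of the five surviving patterns, and it is geometric rather than purely combinatorial. In the proof of \cref{thm:alpha-polygon} the paper applies Ramsey's theorem to each side $A$, $B$ of the extracted anti-diagonal matching/half-graph: a large independent set in either side would exceed $\alpha(G)$, so both sides yield $4$-cliques $A''=\{\alpha_1,\dots,\alpha_4\}$ and $B''=\{\beta_1,\dots,\beta_4\}$ with matching indices. Since $A''$ is a clique, $\alpha_1,\dots,\alpha_4$ are in convex position, and one shows that $\alpha_2\alpha_3\beta_3\beta_2$ is a convex, non-self-intersecting quadrilateral all four of whose sides are visibility edges; its diagonals $\alpha_2\beta_3$ and $\alpha_3\beta_2$ must then also be edges (\cref{fig:visib-contradiction}), contradicting the matching/half-graph structure between $A$ and $B$. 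This convexity step --- not the four-corners overlay --- is what converts ``large grid rank'' into ``independent set larger than $\alpha(G)$'' and hence bounds $\tww$ by a function of $\alpha$. As written, your second branch never produces a hidden set.
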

In contrast, \textsc{$k$-Dominating Set} remains W$[1]$-hard on visibility graphs of simple polygons~\cite{BonnetM20}, thus likely does \emph{not} admit an FPT algorithm.

A~\emph{1.5D terrain} (or here, \emph{terrain} for short) is an $x$-monotone polygonal chain in the plane.
Two vertices of a terrain \emph{see} each other if the line segment they define entirely lies above the terrain. 
Let $\lambda(G)$, the \emph{ladder index} of $G$, be the greatest height of a semi-induced half-graph in $G$.
A folklore structural property of terrains, often called \emph{Order Claim}, imposes the existence of large half-graphs in a large rank division along the left-right ordering.
Thus $\tww \sqsubseteq \lambda$ in visibility graphs of terrains.
We conclude:

\begin{theorem}\label{thm:ladder-terrain}
  \textsc{$k$-Ladder} and \textsc{$k$-Biclique} are FPT on visibility graphs of 1.5D terrains given with a geometric representation.
\end{theorem}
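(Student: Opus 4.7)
The plan is to apply the win-win framework of~\cref{thm:inf-win-win} twice: with $p := \lambda$ for \textsc{$k$-Ladder} and with $p := \beta$ for \textsc{$k$-Biclique}, on the class $\mathcal{C}$ of visibility graphs of 1.5D terrains. Both ``$\lambda(G) \geq k$'' and ``$\beta(G) \geq k$'' are expressible as first-order sentences of size $O(k^2)$, and hence are FPT in the combined parameter $k + \tww(G)$ via the FO model-checking algorithm of~\cite{twin-width1}. It therefore suffices to establish, constructively, that $\tww \sqsubseteq \lambda$ on $\mathcal{C}$ (which also yields $\tww \sqsubseteq \beta$: a semi-induced $H_{2k-1}$ contains a semi-induced $K_{k,k}$, obtained by taking the first $k$ vertices on one side of the half-graph together with the last $k$ vertices on the other side).

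Given a terrain representation with $n$ vertices, I would order the vertex set $v_1 < v_2 < \cdots < v_n$ of the visibility graph $G$ by $x$-coordinate, and examine the symmetric adjacency matrix $M$ of $G$ in this order. By the effective form of~\cref{thm:mixed-number-gen}, there is a computable threshold $K = K(k)$ such that either $M$ has no rank-$K$ division, in which case an $f(K)$-sequence of $G$ is found in polynomial time and FO model checking finishes the job, or some cell of a $K$-division of $M$ has combinatorial rank at least $K$, which the algorithm locates. In the second case I would invoke the \emph{Order Claim} for 1.5D terrains: whenever $a < b < c < d$ are vertices in left-to-right order with $ac, bd \in E(G)$, also $bc \in E(G)$. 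A standard Ramsey argument inside the large-rank cell extracts one of four semi-induced canonical patterns of height growing with $K$: the identity $\idp{k}$, its complement, the upper-triangular $\utrp{k}$, or the lower-triangular $\ltrp{k}$. After a preliminary partitioning step forcing the row and column indices of the extracted sub-pattern into two strictly left-right separated intervals, the Order Claim rules out both the matching ($\idp{k}$) and anti-matching patterns: each of them yields, on two suitably chosen rows and columns, the forbidden configuration $ac, bd \in E$ with $bc \notin E$. The remaining two patterns are precisely semi-induced half-graphs, solving \textsc{$k$-Ladder} directly and \textsc{$k$-Biclique} through the split noted above.

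The main technical obstacle will be that the cell granted by the rank-division lemma may straddle the diagonal of the symmetric matrix $M$, so its row- and column-index intervals need not be separated on the terrain. Isolating a sub-cell with strictly separated row and column intervals while preserving enough combinatorial rank to later extract a height-$k$ pattern is the delicate step; it only inflates the threshold $K(k)$ and follows from a standard pigeonhole/Ramsey argument splitting each row (resp.\ column) part into ``low'' and ``high'' halves before applying the Order Claim.
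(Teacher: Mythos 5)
Your strategy coincides with the paper's: order the vertices left-to-right, run the universal-pattern extraction of \cref{thm:canonicaler}, exclude the matching and anti-matching patterns via the Order Claim, note that the four directional patterns are exactly large semi-induced half-graphs, and feed the resulting bounds $\tww \sqsubseteq_{\text{eff}} \lambda$ and $\tww \sqsubseteq_{\text{eff}} \beta$ (your derivation of the latter from the former via $H_{2k-1} \supseteq K_{k,k}$ matches the paper) into the win-win of \cref{thm:win-win}. There is, however, one concrete error: you misstate the Order Claim. For $a \prec b \prec c \prec d$ with $ac, bd \in E(G)$, the correct conclusion is $ad \in E(G)$ (\cref{lem:orderClaim}), not $bc \in E(G)$; your version is false --- place $a$ and $d$ very high at the two extremes, $b$ and $c$ low, and a tall spike strictly between $b$ and $c$: the steep segments $ac$ and $bd$ clear the spike, yet the nearly flat segment $bc$ is blocked, while $a$ and $d$ do see each other. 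The repair is immediate and leaves your argument intact: the correct claim forbids the configuration $ac, bd \in E(G)$ with $ad \notin E(G)$, and this forbidden configuration occurs in both $\mat{2}{0}$ and $\mat{2}{1}$ (this is exactly \cref{lem:more-forb}), so the matching and anti-matching patterns are still ruled out. Finally, the diagonal-straddling issue you flag as the main obstacle is already absorbed by the cited machinery: \cref{thm:canonicaler} guarantees the universal pattern as an \emph{off-diagonal} submatrix, so its row and column index sets are disjoint and left-right separated from the start.
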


We recap, in a compact form, the structural results discussed in the previous paragraphs, which readily imply FPT algorithms by~\cref{thm:inf-win-win}.
\begin{compactenum}
\item \label{it:beta} $\tww \sqsubseteq \beta$ in segment graphs;
\item \label{it:alpha} $\tww \sqsubseteq \alpha$ in visibility graphs of simple polygons;
\item \label{it:lambda} $\tww \sqsubseteq \lambda$ in visibility graphs of 1.5D terrains.
\end{compactenum}

Hliněný et al.~\cite{hlinveny2019fo} conjecture that FO model checking is FPT on weak visibility graphs of simple polygons additionally parameterized by the independence number.
The proof of~\cref{it:alpha} confirms this conjecture, even for the more general (non-weak) visibility graphs.
We observe that the approach would not work with a classic width measure, since none of the three items hold replacing \emph{twin-width} by \emph{clique-width} (mainly because grids and long paths of consistently ordered half-graphs have bounded twin-width but unbounded clique-width). 


A~slightly alternative perspective to describe our results would read as follows:
We start drawing a more precise frontier of \emph{bounded twin-width} and \emph{delineation} in geometric graph classes.  

\section{Preliminaries and first results}\label{sec:prelim}

We may denote the set of integers between $i$ and $j$ by $[i,j]$, and $[k]$ may be used as a~short-hand for $[1,k]$.
We start this section with the relevant definitions and background in graph theory, finite model theory, and recall some crucial theorems related to twin-width.

\subsection{Graph theory}

We use the standard graph-theoretic definitions and notations.
We denote by $V(G)$, and $E(G)$, the vertex set, and the edge set, of a graph $G$, and by $G[S]$ the subgraph of $G$ induced by $S \subseteq V(G)$.
When $A,B \subseteq V(G)$ are two disjoint sets, we denote by $G[A,B]$ the bipartite graph $(A,B,\{ab~:~a \in A,~b \in B,~ab \in E(G)\})$.
The~\emph{bipartite complement} of a bipartite graph $G=(A,B,E(G))$ is the bipartite graph $(A,B,\{ab~:~a \in A,~b \in B,~ab \notin E(G)\})$.

We denote by $\adj{\prec}{G}$ the adjacency matrix of $G$ \emph{along the total order $\prec$ of $V(G)$}.
If $A$ and $B$ are disjoint sets of $V(G)$, we also denote by $\adj{\prec}{G \langle A,B \rangle}$ the \emph{biadjacency matrix} of $(A,B,E(G))$ \emph{along the total order $\prec$ of $V(G)$}, with $|A|$ columns $a_1 \prec a_2 \prec \ldots \prec a_{|A|} \in A$ and $|B|$ rows $b_1 \prec b_2 \prec \ldots \prec b_{|B|} \in B$, a~1 entry at position $(b_j,a_i)$ if $a_ib_j \in E(G)$, and a~0 entry otherwise.
Note that the biadjacency matrix $\adj{\prec}{G \langle A,B \rangle}$ is different from the adjacency matrix $\adj{\prec}{G[A,B]}$; the former is a block of the other.
An~\emph{interval along $\prec$} or \emph{$\prec$-interval} is a set of \emph{consecutive} elements with respect to that order.
The \emph{mirror} of a total order is obtained by reversing all inequalities. 

A \emph{clique} is a set of pairwise adjacent vertices, while an \emph{independent set} is a set of pairwise non-adjacent vertices. 
A~\emph{biclique} is given by a disjoint pair of vertex subsets $(A,B)$ such that every vertex of $A$ is adjacent to every vertex of $B$.
The \emph{biclique of height $t$} is a biclique where both sides have size $t$, and is denoted by $K_{t,t}$.
A~\emph{half-graph} (or~\emph{ladder}) \emph{of height $t$} consists of $2t$ distinct vertices $a_1, \ldots, a_t, b_1, \ldots, b_t$ such that $a_i$ is adjacent to $b_j$ if and only if $i \leqslant j$.
We denote by $H_t$ the half-graph of height~$t$ when $\{a_1, \ldots, a_t\}$ and $\{b_1, \ldots, b_t\}$ are independent sets.
An~\emph{induced matching} is a set of vertices inducing a disjoint union of edges.
A~\emph{semi-induced matching} consists of $2k$ distinct vertices $a_1, \ldots, a_k, b_1, \ldots, b_k$ such that $a_i$ is adjacent to $b_j$ if and only if $i = j$.
An~\emph{induced anti-matching} and \emph{semi-induced anti-matching} are their bipartite complements.
See~\cref{fig:bip} for illustrations.

More generally, a bipartite graph $H$ is \emph{semi-induced} in $G$ if there are two disjoint $A,B \subseteq V(G)$ such that $H$ is isomorphic to $G[A,B]$.
Bicliques and half-graphs will always be meant semi-induced. 
A graph is \emph{$K_{t,t}$-free} (resp.~$H_t$-free) if it does not contain $K_{t,t}$ (resp.~$H_t$) as a semi-induced subgraph.
More generally, \emph{$G$ is $H$-free} if $G$ does not contain $H$ as an induced subgraph, or as a semi-induced subgraph when $H$ is presented as a bipartite or multipartite graph.

A~\emph{generalized transversal pair of half-graphs} consists of $3+\ell$ sets $A = \{a_{i,j}~:~i, j \in [t]\}$, $B_0 = \{b_{i,j}^0~:~i, j \in [t]\}, \ldots, B_\ell = \{b_{i,j}^{\ell}~:~i, j \in [t]\}$, and $C = \{c_{i,j}~:~i, j \in [t]\}$  such that there is an edge between $a_{i,j}$ and $b_{i',j'}^0$ if and only if $(i,j) \leqslant_{\text{lex}} (i',j')$, for $k\in [\ell]$ there is an edge between $b_{i,j}^{k-1}$ and $b_{i',j'}^{k}$ if and only if $(i,j)=(i',j')$ and there is an edge between $b_{i,j}^{\ell}$ and $c_{i',j'}$ if and only if $(j,i) \leqslant_{\text{lex}} (j',i')$, where $\leqslant_{\text{lex}}$ denotes the lexicographic (left-right) order.
We denote this graph by $T_{t,\ell}$, and a \emph{semi-induced $T_{t,\ell}$} is such a graph with possibly some extra edges between two sets $X,Y\in \{A,B_0,\ldots,B_\ell,C\}$ with no predefined edges.
Note that $A \cup B_0$ and $B_{\ell} \cup C$ both induce a half-graph, but the ``order'' these two half-graphs put on the endpoints of the paths $(b_{i,j}^0,\dots,b_{i,j}^{\ell})$ is different. We define $T_k:=T_{k,0}$ and we call $T_k$ a transversal pair (of half-graphs); see middle of~\cref{fig:bip}.

We will come back to (generalized) transversal pairs after we introduce interpretations and transductions, and show that they are complex, universal objects as far as closure by FO transductions is concerned.

\subsection{Ramsey Theory}

We will need Ramsey's theorem and its bipartite version, in their simplest form, and the Erd\H{o}s-Szekeres theorem.

\begin{theorem}[Ramsey's theorem \cite{Ramsey1930}]
 \label{thm:Ramsey}
 There is a function $\text{Ram}: \mathbb N \times \mathbb N \to \mathbb N$ such that for every $s, t \geq 1$, every 2-edge colored complete graph $K_{\Ram{s}{t}}$ contains a subset of $s$ vertices on which all edges are of the first color, or a subset of $t$ vertices on which all edges are of the second color.  
\end{theorem}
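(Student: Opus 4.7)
The plan is to prove Ramsey's theorem by strong induction on $s+t$, establishing the classical recurrence $\Ram{s}{t} \leq \Ram{s-1}{t} + \Ram{s}{t-1}$ with base cases $\Ram{1}{t} = \Ram{s}{1} = 1$. Concretely, one defines $\Ram{s}{t}$ recursively by this formula and shows by induction on $s+t$ that the resulting integer satisfies the required property. The base cases are immediate: a single vertex is vacuously a monochromatic clique of size $1$ in either color, which fulfils one of the two demanded alternatives as soon as $s=1$ or $t=1$.

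For the inductive step, I would set $N := \Ram{s-1}{t} + \Ram{s}{t-1}$, take an arbitrary $2$-edge-coloring of $K_N$, fix any vertex $v$, and partition the other $N-1$ vertices according to the color of their edge to $v$: let $A$ be the set of vertices joined to $v$ by an edge of the first color, and $B$ the set joined to $v$ by an edge of the second color. Since $|A|+|B| = N-1$, by pigeonhole either $|A| \geq \Ram{s-1}{t}$ or $|B| \geq \Ram{s}{t-1}$. In the first subcase, the inductive hypothesis applied to the coloring restricted to $A$ yields either a subset of $t$ vertices monochromatic in the second color, which we return directly, or a subset $S$ of $s-1$ vertices monochromatic in the first color, in which case $S \cup \{v\}$ is a subset of $s$ vertices monochromatic in the first color by the very definition of $A$. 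The second subcase is symmetric and yields either the desired first-color $s$-subset or extends a $(t-1)$-subset of $B$ by $v$ into a second-color $t$-subset.

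There is no real obstacle here: the argument is a textbook double induction, and solving the recurrence gives the standard explicit bound $\Ram{s}{t} \leq \binom{s+t-2}{s-1}$, which is more than enough to witness that such a function $\text{Ram}$ exists (and in fact is computable). The only minor point to check is that the recursion terminates, which it does because the base-case region $\{(s,t) : s=1 \text{ or } t=1\}$ absorbs every strictly decreasing pair reached by decrementing $s$ or $t$.
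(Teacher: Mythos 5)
Your proof is correct: it is the standard Erd\H{o}s--Szekeres induction establishing the recurrence $\Ram{s}{t} \leq \Ram{s-1}{t} + \Ram{s}{t-1}$, and the paper itself offers no proof of this statement, simply citing Ramsey's classical result. Nothing further is needed.
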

Observe that the previous theorem can be equivalently stated as every $\Ram{s}{t}$-vertex graph has a clique of size $s$ or an independent set of size $t$.

\begin{theorem}[Bipartite Ramsey's theorem]
 \label{thm:bipRamsey}
  There is a function $\text{Bip-Ram}: \mathbb N \times \mathbb N \to \mathbb N$ such that for every $s, t \geq 1$, every 2-edge colored complete bipartite graph $K_{\Bipram{s}{t},\Bipram{s}{t}}$ contains a biclique $K_{s,s}$ whose edges are all of the first color, or a biclique $K_{t,t}$ whose edges are all of the second color. 
\end{theorem}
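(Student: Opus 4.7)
The plan is to prove this via an iterative pigeonhole argument on one side of the bipartition, which yields the explicit bound $\Bipram{s}{t} \leq \max(s,t) \cdot 2^{s+t-1}$. This is simpler than the usual double induction needed for the non-bipartite Ramsey theorem, because we only need to extract a monochromatic neighborhood from one side at a time, avoiding any interplay between pairs within the same part.

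First I would set $N := \max(s,t) \cdot 2^{s+t-1}$ and consider any 2-edge-coloring of $K_{N,N}$ with bipartition $(A,B)$, $|A|=|B|=N$. Pick $s+t-1$ distinct vertices $a_1, \ldots, a_{s+t-1} \in A$ and process them sequentially, maintaining a nested chain of subsets $B = B_0 \supseteq B_1 \supseteq \cdots \supseteq B_{s+t-1}$ of $B$. At step $i$, among the $|B_{i-1}|$ edges from $a_i$ to $B_{i-1}$, pigeonhole yields a color $c_i \in \{1,2\}$ shared by at least $|B_{i-1}|/2$ of them; let $B_i$ be the endpoints of those edges and label $a_i$ with $c_i$. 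Since $|B_i| \geq |B_{i-1}|/2$, the choice of $N$ guarantees $|B_{s+t-1}| \geq N/2^{s+t-1} \geq \max(s,t)$, and by construction each $a_i$ is joined to every vertex of the final set $B_{s+t-1} \subseteq B_i$ by an edge of color $c_i$.

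Finally, I would apply pigeonhole once more, this time to the labels $c_1, \ldots, c_{s+t-1} \in \{1,2\}$: either at least $s$ of them equal $1$ or at least $t$ of them equal $2$. In the first case, the $s$ corresponding vertices of $A$, together with any $s$ vertices of $B_{s+t-1}$, induce a $K_{s,s}$ all of whose edges have color $1$; in the second case, the analogous extraction produces a $K_{t,t}$ monochromatic in color $2$. No real obstacle stands in the way here; the argument is standard and self-contained. The only points to verify are that $|B_{s+t-1}| \geq \max(s,t)$ (which dictates the exponential dependence of $N$ on $s+t$) and that $s+t-1$ binary labels cannot simultaneously avoid both $s$ occurrences of $1$ and $t$ occurrences of $2$.
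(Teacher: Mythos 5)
Your argument is correct: the iterative halving gives $|B_{s+t-1}| \geq N/2^{s+t-1} = \max(s,t)$, each $a_i$ is monochromatically joined to all of $B_{s+t-1}$ in color $c_i$, and $s+t-1$ binary labels must contain either $s$ ones or $t$ twos since otherwise there would be at most $(s-1)+(t-1)=s+t-2$ of them. The paper states this theorem as standard background and gives no proof of its own, so there is nothing to compare against; your proof is the standard one and is complete.
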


\begin{theorem}[Erd\H{o}s-Szekeres theorem~\cite{Erdos1987}]
\label{thm:erdos-szekeres}
For every $k \geq 1$ any sequence of distinct $(k-1)^2 + 1$ real numbers contains an non-increasing or non-decreasing subsequence of size at least $k$.
\end{theorem}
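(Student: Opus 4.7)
The plan is to apply a pigeonhole argument based on a pair of labels attached to each term. Given a sequence $a_1, a_2, \ldots, a_N$ of distinct reals with $N = (k-1)^2 + 1$, I would associate to each index $i \in [N]$ a pair $(u_i, d_i)$, where $u_i$ is the length of the longest non-decreasing subsequence ending at $a_i$, and $d_i$ is the length of the longest non-increasing subsequence ending at $a_i$. The goal is to show that some $u_i \geq k$ or some $d_i \geq k$.

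The central observation is that the map $i \mapsto (u_i, d_i)$ is injective. For any two indices $i < j$, since the terms are distinct, either $a_i < a_j$ or $a_i > a_j$. In the first case, any non-decreasing subsequence ending at $a_i$ can be extended by appending $a_j$, so $u_j \geq u_i + 1$; in the second case, symmetrically, $d_j \geq d_i + 1$. Either way $(u_i, d_i) \neq (u_j, d_j)$, which is exactly injectivity.

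To conclude, I would assume for contradiction that no non-decreasing or non-increasing subsequence of length $k$ exists. Then every label $(u_i, d_i)$ lies in $\{1, \ldots, k-1\}^2$, a set of cardinality $(k-1)^2$. Since the $N = (k-1)^2 + 1$ labels are pairwise distinct by the previous paragraph, this violates the pigeonhole principle. Hence some $u_i \geq k$ or some $d_i \geq k$, yielding the desired monotone subsequence.

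There is no real obstacle in this proof; it is a classical one-paragraph pigeonhole argument. The only point requiring care is the distinctness hypothesis on the reals, which justifies the strict dichotomy $a_i < a_j$ versus $a_i > a_j$ used in establishing injectivity. Without distinctness, one would need to replace ``non-decreasing'' and ``non-increasing'' with their strict versions or adjust the label definitions accordingly, but as stated the proof is immediate.
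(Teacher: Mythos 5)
Your proof is correct: it is the standard Seidenberg pigeonhole argument (injectivity of the map $i \mapsto (u_i,d_i)$ into $[k-1]^2$), and the paper does not prove this statement at all --- it is quoted as a classical cited theorem (Erd\H{o}s--Szekeres), so there is nothing to compare against. One small remark: the distinctness hypothesis is not actually needed for injectivity in the non-strict (non-decreasing/non-increasing) formulation, since $a_i = a_j$ with $i<j$ would force both $u_j > u_i$ and $d_j > d_i$; your dichotomy argument is nonetheless valid as written.
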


\subsection{Model checking, interpretations, transductions, and dependence}\label{subsec:prelim-fmt}

A relational \emph{signature} $\sigma$ is a finite set of relation symbols~$R$, each having a specified arity $r \in \mathbb N$. 
\mbox{A \emph{$\sigma$-structure}~$\mathbf A$} is defined by a set~$A$ (the \emph{domain} of $\mathbf A$) and a relation $R^{\mathbf A} \subseteq A^{r}$ for each relation symbol \mbox{$R \in \sigma$} with arity $r$.

A~\emph{binary structure} is a relational structure with symbols of arity at most 2.
The syntax and semantics of first-order formulas over $\sigma$ (or \emph{$\sigma$-formulas} for short), are defined as usual.
We recall that a \emph{sentence} is a formula without free variable.
Most of the time we will consider $\sigma$-structures with $\sigma$ consisting of a single binary relation symbol $E$, and identify them to graphs.
But we will also deal with binary structures that are graphs augmented with a total order (called \emph{totally ordered graphs}, or \emph{ordered graphs} for short) and/or some unary relations.

\medskip

\textbf{Model checking.}
The \emph{first-order (FO) model checking} inputs a $\sigma$-structure $\mathbf A$ and a $\sigma$-sentence $\varphi$, and outputs whether or not $\varphi$ holds in $\mathbf A$, that we denote $\mathbf A \models \varphi$.
The brute-force algorithm solves FO model checking in time $|A|^{O(|\varphi|)}$ where $|\varphi|$ is the quantifier depth of $\varphi$, while no algorithm can solve FO model checking in time $|A|^{o(|\varphi|)}$ when the structures range over general graphs, unless \textsc{SAT} can be solved in subexponential time (see for instance~\cite{Chen06}).

The focus is then on designing fixed-parameter tractable (FPT) algorithms on restricted classes of structures, that is, algorithms running in time $f(|\varphi|) |A|^{O(1)}$ (or better in $f(|\varphi|) |A|$) for some computable function $f$.
FO model checking is a single framework for a large class of problems.
We give the sentences of three examples, which will be relevant in the paper.
\begin{compactitem}
\item \textsc{$k$-Independent Set}: $$\exists x_1, \ldots, x_k \bigwedge_{i \neq j \in [k]} \neg (x_i = x_j)~\land~\neg E(x_i,x_j).$$
\item \textsc{$k$-Biclique}: $$\exists x_1, \ldots, x_k, y_1, \ldots, y_k \bigwedge_{i \neq j \in [k]} \neg (x_i = x_j)~\land~\neg (y_i = y_j)~\land \bigwedge_{i, j \in [k]} \neg (x_i = y_j)~\land~\bigwedge_{i, j \in [k]} E(x_i,y_j).$$
\item \textsc{$k$-Ladder}: $$\exists x_1, \ldots, x_k, y_1, \ldots, y_k \bigwedge_{i \neq j \in [k]} \neg (x_i = x_j)~\land~\neg (y_i = y_j)~\land \bigwedge_{i, j \in [k]} \neg (x_i = y_j)~\land~\bigwedge_{i \leqslant j \in [k]} E(x_i,y_j)$$ $$\land \bigwedge_{i > j \in [k]} \neg E(x_i,y_j).$$ 
\end{compactitem}

\medskip

\textbf{Interpretations, transductions, and monadic dependence.}
Let $\sigma,\tau$ be relational signatures.
A~\emph{simple FO interpretation} (here, \emph{FO interpretation} for short) $\mathsf I$ of $\tau$-structures in $\sigma$-structures consists of the following $\sigma$-formulas: a \emph{domain} formula $\nu(x)$, and for each relation symbol $R \in \tau$ of arity $r$, a~formula $\varphi_R(x_1, \ldots, x_r)$.
If $\mathbf A$ is a $\sigma$-structure, the $\tau$-structure $\mathsf I(\mathbf A)$ has domain $\nu(\mathbf A)=\{v \in A~:~\mathbf A \models \nu(v)\}$ and the interpretation of a relation symbol $R \in \sigma$ of arity $r$ is $R^{\mathsf I(\mathbf A)}=\{(v_1,\dots,v_{r})\in \nu(\mathbf A)^{r}:\mathbf A\models \varphi_R(v_1,\dots,v_r)\}$.
If $\mathcal C$ is a class of $\sigma$-structures, we set $\mathsf I(\mathcal C)=\{\mathsf I(\mathbf A)~:~\mathbf A \in \mathcal C\}$.

Let $\sigma \subseteq \sigma^+$ be relational signatures.
The \emph{$\sigma$-reduct} of a $\sigma^+$-structure $\mathbf A$, denoted by $\text{reduct}_{\sigma^+ \to \sigma}(\mathbf A)$, is the structure obtained from $\mathbf A$ by deleting all the relations not in $\sigma$.
A~\emph{monadic $h$-lift} of a $\sigma$-structure $\mathbf A$ is a $\sigma^+$-structure $\mathbf A^+$, where $\sigma^+$ is the union of $\sigma$ and a~set of $h$ unary relation symbols, and $\text{reduct}_{\sigma^+ \to \sigma}(\mathbf A^+) = \mathbf A$. 

A~\emph{simple non-copying FO transduction} (here, \emph{FO transduction} for short) $\mathsf T$ of $\tau$-structures in $\sigma$-structures is an interpretation of $\tau$-structures in $\sigma^+$-structures, where the $\sigma^+$-structures are \emph{monadic $h$-lifts} of $\sigma$-structures for some fixed integer $h$.
As there are many ways of interpreting the extra unary relations, a transduction (contrary to an interpretation) builds on a given input structure several output structures.
If $\mathcal C$ is a class of $\sigma$-structures, $\mathsf T(\mathcal C)$ denotes the class of all the $\tau$-structures output on any~$\sigma$-structure $\mathbf A \in \mathcal C$.  

We say that a class $\mathcal C$ \emph{interprets} a class $\mathcal D$ (or that \emph{$\mathcal D$ interprets in $\mathcal C$}) if there is an interpretation $\mathsf I$ such that $\mathcal D \subseteq \mathsf I(\mathcal C)$.
Further, a~class $\mathcal C$ \emph{efficiently interprets} $\mathcal D$ if additionally a polytime algorithm inputs $\mathbf A \in \mathcal D$, and outputs a structure $\mathbf B \in \mathcal C$ such that $\mathsf I(\mathbf B)$ is isomorphic to $\mathbf A$.
Similarly, we say that a class $\mathcal C$ \emph{transduces} a class $\mathcal D$ (or that \emph{$\mathcal D$ transduces in $\mathcal C$}) if there is a transduction $\mathsf T$ such that $\mathcal D \subseteq \mathsf T(\mathcal C)$.
Two classes $\mathcal C$ and $\mathcal D$ are \emph{transduction equivalent} if $\mathcal C$ transduces $\mathcal D$, and $\mathcal D$ transduces $\mathcal C$. 
We will frequently use the fact that one can compose transductions: If $\mathcal C$ transduces $\mathcal D$, and $\mathcal D$ transduces $\mathcal E$, then $\mathcal C$ transduces $\mathcal E$. 

 We will not need the original definition of monadic dependence; solely the following characterization:
 \begin{theorem}[Baldwin and Shelah \cite{BS1985monadic}]\label{thm:baldwin-shelah}
 $\mathcal C$ is monadically dependent if and only if $\mathcal C$ does not transduce the class $\mathcal G$ of all finite graphs.
 \end{theorem}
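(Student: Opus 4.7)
The plan is to prove both directions of the biconditional, relying on the original definition of monadic dependence: no formula on any monadic expansion of members of $\mathcal{C}$ has the \emph{independence property} (IP), i.e., shatters arbitrarily large sets.

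For the $(\Leftarrow)$ direction, observe that $\mathcal{G}$ itself is not monadically dependent: the atomic formula $E(x,y)$ already has IP, since for every $n$ a bipartite graph on $A \sqcup B$ with $|A|=n$, $|B|=2^n$, in which each vertex of $B$ realizes a distinct subset of $A$ as its neighborhood, shatters $A$. If $\mathcal{C}$ transduces $\mathcal{G}$ via a transduction $\mathsf{T}$, substituting the defining formulas of $\mathsf{T}$ into $E(x,y)$ yields a formula over monadic expansions of $\mathcal{C}$ that still has IP, so $\mathcal{C}$ is not monadically dependent. This direction uses nothing more than the fact that transductions compose and that IP transfers backward through them.

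For the $(\Rightarrow)$ direction, assume a formula $\varphi(\bar x, \bar y)$ has IP on monadic expansions $\mathbf{A}_n^+$ of members $\mathbf{A}_n \in \mathcal{C}$, realized by shattering tuples $(\bar a_i)_{i \in [n]}$ and parameters $(\bar b_S)_{S \subseteq [n]}$. The strategy is threefold: (i) reduce $\varphi$ to a binary formula $\psi(x,y)$ by enriching each monadic expansion with unary predicates marking a ``representative coordinate'' of each shattering tuple and further predicates encoding the remaining coordinates as pointers, then applying a Ramsey-type extraction on the sequence $(\mathbf{A}_n^+)_n$ to guarantee that the tuple-encoding is uniformly FO-decodable; (ii) transduce all finite bipartite graphs $H$ on $[m] \sqcup [m]$ by picking shattering witnesses $b_j = b_{S_j}$ for $S_j = \{i : ij \in E(H)\}$, marking $\{a_i\}$ and $\{b_j\}$ by two new unary predicates, and interpreting edges via $\psi$; (iii) pass from bipartite graphs to arbitrary finite graphs via the standard monadic doubling $G \mapsto B_G$ on $V \sqcup V'$, using additional unary predicates to mark pairs $(v, v')$ that the interpretation quotients and symmetrizes.

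The main obstacle is step~(i): reducing a tuple-parameter IP-formula to a single-variable one purely via monadic expansion. The difficulty is that FO cannot \emph{a priori} reconstruct a tuple from loosely marked coordinates, and unary predicates alone cannot name a binary matching. The classical remedy uses a Ramsey argument over the sequence $(\mathbf{A}_n^+)_n$: within a large enough shattering pattern, many tuples must realize the same type relative to the markers, and one reselects shattering tuples admitting a common monadic encoding of the coordinate-to-tuple correspondence. Once this tuple-reduction is achieved, step~(ii) is a direct use of shattering, and step~(iii) is the standard bipartite-to-general doubling trick familiar from the theory of FO transductions.
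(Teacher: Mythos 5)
The paper does not prove this statement: it is imported as a black box from Baldwin and Shelah \cite{BS1985monadic}, so there is no internal proof to compare against. Your backward direction is correct and is the standard easy argument: $E(x,y)$ has the independence property on the class of all graphs, and substituting the transduction's defining formulas into it pulls IP back to monadic expansions of $\mathcal C$.

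The forward direction has a genuine gap at your step~(i), and that step is precisely where the entire difficulty of the theorem lives. A simple non-copying transduction (the notion used in this paper) outputs a structure whose elements are single elements of the input and whose edge relation is a binary formula in singletons; so from an IP formula $\varphi(\bar x,\bar y)$ in tuples you must manufacture a binary singleton formula that still codes arbitrary bipartite patterns. You assert that a ``Ramsey-type extraction'' and ``reselecting shattering tuples admitting a common monadic encoding of the coordinate-to-tuple correspondence'' achieves this, but you never say what that encoding is, and unary predicates cannot in general define the matching between a marked representative coordinate and the remaining coordinates of its tuple --- that matching is itself a binary relation, which is exactly what monadic lifts are forbidden from adding. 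Making many tuples realize the same type relative to the markers does not by itself make a tuple FO-recoverable from its representative. This tuple-to-singleton reduction is the actual content of Baldwin--Shelah's argument (and of its modern finitary treatments) and requires substantially more than a Ramsey extraction. Two smaller issues: in step~(iii) you invoke a quotienting interpretation, but the transductions defined in this paper have no quotient operation (use instead the vertex--edge incidence bipartite graph of $G$, from which $G$ is recovered by $\exists z\,(\neg V(z)\wedge E(x,z)\wedge E(y,z))$ after marking $V$); and in step~(ii) you should also arrange that the output carries no spurious edges inside each side, which is routine but needs saying.
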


Since FO~model checking on the class of all graphs is $\AW[*]$-hard~\cite{Downey96}, one gets:
\begin{theorem}[\cite{BS1985monadic,Downey96}]\label{thm:AW-hard}
  If $\mathcal C$ efficiently interprets the class of all graphs then FO model checking on $\mathcal C$ is $\AW[*]$-hard.
\end{theorem}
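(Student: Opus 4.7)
The plan is to reduce FO model checking on the class $\mathcal G$ of all finite graphs to FO model checking on $\mathcal C$ via an FPT (in fact, polynomial-time parameterized) many-one reduction, and then invoke the $\AW[*]$-hardness of FO model checking on $\mathcal G$ proved by Downey, Fellows, and Taylor \cite{Downey96}.

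Fix an interpretation $\mathsf I$ (consisting of a domain formula $\nu(x)$ and formulas $\varphi_E(x_1,x_2)$ and possibly $\varphi_=(x_1,x_2)$ if equality is interpreted nontrivially; for simplicity I will assume $\mathsf I$ is a plain simple interpretation so that equality is inherited) witnessing that $\mathcal C$ efficiently interprets $\mathcal G$, together with the accompanying polynomial-time algorithm $\mathcal A$ that takes as input a graph $G \in \mathcal G$ and outputs a structure $\mathbf B = \mathcal A(G) \in \mathcal C$ with $\mathsf I(\mathbf B) \cong G$. Given an instance $(G,\varphi)$ of FO model checking on $\mathcal G$, I first run $\mathcal A$ on $G$ to produce $\mathbf B \in \mathcal C$ in polynomial time.

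Next I translate $\varphi$ into a $\sigma$-sentence $\widehat{\varphi}$ over the signature $\sigma$ of $\mathcal C$ by the standard pullback along $\mathsf I$: replace every atomic subformula $E(x_i,x_j)$ of $\varphi$ by $\varphi_E(x_i,x_j)$, and relativize every quantifier $\exists x\,\psi$ (resp.\ $\forall x\,\psi$) to the domain formula, i.e., rewrite it as $\exists x\,(\nu(x) \land \widehat{\psi})$ (resp.\ $\forall x\,(\nu(x) \to \widehat{\psi})$). A straightforward induction on the structure of $\varphi$ shows the semantic equivalence
\[
\mathbf B \models \widehat{\varphi} \quad \Longleftrightarrow \quad \mathsf I(\mathbf B) \models \varphi \quad \Longleftrightarrow \quad G \models \varphi,
\]
which is the usual fundamental property of interpretations. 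Moreover $|\widehat{\varphi}| \leqslant c_{\mathsf I} \cdot |\varphi|$, where $c_{\mathsf I}$ is a constant depending only on $\mathsf I$ (essentially $\max(|\nu|,|\varphi_E|)$), so the parameter blow-up is bounded by a computable function of $|\varphi|$.

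The map $(G,\varphi) \mapsto (\mathbf B,\widehat{\varphi})$ is therefore a parameterized reduction from FO model checking on $\mathcal G$ to FO model checking on $\mathcal C$: it runs in polynomial time (by efficiency of the interpretation and by the trivial syntactic rewriting), and the new parameter depends only on the old one. Since FO model checking on $\mathcal G$ is $\AW[*]$-hard~\cite{Downey96}, the same hardness transfers to FO model checking on $\mathcal C$. There is no real obstacle here beyond unwinding definitions; the only point to keep in mind is that ``efficient interpretation'' is exactly what is needed to turn the semantic pullback argument into an honest polynomial-time parameterized reduction, whereas mere ``interpretability'' would only give a hardness statement under a non-uniform assumption.
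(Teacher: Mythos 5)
Your proof is correct and follows exactly the route the paper intends (the paper states this as an immediate consequence of the $\AW[*]$-hardness of FO model checking on all graphs~\cite{Downey96}, leaving the reduction implicit): use the efficiency of the interpretation to compute $\mathbf B$ in polynomial time, pull the sentence back along $\mathsf I$ with relativized quantifiers, and observe the parameter blow-up is bounded. Nothing is missing.
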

\cref{conj:fomc} anticipates that every hereditary class of structures not transducing the class of all graphs admits an FPT FO model checking, and no other hereditary class does.

A~notion more restrictive to that of monadic dependence is \emph{monadic stability}.
A~class is said \emph{monadically stable} if it does not transduce the class of all half-graphs.
A~monadically stable class has in particular to be $H_t$-free. 

Let us finally notice that, throughout the paper, the ambient logic will be first-order.
Thus we may for instance drop the ``FO'' before ``interpretation'' and ``transduction.''

\subsection{Rank divisions, universal patterns, and twin-width}\label{subsec:rd}

A~\emph{division} $\mathcal D$ of a matrix $M$ is a pair $(\mathcal D^R,\mathcal D^C)$, where $\mathcal D^R$ (resp.~$\mathcal D^C$) is a partition of the rows (resp.~columns) of $M$ into intervals of consecutive rows (resp.~columns).
Each element of $\mathcal D^R$ (resp.~$\mathcal D^C$) is called a~\emph{row part} (resp.~\emph{column part}). 
A~\emph{$k$-division} is a division satisfying $|\mathcal D^R| = |\mathcal D^C| = k$.
We often list the row (resp.~column) parts of $D^R$ (resp.~$D^C$) $R_1, R_2, \ldots, R_k$ (resp.~$C_1, C_2, \ldots, C_k$) when $R_i$ is just below $R_{i+1}$ (resp.~$C_j$ is just to the left of $C_{j+1}$). 
For every pair $R_i \in \mathcal D^R$, $C_j \in \mathcal D^C$, the (contiguous) submatrix $R_i \cap C_j$ is called~\emph{cell} or \emph{zone} of~$\mathcal D$, or more precisely, the \emph{$(i,j)$-cell} of~$\mathcal D$.
Note that a~$k$-division defines $k^2$ zones.
We say that a cell, or more generally a matrix, is \emph{empty} or \emph{full 0} if all its entries are 0.
The \emph{dividing lines} of $\mathcal D^R=R_1,R_2,\ldots$ (resp.~$\mathcal D^C=C_1,C_2,\ldots$) are the strips (of width 2) made by the last row of $R_i$ and the first row of $R_{i+1}$ (resp.~last column of $C_j$ and the first column of $C_{j+1}$.
A dividing line of $\mathcal D^R$ (resp.~$\mathcal D^C$) \emph{stabs} a set of rows (resp.~of columns) if it intersects~it. 
We may call~\emph{regular $k$-division} a $k$-division where every row part and column part have the same size.

A~\emph{rank-$k$ division} of $M$ is a $k$-division $\mathcal D$ such that for every $R_i \in {\mathcal D}^R$ and $C_j \in {\mathcal D}^C$ the zone $R_i \cap C_j$ has at least $k$ distinct rows or at least $k$ distinct columns (that is, combinatorial rank at least $k$).
By \emph{large rank division}, we informally mean a rank-$k$ division for arbitrarily large values of $k$.
The maximum integer $k$ such that $M$ admits a rank-$k$ division is called \emph{grid rank}, and is denoted by $\gr(M)$.

An adjacency matrix $M$ of a binary structure encodes in any\footnote{As the chosen encoding will not matter, we do not specify a \emph{canonical} one.} bijective fashion the \emph{atomic type} of every pair of vertices $(u,v)$ (i.e., the set of atomic propositions the pair $(u,v)$ satisfies) at position $(u,v)$ in $M$.
We denote by $\adj{\prec}{\mathbf A}$ the adjacency matrix of $\mathbf A$ \emph{along $\prec$}, a total order on $A$.
The~\emph{grid rank} of a binary structure $\mathbf A$, denoted by $\gr(\mathbf A)$, is the least integer $k$ such that there is a total order $\prec$ of $A$ satisfying $\gr(\adj{\prec}{\mathbf A})=k$.

We will not need the original definition of twin-width (presented in the introduction) generalized to binary structures.\footnote{The definition is similar with red edges appearing between the contraction of $u$ and $v$, and vertex $z$ whenever $(u,z)$ and $(v,z)$ have different atomic types. We refer the curious reader to~\cite{twin-width1}.}
So we do \emph{not} reproduce it here.
Instead we recall that the twin-width and the grid rank of a binary structure are functionally equivalent, and we encourage the reader to think of the twin-width of $\mathbf A$, $\tww(\mathbf A)$, simply as its grid rank $\gr(\mathbf A)$.

\begin{theorem}[\cite{twin-width4}]\label{thm:mixed-number-gen2}
  There is a computable function $f: \mathbb N \to \mathbb N$ such that for every binary structure~$\mathbf A$, the following two implications hold:
  \begin{compactitem}
  \item If $\tww(\mathbf A) \leqslant k$, then there is a total order $\prec$ of $A$ such that $\gr(\adj{\prec}{\mathbf A}) \leqslant f(k)$, and
  \item If there is a total order $\prec$ of $A$ such that $\gr(\adj{\prec}{\mathbf A}) \leqslant k$, then $\tww(\mathbf A) \leqslant f(k)$.
  \end{compactitem}
  Furthermore there are computable functions $g, h: \mathbb N \to \mathbb N$ and an algorithm running in time $h(k) \cdot |A|^{O(1)}$ which inputs an adjacency matrix $\adj{\prec}{\mathbf A}$ without rank-$k$ division and outputs a $g(k)$-sequence of $\mathbf A$.
\end{theorem}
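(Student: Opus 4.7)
The plan is to prove the two implications separately, the second being substantially harder, and then to observe that the argument for the second direction is constructive enough to deliver the claimed algorithm. Throughout I will identify twin-width with the ``red number of a contraction sequence'' from the introduction, and grid rank with the finest obstruction to a block-structured adjacency matrix. The function $f$ will emerge as an iterated combination of Ramsey-type bounds (for the first implication) and Marcus--Tardos-type bounds (for the second).

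For the first implication (low twin-width implies some ordering with low grid rank), I would start from any $k$-sequence $\mathbf A = \mathbf A_n, \ldots, \mathbf A_1$, view its contractions as a binary tree over the vertices of $\mathbf A$, and read off an ordering $\prec$ by a depth-first traversal of this tree, so that vertices merged early lie close in $\prec$. The key claim is that $\adj{\prec}{\mathbf A}$ cannot admit a rank-$f(k)$ division. A putative rank-$f(k)$ division concentrates, by the tree structure of $\prec$, within the descendants of some still-uncontracted trigraph vertex $v$; the many distinct rows or columns in one of its cells would then translate, at the moment just before $v$ is further contracted, into many pairwise distinguishable neighborhoods outside $v$, forcing red degree more than $k$. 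A Ramsey cleanup converts ``distinct'' into ``pairwise distinguishable,'' which is where the first layer of $f$ comes from.

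For the second and harder implication, assume we are given an ordering $\prec$ with $\gr(\adj{\prec}{\mathbf A}) \leqslant k$. The plan is to construct a balanced ``contraction partition tree'' by recursively splitting $\prec$ into consecutive blocks, and to merge vertices inside each block bottom-up. The crucial structural ingredient is a Marcus--Tardos-type concentration: since no rank-$(k+1)$ division exists, inside any sufficiently refined zone the rows (resp.\ columns) fall into only $O_k(1)$ distinct types, and the ``errors'' (rows or columns witnessing that a zone is not constant) can be corralled into $O_k(1)$ rows and columns per zone. One then shows that at every stage of the resulting sequence, each trigraph vertex has red edges only to a number of other vertices depending solely on $k$: a red edge at scale $s$ comes from an error at that scale, and by the concentration only few scales contribute at each vertex. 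The main obstacle, and where I expect most of the technical work to sit, is controlling the red degree \emph{globally}: locally bounded errors must be summed over all scales of the coarsening in a way that does not accumulate, and this is precisely the step where the right invariant (e.g.\ a potential function on the partition tree) has to be chosen so that $f$ remains finite.

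For the algorithmic version, each step above is effective: the Marcus--Tardos extractions and the type-partitioning within a zone can be performed by enumerating patterns of size bounded in $k$, which takes polynomial time in $|A|$ and computable time in $k$. The partition tree has depth $O(\log |A|)$ and polynomial work per level, so assembling the contraction sequence takes $h(k) \cdot |A|^{O(1)}$ time and produces a $g(k)$-sequence with $g$ essentially the same iterated Ramsey/Marcus--Tardos function that governs $f$. The most delicate part of the whole proof is, I expect, the combinatorial lemma connecting absence of a rank-$k$ division to the hypothesis needed for a Marcus--Tardos concentration argument, i.e.\ showing that ``no large rank division along $\prec$'' is exactly the right uniform condition to translate local blockiness into a globally bounded contraction sequence.
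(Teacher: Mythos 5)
This theorem is not proved in the present paper at all: it is imported verbatim from the cited reference \cite{twin-width4}, so there is no in-paper proof to compare your attempt against. Judged on its own terms, your two-directional plan does track the strategy of the cited proof -- for the first implication one reads an order off the contraction tree of a $d$-sequence and argues that a large rank division would force a vertex of large red degree at some moment of the sequence; for the second one contracts along a hierarchy of consecutive $\prec$-intervals and uses a Marcus--Tardos-type extraction to show that unbounded red degree at some stage would yield a large rank division. So the roadmap is the right one.

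However, as written this is a plan rather than a proof, and you yourself flag the gap: you do not supply the invariant that controls the red degree globally in the second implication. The way the actual argument closes this gap is worth naming concretely, because it is simpler than a potential function on a partition tree. One works with the \emph{division sequence} of $\prec$, i.e.\ the sequence of partitions of $A$ into consecutive intervals obtained by repeatedly merging two adjacent parts; the trigraph at each stage has a red edge between two parts exactly when the corresponding zone of $\adj{\prec}{\mathbf A}$ is not constant (more precisely, has rank at least $2$, or in the refined version rank at least $k$). If at every choice of which adjacent pair to merge some part ends up with more than $d$ high-rank zones in its row or column, then marking the high-rank zones with $1$'s in an auxiliary $0,1$-matrix produces a matrix that is dense in every division class, and the Marcus--Tardos theorem (in its ``$k$-grid'' form) then extracts $k$ row parts and $k$ column parts all of whose $k^2$ zones have rank at least $k$ -- exactly a rank-$k$ division, contradicting $\gr(\adj{\prec}{\mathbf A})\leqslant k$. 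The red-degree bound is thus immediate from the contrapositive; no summation over scales is needed because the contraction sequence \emph{is} the division sequence. Your first implication also needs care: the DFS order of the contraction tree does not by itself localize a rank division inside the descendants of a single node, and the actual argument instead tracks, for each trigraph $\mathbf A_i$ of the sequence, the partition of $\prec$ into the (not necessarily consecutive) parts and uses the bounded red degree to bound the number of non-constant zones per row, again via a counting/Ramsey step. So: right skeleton, but the two load-bearing lemmas are asserted rather than proved.
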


It was shown in a previous paper of the series \cite{twin-width4} that highly-structured rank divisions can always be found in large rank divisions.
We now make that statement precise.
Let $\mat{k}{0}$ be the $k^2 \times k^2$ permutation matrix such that if $\mat{k}{0}$ is divided in $k$ row parts and $k$ column parts, each of size $k$, there is exactly one 1 entry in each cell of the division, and this 1 entry is at position $(i,j)$ of the $(j,i)$-cell; see~leftmost matrix in~\cref{fig:canonicaler}. 
For every positive integer $k$ and $s \in \{1,\uparrow,\downarrow,\leftarrow,\rightarrow\}$, let $\mat{k}{s}$ be the $k^2 \times k^2$ $0,1$-matrix defined from $\mat{k}{0}$ by doing one of the following operations:
\begin{compactitem}
\item switching 1 entries and 0 entries, if $s=1$,
\item turning 0 entries into 1 entries if there is a 1 entry somewhere below them, if $s =\, \uparrow$,
\item turning 0 entries into 1 entries if there is a 1 entry somewhere above them, if $s =\, \downarrow$,
\item turning 0 entries into 1 entries if there is a 1 entry somewhere to their right, if $s =\, \leftarrow$,
\item turning 0 entries into 1 entries if there is a 1 entry somewhere to their left, if $s =\, \rightarrow$.
\end{compactitem}  
We call $\mat{k}{s}$ a \emph{universal pattern} and $\{\mat{k}{s}~:~k \in \mathbb N\}$ a \emph{permutation-universal family}; see~\cref{fig:canonicaler} for an illustration.

\begin{figure}[h!]
    \centering
    \begin{tikzpicture}[scale=.235]

   \foreach \symb/\b/\xsh/\ysh in {{==}/0/-30/-10, {>}/0/-10/-10,{<}/0/0/-10,{<}/1/10/-10,{>}/1/20/-10}{
   \begin{scope}[xshift=\xsh cm,yshift=\ysh cm]     
    \foreach \i in {0,...,8}{
      \foreach \j in {0,...,8}{
        \pgfmathsetmacro{\ip}{\i+1}
        \pgfmathsetmacro{\jp}{\j+1}
        \pgfmathsetmacro{\col}{ifthenelse(\i == mod(\j,3)*3+floor(\j/3),"black",ifthenelse(\b==1,ifthenelse(\i \symb mod(\j,3)*3+floor(\j/3),"black","white"),ifthenelse(\j \symb mod(\i,3)*3+floor(\i/3),"black","white")))}
        \fill[\col] (\i,\j) -- (\i,\jp) -- (\ip,\jp) -- (\ip,\j) -- cycle;
      }
    }
    \draw[line width=0.75pt, scale=1, color=black!20!blue] (0, 0) grid (9, 9);
    \draw[line width=1.25pt, scale=3, color=black!10!yellow] (0, 0) grid (3, 3);
   \end{scope}
   }
   \foreach \symb/\b/\xsh/\ysh in {{!=}/0/-20/-10}{
   \begin{scope}[xshift=\xsh cm,yshift=\ysh cm]     
    \foreach \i in {0,...,8}{
      \foreach \j in {0,...,8}{
        \pgfmathsetmacro{\ip}{\i+1}
        \pgfmathsetmacro{\jp}{\j+1}
        \pgfmathsetmacro{\col}{ifthenelse(\i == mod(\j,3)*3+floor(\j/3),"white","black")}
        \fill[\col] (\i,\j) -- (\i,\jp) -- (\ip,\jp) -- (\ip,\j) -- cycle;
      }
    }
    \draw[line width=0.75pt, scale=1, color=black!20!blue] (0, 0) grid (9, 9);
    \draw[line width=1.25pt, scale=3, color=black!10!yellow] (0, 0) grid (3, 3);
   \end{scope}
   }
    \end{tikzpicture}
    \caption{The six universal patterns with $k=3$. The black cells always represent 1 entries, and white cells, 0 entries. From left to right, $\mat{3}{0}$, $\mat{3}{1}$, $\mat{3}{\uparrow}$, $\mat{3}{\downarrow}$, $\mat{3}{\leftarrow}$, and $\mat{3}{\rightarrow}$.
    We always adopt the convention that the matrix entry at position $(1,1)$ is the bottom-left one.}
    \label{fig:canonicaler}
  \end{figure}
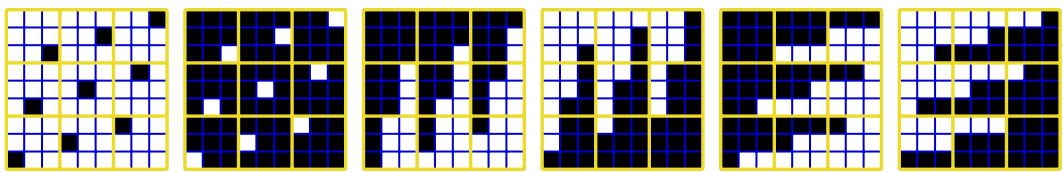

It was shown that, taking the adjacency matrix of a graph $G$ along some order, either yields a matrix with bounded grid rank, and \cref{thm:mixed-number-gen2} effectively gives a favorable contraction sequence of $G$, or yields a matrix with huge grid rank, wherein a large universal pattern can be extracted:

\begin{theorem}[\cite{twin-width4}]\label{thm:canonicaler}
Given $M$ an adjacency matrix of an $n$-vertex graph $G$, and an integer~$k$, there is an $f(k)n^{O(1)}$-time algorithm which either returns
\begin{compactitem}
\item $\mat{k}{s}$ as an off-diagonal submatrix of $M$, for some $s \in \{0,1,\uparrow,\downarrow,\leftarrow,\rightarrow\}$,
\item or a contraction sequence certifying that $\tww(G) \leqslant g(k)$.
\end{compactitem}
where $f$ and $g$ are computable functions.
\end{theorem}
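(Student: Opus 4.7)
The plan is to invoke \cref{thm:mixed-number-gen2} on the input adjacency matrix $M$ with a threshold $K=K(k)$ chosen large enough, and then, in the bad case, extract one of the six universal patterns from the rank division it produces. Concretely, set $K$ to a Ramsey-type function of $k$ (determined by the extraction step below) and run the algorithm underlying \cref{thm:mixed-number-gen2} on $(M,K)$. If it certifies $\gr(M)\leqslant K$, we obtain in time $h(K)\cdot n^{O(1)}=f(k)\cdot n^{O(1)}$ a $g(K)$-contraction sequence of $G$, and the second alternative of the theorem holds. Otherwise, the algorithm exhibits a rank-$K$ division $\mathcal D=(\mathcal D^R,\mathcal D^C)$ of $M$ whose every one of the $K^2$ cells has combinatorial rank at least~$K$.

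From $\mathcal D$ I first pass to an off-diagonal refinement. Because $M$ is the symmetric adjacency matrix of a graph and the diagonal is irrelevant, I would keep the odd-indexed row parts and the even-indexed column parts, producing a rank-$\lfloor K/2 \rfloor$ division whose row supports and column supports are disjoint sets of vertices; any submatrix extracted from now on will be off-diagonal. One can always make this refinement because a row part and a column part have disjoint supports unless their ranges of indices overlap, which the parity selection rules out.

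The combinatorial heart is the extraction of a universal pattern from this off-diagonal rank division. In each cell, the rank being at least $\Omega(K)$ lets one locate via the Marcus--Tardos theorem a small permutation submatrix (or, via \cref{thm:bipRamsey}, a bipartite-complement version of one). Invoking \cref{thm:erdos-szekeres} inside each cell, I would select from this permutation submatrix a single ``characteristic'' 1-entry whose position in the cell belongs to one of finitely many qualitative types (e.g., whether 1-entries lie above, below, to the left, or to the right of this entry, within the cell). These types are designed so that, once cells are colored by them, the assembled pattern over a monochromatic subgrid of cells realizes exactly one of $\mat{k}{0},\mat{k}{1},\mat{k}{\uparrow},\mat{k}{\downarrow},\mat{k}{\leftarrow},\mat{k}{\rightarrow}$. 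I would then apply \cref{thm:Ramsey} (or its obvious bipartite/hypergraph strengthening) to the $\Omega(K)\times \Omega(K)$ grid of cells, colored by their type, and take $K$ large enough so that a monochromatic $k^2\times k^2$ subgrid of cells is guaranteed. The chosen 1-entries in such a subgrid form the desired $\mat{k}{s}$ as an off-diagonal submatrix of~$M$.

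The main obstacle is the third step, namely identifying a finite list of cell-types whose Ramsey closure produces precisely the six universal patterns and no more: one must rule out hybrid shapes by carefully amalgamating the Erd\H{o}s--Szekeres monotonicity inside each cell with the global lexicographic structure imposed by the Marcus--Tardos permutation. The running time is governed by $K$, which depends only on $k$; the call to \cref{thm:mixed-number-gen2} takes $h(K)\cdot n^{O(1)}$, while the extraction manipulates only $O(K^2)$ cells and runs in time polynomial in $K$ (hence in $n^{O(1)}$). Overall, the algorithm runs in time $f(k)\cdot n^{O(1)}$ for computable $f,g$, as required.
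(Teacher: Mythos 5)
Your overall strategy coincides with the paper's: \cref{thm:canonicaler} is obtained by running the algorithm of \cref{thm:mixed-number-gen2} and, when a large rank division is exhibited, extracting a universal pattern from it. The paper, however, does not reprove the extraction step; it explicitly states that \cref{thm:canonicaler} is a consequence of \cref{thm:mixed-number-gen2} together with \cref{thm:rd-to-up}, the latter being imported as a black box from the cited prior work. Your proposal instead tries to re-derive that extraction via Marcus--Tardos, Erd\H{o}s--Szekeres and Ramsey arguments, and you concede that the decisive step --- exhibiting a finite list of cell-types whose Ramsey closure yields exactly the six universal patterns and nothing else --- remains an unresolved ``obstacle.'' That step is precisely the content of \cref{thm:rd-to-up}, so as written your argument has a genuine gap at its combinatorial core; you should either cite that theorem outright (as the paper does) or actually carry out the type analysis.

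A second, more local problem is your off-diagonal refinement. Keeping the odd-indexed row parts and the even-indexed column parts does give disjoint row and column supports, but the retained cells then alternate between lying above and below the diagonal, whereas the paper defines an off-diagonal submatrix as one \emph{entirely} contained strictly above (or strictly below) the diagonal --- disjointness of the index sets is only a consequence, not the definition. The correct refinement, used repeatedly elsewhere in the paper, is to view the rank-$2K$ division as a $2\times 2$ arrangement of rank-$K$ divisions and keep, say, the first $K$ row parts together with the last $K$ column parts, so that every retained cell sits strictly on one side of the diagonal.
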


In the previous theorem, an \emph{off-diagonal} submatrix of a square matrix is entirely contained strictly above the diagonal, or entirely contained strictly below it.
In particular, its row indices and column indices are disjoint.
For graphs (and symmetric matrices in general), we can further impose that the off-diagonal submatrix is above the diagonal.

The previous theorem is a consequence of~\cref{thm:mixed-number-gen2} and:
\begin{theorem}\label{thm:rd-to-up}
  There is a computable function $f: \mathbb N \to \mathbb N$ such that every $0,1$-matrix with grid rank at least $f(k)$ has $\mat{k}{s}$ as an off-diagonal submatrix for some $s \in \{0,1,\uparrow,\downarrow,\leftarrow,\rightarrow\}$.
\end{theorem}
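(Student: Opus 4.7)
The plan is to prove Theorem~\ref{thm:rd-to-up} by three successive Ramsey-style refinements, passing from unstructured high grid rank to one of the six canonical patterns $\mat{k}{s}$. The first step is to convert a rank-$\ell$ division (with $\ell = f(k)$ large) into a \emph{blown-up permutation} substructure: an $N \times N$ sub-division in which each cell contains exactly one ``distinguished'' entry and these distinguished entries trace a permutation pattern. This is achieved by selecting, in each of the $\ell^2$ cells, a distinguishing pair of rows (or columns) afforded by combinatorial rank $\geq \ell$, then extracting a permutation-like sub-division via an iterated application of the Marcus--Tardos theorem, in the manner of previous papers of the series. Here $N$ is an increasing computable function of $\ell$.

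The second step homogenizes the cells lying off the permutation diagonal of the blown-up structure. Applying the bipartite Ramsey theorem (Theorem~\ref{thm:bipRamsey}) inside each off-permutation cell yields a monochromatic sub-biclique; a product Ramsey argument across the cells then refines the grid down to a sub-grid of size $N' \to \infty$ in which every off-permutation cell is uniformly $0$ or uniformly $1$. After this step, the extracted submatrix is entirely described by a $\{0,1\}$-coloring of its off-permutation cell positions, together with the permutation itself on the diagonal cells.

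The third step classifies this coloring. I first invoke the Erdős--Szekeres theorem (Theorem~\ref{thm:erdos-szekeres}) to pass to a monotone sub-permutation, giving well-defined notions of ``above'', ``below'', ``left'', and ``right'' of the permutation diagonal. A Ramsey argument on the $2$-coloring of cell positions then returns a sub-grid of size $k$ on which the coloring is one of six canonical configurations: always $0$, always $1$, or ``$1$ precisely when a permutation entry lies below / above / left of / right of the cell''. These six configurations are exactly the patterns $\mat{k}{0}, \mat{k}{1}, \mat{k}{\uparrow}, \mat{k}{\downarrow}, \mat{k}{\leftarrow}, \mat{k}{\rightarrow}$. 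To guarantee that the extracted submatrix is off-diagonal (rows and columns indexing disjoint subsets of $[n]$), I would initially restrict to a submatrix of $M$ in which the rows and columns used for extraction lie in disjoint halves of $[n]$, losing only a factor of $2$ in the quantitative bounds.

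The main obstacle is the third step, and in particular proving that exactly these six shapes exhaust the Ramsey-closed possibilities. This amounts to showing that a monotone Ramsey-homogeneous $\{0,1\}$-coloring of the cells off a permutation diagonal has a finite vocabulary consisting of the two constants and the four half-plane indicators, so that after enough homogenization one of these six configurations must be forced. The function $f$ obtained is an iterated tower of Marcus--Tardos and bipartite-Ramsey bounds, and is therefore computable.
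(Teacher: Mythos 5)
First, note that the paper you are working from does not actually prove \cref{thm:rd-to-up}: it is imported verbatim from the earlier paper of the series~\cite{twin-width4}, where it is established via the ``rich division'' machinery (extraction of mixed corners, a Marcus--Tardos argument, and a case analysis on horizontal versus vertical witnesses). Your overall strategy --- witness extraction, Marcus--Tardos, then Ramsey homogenization --- is the right family of ideas, and the reduction to an off-diagonal submatrix by restricting to a quadrant is fine.

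There is, however, a genuine error in your third step. All six patterns $\mat{k}{s}$ are supported on the \emph{non-monotone} grid permutation of $\mat{k}{0}$ (the $k^2$-permutation placing one entry in each cell of the $k$-division); this non-monotonicity is essential, since already $\mat{2}{0}$ contains a descent and hence is not a submatrix of any monotone permutation matrix, and likewise $\mat{2}{\uparrow}$ has column thresholds $1,3,2,4$ and so is not a submatrix of a step matrix with monotone thresholds. Passing to a monotone sub-permutation via Erd\H{o}s--Szekeres therefore destroys exactly the structure you need to exhibit: after monotonization you can never recover $\mat{k}{s}$ for $k\geq 2$, and moreover for a monotone support the four half-plane indicators collapse pairwise ($\uparrow$ with $\leftarrow$, $\downarrow$ with $\rightarrow$), so your ``six configurations'' are not even distinct there. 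A second, related gap is that your homogenized object records only one color per off-permutation cell, whereas in $\mat{k}{\uparrow},\mat{k}{\downarrow},\mat{k}{\leftarrow},\mat{k}{\rightarrow}$ the cells containing permutation entries are themselves mixed (they carry a step structure determined by the global position of the other permutation entries); a cell-by-cell product-Ramsey argument does not control these entries, and the claim that the Ramsey-closed colorings are exhausted by the two constants and the four half-plane indicators --- which you yourself flag as the main obstacle --- is precisely the content of the theorem and is nowhere argued. The proof in~\cite{twin-width4} avoids both problems by keeping the full grid permutation (never monotonizing) and by classifying the fill according to whether the extracted witnesses are horizontal or vertical changes and on which side the $1$-entries lie.
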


We will sometimes need a generalization of~\cref{thm:canonicaler} to arbitrary binary structures.
\begin{theorem}[\cite{twin-width4}]\label{thm:canonicaler-gen}
Given $M$ an adjacency matrix of a binary structure $\mathbf A$ on $n$ elements, and an integer~$k$, there is an $f(k)n^{O(1)}$-time algorithm which either returns
\begin{compactitem}
\item an isomorph of $\mat{k}{s}$ as an off-diagonal submatrix of $M$, for some $s \in \{0,1,\uparrow,\downarrow,\leftarrow,\rightarrow\}$,
\item or a contraction sequence certifying that $\tww(\mathbf A) \leqslant g(k)$.
\end{compactitem}
where $f$ and $g$ are computable functions.
\end{theorem}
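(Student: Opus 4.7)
The statement generalizes Theorem~\ref{thm:canonicaler} from graphs (0/1 adjacency matrices) to arbitrary binary structures (adjacency matrices over a finite alphabet $\Sigma$ of atomic types of pairs, whose size depends only on the signature). The overall template is identical: invoke Theorem~\ref{thm:mixed-number-gen2} on the adjacency matrix $M$ of $\mathbf A$ along some order; if the grid rank is small, its algorithm returns a $g(k)$-sequence directly, so the only remaining task is the ``large grid rank $\Rightarrow$ off-diagonal universal pattern'' direction. Note that Theorem~\ref{thm:mixed-number-gen2} is already stated for binary structures, so it is the universal pattern extraction that needs work.

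The plan is therefore to establish a $\Sigma$-alphabet analogue of Theorem~\ref{thm:rd-to-up}: there is a computable $F$ such that if an adjacency matrix $M$ over $\Sigma$ satisfies $\gr(M)\geq F(k,|\Sigma|)$, then for some $\tau\in\Sigma$ the $0/1$ matrix $M_\tau$ (defined by $M_\tau[i,j]=1$ iff $M[i,j]=\tau$) contains $\mat{k}{s}$ as an off-diagonal submatrix. Once this is granted, applying it to $M$ and using the algorithm underlying Theorem~\ref{thm:canonicaler} on $M_\tau$ extracts the required isomorph of $\mat{k}{s}$ within $M$ (the ``isomorph'' being precisely the preimage under the binarization $\tau$ vs.~$\Sigma\setminus\{\tau\}$). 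Combined with Theorem~\ref{thm:mixed-number-gen2} one obtains the advertised dichotomy in time $f(k)n^{O(1)}$.

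To prove the alphabet version of Theorem~\ref{thm:rd-to-up}, I start with a rank-$r$ division $\mathcal D$ of $M$ for $r$ chosen large in terms of $k$ and $|\Sigma|$. In any cell with $\geq r$ distinct rows, view each row as a word in $\Sigma^{\text{width}}$; two rows differ in $M$ iff they differ in some $M_\tau$, so the product over $\tau\in\Sigma$ of the number of distinct rows in $M_\tau$ (restricted to the cell) is at least $r$, hence some single $\tau$ yields $\geq r^{1/|\Sigma|}$ distinct rows in $M_\tau$ in that cell. A pigeonhole over the $k^2$ cells of $\mathcal D$ then selects one $\tau^\star\in\Sigma$ which is ``responsible'' for the rank in a positive fraction of cells; extracting a subdivision $\mathcal D'$ supported on rows/columns covered by these cells gives a rank-$r'$ division of $M_{\tau^\star}$ with $r'\to\infty$ as $r\to\infty$. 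Theorem~\ref{thm:rd-to-up} applied to $M_{\tau^\star}$ then produces the desired $\mat{k}{s}$ off-diagonal (off-diagonalness being preserved because $\mathcal D'$ is itself a division of a square submatrix of $M$).

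The main obstacle is the double pigeonhole in the last paragraph: ensuring that after selecting a single atomic type $\tau^\star$ for which the per-cell ranks remain large, one can still restrict to a genuine rank-$r'$ subdivision of $M_{\tau^\star}$ with $r'$ as a Ramsey-type function of $r$ and $|\Sigma|$, and that this restriction continues to live off the diagonal. The quantitative bounds on $F$ are awful, but only their computability matters for Theorem~\ref{thm:canonicaler-gen}; the rest of the argument is routine in view of the graph case.
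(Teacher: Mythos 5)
The reduction to the $0,1$ case via the binarizations $M_\tau$ does not deliver what the theorem asks for. Granting the Ramsey extraction of a sub-grid of cells all charged to the same symbol $\tau^\star$ (which, note, requires bipartite Ramsey with roughly $2|\Sigma|$ colours as in \cref{lem:union}, not a ``positive fraction'' pigeonhole — a positive fraction of cells need not contain a combinatorial grid of cells), your argument produces a submatrix $N$ of $M$ in which the positions where $\mat{k}{s}$ has a $1$ all carry $\tau^\star$, while the positions where it has a $0$ carry arbitrary symbols from $\Sigma\setminus\{\tau^\star\}$. That is not an isomorph of $\mat{k}{s}$ in the sense the paper uses: an isomorph is the image of $\mat{k}{s}$ under a \emph{bijection} on entries, hence has exactly two distinct values, whereas the $0$-positions of $N$ may be populated by many different symbols. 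Your parenthetical ``the isomorph being the preimage under the binarization $\tau$ vs.\ $\Sigma\setminus\{\tau\}$'' conflates a bijective recoding with the many-to-one collapse $\Sigma\setminus\{\tau^\star\}\mapsto 0$.

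Closing this gap is not a routine afterthought. For $s=0$ (and symmetrically $s=1$) one of the two symbol classes occupies all but a $1/k^2$ fraction of the $k^2\times k^2$ pattern, so one must homogenize almost all entries of $N$ while simultaneously preserving a large permutation sub-pattern in the sparse class; this interaction between the Ramsey step and the pattern structure is exactly where the alphabet enters the argument of \cite{twin-width4} in an essential way, and it cannot be read off from the single binarization $M_{\tau^\star}$. The remainder of your outline is sound: the product bound on distinct rows over the $M_\tau$'s is correct, and the selected row/column parts can be coarsened into a genuine rank division of the full square matrix $M_{\tau^\star}$, so that \cref{thm:rd-to-up} applies to a matrix sharing the diagonal of $M$ and off-diagonality is preserved. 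As written, however, the proposal establishes only the strictly weaker conclusion that some symbol class of $M$ realizes the $1$-pattern of $\mat{k}{s}$ with the $0$-pattern merely avoiding that symbol.
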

An~\emph{isomorph} $N$ of a matrix $M$ is obtained by applying a bijection to the entries of $M$.

In~\cite{twin-width4},~\cref{thm:canonicaler-gen} was used for ordered binary structures.
More specifically, the order~$\prec$ such that $M=\adj{\prec}{\mathbf A}$ was a binary relation of the structure $\mathbf A$.
In that context, the first item was lowerbounding the twin-width; thereby providing an FPT approximation for computing twin-width of ordered binary structures.
In the current paper, $\prec$ need not be part of the structure $\mathbf A$.
Thus the isomorph of $\mat{k}{s}$ does not \emph{a priori} provide any lower bound on the twin-width.
For all we know, another order $\prec'$ could be such that $\adj{\prec'}{\mathbf A}$ has no large rank division.


We now give a couple of lemmas that are useful to bound the grid rank of a matrix. 
\begin{lemma}\label{lem:x-vs-yz}
  There is a computable function $f$ such that the following holds.
  Let $G$ be a graph, $\prec$ a total order on $V(G)$, and $X,Y,Z \subseteq V(G)$.
  If $\adj{\prec}{G \langle X,Y \rangle}$ and $\adj{\prec}{G \langle X,Z \rangle}$ have both grid rank less than $k$, then $\adj{\prec}{G \langle X,Y \cup Z \rangle}$ has grid less than~$f(k)$.
\end{lemma}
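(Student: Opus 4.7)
The plan is to prove the contrapositive: fix $K := f(k)$ to be chosen later, and suppose $\adj{\prec}{G\langle X, Y \cup Z \rangle}$ admits a rank-$K$ division $\mathcal{D}$ with column parts $C_1, \ldots, C_K$ (intervals of $X$ in $\prec$) and row parts $R_1, \ldots, R_K$ (intervals of $Y \cup Z$ in $\prec$). I then want to extract from $\mathcal{D}$ a rank-$k$ division of either $\adj{\prec}{G\langle X, Y \rangle}$ or $\adj{\prec}{G\langle X, Z \rangle}$. The key preliminary observation is that each $R_i \cap Y$ is a (possibly empty) $\prec$-interval of $Y$, and together the $R_i \cap Y$ partition $Y$ into $K$ such intervals; the analogous statement holds for $Z$. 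The column parts $C_j$ need no modification.

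The central step is a cell-by-cell analysis. Consider a cell $R_i \cap C_j$, which has combinatorial rank at least $K$. If the cell contains $\geq K$ distinct rows, then by pigeonhole at least $\lceil K/2 \rceil$ of them lie in $R_i \cap Y$ or at least $\lceil K/2 \rceil$ lie in $R_i \cap Z$; correspondingly, the restriction of the cell to rows of $Y$ or of $Z$ already has combinatorial rank $\geq K/2 \geq \sqrt{K}$. If instead the cell contains $\geq K$ distinct columns, I use that each column is uniquely encoded by the pair consisting of its restriction to $R_i \cap Y$ and its restriction to $R_i \cap Z$: letting $d_Y$ and $d_Z$ denote the number of distinct columns in the two restrictions, we have $K \leq d_Y \cdot d_Z$, so $\max(d_Y, d_Z) \geq \sqrt{K}$. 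Hence in every cell, at least one of the two restrictions has combinatorial rank $\geq \sqrt{K}/2$.

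Color each cell of the $K \times K$ division $\mathcal{D}$ by $\mathsf{Y}$-good or $\mathsf{Z}$-good (choosing arbitrarily if both apply). Apply the bipartite Ramsey theorem (\cref{thm:bipRamsey}) with $K := \Bipram{k}{k}$ to obtain row indices $i_1 < \cdots < i_k$ and column indices $j_1 < \cdots < j_k$ such that all $k^2$ cells $(i_s, j_t)$ share one color, say $\mathsf{Y}$-good (the other case is symmetric). I then merge consecutive row parts of the family $\{R_i \cap Y\}_i$ so that the indices $i_1, \ldots, i_k$ end up in distinct super-row-parts $R'_1, \ldots, R'_k$ (e.g.\ $R'_s := \bigcup_{i_s \leq i < i_{s+1}}(R_i \cap Y)$, with obvious modifications at the ends), and merge column parts analogously into $C'_1, \ldots, C'_k$. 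Each super-cell $R'_s \cap C'_t$ contains the $\mathsf{Y}$-good sub-cell $(R_{i_s} \cap Y) \cap C_{j_t}$ as a contiguous submatrix, and since distinct rows (resp.\ columns) of the sub-cell remain distinct in the super-cell, the combinatorial rank of the super-cell is at least $\sqrt{K}/2$. Setting $f(k) := \max(\Bipram{k}{k}, 4k^2)$ guarantees $\sqrt{K}/2 \geq k$, yielding a rank-$k$ division of $\adj{\prec}{G\langle X, Y \rangle}$, a contradiction.

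The only subtleties I expect are (i) the product bound in the column case, which requires observing that the row-splitting of each column is injective, and (ii) handling possibly empty $R_i \cap Y$: such a row index $i$ is automatically not $\mathsf{Y}$-good, but then every cell in that row must be $\mathsf{Z}$-good by the cell analysis, so the two-coloring argument goes through without change, and the chosen good indices $i_s$ necessarily correspond to non-empty super-parts.
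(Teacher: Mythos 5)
Your proposal is correct and follows essentially the same route as the paper's proof: argue by contrapositive, two-color the cells of a large rank division according to whether the $Y$-restriction or the $Z$-restriction retains large combinatorial rank, and apply bipartite Ramsey to extract a monochromatic subgrid yielding a rank-$k$ division of one of the two biadjacency matrices. The only differences are cosmetic but favorable: your product bound $K\leq d_Y\cdot d_Z$ in the column case replaces the paper's $2^k$-type counting and gives a polynomially smaller $f(k)=\max(\Bipram{k}{k},4k^2)$, and you spell out the coarsening into super-parts and the treatment of empty $R_i\cap Y$, details the paper leaves implicit.
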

\begin{proof}
  Let $k$ be such that $\max(\gr(\adj{\prec}{G \langle X,Y \rangle}), \gr(\adj{\prec}{G \langle X,Z \rangle})) < k$ and let $q = \max(\Bipram{k}{k},2^k)$.
  We claim that $\gr(\adj{\prec}{G \langle X,Y \cup Z \rangle})<q$.

  Assume for the sake of contradiction that $\mathcal D$ is a rank-$q$ division of $\adj{\prec}{G \langle X,Y \cup Z \rangle}$.
  Color each cell $C$ of $\mathcal D$ by $Y$ if its intersection with $Y$ has at least $k$ distinct rows, and by $Z$, otherwise.
  Observe that if the latter holds, then the intersection of $C$ with $Z$ has at least $\min(\lfloor \log q \rfloor,q-k) \geqslant k$ distinct rows.
  Otherwise $C$ would have less than $q$ distinct rows \emph{and} less than $q$ distinct columns; a contradiction to $\mathcal D$ being a rank-$q$ division.

  Since $q \geqslant \Bipram{k}{k}$, by~\cref{thm:bipRamsey}, $\mathcal D$ contains $k$ row parts and $k$ column parts whose $k^2$ cells are all colored $Y$, or all colored $Z$; contradicting $\gr(\adj{\prec}{G \langle X,Y \rangle}) < k$ or $\gr(\adj{\prec}{G \langle X,Z \rangle}) < k$. 
\end{proof}


If two structures $\mathbf A, \mathbf B$ have the same domain (but not necessarily the same signatures $\sigma_A$ and $\sigma_B$), then we denote by $\mathbf A \cup \mathbf B$ the $(\sigma_A \cup \sigma_B)$-structure obtained by concatenating the relations of $\mathbf A$ and the relations of $\mathbf B$.
\begin{lemma}\label{lem:union}
  There is a computable function $f$ such that the following holds.
  For every pair of structures $\mathbf A, \mathbf B$ on the same domain $A$, and total order $\prec$ on $A$, such that $\adj{\prec}{\mathbf A}$ and $\adj{\prec}{\mathbf B}$ both have grid rank less than~$k$, then $\gr(\adj{\prec}{\mathbf A \cup \mathbf B}) < f(k)$. 
\end{lemma}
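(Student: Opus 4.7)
The plan is to generalize the Ramsey-based argument from \cref{lem:x-vs-yz} by using a four-color bipartite Ramsey-type argument in place of the two-color one, accounting for the fact that the combinatorial rank of a cell can be witnessed by distinct rows or distinct columns, in either $\mathbf A$ or $\mathbf B$.

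The key observation is that since $\sigma_A$ and $\sigma_B$ are disjoint signatures, the atomic type of a pair $(u,v)$ in $\mathbf A \cup \mathbf B$ is the concatenation of its atomic types in $\mathbf A$ and in $\mathbf B$. Consequently, each row of $\adj{\prec}{\mathbf A \cup \mathbf B}$ is determined by the pair formed by the corresponding rows of $\adj{\prec}{\mathbf A}$ and of $\adj{\prec}{\mathbf B}$. Hence if a submatrix of $\adj{\prec}{\mathbf A \cup \mathbf B}$ has at least $m$ distinct rows, then the corresponding submatrix of $\adj{\prec}{\mathbf A}$ or of $\adj{\prec}{\mathbf B}$ has at least $\lceil\sqrt{m}\rceil$ distinct rows; the analogous statement holds for columns.

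Set $q := \max(k^2, R_4(k))$, where $R_4(k)$ is the 4-color bipartite Ramsey number ensuring that any 4-coloring of the cells of an $R_4(k) \times R_4(k)$ grid admits a monochromatic $k \times k$ sub-grid (obtained by two iterations of \cref{thm:bipRamsey}, hence computable). Suppose for contradiction that $\adj{\prec}{\mathbf A \cup \mathbf B}$ admits a rank-$q$ division $\mathcal D$. For each cell $C$ of $\mathcal D$, combinatorial rank at least $q \geq k^2$ together with the observation above forces at least one of the following: (i) $C$ has $\geq k$ distinct rows in $\adj{\prec}{\mathbf A}$, (ii) $\geq k$ distinct columns in $\adj{\prec}{\mathbf A}$, (iii) $\geq k$ distinct rows in $\adj{\prec}{\mathbf B}$, (iv) $\geq k$ distinct columns in $\adj{\prec}{\mathbf B}$.

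Assigning to each cell one color among (i)--(iv) it satisfies, the 4-color bipartite Ramsey statement yields a $k \times k$ subdivision of $\mathcal D$ in which all cells share a common color. If that color is (i) or (ii), the subdivision is a rank-$k$ division of $\adj{\prec}{\mathbf A}$; if (iii) or (iv), of $\adj{\prec}{\mathbf B}$. Either way we contradict $\gr(\adj{\prec}{\mathbf A}), \gr(\adj{\prec}{\mathbf B}) < k$. Thus $\gr(\adj{\prec}{\mathbf A \cup \mathbf B}) < q$, and we set $f(k) := q$. There is no real obstacle beyond bookkeeping the 4-color Ramsey bound; the main point is the structural remark that rows (and columns) of $\adj{\prec}{\mathbf A \cup \mathbf B}$ split as pairs from $\mathbf A$ and $\mathbf B$.
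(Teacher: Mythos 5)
Your proof is correct and follows the same overall structure as the paper's: view each entry of $\adj{\prec}{\mathbf A \cup \mathbf B}$ as a pair of entries from $\adj{\prec}{\mathbf A}$ and $\adj{\prec}{\mathbf B}$, argue cell by cell that a high-rank cell of the union must have a high-rank trace in one of the two structures, colour the cells accordingly, and finish with bipartite Ramsey to extract a monochromatic sub-division. The one place where you genuinely diverge is the per-cell step: the paper applies Ramsey's theorem to an auxiliary graph on the $q$ distinct rows (edges coloured by whether the rows differ in their $\Sigma_A$- or $\Sigma_B$-projection), which forces $q \geqslant \Ram{k}{k}$; you instead observe that $m$ distinct pairs can only arise from at least $\lceil\sqrt{m}\rceil$ distinct values in one coordinate, so $q \geqslant k^2$ already suffices. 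Your pigeonhole argument is simpler and gives a polynomially rather than exponentially large bound for that step (the final bound is still dominated by the Ramsey number for the cell colouring). Conversely, your four-colour scheme is slightly more than necessary: since a rank-$k$ division only requires each cell to have $k$ distinct rows \emph{or} $k$ distinct columns, the two colours $A$ and $B$ used in the paper already suffice, and the extra iteration of \cref{thm:bipRamsey} only worsens the constant. Both points are cosmetic; the argument is sound.
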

\begin{proof}
  Let $k$ be such that $\max(\gr(\adj{\prec}{\mathbf A}), \gr(\adj{\prec}{\mathbf B})) < k$ and let $q = \max(\Ram{k}{k},$ $\Bipram{k}{k})$.
  We claim that $\gr(\adj{\prec}{\mathbf A \cup \mathbf B})<q$.
  Note that the entries $\adj{\prec}{\mathbf A \cup \mathbf B}$ can be seen as ranging in the direct product $\Sigma = \Sigma_A \times \Sigma_B$ of the alphabet $\Sigma_A$ of $\adj{\prec}{\mathbf A}$ with the alphabet $\Sigma_B$ of $\adj{\prec}{\mathbf B}$.

  For the sake of contradiction, assume that there is a rank-$q$ division $\mathcal D$ of $\adj{\prec}{\mathbf A \cup \mathbf B}$.
  For each cell $C$ of $\mathcal D$, say, $C$ has a set $Z$ of $q$ distinct rows ($q$ distinct columns would be dealt with similarly).
  Since two rows $z,z' \in Z$ are different if their projections on $\Sigma_A$ or on $\Sigma_B$ are different.
  In the former case, we put an edge of color $c_A$ between $z$ and $z'$, and in the latter case, an edge of color $c_B$.
  Applying~\cref{thm:Ramsey} to this auxiliary graph, we get a monochromatic clique of $k$ rows of $Z$ with distinct projection on $\Sigma_X$ with $X \in \{A,B\}$.
  This is because $q \geqslant \Ram{k}{k}$.

  We color cell $C$ by $X \in \{A,B\}$, and apply~\cref{thm:bipRamsey} to the division $\mathcal D$.
  We obtain a $k$-division $\mathcal D'$ (included in $\mathcal D$) where all $k^2$ cells have the same color $X$ for some $X \in \{A,B\}$.
  This is because $q \geqslant \Bipram{k}{k}$.
  The division $\mathcal D'$ is a rank-$k$ division in either $\mathbf A$ or $\mathbf B$; a contradiction.
\end{proof}


\medskip

We recall that FO model checking is FPT on classes of bounded twin-width, when the inputs are given with a contraction sequence.
\begin{theorem}[\cite{twin-width1}]\label{thm:fomc}
  Let $\mathbf A$ be a binary structure of bounded twin-width given with a~\mbox{$d$-sequence}, and $\varphi$ be an FO sentence of quantifier rank~$k$.
  Then one can decide $\mathbf A \models \varphi$ in linear FPT time $f(d,k) \cdot |A|$.
\end{theorem}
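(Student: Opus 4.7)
The plan is to exploit the $d$-sequence as a hierarchical decomposition of $\mathbf{A}$ and to evaluate $\varphi$ by a bottom-up dynamic program that, at each ``bag'' of the decomposition, maintains a first-order type of quantifier rank $k$ whose size is bounded by a function of $d$ and $k$ only.

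First I would encode the contraction sequence $G = G_n, \ldots, G_1 = K_1$ as a rooted binary tree $T$ with leaf set $A$: its internal nodes correspond to the contractions performed along the sequence, and each node $v$ naturally represents the set $S_v \subseteq A$ of leaves below it, i.e.\ the original elements merged into the trigraph vertex $\hat{v}$ at the step where $v$ is created. Red edges of the corresponding trigraph incident to $\hat{v}$ record exactly which other bags have ``unresolved'' adjacency with $S_v$; by assumption there are at most $d$ such bags at any moment. Black edges, on the contrary, already encode a fully resolved bipartite adjacency between $S_v$ and another bag.

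Next I would set up a type system. For a node $v$ with red boundary $R_v$ of size at most $d$, and for every possible pattern of choices of representatives inside each $u \in R_v$, record the $k$-FO type of the expanded structure $(\mathbf{A}[S_v], c_1, \ldots, c_{|R_v|})$ with constants naming those representatives. Because $|R_v| \leq d$, the standard Ehrenfeucht--Fra\"iss\'e bound gives a constant $c(d,k)$ on the number of $k$-types of such expansions, so each node carries $O_{d,k}(1)$ bits of information. The main combinatorial step is then a Feferman--Vaught-style composition lemma: when a node $v$ is created by contracting its two children $v_1, v_2$, the type stored at $v$ is a computable function of the types stored at $v_1, v_2$ together with the constant-size data describing how edges (black or red) run between $S_{v_1}, S_{v_2}$ and between each of these sets and $R_v$. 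Processing the tree bottom-up then takes time $f(d,k) \cdot n$, and reading off the type at the root decides $\mathbf{A} \models \varphi$.

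The main obstacle is formalising the composition step in presence of red edges. A red edge $\hat u \hat v$ represents a ``don't know yet'' between the bags $S_u$ and $S_v$, but it gets resolved only when we descend into one of the two sides; consequently, we must thread along the tree the actual adjacency pattern of $S_v$ to each individual vertex of its red boundary, not merely the fact that a red edge exists. What saves us is precisely the bounded red degree: at each node there are at most $d$ boundary bags to keep track of, and the adjacency pattern only needs to be recorded modulo $k$-FO equivalence, which caps the size of the type space by a function of $d$ and $k$ alone. Once this composition lemma is established, the FPT algorithm, and hence~\cref{thm:fomc}, follows immediately by a single pass through $T$.
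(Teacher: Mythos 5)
Your overall architecture (a linear pass over the contraction sequence, with the red-degree bound $d$ controlling the amount of information kept per part) matches the spirit of the algorithm in the cited reference, but the central composition step does not work as stated, and this is exactly the difficulty the actual proof is built to avoid. When $v$ is created by contracting $v_1$ and $v_2$ that were joined by a \emph{red} edge, the bipartite graph between $S_{v_1}$ and $S_{v_2}$ in $\mathbf A$ is completely arbitrary. Knowing the quantifier-rank-$k$ types of $\mathbf A[S_{v_1}]$ and $\mathbf A[S_{v_2}]$, even expanded by constants naming representatives of the other side, does not determine the rank-$k$ type of $\mathbf A[S_{v_1}\cup S_{v_2}]$: Feferman--Vaught-style composition requires the two pieces to interact through a disjoint union, a homogeneous (complete or empty) bipartite connection, or an interface of bounded \emph{size}, and a red edge provides none of these. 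Your proposed fix --- recording the type of $(\mathbf A[S_v],c_1,\dots,c_{|R_v|})$ ``for every possible pattern of choices of representatives'' in the boundary bags --- is not an object of size $O_{d,k}(1)$, since it is indexed by tuples ranging over unboundedly large parts; and collapsing it to the \emph{set} of realized types loses the joint-realizability information (which tuples of representatives, drawn from several different boundary bags, simultaneously realize which types) that the composition step would need. Already a sentence such as $\exists x,y\in S_{v_1}\,\exists z\in S_{v_2}\,(E(x,z)\wedge E(y,z)\wedge\neg E(x,y))$ cannot be decided from per-representative type data of the two sides.

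The proof in the cited paper sidesteps this by never summarizing across a red edge. It maintains, for each vertex $u$ of the current trigraph $G_i$, a bounded-size \emph{reduced morphism tree} describing all ways to play $k$ quantification moves inside the substructure of $\mathbf A$ induced by the union of all parts within red-distance $r(k)$ of $u$ in $G_i$ (a ball containing at most $d^{O(r(k))}$ parts, by the red-degree bound); the only non-homogeneous adjacencies are along red edges, all of which stay \emph{inside} such a ball, while black edges between distinct parts are homogeneous and hence harmless. A contraction only changes the balls of $f(d,k)$ nearby vertices, and the stored objects are updated by a shuffle operation on morphism trees, giving the $f(d,k)\cdot|A|$ running time. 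So to repair your argument you would need to replace the two-way type composition at each tree node by some mechanism that keeps whole red-connected neighbourhoods together; as written, the composition lemma you rely on is false.
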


Finally the following is a particularly useful fact to bound the twin-width of a class.
\begin{theorem}[\cite{twin-width1}]\label{thm:transduction}
  Every FO transduction of a class with bounded twin-width has bounded twin-width.
  
Furthermore, given an FO transduction $\mathsf T$ and a class $\mathcal C$ on which $0(1)$-sequences can be computed in polynomial time, one can also compute $O(1)$-sequences for graphs of $\mathsf T(\mathcal C)$ in polynomial time. 
\end{theorem}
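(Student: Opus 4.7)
The plan is to decompose the transduction $\mathsf T$ into its two defining operations—a monadic $h$-lift for some fixed $h$, followed by a simple FO interpretation $\mathsf I$ with domain formula $\nu$ and relation-defining formulas $\varphi_R$ of bounded quantifier rank $q$—and to show that each operation preserves bounded twin-width separately. Since $\mathsf T(\mathcal C) = \mathsf I(\mathcal C^+)$, where $\mathcal C^+$ denotes the class of all $h$-lifts of structures in $\mathcal C$, the general statement follows by composition.

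For the monadic lift, take $\mathbf A \in \mathcal C$ with $\tww(\mathbf A) \leqslant d$ and, via \cref{thm:mixed-number-gen2}, a total order $\prec$ on $A$ along which $\adj{\prec}{\mathbf A}$ has grid rank at most $f(d)$. A monadic $h$-lift $\mathbf A^+$ adds $h$ unary relations; the adjacency matrix of the $i$-th unary relation alone (with rows/columns ordered by $\prec$) has grid rank at most $2$, since each of its columns is either the all-$0$ or all-$1$ column. Viewing $\mathbf A^+$ as the union $\mathbf A \cup \mathbf U_1 \cup \dots \cup \mathbf U_h$ of structures on the same domain, $h$ applications of \cref{lem:union} yield $\gr(\adj{\prec}{\mathbf A^+}) \leqslant f'(d,h)$, and the reverse direction of \cref{thm:mixed-number-gen2} delivers a bounded contraction sequence for $\mathbf A^+$.

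For the interpretation step, the key observation is that two tuples of the same $q$-type satisfy the same FO formulas of quantifier rank $\leqslant q$, and there are only finitely many $q$-types over a bounded signature. Starting from a $d'$-sequence for $\mathbf A^+$, one refines each contraction by first splitting the corresponding partition classes according to $q$-type; this refinement inflates the red degree by at most a factor $N(q,h)$ depending only on $q$ and the signature. Crucially, a Feferman–Vaught–style compositionality argument ensures that the $q$-type of a contracted vertex is determined by the $q$-types of its constituents together with the trigraph neighborhood, so the refinement is well-defined throughout. Restricting the refined sequence to the sub-domain $\nu(\mathbf A^+)$ (while retaining all contractions involving discarded vertices as trivial ones) gives a contraction sequence of $\mathsf I(\mathbf A^+)=\mathsf T(\mathbf A)$ with red degree bounded by some computable $g(d,|\mathsf T|)$.

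The main obstacle is the formal tracking of $q$-types through the contraction sequence, and in particular verifying that merging two vertices of identical $q$-type preserves the values of the interpretation formulas on all surviving pairs—this is what justifies the refinement without blowing up the red degree beyond $g(d,|\mathsf T|)$. For the effectiveness claim, the $q$-types of all vertices in each trigraph of the sequence can be computed in polynomial time for fixed $\mathsf T$ (as $q$ is a constant), so the refinement is explicit; the correct monadic coloring of $\mathbf A$ is part of the data specifying the target $\mathbf B \in \mathsf T(\mathcal C)$, so the promised $O(1)$-sequence of $\mathbf B$ is constructed in polynomial time from the given $O(1)$-sequence of $\mathbf A$.
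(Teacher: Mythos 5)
This theorem is imported from~\cite{twin-width1} as a black box; the present paper gives no proof of it, so the comparison is against the proof in that reference. Your two-stage decomposition (monadic $h$-lift, then simple interpretation) is exactly the architecture of the original argument. The lift stage is handled correctly in spirit: adding $h$ unary relations enlarges the alphabet of $\adj{\prec}{\mathbf A}$ by a factor depending only on $h$, and since the ``unary part'' of each entry is determined by its row and column indices, every cell of a division gains at most a constant number of distinct rows; combining with \cref{lem:union} and the two directions of \cref{thm:mixed-number-gen2} gives the bound. (Minor quibble: a unary relation does not have an adjacency matrix of its own; it is encoded in the entry alphabet, but your ``constant columns'' observation is the right one.)

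The interpretation stage is where the entire difficulty of the theorem lives, and your proposal names it rather than proves it. The issue is that splitting the parts of a $d'$-sequence ``according to $q$-type'' only helps if one can show two things, neither of which is routine. First, the relevant notion of type must be \emph{relative to the partition at each step}: the global $q$-type of a vertex is not preserved under contractions, so one needs a local invariant attached to each part (in \cite{twin-width1} this is the reduced morphism tree of depth $q$) together with a composition lemma showing that the invariant of a union of two parts is computable from the invariants of the parts and the black/red edges between them. Second, one must show that if $P$ and $Q$ are joined by a black edge in the trigraph and $x,x'\in P$ carry the same invariant, then $\varphi_R(x,y)\leftrightarrow\varphi_R(x',y)$ for all $y\in Q$ --- homogeneity for the \emph{original} relations does not transfer to the \emph{interpreted} relation for free, since $\varphi_R(x,y)$ may quantify over the whole structure. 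This is precisely the shuffle/compositionality lemma of \cite{twin-width1}, and the bounded red degree is what makes the invariant finite (each part interacts non-homogeneously with only $d'$ others). Your sentence ``a Feferman–Vaught–style compositionality argument ensures\ldots'' is a correct pointer to this machinery but does not constitute it; as written, the refinement step is asserted, not established. The effectiveness claim at the end is fine once the refinement lemma is in place, since the invariants are computable for fixed quantifier rank.
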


It is known \cite{twin-width2,twin-width&permutations} that classes of bounded twin-width have exponential growth.
Thus by the contrapositive, classes of super-exponential growth, like the following ones, have unbounded twin-width.
\begin{theorem}[\cite{twin-width2}]\label{thm:unbounded-tww}
  The following classes have unbounded twin-width:
  \begin{compactitem}
  \item the class $\mathcal G_{\leqslant 3}$ of every subcubic graph;
  \item the class $\mathcal B_{\leqslant 3}$ of every bipartite subcubic graph;
  \item the 2-subdivision of every biclique $K_{n,n}$.
  \end{compactitem}
\end{theorem}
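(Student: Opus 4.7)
The plan is to apply the growth bound stated just above: any class of bounded twin-width contains at most $c^n$ pairwise non-isomorphic $n$-vertex graphs for some constant $c$. By contraposition, for each of the three classes it suffices to exhibit super-exponentially many non-isomorphic graphs of a common vertex count, either in the class itself or in its hereditary closure, since twin-width is monotone under taking induced subgraphs (hence a class has bounded twin-width if and only if its hereditary closure does). The main subtlety will be the third item, which admits only a single graph per vertex count and hence no direct counting argument, forcing a detour through the hereditary closure.

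For the bipartite subcubic class $\mathcal B_{\leqslant 3}$ I would use a standard three-matchings count. Each ordered triple $(\sigma,\tau,\pi) \in S_n^3$ such that $\sigma(i),\tau(i),\pi(i)$ are pairwise distinct for every $i$ produces a simple bipartite $3$-regular graph on $\{u_1,\ldots,u_n\}\cup\{v_1,\ldots,v_n\}$ by superimposing the three matchings $u_i\leftrightarrow v_{\sigma(i)}$, $u_i\leftrightarrow v_{\tau(i)}$, $u_i\leftrightarrow v_{\pi(i)}$. A positive fraction of triples are good, each labeled cubic graph is produced by at most $6^n$ such triples (at each vertex the three incident edges can be colored by the three matchings in at most $3!$ ways), and each unlabeled graph on $2n$ vertices admits at most $2(n!)^2$ vertex labelings. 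Hence the number of non-isomorphic bipartite cubic graphs on $2n$ vertices is $\Omega((n!)^3/(6^n (n!)^2)) = \Omega(n!/6^n)$, which is super-exponential. Since $\mathcal B_{\leqslant 3}\subseteq \mathcal G_{\leqslant 3}$, this also settles the general subcubic case.

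For the class of 2-subdivisions of bicliques, the counting must be routed through the hereditary closure. The key observation is that for any bipartite graph $H \subseteq K_{n,n}$, the 2-subdivision of $H$ embeds as an induced subgraph of $2\text{-sub}(K_{n,n})$: keep the $2n$ original vertices, keep both subdivision vertices of each edge of $H$, and delete both subdivision vertices of each non-edge. Letting $H$ range over the super-exponentially many bipartite cubic graphs constructed in the previous step yields super-exponentially many induced subgraphs of $2\text{-sub}(K_{n,n})$ on $\Theta(n)$ vertices, pairwise non-isomorphic because original vertices (of degree $3$ in cubic $H$) are first-order distinguishable from subdivision vertices (of degree $2$), so any isomorphism between two such 2-subdivisions restricts to an isomorphism of the underlying cubic graphs. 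This super-exponential growth of the hereditary closure contradicts the growth bound, and therefore $\{2\text{-sub}(K_{n,n})\}$ has unbounded twin-width.
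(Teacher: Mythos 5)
Your proposal is correct and follows exactly the route the paper itself indicates: bounded twin-width forces at most exponential growth (a fact the paper quotes from the cited work), so exhibiting super-exponentially many non-isomorphic graphs on $\Theta(n)$ vertices in each class, or in its hereditary closure for the third item, yields the result. Your handling of the third item via induced copies of $2$-subdivided cubic bipartite graphs inside $2\text{-sub}(K_{n,n})$ is the right way to make the growth argument apply there, and the counting details are sound (any slack in the $2(n!)^2$ labeling bound versus $(2n)!$ is absorbed into an exponential factor and does not affect super-exponentiality).
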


\subsection{Twin-width win-wins}

A graph parameter $p$ is said~\emph{FO definable} if there is a function that inputs a positive integer~$k$ and outputs a first-order sentence $\varphi_k$ such that for every graph $G$, $p(G)=k$ if and only if $G \models \varphi_k$.
It is further~\emph{effectively FO definable} if an algorithm realizes that function and takes time $f(k)$ for some computable function $f$.

Examples of effectively FO definable parameters include $\alpha, \beta, \gamma, \lambda, \omega$, the size of a largest independent set, largest semi-induced biclique, smallest dominating set, largest semi-induced ladder, and largest clique, respectively.
On the contrary, the chromatic number $\chi$ is not FO definable: No FO sentence can express the fact that a graph is 2-chromatic (let alone, 3-chromatic).
All the mentioned parameters are at least W$[1]$-hard to compute (with $\gamma$ being even W$[2]$-complete, and $\chi$ paraNP-complete).

We say that a parameter $q$ is \emph{$p$-bounded on class $\mathcal C$}, denoted by \emph{$q \sqsubseteq p$ on $\mathcal C$} or \emph{$q \sqsubseteq^{\mathcal C} p$}, if there is a non-decreasing function $f$ such that for every graph $G \in \mathcal C$, $q(G) \leqslant f(p(G))$.
We say that twin-width is \emph{effectively $p$-bounded on $\mathcal C$}, denoted by \emph{$\tww \sqsubseteq_{\text{eff}} p$ on $\mathcal C$} or \emph{$\tww \sqsubseteq^{\mathcal C}_{\text{eff}} p$}, if further there is an algorithm that outputs a $g(p(G))$-sequence for every graph $G \in \mathcal C$ in time $h(p(G)) \cdot |V(G)|^{O(1)}$ for some computable functions $g, h$.

The following reduces the task of showing that an FO definable parameter $p$ is FPT on $\mathcal C$ to showing that $\tww \sqsubseteq^{\mathcal C}_{\text{eff}} p$ holds.

\begin{theorem}\label{thm:win-win}
  Let $p$ be an effectively FO definable parameter, and $\mathcal C$ a class such that $\tww \sqsubseteq^{\mathcal C}_{\text{eff}} p$.
  Then $p(G) \geqslant k$ for $G \in \mathcal C$ can be decided in FPT time $f(p(G)) \cdot |V(G)|^{O(1)}$ for some computable function $f$.
\end{theorem}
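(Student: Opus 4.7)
The plan is to combine the two ingredients offered by the hypothesis: the effective upper bound $\tww \sqsubseteq^{\mathcal C}_{\text{eff}} p$, which yields in polynomial time a contraction sequence of $G$ whose width is a computable function of $p(G)$, and the effective FO definability of $p$, which lets us express each threshold $p(G)=j$ by a first-order sentence. Together with the linear FPT FO model-checking algorithm of Theorem~\ref{thm:fomc} for graphs given with a bounded-width contraction sequence, this yields a direct win-win algorithm.

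First, I would run the algorithm witnessing $\tww \sqsubseteq^{\mathcal C}_{\text{eff}} p$ on $G$, obtaining in time $h(p(G)) \cdot |V(G)|^{O(1)}$ a contraction sequence of width at most $g(p(G))$, where $g, h$ are the computable functions from the definition of effective $p$-boundedness. Next, using the effective FO definability of $p$, for each non-negative integer $j$ I would compute in time $h'(j)$ a first-order sentence $\varphi_j$ such that $G \models \varphi_j$ if and only if $p(G) = j$. I would then iterate $j = 0, 1, \ldots, k-1$ and, for each $j$, invoke Theorem~\ref{thm:fomc} on $G$ equipped with the above sequence and the sentence $\varphi_j$, deciding $G \models \varphi_j$ in time $f'(g(p(G)), |\varphi_j|) \cdot |V(G)|$. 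As soon as some $j$ satisfies $G \models \varphi_j$, I would return NO (since $p(G) = j < k$); if no such $j$ is found in $\{0, \ldots, k-1\}$, I would return YES.

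For the runtime analysis, if $p(G) \geq k$ then exactly $k \leq p(G)$ model-checking calls are performed, and if $p(G) < k$ the iteration halts at $j = p(G)$ after at most $p(G)+1$ calls. In both cases, each such call costs at most $f'(g(p(G)), h'(p(G))) \cdot |V(G)|$, so summing these costs with that of computing the contraction sequence yields the claimed bound $f(p(G)) \cdot |V(G)|^{O(1)}$ for a suitable computable $f$. There is no real obstacle in the argument; the only mild subtlety is to observe that both the width of the sequence and the size of the relevant sentences stay bounded by a function of $p(G)$, which is automatic here since we only ever evaluate $\varphi_j$ for $j \leq p(G)$.
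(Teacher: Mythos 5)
Your proposal is correct and follows essentially the same route as the paper: compute the contraction sequence of width bounded by $g(p(G))$ via the effective $p$-boundedness, then model-check the sentences $\varphi_0,\ldots,\varphi_{k-1}$ with Theorem~\ref{thm:fomc}, with the runtime controlled exactly as you argue (the loop makes at most $\min(k,p(G)+1)\leqslant p(G)+1$ calls, each on a sentence of index at most $p(G)$). The only difference is that the paper adds an early-exit branch reporting $p(G)\geqslant k$ whenever the returned width exceeds $g(k)$, which is a harmless optimization and not needed for correctness of the stated bound.
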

\begin{proof}
  We run the algorithm witnessing that twin-width is effectively $p$-bounded on $\mathcal C$.
  We obtain a $d$-sequence $\mathcal S$ of the input graph $G$, with $d \leqslant g_{\mathcal C}(p(G))$ and $g_{\mathcal C}$, the corresponding bounding function.
  This takes time $f_1(p(G)) \cdot |V(G)|^{O(1)}$, for some computable function~$f_1$.

  If $d \leqslant g_{\mathcal C}(k)$, we call the FO model-checking algorithm of~\cite{twin-width1} (see \cref{thm:fomc}) on input $(G, \mathcal S, \varphi_i)$ for $i$ going from 0 to $k-1$.
  As $p$ is effectively FO definable, computing each $\varphi_i$ takes time $f_2(i)$ for some computable function $f_2$.
  The first time a call gets a positive answer, we correctly report that $p(G) < k$, and we have actually computed the exact value of $p(G)$.
  If all these calls get negative answers, we know that $p(G) \geqslant k$.
  If instead $d > g_{\mathcal C}(k)$, then as by assumption $d \leqslant g_{\mathcal C}(p(G))$ and $g_{\mathcal C}$ is non-decreasing, we can correctly report that $p(G) \geqslant k$.

  The overall running time is
  $$h(p(G)) \cdot |V(G)|^{O(1)} + \sum\limits_{0 \leqslant i \leqslant \min(k-1,p(G))} f_2(i) \cdot h(i,g_{\mathcal C}(p(G))) |V(G)|$$
  when $h(\cdot,\cdot)$ is the dependency in the parameters of~\cref{thm:fomc}.
  This running time is bounded by $f(p(G)) \cdot |V(G)|^{O(1)}$ for some computable function $f$.
\end{proof}

\subsection{Delineation}

A class $\mathcal D$ is said \emph{delineated by twin-width} (or \emph{delineated} for short) if for every hereditary closure $\mathcal C$ of a subclass of $\mathcal D$, it holds that $\mathcal C$ has bounded twin-width if and only if $\mathcal C$ is monadically dependent.
A class $\mathcal D$ is \emph{effectively delineated by twin-width} (or \emph{effectively delineated}) if further there is an FPT approximation of contraction sequences on~$\mathcal D$, that is, two computable functions $f, g$ and an algorithm that takes a binary structure $\mathbf A \in \mathcal D$ and an integer $k$ as input, and in time $f(k) \cdot |A|^{O(1)}$ outputs a $g(k)$-sequence for $\mathbf A$ or correctly reports that $\tww(\mathbf A) > k$.

Showing that a class $\mathcal D$ is effectively delineated establishes that, as far as efficient (that is, FPT) FO model checking is concerned,~\cref{thm:fomc} gives a complete picture of what happens on $\mathcal D$.
Indeed it is unlikely that a monadically independent class admits an FPT algorithm for FO model checking (see~\cref{subsec:prelim-fmt}).
Trivially, every class with bounded twin-width is delineated, and every class where $O(1)$-sequences can be found in polynomial time (see~\cite{twin-width2}) is effectively delineated.
We now list some non-trivial examples of (effectively) delineated classes.

\begin{theorem}[\cite{twin-width1,hlineny22,twin-width4} $+$ this paper]\label{thm:delineated}
  The following classes of binary structures are effectively delineated:
  \begin{compactitem}
  \item permutation graphs~\cite{twin-width1}; and even,
  \item circle graphs~\cite{hlineny22};\footnote{Hlinen{\'{y}} and Pokr{\'{y}}vka~\cite{hlineny22} show that any hereditary subclass of circle graphs excluding a~permutation graph has (effectively) bounded twin-width, which implies delineation, since the class of all permutation graphs have unbounded twin-width.}
  \item ordered graphs~\cite{twin-width4};
  \item interval graphs (see~\cref{subsec:ig}); and even,
  \item rooted directed path graphs (see~\cref{subsec:rdpg}).
  \end{compactitem}
\end{theorem}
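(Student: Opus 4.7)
The plan is to address only the two genuinely new items, interval graphs and rooted directed path graphs, since the remaining items of the list are lifted from the cited works~\cite{twin-width1,hlineny22,twin-width4}. For each of the two new items I need to establish (i) an FPT approximation algorithm for contraction sequences on the class itself, and (ii) that whenever a hereditary closure of a subclass has unbounded twin-width, it is monadically independent.

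For interval graphs, I would fix the canonical ordering $\prec$ obtained by sorting vertices by their left interval endpoints, breaking ties by decreasing right endpoint. Given an interval graph $G$ together with a representation, and a parameter $k$, feed $\adj{\prec}{G}$ to the algorithm of Theorem~\ref{thm:canonicaler-gen}. Either we obtain a $g(k)$-sequence of $G$ --- which settles the approximation --- or an off-diagonal isomorph of $\mat{k}{s}$ for some $s \in \{0,1,\uparrow,\downarrow,\leftarrow,\rightarrow\}$. In the latter case, I would exploit the staircase-like monotonicity of the adjacency matrix of an interval graph in left-endpoint order to rule out certain $s$, and then show that each remaining realizable pattern carries enough structure to FO-transduce any graph via a fixed transduction using constantly many unary colors. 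This simultaneously provides the promised lower bound on $\tww(G)$ (showing $\tww(G)>k$) and, by picking arbitrarily large $k$ from a hypothetically unbounded-twin-width hereditary closure, a witness of monadic independence.

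For rooted directed path graphs, the plan is analogous, with the canonical ordering given by a DFS-based linearization of the representing rooted tree $T$ (ordering a path by the depth of its topmost vertex, with ties broken by the DFS ordering of that vertex and then by the depth of the bottom endpoint). After applying Theorem~\ref{thm:canonicaler-gen}, the case analysis splits according to whether a found large universal pattern is essentially carried by paths lying on a single root-to-leaf branch of $T$ --- a sub-case that reduces to the interval argument --- or spreads across distinct branches, in which case the tree structure becomes encoded in the pattern. Lemmas~\ref{lem:x-vs-yz} and~\ref{lem:union} will help compose grid-rank bounds obtained from several branches or from different relations simultaneously.

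The main obstacle is the second direction: converting a universal pattern discovered in this canonical ordering into a genuine witness of monadic independence. Because $\prec$ is not part of the input structure (contrary to the ordered-graph setting of~\cite{twin-width4}), a large rank division along $\prec$ does not in itself imply unbounded twin-width of the underlying graph class, nor monadic independence. One must show that on interval and rooted-directed-path graphs the canonical ordering can be recovered, up to mirror and to permutations within modules, by an FO transduction with a constant number of unary colorings, so that the discovered pattern lifts to a canonical form from which any graph can be transduced. For rooted directed path graphs, the branching of the tree makes this recovery step substantially more delicate, and I expect it to be the technically heaviest part of the argument.
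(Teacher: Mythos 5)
Your overall framework (a canonical order, then either a contraction sequence via \cref{thm:canonicaler-gen} or a structured obstruction) matches the paper's, and you correctly identify the crux: an obstruction found along an order that is not FO-accessible does not by itself witness monadic independence. But your resolution of that crux has a genuine gap and, as stated, an incorrect intermediate claim. A universal pattern $\mat{k}{s}$ alone does \emph{not} ``carry enough structure to FO-transduce any graph'': $\mat{k}{0}$, say, only certifies a large semi-induced matching arranged in a grid pattern \emph{relative to $\prec$}, and a semi-induced matching whose sides carry no FO-definable order is monadically stable. Your fallback --- recovering the global left-endpoint (resp.\ lex-DFS) order by a transduction ``up to mirror and permutations within modules'' --- is precisely the step you defer, and it is doubtful: the left-endpoint order of an interval graph is far from unique (independent PQ-tree reversals, not one global mirror), and the lex-DFS order for rooted directed path graphs depends on a tie-breaking rule over the entire tree model. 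Also, your parenthetical that the pattern ``simultaneously provides the lower bound $\tww(G)>k$'' is false for the same reason, as you yourself note two paragraphs later.

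The paper closes this gap without any global order recovery. From a rank division of $\adj{\prec}{G}$ it extracts a \emph{semi-induced transversal pair of half-graphs} $T_t$ (\cref{lem:transversalPairInIntervalGraphs}), or a generalized $T_{t,2}$ in the rooted directed path case: the monotone ``staircase'' rows of the interval adjacency matrix give a half-graph between the extracted sets $A$ and $B$, and --- crucially --- the \emph{minimality} of the representation supplies a third set $C$ whose adjacency to $B$ forms a second half-graph ordering $B$ in a conflicting way. The two half-graphs make the order FO-definable \emph{locally on the extracted sets}, so that arbitrarily large $T_{t,\ell}$'s transduce all totally ordered bipartite matchings (\cref{thm:tk-mi}); \cref{thm:seq-or-tk} then gives (effective) delineation. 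This local-definability-via-transversal-pair device is the missing idea; without it, or a fully worked-out global order recovery, your argument does not go through.
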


Our proofs of (effective) delineation will follow the same path.
Either an $O(1)$-sequence of the graph is found (bounded twin-width) or an arbitrarily large semi-induced generalized transversal pair is detected.
We now see that the latter implies monadic independence (hence, in particular, unbounded twin-width).

\begin{lemma}\label{thm:tk-mi}
  Let $\ell$ be a fixed non-negative integer. 
  Let $\mathcal C$ be a hereditary class containing a semi-induced generalized transversal pair of half-graphs $T_{n,\ell}$, for every positive integer $n$.
  Then $\mathcal C$ is monadically independent. 
\end{lemma}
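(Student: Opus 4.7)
The plan is to invoke Baldwin--Shelah (\cref{thm:baldwin-shelah}) by showing that $\mathcal C$ transduces the class of all finite graphs. Since $\mathcal C$ is hereditary and contains a semi-induced $T_{n,\ell}$ for every $n$, it suffices to exhibit a single FO transduction $\mathsf T$, with a fixed number of unary lifts depending only on the constant $\ell$, which on appropriate monadic lifts of semi-induced $T_{n,\ell}$ produces every $n$-vertex graph $H$. The encoding will mark $a_{u,v} \in A$ with a unary predicate $M$ precisely when $uv \in E(H)$, and the transduction will read edges of $H$ back using formulas that navigate the half-graph/matching structure of $T_{n,\ell}$.

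First, I would add unary predicates $P_A, P_{B_0}, \ldots, P_{B_\ell}, P_C$ marking the $3+\ell$ canonical blocks, and in every subformula relativize the adjacency relation to the \emph{canonical} block pairs $A$--$B_0$, $B_{k-1}$--$B_k$, and $B_\ell$--$C$. The extra edges allowed by ``semi-induced'' sit only between non-canonical block pairs, and hence remain invisible to the construction. On the canonical half-graphs, neighborhood-inclusion yields FO-definable linear orders $\preceq_A, \preceq_{B_0}, \preceq_{B_\ell}, \preceq_C$, which recover the lex ordering of $A$ by $(i,j)$ and of $C$ by $(j,i)$. Combined with the matchings $B_{k-1}$--$B_k$, they give an FO-definable \emph{partner} function $\pi\colon A \to C$ sending $a_{i,j}$ to $c_{i,j}$: take the $\preceq_{B_0}$-least neighbor of $a_{i,j}$ in $B_0$ (namely $b^0_{i,j}$), walk the length-$\ell$ canonical path through $B_1, \ldots, B_\ell$ to $b^\ell_{i,j}$, and read off its $\preceq_C$-least neighbor in $C$.

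Next, I would introduce two ``row-marker'' predicates $R$ on $A$ and $R'$ on $C$, marking $\{a_{k,1}\}_k$ and $\{c_{1,k}\}_k$ respectively (the first elements of each row in the relevant lex orders). Using $R$ and $\preceq_A$, the relation $\text{SameARow}(a, a')$ becomes FO-definable as ``the $\preceq_A$-greatest $R$-marked element weakly below $a$ coincides with that of $a'$,'' and symmetrically for $\text{SameCRow}$ via $R'$ and $\preceq_C$. Finally, mark the $A$-diagonal with $D_A$ and mark $a_{u,v}$ (and $a_{v,u}$, for symmetry) with $M$ exactly when $uv \in E(H)$. The transduction $\mathsf T$ then takes $\{x : D_A(x)\}$ as its domain (identified with $V(H)$ via $a_{k,k} \mapsto k$) and defines $x \mathrel{E_H} y$ by
\[
\exists a' \, \bigl( P_A(a') \wedge M(a') \wedge \text{SameARow}(a', x) \wedge \text{SameCRow}(\pi(a'), \pi(y)) \bigr).
\]
The crucial point is the \emph{transposition} between $A$'s $(i,j)$-lex order and $C$'s $(j,i)$-lex order: constraining $a'$ to share the $A$-row of $a_{u,u}$ forces its first index to $u$, while constraining $\pi(a')$ to share the $C$-row of $\pi(a_{v,v}) = c_{v,v}$ forces its second index to $v$. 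The unique witness is $a' = a_{u,v}$, so $x \mathrel{E_H} y \iff M(a_{u,v}) \iff uv \in E(H)$.

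The main point requiring care, rather than a deep obstacle, is verifying that the partner map and the same-row predicates remain FO-definable despite the semi-induced slack: arbitrary extra edges may connect, say, $A$ to $B_2, \ldots, B_\ell$, or $A$ to $C$. This is handled by the block predicates, which confine every quantifier and every adjacency test to the intended canonical pair; one only needs to write each subformula with the correct $P_\bullet$-relativizations. Once this bookkeeping is done, the construction is a standard ``grid transduction'' exploiting precisely the transposition $(i,j) \leftrightarrow (j,i)$ built into $T_{n,\ell}$, which is exactly the feature distinguishing a genuine (generalized) transversal pair from two independently stacked half-graphs.
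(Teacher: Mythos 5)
Your proof is correct, and its engine is the same as the paper's: both arguments extract FO-definable linear orders on $A$ and $C$ from neighborhood-inclusion in the canonical half-graphs, chase the length-$\ell$ matching paths to link the two orders, and exploit the transposition $(i,j)\leftrightarrow(j,i)$ to obtain a definable grid; both also neutralize the semi-induced slack by relativizing every adjacency test to the canonical block pairs via unary marks. Where you differ is the endgame. The paper transduces the class of \emph{totally ordered bipartite matchings} from $T_{n,\ell}$ (first reducing $\ell>0$ to $\ell=0$ by a separate path-contracting transduction) and then cites as folklore that this class is monadically independent; you instead transduce the class of \emph{all finite graphs} directly, by marking $a_{u,v}$ with a predicate $M$ when $uv\in E(H)$, using row-marker predicates and the partner map $\pi$ to decode the two coordinates, and invoking Baldwin--Shelah. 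Your version is more self-contained (no appeal to the folklore fact about ordered matchings) at the cost of a few more unary predicates and a slightly longer decoding formula; the paper's version is shorter because the universal structure for ordered matchings is exactly the grid you build, so the remaining work is outsourced to the cited result. Both are valid; the only point I would insist you make explicit in a final write-up is that the $3+\ell$ block predicates also render edges \emph{inside} a block invisible (your formulas only ever test adjacency across a canonical pair), and that the nondeterministic choice of lifts is what lets a single transduction $\mathsf T$ realize every target graph $H$, which is what Baldwin--Shelah requires.
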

\begin{proof}
  It is folklore that the class $\mathcal M_b$ of all totally ordered bipartite matchings is monadically independent (see for instance~\cite{twin-width4,twin-width&permutations}).
  By \emph{totally ordered bipartite matching}, we mean two sets $X, Y$ of same cardinality, with a total order over $X \cup Y$ such that $X$ and $Y$ are each an interval along that order, and a matching between $X$ and $Y$.
  We shall just argue that $\mathcal M_b$ transduces in $\mathcal C$.
  We first show the lemma when $\ell=0$, that is, $\mathcal C$ contains a semi-induced $T_{n,0} = T_n$ for every $n$.  

  Let $(G=(X,Y,E(G)),\prec)$ be any member of $\mathcal M_b$.
  Let $x_1 \prec x_2 \prec \ldots \prec x_n$ be the elements of $X$, and $y_1 \prec y_2 \prec \ldots \prec y_n$, the elements of $Y$.
  Finally let $\pi$ be the permutation such that, for every $i \in [n]$, $x_iy_j \in E(G)$ if and only if $j=\pi(i)$.

  Let $(A,B,C)$ be the tripartition of a semi-induced $T_n$ in $\mathcal C$.
  The transduction $\mathsf T$ guesses the tripartition $(A,B,C)$ with 3 corresponding unary relations.
  Eventually $(X,Y)$ will be a subset of $(A,B)$.
  We interpret a total order on $A \cup B$ by $$x \prec y \equiv (A(x)~\land~B(y))~\lor~\big(x \neq y~\land~A(x)~\land~A(y)~\land~\forall z (B(z)~\land~E(x,z)) \rightarrow E(y,z)\big)$$
  $$\lor~\big(x \neq y~\land~B(x)~\land~B(y)~\land~\forall z (C(z)~\land~E(x,z)) \rightarrow E(y,z)\big).$$
  We then interpret a matching between $A$ and $B$ by $$\varphi(x,y) \equiv A(x)~\land~B(y)~\land~E(x,y)~\land~\forall z (z \prec x \rightarrow \neg E(z,y)).$$
  Observe that $\varphi$ and $\prec$ define a universal structure for totally ordered bipartite matchings on $2n$ vertices.

  In particular, a fourth unary relation can guess the domain $(X \subseteq A,Y \subseteq B)$, by picking the rows and columns of the biadjacency matrix $\adj{\prec}{A,B,\{ab~:~T_n \models \varphi(a,b)\}}$ corresponding to the 1 entries, which, in the regular $n$-division falls in the $(i,\pi(i))$-cells with $i \in [n]$.
  Thus $\mathsf T(T_n)$ outputs $(G,\prec)$.

  We now deal with the general case by reducing it to $\ell=0$.
  For that, we transduce a~semi-induced $T_n$ in a~semi-induced $T_{n,\ell}$.
  The transduction is imply based on the definition of generalized transversal pairs.
  It uses $3+\ell$ unary relations $A, B_0, \ldots, B_\ell, C$, redefines the domain as $A \cup B_0 \cup C$, keeps the edges between $A$ and $B_0$, and adds an edge between $x \in B_0$ and $y \in C$ if and only if there is a path from $x$ to $y$ going through $B_1, B_2, \ldots, B_\ell$, in this order.
  All of this is easily expressible in first-order logic.
\end{proof}

We deduce the following.
\begin{lemma}\label{thm:seq-or-tk}
  Let $\ell$ be a fixed non-negative integer. 
  Let $f: \mathbb N \to \mathbb N$ be any computable function, and $\mathcal C$ be a graph class.
  If for every natural $k$ and $G \in \mathcal C$, either $G$ admits an $f(k)$-sequence or $G$ has a semi-induced generalized transversal pair $T_{k,\ell}$, then $\mathcal C$ is delineated.

  Furthermore, if the contraction sequence can be found in time $g(k) \cdot |V(G)|^{O(1)}$ for some computable function $g$, then $\mathcal C$ is effectively delineated.
\end{lemma}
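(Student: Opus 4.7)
The plan is to split the proof along the two directions of delineation. The implication \emph{bounded twin-width $\Rightarrow$ monadically dependent} is free for every class, since every class of bounded twin-width is monadically dependent (a fact stated in~\cite{twin-width1} and used several times in this section). Hence the substance lies in the converse: for every hereditary closure $\mathcal D$ of a subclass $\mathcal C' \subseteq \mathcal C$, monadic dependence of $\mathcal D$ forces bounded twin-width. I will prove the contrapositive: unbounded twin-width of $\mathcal D$ implies monadic independence of~$\mathcal D$.

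The key step is to uncover a large semi-induced generalized transversal pair inside a single member of $\mathcal D$ for every $k$. Fix $\mathcal C' \subseteq \mathcal C$ whose hereditary closure equals $\mathcal D$, and note that $\mathcal C' \subseteq \mathcal D$. If $\mathcal D$ has unbounded twin-width, then for every positive integer $k$ there is some $G_k \in \mathcal D$ with $\tww(G_k) > f(k)$. By definition of hereditary closure, $G_k$ is an induced subgraph of some $G'_k \in \mathcal C'$, and induced-subgraph monotonicity of twin-width yields $\tww(G'_k) \geq \tww(G_k) > f(k)$. In particular, $G'_k$ admits no $f(k)$-sequence, so the lemma's hypothesis applied to $G'_k \in \mathcal C$ forces the second branch of the dichotomy: $G'_k$ contains a semi-induced $T_{k,\ell}$. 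Since $G'_k \in \mathcal C' \subseteq \mathcal D$, the hereditary class $\mathcal D$ contains a semi-induced $T_{k,\ell}$ for every~$k$, and~\cref{thm:tk-mi} immediately yields monadic independence of~$\mathcal D$.

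For the effective clause, I rely on the algorithmic version of the dichotomy promised in the hypothesis: given $G \in \mathcal C$ and $k$, one can in time $g(k)\cdot|V(G)|^{O(1)}$ construct an $f(k)$-sequence of $G$ whenever such a sequence exists. Replacing $f$ by $k \mapsto \max(f(k), k)$, I may assume $f$ is non-decreasing with $f(k) \geq k$. The approximation algorithm witnessing effective delineation of $\mathcal C$ then does the following on input $(G, k)$: run the above procedure with parameter~$k$; if it returns an $f(k)$-sequence, output it, which is a $g'(k)$-sequence with $g'(k) := f(k)$; otherwise no $f(k)$-sequence exists, so $\tww(G) > f(k) \geq k$ and the algorithm correctly reports $\tww(G) > k$. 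The overall running time is $g(k)\cdot|V(G)|^{O(1)}$, as required for FPT approximation.

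The main subtlety I anticipate is making sure the witness $G'_k$ produced in the second paragraph actually lies in $\mathcal D$, not merely in $\mathcal C$: this is precisely where the fact that $\mathcal D$ is the hereditary closure of a subclass of~$\mathcal C$ is used, to guarantee $\mathcal C' \subseteq \mathcal D$ and place every $G'_k$ inside $\mathcal D$ so that~\cref{thm:tk-mi} applies to $\mathcal D$ itself. The remaining ingredients---induced-subgraph monotonicity of twin-width and the invocation of~\cref{thm:tk-mi}---are routine.
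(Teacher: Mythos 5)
Your proof is correct and follows essentially the same route as the paper's four-line argument: everything reduces to \cref{thm:tk-mi} plus the fact that bounded twin-width implies monadic dependence, and your contrapositive formulation---lifting a high-twin-width member of the hereditary closure $\mathcal D$ back to a graph of $\mathcal C' \subseteq \mathcal C$ before invoking the dichotomy---is if anything more careful than the paper's, which tacitly treats $\mathcal D$ as a hereditary subclass of $\mathcal C$. The paper offers no proof of the ``furthermore'' clause, so your sketch is at least as complete there; the only caveat is that your step ``the procedure fails, hence no $f(k)$-sequence exists'' reads the loosely stated hypothesis as a complete search procedure, and the normalization $f(k) \geqslant k$ must be imposed on that procedure (not merely on the dichotomy) for the report $\tww(G) > k$ to be sound.
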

\begin{proof}
  Let $\mathcal D$ be a hereditary subclass of $\mathcal C$.
  Let $k \in \mathbb N \cup \{+\infty\}$ be the maximum generalized natural such that $\mathcal D$ admits a semi-induced generalized transversal pair $T_{k,\ell}$.
  If $k = +\infty$, then we conclude that $\mathcal D$ is monadically independent by~\cref{thm:tk-mi}, and thus has unbounded twin-width.
  If $k$ is finite, then $\mathcal D$ has bounded twin-width (by $f(k)$), and hence is monadically dependent. 
\end{proof}
By~\cref{thm:mixed-number-gen2}, in the previous theorem, the \emph{$f(k)$-sequence of $G$} can be replaced by an adjacency matrix of $G$ of grid rank at most $f(k)$. 

\medskip

On the contrary, the class of subcubic graphs is \emph{not} delineated.
Indeed the whole class is monadically dependent~(see for instance \cite{Podewski78}), even monadically stable, but has unbounded twin-width~\cite{twin-width2}.
We will see that the classes of segment graphs (even with some further restrictions) and visibility graphs of simple polygons are also \emph{not} delineated.
In some sense, what we do is to reduce to the easy case of subcubic graphs.

\begin{lemma}\label{lem:not-delineated}
If $\mathcal C$ admits a subclass which is transduction equivalent to $\mathcal G_{\leqslant 3}$ or to $\mathcal B_{\leqslant 3}$, then $\mathcal C$ is not delineated.
\end{lemma}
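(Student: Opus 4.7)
The plan is to produce a hereditary closure $\mathcal D$ of a subclass of $\mathcal C$ that is monadically dependent yet has unbounded twin-width, thereby contradicting the definition of delineation. Fix $\mathcal C' \subseteq \mathcal C$ transduction equivalent to $\mathcal G_{\leqslant 3}$ (the argument for $\mathcal B_{\leqslant 3}$ is identical), and set $\mathcal D$ to be the hereditary closure of $\mathcal C'$, which is by construction a hereditary closure of a subclass of $\mathcal C$.

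First I argue that $\mathcal D$ has unbounded twin-width. Since $\mathcal C'$ transduces $\mathcal G_{\leqslant 3}$ and bounded twin-width is preserved under transductions (\cref{thm:transduction}), the class $\mathcal C'$ cannot itself have bounded twin-width, for this would force $\mathcal G_{\leqslant 3}$ to have bounded twin-width, contradicting \cref{thm:unbounded-tww}. The inclusion $\mathcal C' \subseteq \mathcal D$ then yields unbounded twin-width for $\mathcal D$.

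Next I argue that $\mathcal D$ is monadically dependent. The key fact, a direct consequence of \cref{thm:baldwin-shelah} together with closure of transductions under composition, is that if $\mathcal A$ transduces $\mathcal B$ and $\mathcal A$ is monadically dependent, so is $\mathcal B$; otherwise a transduction from $\mathcal B$ to the class of all graphs, composed with the one from $\mathcal A$ to $\mathcal B$, would exhibit $\mathcal A$ as transducing all graphs. Now subcubic graphs are monadically stable, and in particular monadically dependent \cite{Podewski78}, and $\mathcal G_{\leqslant 3}$ transduces $\mathcal C'$, so $\mathcal C'$ is monadically dependent. Moreover, the operation of taking induced substructures is realised by a simple transduction (a single unary relation marks the surviving vertices and the domain formula restricts to it), so $\mathcal D \subseteq \mathsf T(\mathcal C')$ for that $\mathsf T$; applying the same preservation principle once more, $\mathcal D$ is monadically dependent.

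The two properties together exhibit $\mathcal D$ as a witness that $\mathcal C$ is not delineated. There is no real obstacle in this plan; the only thing to check carefully is that monadic dependence is inherited on the transduced side (the contrapositive of composition of transductions) and that hereditary closure is itself captured by a transduction, so the same preservation principle is applicable there.
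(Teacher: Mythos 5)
Your proposal is correct and follows essentially the same route as the paper: exhibit the hereditary closure of the transduction-equivalent subclass as a witness, derive unbounded twin-width from the fact that it transduces $\mathcal G_{\leqslant 3}$ (via \cref{thm:transduction} and \cref{thm:unbounded-tww}), and derive monadic dependence (indeed monadic stability) from the fact that $\mathcal G_{\leqslant 3}$ transduces it, using that hereditary closure is itself captured by a transduction. The only cosmetic difference is that the paper also remarks that $\mathcal G_{\leqslant 3}$ and $\mathcal B_{\leqslant 3}$ are transduction equivalent to treat both cases at once, whereas you run the identical argument separately, which is equally fine.
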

\begin{proof}
  It is easy to check that $\mathcal G_{\leqslant 3}$ and $\mathcal B_{\leqslant 3}$ are transduction equivalent, so we assume without loss of generality that $\mathcal C$ has a subclass $\mathcal D$ transduction equivalent to the class of all subcubic graphs.
  As $\mathcal D$, and a fortiori its hereditary closure $\mathcal D'$, transduces $\mathcal G_{\leqslant 3}$, the class $\mathcal D'$ cannot have bounded twin-width by~\cref{thm:transduction,thm:unbounded-tww}.
  As $\mathcal G_{\leqslant 3}$ transduces $\mathcal D$ and thus, its hereditary closure $\mathcal D'$, the class $\mathcal D'$ is monadically independent, and even monadically stable~\cite{Podewski78}.
\end{proof}

We will prove that the above mentioned classes are transduction equivalent to the class of all (bipartite) subcubic graphs and invoke~\cref{lem:not-delineated}.
Again, that shows something stronger: The considered classes contain a subclass whose hereditary closure is monadically stable and of unbounded twin-width. 

\medskip

\textbf{Organization.}
In the rest of the paper, we prove the results we mentioned concerning intersection graphs with a path or tree model (in~\cref{sec:paths-trees}), segment graphs (in~\cref{sec:segments}), and visibility graphs (in~\cref{sec:visibility}).
We end with a handful of open questions.

\section{Intersection graphs on trees and paths}\label{sec:paths-trees}

In this section, we show that interval graphs, and more generally rooted directed path graphs, are delineated. A formal statement is deferred 
till necessary definitions are ready.

\subsection{Interval graphs}\label{subsec:ig}

Let $G$ be an interval graph and $\I_G=\{I_v~:~v\in V(G)\}$ be an interval representation of~$G$ over the real line, 
where the interval $I_v$ is of the form $[i,j]$ for some integers $1 \leq i\leq j$. 
When the graph~$G$ is clear from the context, we omit the subscript of $\I_G$.
For every $v\in V(G)$ and the associated interval $I_v$, the first and the second entries of $I_v$ 
are called the left and the right endpoint of~$v$, and denoted as $\ell_v$ and $r_v$, respectively. 

This subsection is devoted to proving that interval graphs are delineated and intended as a preamble 
before considering the rooted directed path graphs. Moreover, interval graphs 
appear as a subcase in the proof for rooted directed path graphs.

\begin{proposition}\label{prop:delineation-interval}
  The class of all interval graphs is effectively delineated.
\end{proposition}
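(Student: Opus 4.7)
By Lemma~\ref{thm:seq-or-tk}, it suffices to design a polynomial-time algorithm that, on input an interval graph $G$ with interval representation and an integer $k$, either outputs an $f(k)$-sequence of $G$ or identifies a semi-induced generalized transversal pair $T_{k,\ell}$ in $G$ for some fixed $\ell$. My plan is to sort the vertices $v_1\prec\cdots\prec v_n$ by left endpoint and consider $M=\adj{\prec}{G}$. By the choice of $\prec$, for every $i<j$ we have $M[i,j]=1$ iff $r_{v_i}\geq \ell_{v_j}$, so each row of $M$ above the diagonal is a prefix of $1$-entries $\{i+1,\dots,e_i\}$ followed by $0$-entries --- the \emph{staircase} property. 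Applying Theorem~\ref{thm:canonicaler} to $M$ with parameter $k'$ chosen large enough as a function of $k$, we either obtain a $g(k')$-sequence of $G$ and return it, or we obtain an off-diagonal submatrix $N$ of $M$ isomorphic to some universal pattern $\mat{k'}{s}$. Since $M$ is symmetric we may take $N$ to lie above the diagonal, so every row of $N$ is also a prefix of $1$s.

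A short case analysis on $s\in\{0,1,\uparrow,\downarrow,\leftarrow,\rightarrow\}$ eliminates all patterns but $\mat{k'}{\leftarrow}$. The patterns $\mat{k'}{0}$ and $\mat{k'}{1}$ have rows carrying only one $1$-entry (resp.\ one $0$-entry) in varying positions, so for $k'\geq 2$ most rows are not prefixes of $1$s; $\mat{k'}{\rightarrow}$ has rows of the form $0$-prefix followed by $1$-suffix, the wrong monotone direction. For $\mat{k'}{\uparrow}$ and $\mat{k'}{\downarrow}$, a direct check (already visible at $k'=2$) shows that the row structure, dictated by the permutation underlying $\mat{k'}{0}$, fails to be a prefix of $1$s in most rows. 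The surviving pattern $\mat{k'}{\leftarrow}$ is compatible with the staircase, and after reordering rows by the number of $1$s it exhibits a semi-induced half-graph $H_k$ on two $k$-sets $A=\{a_1\prec\cdots\prec a_k\}$ and $B=\{b_1\prec\cdots\prec b_k\}$ with $A\prec B$ along $\prec$ and $a_i\sim b_j$ iff $i\leq j$.

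The main obstacle I expect lies in upgrading $H_k$ into a semi-induced $T_{k,\ell}$. Since $k'$ is chosen as a large function of $k$, the rank division witnessing $\mat{k'}{\leftarrow}$ cannot be accounted for by a single ladder: many interleaved staircases coexist, and I plan to extract from them a third vertex set $C=\{c_1\prec\cdots\prec c_k\}$ with $B\prec C$, possibly separated from $B$ by a fixed number $\ell$ of intermediate matching layers $B_1,\ldots,B_\ell$ used to reverse the orientation of the second half-graph through the short $B$-intervals, such that $b_i\sim c_j$ iff $i\geq j$. Concretely, I would apply the bipartite Ramsey theorem (Theorem~\ref{thm:bipRamsey}) to the sequence of right endpoints of the $B$-intervals against the left endpoints of vertices lying after $\max_\prec B$, to isolate $k$ indices realizing the reversed adjacency pattern. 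The geometric rigidity of interval representations is crucial here: it forces two \emph{crossing} ladders rather than one uncrossed one --- this is precisely the obstruction that distinguishes the monadically dependent subclasses of interval graphs from the rest. Once the $T_{k,\ell}$ is found, Lemma~\ref{thm:seq-or-tk} immediately yields the effective delineation of interval graphs.
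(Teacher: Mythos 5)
Your first two paragraphs are sound and match the paper's starting point: the left-endpoint order gives the staircase property (every row above the diagonal is a prefix of $1$s), and your case analysis correctly rules out all universal patterns except $\mat{k'}{\leftarrow}$. However, the step you yourself flag as the "main obstacle" is where the argument breaks, and the plan you sketch for it would not work.

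The structure you propose to build --- a half-graph between $A$ and $B$ with $a_i\sim b_j$ iff $i\leq j$, together with a second half-graph between $B$ and $C$ with $b_i\sim c_j$ iff $i\geq j$ --- is \emph{not} a transversal pair and is not an obstruction to bounded twin-width. A transversal pair $T_t$ requires the two half-graphs to order the middle set $B$ by the two \emph{transposed} lexicographic orders $(i,j)\mapsto(i,j)$ and $(i,j)\mapsto(j,i)$ on a $t\times t$ index grid; it is exactly this scrambling that lets $T_t$ transduce arbitrary ordered matchings and hence all graphs. If the second half-graph merely reverses the order of the first, the resulting three-layer structure is a consistently ordered chain of half-graphs, which has bounded twin-width and is monadically dependent (the paper explicitly notes that such chains have bounded twin-width). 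So even if your bipartite-Ramsey extraction of $C$ succeeded, it would prove nothing. Relatedly, by collapsing the rank division to a single universal pattern $\mat{k'}{\leftarrow}$ you discard precisely the two-dimensional information needed to create the transposition: the paper instead keeps the full grid of mixed cells $A_i\cap B_j$, picks $b_{i,j}$ from row part number $(j-1)t+i$ and $a_{i,j}$ from column part number $(i-1)t+j$ (note the swapped roles of $i$ and $j$), so that adjacency to the $a$'s follows $(i,j)$-lex while the left endpoints of the $b$'s are ordered by $(j,i)$-lex. The third layer $C$ is then obtained not by Ramsey but from the \emph{minimality} of the interval representation, which guarantees for each $b_{i,j}$ a vertex $c_{i,j}$ whose right endpoint lies between $\ell_{b_{i,j}}$ and the next left endpoint, so that $c_{i',j'}$ is adjacent to exactly the $b_{i,j}$ with $(j,i)\leqslant_{\text{lex}}(j',i')$. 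That index gymnastics, plus the disjointness of the start intervals of consecutive parts (extracted from the "$10$" transitions in each mixed cell), is the actual content of the proof and is absent from your proposal.
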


We recall that every class of bounded twin-width is monadically dependent.
To show that the hereditary closure of every monadically dependent subclass of interval graphs has bounded twin-width, we shall henceforth find a semi-induced transversal pair of half-graphs $T_t$ in any interval graph $G$ of twin-width at least $f(t)$. 
The delineation of interval graphs then follows from \cref{thm:tk-mi}.

For each interval graph $G$ with an interval representation $\I,$ 
we associate a total order $\prec$, following a lexicographic order on $\I$, defined in the following way.
We say $u\prec v$ if and only if $\ell_u<\ell_v$ or, $\ell_u=\ell_v$ and $r_u<r_v$. 
We assume that $G$ does not contain twins, which does not 
impede the generality of the proof. We also assume that $\I$ is a minimal representation 
in the sense that the sum of interval lengths is as small as possible. A consequence of 
the minimal representation assumption is that if $\ell_u < \ell_w$ for vertices $u,w\in V(G)$,
there exists a vertex $v\in V(G)$ such that $\ell_u \leq r_v <\ell_w$ since otherwise, 
the interval $I_u$ can be shortened by setting the new value for $\ell_u$ to $\ell_w$. This is the only property 
we use from the minimality assumption on $\I.$

Let $\C$ be a class of interval graphs of unbounded twin-width. 
Fix an arbitrary integer $t$, and let $G\in \C$ be an interval graph such that  the adjacency matrix $\adj{\prec}{G}$ of $G$ with rows and columns ordered by $\prec$ has a  rank-$4t^2$ division. 
Such a graph exists in $\mathcal{C}$ by \cref{thm:mixed-number-gen}. Let $\I$ be an interval representation of $G$. 
 A rank-$8t^2$ division of $\adj{\prec}{G}$ can be viewed as a $(2,2)$-division of $\adj{\prec}{G}$, each cell having a rank-$4t^2$ division. 
By choosing a suitable cell of the $(2,2)$-division, one can extract two disjoint vertex sets $A$ and $B$ of $V(G)$ such that 
(i) $A\prec B$ or $B\prec A$,  (ii) $A$ admits a division  $\mathcal A=\{A_1,\ldots, A_{2t^2}\}$ into $2t^2$ (non-empty) sets 
with $A_1\prec \dots\prec A_{2t^2}$, and (iii) $B$ admits a division $\mathcal B=\{B_1,\ldots, B_{4t^2}\}$ into $4t^2$ (non-empty) sets 
with $B_1\prec \dots \prec B_{4t^2}$. Without loss of generality, we assume $A\prec B$.

\bigskip

\noindent {\bf Exclusive region covering left endpoints of each $A_i$ and $B_j$.} 
For  vertices $u,v,w\in V(G)$ with $u\prec v\prec w$, observe that $(u,w)\in E(G)$ implies $(u,v)\in E(G)$. 
This means that in $\adj{\prec}{G}$ the zone $A\cap B$, i.e., the submatrix of $\adj{\prec}{G}$ obtained by taking  
rows indexed by $A$ and columns indexed by $B$,  the 1 entry at $(u,w)$ imposes 1 at  all the preceding entries in the same row. 
Since each zone $A_i\cap B_j$ is mixed, we infer that there exists a row containing $10$ as  
two consecutive entries.   

For each part $P\in \mathcal A \cup \mathcal B$, let ${\sf start}(P)$ be the minimal interval on the real line which covers the left endpoints $\{\ell_v:v\in P\}$, 
and we call ${\sf start}(P)$ the \emph{start interval} for $P$. Furthermore, for $P,P'\in \mathcal A \cup \mathcal B$ we write ${\sf start}(P)\leq {\sf start}(P')$ if $p\leq p'$ for every two points $p\in P$ and $p'\in P'$ (and equivalently for $<$). 
While it holds that ${\sf start}(B_1)\leq  \cdots \leq {\sf start}(B_j)  \leq \cdots \leq {\sf start}(B_{4t^2})$ due to the lexicographic order on $V(G)$, 
two consecutive start intervals share an endpoint. 
To extract a sub-collection with  disjoint start intervals, 
we argue that ${\sf start}(B_j) < {\sf start}(B_{j+2}).$ 
Consider two consecutive (w.r.t $\prec$) vertices $w,w'$ of $ B_{j+1}$ 
such that for some $u\in A$ the entry $(u,w)$ equals 1 and $(u,w')$ equals 0. The existence of such $w$ and $w'$ 
was argued in the previous paragraph. The mismatch of the two entries means that $\ell_w < \ell_{w'}$. 
It remains to observe that ${\sf start}(B_j)\leq \ell_w < \ell_{w'} \leq  {\sf start}(B_{j+2}).$

To see that ${\sf start}(A_i) < {\sf start}(A_{i+2}),$ 
consider the zones $A_i\cap B_{4t^2-2i+1}$ for all $i\in [2t^2]$. For every $i\in [2t^2],$ consider a vertex $a_i\in A_i$ 
for which "$10$" appears as two consecutive entries in the row indexed by $a_i$ within the zone $A_i\cap B_{4t^2-2i+1};$ 
the existence of such $a_i$ and the entries "10" on the row of $a_i$ were argued previously. Note that the entries  
"10" represent that $a_i$ is adjacent with some vertex of $B_{4t^2-2i+1}$ and non-adjacent with another vertex of $B_{4t^2-2i+1}$. 
Therefore, the right endpoint $r_{a_i}$ of the interval for $a_i$ is contained in ${\sf start}(B_{4t^2-2i+1})$. 
As we have ${\sf start}(B_{4t^2-2j+1}) < {\sf start}(B_{4t^2-2i+1})$ for every $1\leq i< j \leq 2t^2,$ we 
conclude $r_{a_j} < r_{a_i}$, which implies  $\ell_{a_i} < \ell_{a_j}$ due to the construction of $\prec$. 
Now, observe that ${\sf start}(A_i)\leq \ell_{a_i} < \ell_{a_{i+1}}\leq {\sf start}(A_{i+2})$, as claimed.

Therefore, after removing vertices associated with every other index, we assume that the collections $\mathcal A$ and $\mathcal B$ 
consist of $t^2$ sets each, and their start intervals satisfy 
$${\sf start}(A_1) < \cdots < {\sf start}(A_i) <\cdots {\sf start}(A_{t^2})\leq {\sf start}(B_1)< \cdots < {\sf start}(B_j) <\cdots {\sf start}(B_{t^2}).$$

\bigskip

\noindent {\bf Finding a transversal pair of half-graphs.} 
The following statement implies that the graph $G$ contains a semi-induced $T_t$ for our setting and finishes the proof of delineation of interval graphs. 
Lemma~\ref{lem:transversalPairInIntervalGraphs} will be reused in the later section on rooted directed path graphs.

\begin{lemma}\label{lem:transversalPairInIntervalGraphs}
	Let $G$ be an interval graph 
	and $\adj{}{G}$ an adjacency matrix of $G$. 
	Let $\mathcal{A}_1=\{A_1^1,\dots,A_{t^2}^1\}$ and $\mathcal{A}_2=\{A_1^2,\dots,A_{t^2}^2\}$ be two classes of pairwise disjoint vertex sets such that the following properties hold.
	\begin{enumerate}
		\item ${\sf start}(A_1^1) < \cdots  {\sf start}(A_{t^2}^1)\leq {\sf start}(A_1^2)< \cdots < {\sf start}(A^2_{t^2})$,
		\item The submatrix of $\adj{}{G}$ induced by rows $A_{i}^1$ and columns $A_{j}^2$ has rank at least $2$,
	\end{enumerate}
	where the starting interval ${\sf start}(A_i^j)$ is define with respect to a minimal interval representation of $G$.  
	Then $G$ contains a semi-induced transversal pair $T_t$.
\end{lemma}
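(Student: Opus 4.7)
The plan is to exploit the staircase structure of biadjacency submatrices in an interval graph together with the rank-$2$ condition in each of the $t^2\times t^2$ cells, and then perform iterated Ramsey-type cleanups to distill a semi-induced $T_t$.

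First, I make the rank-$2$ hypothesis concrete. For every pair $(i,j)\in [t^2]^2$, because ${\sf start}(A_i^1)\le {\sf start}(A_j^2)$, if we order $A_i^1$ by increasing right endpoint and $A_j^2$ by increasing left endpoint, the biadjacency submatrix is a staircase: the row of $u\in A_i^1$ is the characteristic vector of the prefix $\{v\in A_j^2:\ell_v\le r_u\}$. Rank at least $2$ is therefore equivalent to the existence of witnesses $u^-,u^+\in A_i^1$ and $v^-,v^+\in A_j^2$ with $\ell_{v^-}\le r_{u^-}<\ell_{v^+}\le r_{u^+}$. Fix such a quadruple in each cell; these serve as the local building blocks.

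Second, I apply iterated Ramsey arguments (\cref{thm:Ramsey}, \cref{thm:bipRamsey}, \cref{thm:erdos-szekeres}) to the array of $t^4$ witness quadruples. The goal is to select sub-indices $i_1<\cdots<i_t$ and $j_1<\cdots<j_t$ in $[t^2]$, and canonical witness choices in each refined cell $A_{i_p}^1\times A_{j_q}^2$, so that the $r$-values in the refined $A^1$-parts and the $\ell$-values in the refined $A^2$-parts are monotone in both indices $p$ and $q$. The quadratic blow-up from $t$ to $t^2$ in the hypothesis is precisely what absorbs the cost of these successive Ramsey-type extractions.

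Third, from the cleaned $t\times t$ sub-grid I assemble the three sides of a semi-induced $T_t$: the $A$-side uses one long-reach witness per refined $A^1$-part; the $B$-side uses two carefully chosen vertices per refined $A^2$-part (one with small $\ell$, one with larger $\ell$ inside the same start interval), which allows $B$ to be indexed naturally by a $t\times t$ grid; and the $C$-side uses short-reach witnesses from the $A^2$-side so that $C$ sits to the right of $B$ along the interval representation. The $\ell$-ordering of $B$ on the $t\times t$ grid is row-major, obtained from the set indices $j_q$ combined with the intra-set endpoint choice; the $r$-ordering is column-major, forced by the monotone pattern secured in the second step. The staircase structure then delivers exactly the half-graph edges required on $(A,B)$ and on $(B,C)$, while edges between $A$ and $C$ may be arbitrary, as allowed for a semi-induced $T_t$.

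The principal obstacle is the last step: the transverse structure of $T_t$ asks the middle set $B$ to be ordered differently as seen from $A$ and as seen from $C$ (row-major versus column-major on a $t\times t$ grid). In an interval representation these two orderings are governed respectively by the $\ell$- and $r$-endpoints of vertices of $B$, so the crux is to choose two vertices per refined $A_j^2$ whose joint $(\ell,r)$-pattern, combined with the monotonicity obtained by Ramsey cleanup, realizes the row-major/column-major mismatch across the grid. Once this mismatch is in place, the interval-graph staircase guarantees precisely the edge pattern of $T_t$ between $A$ and $B$ and between $B$ and $C$, completing the proof.
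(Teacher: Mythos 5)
Your first step (the staircase structure of each cell and the extraction of a ``$1$-to-$0$'' transition witness from the rank-$2$ hypothesis) matches the paper's starting point, but the rest of the argument has a genuine gap. The decisive resource in this lemma is the \emph{number} $t^2$ of parts on each side: the paper keeps all of them and indexes them by $[t]^2$ in two transposed ways, choosing $b_{i,j}\in A^1_{(j-1)t+i}$ whose transition falls inside the columns of $A^2_{(i-1)t+j}$ and $a_{i,j}$ at that transition. The staircase then gives the $(i,j)\leqslant_{\text{lex}}(i',j')$ half-graph between $A$ and $B$, while the ordering of the start intervals forces the left endpoints $\ell_{b_{i,j}}$ to follow $(j,i)\leqslant_{\text{lex}}(j',i')$ --- that index re-encoding \emph{is} the transversal structure. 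Your plan instead spends the quadratic surplus on Ramsey-type cleanups that reduce to $t$ refined parts per side, and then tries to rebuild a $t\times t$ grid for $B$ from intra-part structure (``two carefully chosen vertices per refined $A^2_j$''). This fails on two counts. First, the sizes are wrong: one witness per refined $A^1$-part and two per refined $A^2$-part yield sets of size $t$ and $2t$, whereas each of $A,B,C$ in $T_t$ must have $t^2$ elements indexed by $[t]^2$. Second, the hypotheses give no control over intra-part structure: the rank-$2$ condition on a cell can be witnessed entirely by distinct rows, so all vertices of a given $A^2_j$ may be twins with respect to $\bigcup_i A^1_i$, and there is then no way to extract $t$ usefully distinguished vertices from a single part. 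The monotonization itself is also both unnecessary (the $\ell$-order across parts is already given by hypothesis~1, and the staircase property supplies everything else) and quantitatively unsupported: a two-dimensional monotone $t\times t$ sub-grid of a $t^2\times t^2$ array of reals via \cref{thm:Ramsey} or \cref{thm:bipRamsey} needs far more than $t^2$ parts, so the claim that the quadratic blow-up ``absorbs the cost'' does not hold.

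A further omission is the set $C$. You build it from ``short-reach witnesses from the $A^2$-side,'' to the right of $B$, and never invoke the minimality of the interval representation --- yet minimality is exactly the hypothesis the lemma needs for this step. In the paper, $C$ consists of \emph{new} vertices $c_{i,j}$, not taken from $\mathcal A_1\cup\mathcal A_2$: minimality guarantees that for each left endpoint $\ell_{b_{i,j}}$ there is an interval whose right endpoint lands between $\ell_{b_{i,j}}$ and the next larger left endpoint, and these are what realize the second half-graph, ordered by $(j,i)\leqslant_{\text{lex}}$. Without that argument there is no reason for vertices separating the $\ell_{b_{i,j}}$'s to exist at all, and your candidate $C$ is not shown to induce the required half-graph with $B$.
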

\begin{proof} 
First note that ${\sf start}(A_1^1) < \cdots  {\sf start}(A_{t^2}^1)\leq {\sf start}(A_1^2)< \cdots < {\sf start}(A^2_{t^2})$ implies the following property
\begin{enumerate}
	\item[(P1)] If for some $b\in  A_i^1$ and $a\in \bigcup_{k\in [t^2]} A_k^2$ the entry $(b,a)$ of $\adj{}{G}$ is $1$ then the entry $(b,a')$ is also $1$ for every $a' \in A_j^2$ for $j<k$.
	\item[(P2)] If for some $b\in  A_i^1$ and $a\in \bigcup_{k\in [t^2]} A_k^2$ the entry $(b,a)$ of $\adj{}{G}$ is $0$ then the entry $(b,a')$ is also $0$ for every $a' \in A_j^2$ for $j>k$.
\end{enumerate}  

For each $i,j\in [t]$, choose a vertex $b_{i,j}\in A_{(j-1)t+i}^1$ and vertex $a_{i,j}\in A_{(i-1)t+j}^2$ such that the entry $(b_{i,j}, a_{i,j})$ 
equals 1 and the entry $(b_{i,j},a)$ equals $0$ for a vertex $a$  indexing a neighboring column of $a_{i,j}$.  
Note that this choice is possible  because each zone $A_i^1\cap A_{j}^2$ must contain $10$ or $01$ as consecutive entries (follows from the rank $2$ assumption). Since for $i,j\in [t]$ the vertex $a_{i,j}\in A_{(i-1)t+j}^2$  the occurrence of $01$ or $10$ in the row $b_{i',j'}$ and columns from $ A_{(i'-1)t+j'}^2$  implies that the interval $I_{b_{i',j'}}$ intersects interval $I_{a_{i,j}}$  precisely when $(i,j)\leqslant_{\text{lex}}(i',j')$ by property (P1) and (P2).

To choose set $C$ consider the left endpoints $\ell_{b_{i,j}}$ for all $i,j\in [t].$ Note that by the disjointness of the start intervals 
$\ell_{b_{i,j}}<\ell_{b_{i',j'}}$ if and only if $(j,i)\leqslant_{\text{lex}} (j',i')$ as $b_{i,j}\in A_{(j-1)t+i}$ and $b_{i',j'}\in A_{(j'-1)t+i'}$ 
and $(j-1)t+i\leq (j'-1)t+i'$ if and only if $(j,i)\leqslant_{\text{lex}} (j',i')$.
By the minimality assumption on the representation $\I$, 
there exists a vertex $c_{i,j}$ for each $i,j\in [t]$ which is adjacent with $b_{i,j}$ and not adjacent with every $b$ for which $\ell_b>\ell_{b_{i,j}}$  
that is, $\ell_{b_{i,j}}\leq r_{c_{i,j}} < \ell_{b}.$ Hence $b_{i,j}$ is adjacent to $c_{i',j'}$ if and only if $(j,i)\leqslant_{\text{lex}} (j',i')$.

We obtain that for $A := \{a_{i,j}~:~i, j \in [t]\}$, $B := \{b_{i,j}~:~i, j \in [t]\}$, and $C := \{c_{i,j}~:~i, j \in [t]\}$ the triple $(A,B,C)$ is a transversal pair of half-graphs $T_t$. For an illustration see \cref{fig:findingTransversalPairGraph}.
\end{proof}
\begin{figure}[h!]
	\centering
	\begin{tikzpicture}[scale=1.1,point/.style={circle,inner sep=0.06cm, fill}]
	\def\dist{0.35}
	\begin{scope}[xshift=0 cm,yshift=1 cm,rotate=270]
	\node (P-start) at (0.7,10.3) {};
	\node[] (P-B) at (0.7,-1.5) {};
	\draw[lightgray] (0.7,9)--(-9.5*\dist,9);
	\foreach \i in {0,...,8}{
		\begin{scope}
		\clip[postaction={fill=white,fill opacity=0.2}] (0.7,\i+0.5)--(-9.5*\dist,\i+0.5)--(-9.5*\dist,\i+1)--(0.7,\i+1)--(0.7,\i+0.5);		
		\foreach \x in {-10.1,-10,...,12.2}%
		\draw[lightgray!50](\x, -5+\i)--+(12,14.4+\i);
		\end{scope}
		\draw[lightgray] (0.7,\i)--(-9.5*\dist,\i);
		\draw[lightgray] (0.7,\i+0.5)--(-9.5*\dist,\i+0.5);}
	
	\foreach \i/\ash/\bsh in {0/0.1/0.25,1/0.05/0.4,2/0.4/0.3,3/0.3/0.15,4/0.35/0.4,5/0.2/0.3,6/0.2/0.1,7/0.45/0.3,8/0.3/0.4}{
		\pgfmathsetmacro{\even}{mod(\i,2)}
		\pgfmathsetmacro{\Ireg}{4.5+ 0.5*(mod((8-\i),3)*3+floor((8-\i)/3))}
		\draw[-|,red,thick] (-\dist*\i+0.17, 9.7)--(-\dist*\i+0.17, \Ireg+\ash);
		\draw[|-|,color=black!20!blue,thick] (-\dist*\i, \Ireg+\ash+0.025)--(-\dist*\i,0.5*\i+\bsh-0.025);
		\draw[|-,olive,thick] (-\dist*\i+0.17,0.5*\i+\bsh)--(-\dist*\i+0.17,-0.8);
	}
	
	\draw[<->, thick] (P-start) -- (P-B) node  {};
	\node[red] at (1,9.3) {$A$};
	\node[color=black!20!blue] at (1,4.4) {$B$};
	\node[olive] at (1,-0.5) {$C$};
	
	\end{scope}	
	\end{tikzpicture}
\caption{Schematic illustration of the interval representation of a semi-induced transversal pair. }
\label{fig:findingTransversalPairGraph}
\end{figure}
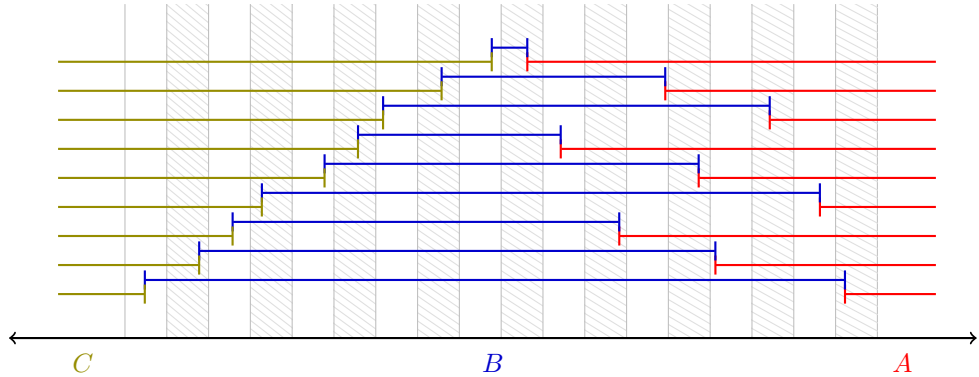	

\subsection{Rooted directed path graphs}\label{subsec:rdpg}


A \emph{tree model} of a graph $G$ consists of a tree $T$ together with a collection  $\{P_v : v \in V(G)\}$ of subtrees of $T$ such that $(u,w) \in E(G)$ if and only if $V(P_u) \cap V(P_w) \neq \emptyset$.
Gavril~\cite{Gavril.74,Gavril.78} showed that a graph $G$ is chordal if and only if $G$ can be represented by a tree model, and that a tree model of $G$ can be constructed in polynomial time.
To avoid ambiguities, we refer to $V(T)$ as the \emph{nodes of $T$}.

If every $P_v$ in a tree model of a graph $G$ is a path, we say that $G$ is an \emph{undirected path graph}.
If $T$ is oriented and every $P_v$ is a directed path, we say that $G$ is a \emph{directed path graph}.
Finally, a directed path graph with $T$ being a rooted tree in which every node is reachable from the root is called a \emph{rooted directed path graph}.
Notice that interval graphs form a subclass of rooted directed path graphs for which $T$ is a directed path. 

In the remainder of the section, we prove the following result.

\begin{theorem}\label{thm:delineation}
  The class of all rooted directed path graphs is effectively delineated.
\end{theorem}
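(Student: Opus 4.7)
The plan is to mirror the interval-graph argument of Proposition~\ref{prop:delineation-interval}, adapted to the tree setting, and reduce to Lemma~\ref{lem:transversalPairInIntervalGraphs} whenever possible. First I would fix a rooted tree model $(T, \{P_v : v \in V(G)\})$ of $G$ with $T$ rooted at some node $r$, computable in polynomial time since $G$ is chordal. I take the model to be \emph{minimal} in the same spirit as a minimal interval representation: no $P_v$ can be shortened at either end while preserving $G$, and subject to this the total length $\sum_v |E(P_v)|$ is smallest. For each $v \in V(G)$, let $t(v)$ and $b(v)$ denote the top (closest to the root) and bottom endpoints of the directed path $P_v$. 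Fix a DFS pre-order $\sigma$ on $T$ and order $V(G)$ by $u \prec v$ iff $\sigma(t(u)) < \sigma(t(v))$, or $\sigma(t(u)) = \sigma(t(v))$ and $\sigma(b(u)) < \sigma(b(v))$. By Lemma~\ref{thm:seq-or-tk} combined with Theorem~\ref{thm:mixed-number-gen2}, it suffices to show that whenever $\adj{\prec}{G}$ admits a rank-$k$ division for $k$ large enough as a function of $t$, the graph $G$ contains a semi-induced transversal pair $T_t$.

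Following the interval proof, a $(2,2)$-coarsening of the rank division produces two vertex sets $A \prec B$ with induced sub-divisions $\mathcal A = \{A_1 \prec \dots \prec A_{t^2}\}$ and $\mathcal B = \{B_1 \prec \dots \prec B_{t^2}\}$, each zone $A_i \cap B_j$ being mixed in $\adj{\prec}{G}$. I define the \emph{start set} ${\sf start}(P)$ of a part $P$ as the set of top endpoints $\{t(v) : v \in P\}$ viewed inside $T$. Using the chordal analogue of the monotone row argument from the interval case (an edge from $u$ to a $v$ whose top is deep on the root-leaf chain through $t(u)$ forces edges to every vertex whose top lies strictly between $t(u)$ and $t(v)$ on that chain), after removing every other part I may assume that the start sets of consecutive $A_i$'s (respectively $B_j$'s) are pairwise disjoint in $T$ and ordered along $\sigma$.

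The new difficulty relative to the interval case is that the start sets live in a tree, not on a line. The key geometric fact is that two directed paths in a rooted tree intersect iff the one whose top is higher reaches with its bottom at least as deep as the other's top, along the unique root-leaf chain through that other's top. Consequently, when the tops of all relevant vertices in the selected parts of $A$ and $B$ do lie on a single root-to-leaf chain $\pi$, each zone $A_i \cap B_j$ coincides with a biadjacency block of the auxiliary interval graph given by projecting the depths of $t(v)$ and $b(v)$ onto $\pi$, and Lemma~\ref{lem:transversalPairInIntervalGraphs} directly produces the required semi-induced $T_t$. Otherwise, a bipartite Ramsey argument (Theorem~\ref{thm:bipRamsey}) applied to the branches of $T$ visited by the starts lets me prune and reindex $\mathcal A, \mathcal B$ so that, within every remaining stripe, the tops witnessing mixedness lie on a common chain, reducing to the previous situation at a controlled polynomial loss in the rank parameter.

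The main obstacle is this last chain-localization step: ensuring that after the Ramsey-selected thinning the zones remain mixed and the subfamilies remain $\prec$-intervals. I plan to address this by indexing the initial parts according to DFS intervals of $T$ (so that any sub-selection is automatically $\prec$-consecutive), and by starting from a rank $k$ polynomial both in $t$ and in the bipartite Ramsey function evaluated at $t$, so that the surviving sub-division is still a rank-$\Omega(t^2)$ division. Effectiveness is then inherited from Theorem~\ref{thm:mixed-number-gen2}, the polynomial-time construction of a (minimal) tree model for chordal graphs, and the fact that every Ramsey or pigeonhole step in the argument is constructive and runs in time $f(t)\cdot |V(G)|^{O(1)}$ for a computable $f$, yielding the effective delineation claimed in the theorem.
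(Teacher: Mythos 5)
There is a genuine gap, and it sits exactly at the step you flag as the ``main obstacle.'' First, a small point in your favour: the chain-localization of the \emph{tops} is not where the difficulty lies. The paper shows (its Claim on the path $P$ hitting all high nodes) that the high endpoints of all vertices participating in the rank division automatically lie on a single root-to-leaf path $P$, essentially by your own observation that an $A$--$B$ edge forces the two paths to share a node; no Ramsey pruning is needed for that. The real obstruction is the \emph{low} endpoints. Your reduction to \cref{lem:transversalPairInIntervalGraphs} needs the points $p(v)$ where the paths $P_v$ leave $P$ to be spread out along $P$, so that distinct parts $B_j$ occupy distinct ``start intervals'' on the chain. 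But nothing prevents all the $b\in B$ from leaving $P$ at the \emph{same} node $p$ and then fanning out into many different children of $p$. In that configuration every $B_j$ projects to the single point $p$ on the chain, the auxiliary interval graph you build by projecting depths onto $\pi$ is degenerate (all $B_j$'s become indistinguishable), and no bipartite-Ramsey thinning of branches can restore a chain structure, because the information distinguishing the $B_j$'s lives entirely in the subtree below $p$, off the chain.

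This is not a repairable technicality within your framework: in that concentrated case the graph need not contain a semi-induced $T_t$ obtainable by the interval mechanism at all, and the paper instead extracts a semi-induced \emph{generalized} transversal pair $T_{t,2}$, using the lex-DFS tie-breaking rule, \cref{claim:only_two_sharing_low}-type counting, the minimality of the tree model (\cref{lem:tree_model_starting_paths}) to manufacture the extra layers $C,D,E$, and an interval-graph colouring argument to disentangle which children of $p$ host which $B_j$. Correspondingly, the delineation conclusion must be drawn from \cref{thm:seq-or-tk} with $\ell=2$, i.e., via \cref{thm:tk-mi} for $T_{t,\ell}$ rather than for plain $T_t$ as in your plan. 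Your proposal covers what the paper calls Case A (both sides spread along $P$), but is silent on Case B (one side concentrated at a node) and Case C (both concentrated, which must be refuted by a rank argument), and these require ideas that do not appear in the interval-graph proof you are mirroring.
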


Directed path graphs are \textsl{not} delineated.
Given a bipartite graph $G$ with parts $A,B$, by subdividing each edge of $G$ and making a clique of the set $X$ of newly added vertices, we create a directed path graph $\Pi(G)$ that encodes $G$.
A tree model of $\Pi(G)$ is easy to construct.
We start with a star $S$ with central vertex $s$ and leaves $A \cup B$, orient edges touching $A$ towards $s$, and edges touching $B$ away from $s$.
For $u \in A \cup B$ we let $P_u$ be the trivial path containing only $u$ in $S$.
Then for each edge $ab \in E(G)$ we associate the path in $S$ from $a$ to $b$ with the vertex used to subdivide the edge in $G$.
Then for each $x \in X$ with neighbors $a,b$ in $\Pi(G)$ we let $P_x$ be the path from $a$ to $b$ in $S$.
Now $(S, \{P_u : u \in A \cup B\} \cup \{P_x : x \in X\})$ is a tree model of $\Pi(G)$.
See Figure~\ref{fig:directed_path_graphs_not_delineated} for an example of this construction.

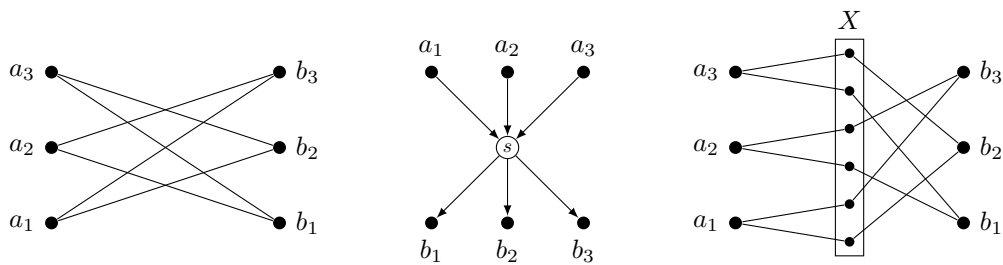
\begin{figure}[h]
\centering
\begin{tikzpicture}[point/.style={circle,inner sep=0.06cm, fill}, vertex/.style={circle, inner sep=0.06cm}]
\begin{scope}
\foreach \i/\j in {1/0,2/1,3/2}{
  \node[point, label = 180:{${a_\i}$}] (a\i) at (0, \i) {};
  \node[point, label = 0:{${b_\i}$}] (b\i) at (3, \i) {};
}

\foreach \i/\j in {1/2,1/3,2/1,2/3, 3/1, 3/2}{
  \draw (a\i) -- (b\j);
}
\end{scope}

\begin{scope}[xshift=9cm]
\foreach \i/\j in {1/0,2/1,3/2}{
  \node[point, label = 180:{$a_\i$}] (a\i) at (0, \i) {};
  \node[point, label = 0:{$b_\i$}] (b\i) at (3, \i) {};
}
\foreach \i/\j in {1/0.75, 2/1.25, 4/1.75, 5/2.25, 7/2.75, 8/3.25}{
  \node[point, scale=.75] (x\i) at (1.5, \j) {};
}


\draw (a1) -- (x1);
\draw (a1) -- (x2);

\draw (x1) -- (b2);
\draw (x2) -- (b3);

\draw (x4) -- (b1);
\draw (x5) -- (b3);
\draw (x7) -- (b1);
\draw (x8) -- (b2);

\draw (a2) -- (x4);
\draw (a2) -- (x5);
\draw (a3) -- (x7);
\draw (a3) -- (x8);

\node[draw, fit = (x1) (x8), label=${X}$] {};
\end{scope}%

\begin{scope}[xshift=4cm, yshift=1cm]
\node[draw,vertex, scale=.75] (x) at (2,1) {$s$};
\foreach \i in {1,2,3}{
  \node[point, label={$a_\i$}] (a\i) at (\i,2) {};
  \node[point, label=-90:{$b_\i$}] (b\i) at (\i,0) {};
  \draw[-latex] (a\i) -- (x);
  \draw[-latex] (x) -- (b\i);
}
\end{scope}
\end{tikzpicture}%
\caption{From left to right: a subcubic bipartite graph $G$, the star $S$, and the directed path graph $\Pi(G)$ (the edges in $X$ are omitted).}
\label{fig:directed_path_graphs_not_delineated}
\end{figure}

The class $\{\Pi(G) : G \in \mathcal{B}_{\leq 3}\}$ is transduction equivalent to $\mathcal{B}_{\leq 3}$ and thus by \cref{lem:not-delineated} we conclude that directed path graphs are not delineated.

We remark that this construction generates graphs that are modeled on trees with many roots (nodes with in-degree zero).
Thus delineation for any subclass of directed path graphs can be expected only if such a class can be modeled by trees with few roots.
We show that this is indeed the case for rooted directed path graphs.

In this section we adopt standard notation related to trees to describe relationships between nodes of a tree $T$.
Let $p,p'\in V(T)$.
We denote by $\leq_T$ the tree order on the nodes of $T$.
If there is a directed path from $p$ to $p'$ in $T$, we say that $p$ is an \emph{ancestor} of $p'$, that $p'$ is a \emph{descendant} of $p$, and write $p \leq_T p'$.
If $p \neq p'$ we also say that $p$ (resp.~$p'$) is a \emph{proper ancestor} (resp.~\emph{proper descendant}) of $p'$ (resp.~$p$) and write $p <_T p'$.
If $p <_T p'$ and there is an edge from $p$ to $p'$ in $T$, then we say that $p'$ is a \emph{child} of $p$ and that $p$ is the \emph{parent} of $p'$.
If $p$ and $p'$ share the same parent, then we say that they are \emph{siblings}.
The \emph{least common ancestor} of $p$ and $p'$ is the node $s \in V(T)$ that is furthest from the root of $T$ and such that $s \leq_T p$ and $s \leq_T p'$.

Given a tree model $(T, \{P_v : v \in V(G)\})$ of a graph $G$, we denote by $V_p$ the set $\{u \in V(G) : p \in V(P_u)\}$, and we say that the model is \emph{minimal} if there are no adjacent $p,p'\in V(T)$ such that $V_p \subseteq V_{p'}$.
It is known that one can always generate a minimal tree model of rooted directed path graph $G$ whenever a tree model of $G$ is given~\cite{Gavril.78}.

For the remaining of this section, and unless stated otherwise, we assume that $G$ is a rooted directed path and that $(T, \{P_v : v \in V(G)\})$ is a minimal tree model of $G$, where every $P_v$ is a directed path of the out-tree $T$. 
For $v \in V(G)$ we denote by $\high(v)$ and $\low(v)$ the nodes of $P_v$ that are closest and furthest from the root, respectively.
Notice they are not necessarily distinct.
We extend this notation to sets of vertices by defining $\high(X) = \{\high(v) : v \in X\}$ and $\low(X) = \{\low(v) : v \in X\}$ for $X \subseteq V(G)$.
The minimality of the tree model implies that, for every node $p \in V(T)$ that is not the root of $T$, there are vertices of $G$ distinguishing $p$ from its parent $p'$. 
\begin{lemma}\label{lem:tree_model_starting_paths}
Let $(T, \{P_v : v \in V(G)\})$ be a minimal tree model of a rooted directed path graph $G$. 
Then for every $p \in V(T)$ that is not the root of $T$,
\begin{romanenumerate}
\item there is  $u \in V(G)$ such that $p = \high(u)$; and 
\item there is  $w \in V(G)$ such that $V(P_w)$ contains the parent $p'$ of $p$, but not $p$.
\end{romanenumerate}
\end{lemma}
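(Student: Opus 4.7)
The plan is to unpack the minimality hypothesis applied to the tree edge joining $p$ to its parent $p'$. Minimality says that neither $V_p\subseteq V_{p'}$ nor $V_{p'}\subseteq V_p$ holds, and each of these two non-containments will supply exactly one of the required witness vertices.

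For item (ii) I would take any $w\in V_{p'}\setminus V_p$. Unwinding the definition of $V_q$, this is precisely the statement that $p'\in V(P_w)$ while $p\notin V(P_w)$, so (ii) is immediate with no further work.

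The slightly more substantive part is item (i). I would pick $u\in V_p\setminus V_{p'}$, so $p\in V(P_u)$ and $p'\notin V(P_u)$, and argue that this forces $\high(u)=p$. The key observation is that $P_u$ is a \emph{directed} path in the out-tree $T$: its node set is totally ordered by $\leq_T$, consisting of $\high(u)$, $\low(u)$, and every node on the unique directed $T$-path between them. In particular $\high(u)\leq_T p$ since $p\in V(P_u)$. If one had $\high(u)<_T p$, then $\high(u)\leq_T p'<_T p$ would put $p'$ on the directed $T$-path from $\high(u)$ to $p$, hence inside $V(P_u)$, contradicting $u\notin V_{p'}$. Therefore $\high(u)=p$, as desired.

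The only real subtlety is the appeal to the out-tree structure that makes a directed path containing both an ancestor of $p$ and $p$ itself necessarily traverse $p'$. This is exactly where the rootedness (single root, consistent orientation) of the model is used, and it is consistent with the paper's earlier observation that directed path graphs with multiple roots are \emph{not} delineated: in that setting a directed path could reach $p$ from a sibling root without passing through $p'$, breaking the argument for~(i).
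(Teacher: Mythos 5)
Your proposal is correct and follows essentially the same route as the paper: both arguments reduce to the fact that minimality forbids $V_p\subseteq V_{p'}$ and $V_{p'}\subseteq V_p$, with (ii) immediate from a witness of $V_{p'}\setminus V_p$ and (i) obtained by observing that a directed path containing $p$ but not $p'$ must have $p$ as its highest node. Your version is merely a constructive restatement of the paper's contrapositive ("if no path starts at $p$, then $V_p\subseteq V_{p'}$"), with the appeal to the out-tree orientation made slightly more explicit.
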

\begin{proof}
Let $p$ be a node of $T$ other than the root, and $p'$ be the in-neighbor of $p$ in $T$.
Clearly both $V_p$ and $V_p'$ are not empty since the tree model is minimal.
Now, if no path $P_v$ starts in $p$, then every $P_v$ containing $p$ also contains $p'$ and thus $V_p \subseteq V_{p'}$, contradicting the minimality of the tree model.
Thus $(i)$ follows.

Similarly, if every path $P_v$ containing $p'$ also contains $p$, then $V_{p'} \subseteq V_P$, contradicting the minimality of the tree model.
Thus there is $P_w$ containing $p'$ but not $p$ and the result follows.
\end{proof}

For $q \in V(T)$, let $T_q$ be the maximal subtree of $T$ rooted at $q$.
From the minimal tree model of $G$ we obtain a total order $\prec$ of its vertex set by a lex-DFS style search on $T$ which we describe in the following. 
We traverse  $T$ with depth-first search and \emph{process} a node $q$ when every node in $V(T_q)$ other than $q$ has been visited by the search.
We define a total order $\prec_{\operatorname{DFS}}$ on $V(T)$ which orders the nodes of $T$ by the time in the depth-first search they were processed, \ie{}, $x\prec_{\operatorname{DFS}}y$ if there is a point in the depth-first search when  $T_x$ has been fully explored but there is a child $z$ of $y$ such that $T_z$ has not yet been explored. In this depth-first search we additionally use the following tie-breaking rule. If $x,y$ are siblings then we explore $T_x$ before $T_y$ if the symmetric difference of the two sets $\{\high(v): \low(v)\in V(T_x)\}$ and  $\{\high(v):\low(v)\in V(T_y)\}$ has a minimum $\high(v)$ with respect to $\leq_{T}$ and $\low(v)\in V(T_x)$ (in case the symmetric difference of $\{\high(v): \low(v)\in V(T_x)\}$ and  $\{\high(v):\low(v)\in V(T_y)\}$ has no minimum we choose arbitrarily). From the order $\prec_{\operatorname{DFS}}$ we obtain $\prec$ using the following rules. $u\prec v$ if 
\begin{compactitem}
	\item $\low(u)\prec_{\operatorname{DFS}} \low(v)$ or
	\item $\low(u)=\low(v)$ and $\high(u)<_T \high(v)$.
\end{compactitem}
In case $\low(u)=\low(v)$ and $\high(u)=\high(v)$ we choose an arbitrary order of $u$ and $v$. The order $\prec$ has the following basic properties.
By following a DFS on $T$, using the aforementioned tie-breaking rule to decide which direction to follow when searching for a new leaf, it is not hard to see that $\prec$ can be generated in polynomial time.
Each time a node $q$ is processed, we add to a stack $S$ every $v \in V(G)$ with $\low(v) = p$, following the tie-breaking rules for this decision when needed.
In the end the order that the nodes appear in $S$ is the reverse order of $\prec$.
From $\leq_T$ and $\prec$ we immediately obtain the following.
\begin{observation}\label{obs:nozigzag1}
	Let $x$ be an internal node of $T$ and $y,z$ be two distinct children of $x$.
	For any vertices $v^1_y, v^2_y, v_z\in V(G)$, if $\low(v^i_y)\geq_T y$ and $\low(v_z)\geq_T z$ for $i=1,2$,
	then either $v^i_y \prec v_z$ for both $i=1,2$ or $v^i_y \succ v_z$ for both $i=1,2.$
\end{observation}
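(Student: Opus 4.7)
\textbf{Proof plan for Observation~\ref{obs:nozigzag1}.} The plan is to reduce everything to the standard fact that a depth-first search explores one subtree in its entirety before moving to a sibling subtree, and then to translate this statement about $\prec_{\operatorname{DFS}}$ on $V(T)$ into the claimed statement about $\prec$ on $V(G)$.

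First, I would observe that since $y$ and $z$ are distinct children of $x$, the subtrees $T_y$ and $T_z$ are vertex-disjoint. The hypothesis $\low(v^i_y) \geq_T y$ places $\low(v^1_y)$ and $\low(v^2_y)$ inside $V(T_y)$, while $\low(v_z) \geq_T z$ places $\low(v_z)$ inside $V(T_z)$. Now, by the construction of $\prec_{\operatorname{DFS}}$, the tie-breaking rule decides at node $x$ whether to recurse into $T_y$ first or into $T_z$ first; this is a binary choice that is made once and for all. In the first case every node of $T_y$ is processed (and thus finalized with respect to $\prec_{\operatorname{DFS}}$) before any node of $T_z$ is processed, and symmetrically in the second case. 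Hence either $\low(v^1_y), \low(v^2_y) \prec_{\operatorname{DFS}} \low(v_z)$, or $\low(v^1_y), \low(v^2_y) \succ_{\operatorname{DFS}} \low(v_z)$.

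Next I would lift this to the order $\prec$ on $V(G)$. By the definition of $\prec$, the primary comparison key is precisely $\prec_{\operatorname{DFS}}$ applied to the low endpoints: $u \prec v$ whenever $\low(u) \prec_{\operatorname{DFS}} \low(v)$, with tie-breaking on $\high$ only when $\low(u) = \low(v)$. Since $\low(v^i_y) \in V(T_y)$ and $\low(v_z) \in V(T_z)$ with $T_y \cap T_z = \emptyset$, no equality of low nodes can occur, so the primary key suffices: in the first case above we obtain $v^1_y \prec v_z$ and $v^2_y \prec v_z$, and in the second case $v^1_y \succ v_z$ and $v^2_y \succ v_z$, which is exactly the dichotomy claimed.

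I do not anticipate any genuine obstacle: the statement is essentially a restatement of the well-known non-interleaving property of DFS between sibling subtrees, and the only thing to check is that the tie-breaking rules in the definition of $\prec$ cannot break this non-interleaving. This is immediate because the tie-breaking for $\prec_{\operatorname{DFS}}$ is applied among children of a common node (here, at $x$) and thus only \emph{chooses} which of $T_y$ and $T_z$ is visited first rather than interleaving them, and the tie-breaking on $\high$ in the definition of $\prec$ activates only when the low nodes coincide, which does not happen here.
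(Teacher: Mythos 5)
Your proof is correct and matches the reasoning the paper leaves implicit (the observation is stated without proof as "immediate" from the construction of $\prec_{\operatorname{DFS}}$ and $\prec$): the DFS processes all of $T_y$ before any of $T_z$ or vice versa, the low nodes of the three vertices land in the disjoint subtrees $T_y$ and $T_z$, and the $\high$ tie-break never activates because the low nodes cannot coincide.
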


\begin{observation}\label{obs:nozigzag2}
	Let $u,v,w\in V(G)$ with $u\prec v \prec w.$ 
	Then the least common ancestor of $\low(u)$ and $\low(w)$ is an ancestor of the least common ancestor of $\low(v)$ and $\low(w).$
	Likewise, it is also an ancestor of the least common ancestor of $\low(u)$ and $\low(v).$
\end{observation}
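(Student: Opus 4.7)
The plan is to reduce the observation to a purely combinatorial property of the post-order traversal used to build $\prec$, and then dispatch it with a short case analysis. Recall that $u\prec v$ implies $\low(u)\preceq_{\operatorname{DFS}}\low(v)$ by the very definition of $\prec$ from $\prec_{\operatorname{DFS}}$. Hence $u\prec v\prec w$ yields $\low(u)\preceq_{\operatorname{DFS}}\low(v)\preceq_{\operatorname{DFS}}\low(w)$, so the statement is really a claim about three (not necessarily distinct) nodes of $T$ visited in this post-order. I would introduce the shorthand $a:=\low(u)$, $b:=\low(v)$, $c:=\low(w)$ and write $\ell_{xy}$ for the least common ancestor of $x$ and $y$ in $T$.

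First I would handle the degenerate cases where two of $a,b,c$ coincide. Since $\prec_{\operatorname{DFS}}$ is total, the only way to have $a=c$ is to also have $a=b=c$, in which case the three LCAs are equal. If exactly $a=b\neq c$ or $a\neq b=c$, then two of the three LCAs coincide and the third is an ancestor of both, making both desired inequalities immediate.

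The heart of the argument is the generic case, where $a,b,c$ are pairwise distinct. I would prove the following standalone fact about post-order DFS on a rooted tree: if $x_1\prec_{\operatorname{DFS}} x_2\prec_{\operatorname{DFS}} x_3$ are pairwise distinct nodes, then $\ell_{x_1x_3}\leq_T \ell_{x_1x_2}$ and $\ell_{x_1x_3}\leq_T \ell_{x_2x_3}$. I would prove the second inequality by contradiction (the first is symmetric, by mirroring the role of ``first'' and ``last''): suppose $\ell_{x_2x_3} <_T \ell_{x_1x_3}$. Both LCAs lie on the root-to-$x_3$ path, so $\ell_{x_1x_3}$ is a proper descendant of $\ell_{x_2x_3}$. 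Since $\ell_{x_2x_3}$ is the LCA of $x_2$ and $x_3$, the node $x_2$ cannot lie in $T_{\ell_{x_1x_3}}$ (else their LCA would lie inside $T_{\ell_{x_1x_3}}$). Let $q_1$ be the child of $\ell_{x_2x_3}$ with $\ell_{x_1x_3}\in T_{q_1}$ and $q_2$ the child with $x_2\in T_{q_2}$, so $q_1\neq q_2$ and both $x_1,x_3\in T_{q_1}$. In the DFS, whichever of $T_{q_1}, T_{q_2}$ is fully explored first forces an immediate contradiction: if $T_{q_1}$ is first then $x_3$ is processed before $x_2$, violating $x_2\prec_{\operatorname{DFS}} x_3$; if $T_{q_2}$ is first then $x_2$ is processed before $x_1$, violating $x_1\prec_{\operatorname{DFS}} x_2$.

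The only potential obstacle is the bookkeeping for the degenerate coincidences of $\low$-values, because when two $\low$-values agree, the order between the corresponding vertices of $V(G)$ is determined by the secondary rule on $\high$ rather than by $\prec_{\operatorname{DFS}}$; I would just check that the statement is trivially preserved in each of these subcases, as sketched above. Once these are disposed of, the main case is a clean post-order DFS argument and no further subtlety arises.
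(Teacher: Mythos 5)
Your proof is correct and follows essentially the same route as the paper's: reduce to the post-order DFS order $\prec_{\operatorname{DFS}}$ on the $\low$-values and derive a contradiction from the fact that the subtrees rooted at two distinct children of the relevant least common ancestor are explored in disjoint contiguous intervals, so the three nodes cannot appear in the required order. (One small slip in your degenerate cases: when, say, $a=b\neq c$, the third LCA $\ell_{ab}=a$ is a \emph{descendant} of $\ell_{ac}$, not an ancestor of it --- which is exactly the direction the statement requires, so the conclusion still stands.)
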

\begin{proof}
	Suppose not, i.e., the former is a strict descendant of the latter. Then by the construction of $\prec,$ $w$ is discovered before starting 
	to explore the subtree rooted at the least common ancestor of $\low(v)$ and $\low(w)$ containing $v$; a contradiction. One can also simply apply~\cref{obs:nozigzag1} to suitable nodes. 
\end{proof}

\textbf{Summary of the upcoming proof.} We now show that rooted directed paths are effectively delineated.
We start by splitting the vertices of $G$ into two collections associated with rows and columns of a rank-$f(t)$ division of $\adj{\prec}{G}$: we let $\mathcal{A} = \{A_i : i \in f(t)/2\}$ be the first $f(t)/2$ parts of the rows partition and $\mathcal{B} = \{B_i : i \in f(t)/2\}$ to be the last $f(t)/2$ parts of the columns partition.
Then, for each $A_i$ and $B_i$ we take vertices $a_i,b_i$ to represent the sets, respectively, define $A^o$ to contain all $a_i$ and $B^o$ to contain all $b_i$.
The goal is to use $A^o$ and $B^o$ to distinguish between two cases in the proof: one where we can organize the relevant vertices appearing in sets in $\mathcal{A}$ and $\mathcal{B}$ (that is, the vertices defining adjacency in zones $A_i \cap B_j$) in such way that a large interval graph appears and then apply \cref{lem:transversalPairInIntervalGraphs}, and one where we have a good interval-like organization for one of $\mathcal{A}, \mathcal{B}$ but not for the other.
In the latter, we show that we either obtain a semi-induced $T_{t,2}$ or fall back to the first case.
Obtaining the desired $T_{t,2}$ in this case is far more challenging than finding a transversal pair in the interval case.

The first thing we observe is that there is a path $P$ of $T$ containing all $\high(u)$ where $u$ is a vertex defining some adjacency between sets of $\mathcal{A}$ and $\mathcal{B}$ and observe that $P$ defines an order $<_P$ on both $A^o$ and $B^o$.
We denote by $p(u)$ the node in $V(P_u) \cap V(P)$ that is closer to $\low(u)$ and say that $u \leq_P v$ if and only if $p(u) \leq_T p(v)$.
From this point, we prove a series of claims to show that, to organize large parts of $\mathcal{A}$ and $\mathcal{B}$ in a desirable way, we can focus on organizing large parts of $A^o$ and $B^o$.

The easier case is when both $A^o$ and $B^o$ contain sufficiently large strictly increasing chains with respect to $<_P$.
Since $<_P$ may not agree with $\prec$, we apply Erd\H{o}s-Szekeres' theorem to extract a large monotone sequence of both chains, and keep only the vertices appearing in those sequences in $A^o$ and $B^o$.
We then use those sequences to define, for each $A_i$ associated with a vertex in the new $A^o$, an exclusive subpath $I_i$ of $P$ that contains $p(a)$ for every $a \in A_i$, and do the same for each $B_i$.
This is done by observing that no $p(a)$ can be ``very far away'' from $p(a_i)$ with respect to the monotone sequence. 
With these subpaths, we construct an interval graph $G_I$ that is then given as input to~\cref{lem:transversalPairInIntervalGraphs} to solve this case.

If only $A^o$, for instance, contains a large strictly increasing chain with respect to $<_P$ then there must be a node $p \in P$ concentrating a large set of $\{p(b_i) : b_i \in B^o\}$.
Although we can, to some extent, predict the behavior of the paths associated with vertices in $B^o$ after they leave $P$ through $p$, we cannot apply \cref{lem:tree_model_starting_paths} to find in $G$ vertices distinguishing each of the parts of $\mathcal{B}$ associated with vertices of $B^o$.
This is the crucial difference that makes finding a semi-induced $T_{t,2}$ in this configuration much harder than in the first one.

\begin{theorem}\label{thm:rooted_directed_path_graphs_delineation}
Let $g_1(t) = (2t^2)^{2t^2}+2$, $g_2(t) = (2t^2)^2+1$, and $g(t) = 4(g_1(t))^{g_2(t)}$.
For every natural $t$, if $G$ is a rooted directed path graph and the grid rank of $\adj{\prec}{G}$ is at least $2 \cdot g(t)$ then $G$ has a semi-induced $T_{t}$ or $T_{t,2}$.
\end{theorem}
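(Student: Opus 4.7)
The plan is to extend the interval-graph strategy by projecting vertex-paths onto a single root-to-leaf path of $T$ and then either reducing to the interval case or extracting a~longer generalized transversal pair $T_{t,2}$ directly from the tree structure. From a rank-$2g(t)$ division of $\adj{\prec}{G}$, isolate (as in the warm-up of~\cref{subsec:ig}) two disjoint collections $\mathcal{A}=\{A_1,\ldots,A_{g(t)}\}$ and $\mathcal{B}=\{B_1,\ldots,B_{g(t)}\}$ of consecutive $\prec$-intervals with $A_1\prec\cdots\prec A_{g(t)}\prec B_1\prec\cdots\prec B_{g(t)}$ and every zone $A_i\cap B_j$ of combinatorial rank at least $2$. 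For each $i$, choose by the anti-diagonal argument an $a_i\in A_i$ (resp.\ $b_j\in B_j$) whose row (resp.\ column) contains a ``$10$'' or ``$01$'' in the zone $A_i\cap B_{g(t)+1-i}$ (resp.\ $A_{g(t)+1-j}\cap B_j$); set $A^o=\{a_i\}$ and $B^o=\{b_j\}$. Because $G$ is a \emph{rooted} directed path graph, any two intersecting root-descending paths share a root-initial segment, so the $\high$-nodes of every vertex witnessing some adjacency between $A^o$ and $B^o$ all lie on a single root-to-leaf path~$P$ of~$T$. For each relevant $u$, let $p(u)$ be the deepest node of $V(P_u)\cap V(P)$, and let $<_P$ be the order on the relevant vertices induced by $\leq_T$ along $P$; note that $<_P$ need not refine $\prec$.

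The split comes from \cref{thm:erdos-szekeres}: the bound $g(t)=4(g_1(t))^{g_2(t)}$ is chosen so that either (Case~1) both $A^o$ and $B^o$ contain $g_1(t)$-long subsequences that are strictly monotone in $<_P$ and simultaneously $\prec$-increasing, or (Case~2) by pigeonhole at least $g_2(t)=(2t^2)^2+1$ representatives on one side share a common value of $p(\cdot)$.

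In Case~1, the simultaneous $<_P$- and $\prec$-monotonicity allows us to assign to each remaining $A_i$ and $B_j$ pairwise-disjoint subpaths $I_i$ and $J_j$ of~$P$ containing the $p$-images of its vertices; \cref{obs:nozigzag1,obs:nozigzag2} together with the choice of representatives guarantee that adjacencies between an $A_i$ and a~$B_j$ are governed exactly by intersections of these subpaths. Identifying $P$ with an interval of the real line yields an interval graph on $\bigcup\mathcal{A}\cup\bigcup\mathcal{B}$ with the same biadjacency pattern, and this pattern and the disjoint starting intervals satisfy the hypotheses of~\cref{lem:transversalPairInIntervalGraphs}; applying the lemma gives a semi-induced $T_t$ in~$G$.

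Case~2 is the main obstacle, since \cref{lem:tree_model_starting_paths} no longer produces the needed separators directly at~$p$. By symmetry, say $g_2(t)$ representatives $b_{j_1},\ldots,b_{j_{g_2(t)}}\in B^o$ concentrate at $p\in P$; refine $\mathcal{B}$ to these indices and pigeonhole again so that the relevant vertices in $B_{j_k}$ leave $P$ at $p$ through pairwise distinct children $q_k$ of~$p$. The goal is then to build $T_{t,2}$ whose leftmost set $A$ is drawn from a $t$-subset of $A^o$, whose $B_0$-side takes one vertex per surviving $B_{j_k}$, whose two middle layers $B_1,B_2$ correspond to a length-two descending chain of vertices inside each subtree $T_{q_k}$ (supplying the two matching edges of the generalized transversal pair), and whose rightmost set $C$ is produced by applying~\cref{lem:tree_model_starting_paths} \emph{inside} each $T_{q_k}$ to obtain a vertex that separates $q_k$ from its parent along the DFS tie-breaking. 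The half-graph between $A$ and $B_0$ is inherited from the anti-diagonal witnesses, and the reversed half-graph between $B_2$ and $C$ comes from the DFS tie-breaking rule, which orders the siblings $q_k$ in $\prec$ by a canonical element of the symmetric difference of their $\high$-sets. The delicate point will be to select $C$ so that the ordering it induces is precisely the mirror of the ordering used on the $A$-side, matching the two opposite half-graph patterns required by the definition of a generalized transversal pair; handling this alignment, together with the loss of the direct separator lemma at the concentration node, is the crux of the proof.
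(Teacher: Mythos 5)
Your outline follows the same architecture as the paper's proof: project all relevant paths onto a single root-to-leaf path $P$, split on whether the representatives admit long $<_P$-monotone chains, reduce the monotone case to \cref{lem:transversalPairInIntervalGraphs}, and turn a concentration node into a $T_{t,2}$. Case~1 is essentially the paper's Case~A and is fine. But Case~2 contains a step that fails as stated: you claim one can ``pigeonhole again so that the relevant vertices in $B_{j_k}$ leave $P$ at $p$ through pairwise distinct children $q_k$ of $p$.'' No pigeonhole gives this --- a single child of $p$ may host \emph{all} vertices of \emph{all} the sets $B'_i$, and then no refinement at $p$ separates them. The paper handles this with an iterative descent: if every child of the current node hosts at most $h_{i+1}(t)$ representatives, an independent set in the interval graph of the index sets $J_i$ (\cref{claim:interval_independent_set}) yields the non-overlapping structure; otherwise one recurses into the offending child, and after $g_2(t)$ levels the accumulated chain of nodes $s_1,\ldots,s_{g_2(t)}$ is grafted onto $P$ to fall back to Case~A along a modified path $P'$. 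This recursion is the actual source of the bound $g(t)=4(g_1(t))^{g_2(t)}$, which you instead attribute to the Erd\H{o}s--Szekeres split.

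Second, the construction of the reversed half-graph on the $C$-side of $T_{t,2}$ is precisely what you defer as ``the crux,'' so it is not proved. The paper resolves it with a separate iterative extraction (\cref{claim:consequenceOfTieBrakingRule}) producing witnesses $\tilde{s}_i$ whose $\high$-nodes are strictly ordered on $P$ and bounded above by $p(A^o)$; the sets $D$ and $E$ of the $T_{t,2}$ are then chosen relative to these witnesses, and the mirror ordering follows from the lex-DFS tie-breaking rule, not merely from \cref{lem:tree_model_starting_paths} applied inside each $T_{q_k}$ (which only separates a node from its parent and does not by itself order the siblings against the $A$-side). Finally, your dichotomy silently drops the case where \emph{both} $A^o$ and $B^o$ concentrate (the paper's Case~C); this must be ruled out explicitly, which the paper does by showing that two concentration nodes force a constant or low-rank zone, contradicting the rank division. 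As written, the proposal is a faithful roadmap of the intended argument but leaves its two hardest components unproven and asserts one false intermediate step.
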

\begin{proof}
  Let $(T, \{P_v : v \in V(G)\})$ be a minimal tree model of $G$, where each $P_v$ is a path, and $M = \adj{\prec}{G}$.
  Notice that each row and each column of $M$ is associated with a vertex of $G$ and so we interchangeably refer to vertices of $G$ as rows and/or columns of $M$, and vice-versa.

  Assume that $M$ contains a rank-$(2 \cdot g(t))$ division $(\mathcal{D}^R, \mathcal{D}^C)$ with $\mathcal{D}^R = \{R_1, \ldots, R_{2g(t)}\}$ and $\mathcal{D}^C = \{C_1, \ldots, C_{2g(t)}\}$.
  Now, let $A_1, \ldots, A_{g(t)}$ be the first $g(t)$ parts of $\mathcal{D}^R$ and $B_1, \ldots, B_{g(t)}$ be the last $g(t)$ parts of $\mathcal{D}^C$.
  By definition of rank divisions, for every $i,j \in [g(t)]$ there are $u \in A_i$ and $v \in B_j$ such that the vertices $u$ and $v$ are adjacent in $G$.
  We denote by $A'_i$ the set of vertices of $A_i$ having a neighbor in some $B_j$, and by $B'_i$ the set of vertices of $B_i$ having a neighbor in some $A_j$.

	\begin{claim}\label{claim:path_P_hitting_all_high_nodes}
  Let $X = \bigcup_{i \in [g(t)]} A'_i \cup B'_i$.
	There is a directed path $P$ of $T$ from the root of $T$ to a leaf of $T$ such that $V(P) \supseteq \high(X)$.
	\end{claim}
	\begin{claimproof}
    Let $P$ be a directed path of $T$ maximizing $|V(P) \cap \high(X)|$ from the root of $T$ to one of its leaves.
    By contradiction, assume that there is a $v \in X$ such that $\high(v) \not \in V(P)$.
    By increasing $P$ if necessary, we can assume that $V(P)$ contains the root of $T$.
    Clearly the root of $T$ is a proper ancestor of $\high(v)$ 
    Let $q \in V(P)$ be the proper ancestor of $\high(v)$ that is closer to $\high(v)$ in $T$.
    If $q$ has no children in $V(P)$, $P$ then we have a contradiction since we can extend $P$ to contain $\high(v)$, and thus we assume that this is not the case.
    Now let $P'$ be the subpath of $P$ from the child $q'$ of $q$ in $V(P)$ to the leaf of $T$ that is contained in $V(P)$.
    Notice that there must be $u \in X$ with $q' <_T \high(u)$ since otherwise we can extend $P$ to contain $\high(v)$, contradicting our choice of $P$.

    If $v \in A'_i$ for some $i \in [g(t)]$ then, by the choice of such sets, $v$ has a neighbor $w \in B'_j$ for some $j \in [g(t)]$.
    Similarly, if $v \in B'_j$ then it has a neighbor $w$ in some $A'_i$.
    In both cases, $q <_T \low(w)$ since $q <_T \low(v)$, and thus there are $a \in A'_i$ and $b \in B'_j$ with $q <_T \low(a)$ and $q <_T \low(b)$ (with $v = a$ or $v = b$).
    Repeating this analysis with relation to $u$ and $q'$, we conclude that there are $a'\in A'_{i'}$ and $b' \in B'_{j'}$ satisfying $q' <_T \low(a')$ and $q' <_T \low(b')$ (with $u = a'$ or $u = b'$).
    This contradicts \cref{obs:nozigzag1}, we conclude that no $v \in X$ with $\high(v) \not \in V(P)$ can exists, and the claim follows.
	\end{claimproof}
	
  For the remaining of this proof, we fix $P$ to be a maximal path of $T$ containing $\high(X)$ for $X = \bigcup_{i \in [g(t)]} A'_i \cup B'_i$.
  For every vertex $v \in X$, we denote by $p(v)$ the node in $V(P_v) \cap V(P)$ that is closer to $\low(v)$ in $T$.
  We extend this notation to sets by writing $p(Y)$ to denote the set $\{p(y) : y \in Y\}$.
  Note that $\leq_T$ defines a canonical linear quasi-order $\leq_P$ on $X$. 
  Namely, for $x,x'\in X$ we have $x\leq_P x'$ if and only if $p(x)\leq_T p(x')$. Clearly it is a quasi-order (reflexive and transitive). 
  Note that $x \leq_P x'$ and $x\geq_P x'$ if and only if $p(x)=p(x')$. If $x >_P x'$, $p(x)$ is a strict descendant of $p(x')$.
  For each $i \in [g(t)]$ we fix $a_i \in A'_i$ such that $a_i \leq_P a$ for every $a \in A'_i$, and $b_i \in B'_i$ such that $b_i \leq_P b$ for every $b \in B'_j$.
  Let $A^o=\{a_i: i\in [g(t)]\}$ and $B^o=\{b_i: i\in [g(t)]\}$.
  There are three possible cases concerning $\leq_P$ and the vertices in $A^o$ and $B^o$.
  
%

\begin{enumerate}
  \item[Case A.] There is a strictly increasing chain of size at least $g(t)/4$ on both $A^o$ and $B^o$.
  \item[Case B.] There is a strictly increasing chain size at least $g(t)/4$ on $A^o$ but not on $B^o$ (or vice-versa, the cases are symmetric).
  \item[Case C.] There is no strictly increasing chain of size $g(t)/4$ on neither $A^o$ nor $B^o$.
\end{enumerate}

If Case A occurs, we show how to find a transversal pair.
If Case B occurs, we show that we either find a semi-induced $T_{t,2}$ or fall back to Case A (thus finding a transversal pair).
Finally, we show that Case C cannot occur without bounding from above the rank of at least one zone $A_i \cap B_j$ of $M$, thus contradicting our assumption on $(\mathcal{D}^R, \mathcal{D}^C)$.
We use the following claims.
%
\begin{claim}\label{lem:territory}
  Let $k$ be a natural.
  For $i \in [k]$,
  \begin{enumerate}[(i)]
    \item if $a_1 <_P \cdots <_P a_{k+1}$ then $a <_P a_{i+2}$ for every $a \in A'_i$.
    \item if $b_1 <_P \cdots <_P b_{k+1}$ then $b <_P b_{i+2}$ for every $b \in B'_i$.
  \end{enumerate}
\end{claim}
\begin{claimproof}
  Suppose item (i) does not hold, i.e., $p(a_{i+2})\leq_P p(a)$. 
	Applying~\cref{obs:nozigzag2} to the three vertices $a\prec a_{i+1}\prec a_{i+2}$ yields the result immediately.
  The proof of (ii) follows similarly.
\end{claimproof}
	\begin{claim}\label{lem:unranked}
	Suppose that there exists some node $p$ on $P$ such that $p <_T \low(a_i)$ for every $i\in [0,t+1].$ 
	Then for every $i\in [t]$ and $a\in A_i$, 
	we have $p \leq_T \low(a)$.
\end{claim}
\begin{claimproof}
	As $a_0\prec a \prec a_{t+1}$ for every $a\in A_i$, $i\in [t]$ the statement follows directly from~\cref{obs:nozigzag2}.
\end{claimproof}

\begin{claim}\label{claim:only_two_sharing_low}
  For every node $s \in V(T)$, there are at most two distinct $u,w \in A^o \cup B^o$ with $\low(u) = \low(w) = s$.
  \end{claim}
  \begin{claimproof}
  By contradiction assume that $\low(u) = \low(v) = \low(w) = s$ with $u,v,w \in A^o \cup B^o$.
  Without loss of generality, we can assume that $u \prec v \prec w$ since all three are in distinct parts of the rank division $(\mathcal{D}^R, \mathcal{D}^C)$.
  By the choice of $\prec$, the set $Y = \{y \in V(G) : \low(y) = s\}$ appears as a consecutive sequence in $\prec$, and $Y$ forms a clique of $G$.
  Since $u \prec v \prec w$, $Y$ contains all vertices associated with rows and columns of at least one zone of $(\mathcal{D}^R, \mathcal{D}^C)$, a contradiction since such a zone would be constant, and the result follows.
\end{claimproof}

\begin{claim}\label{claim:p_concentrating_is_last}
If there is a node $p \in V(P)$ satisfying $\abs{\{b \in B^o : p(b) = p\}} \geq 5$, then there is no other node $p' \in V(P)$ with $p' >_T p$ and such that $p' = p(a_i)$ for some $a_i \in A^o$.
\end{claim}
\begin{claimproof}
Assume that the statement of the claim does not hold and that this is witnessed by some pair $p,p'$ with $p' >_T p$.
By \cref{claim:only_two_sharing_low} at most two distinct $b,b' \in B^o$ satisfy $\low(b) = \low(b') = p$.
Hence there are distinct $b_i, b_\ell, b_j \in B^o$ with $\low(b_i), \low(b_\ell), \low(b_j) \not \in  V(P)$, with $i,\ell,j \in [g(t)]$.
Without loss of generality we can assume that $b_i \prec b_\ell \prec b_j$.
Thus for every $a \in A'_i$ the path $P_a$ contains $p$ if and only if $a$ has a neighbor in $B_\ell$.
Under this configuration, the zone $A_i \cap B_\ell$ can have at least two distinct types of rows, depending on the relationship of each $a \in A_i$ with each $b \in B_\ell$, and the grid rank of the submatrix associated with this zone is at most two.
This contradicts our choice of $(\mathcal{D}^R, \mathcal{D}^C)$ and the result follows.%
\end{claimproof}
It is not hard to see that we can swap the roles of $A^o$ and $B^o$ in the statement of \cref{claim:p_concentrating_is_last}.
That is, the implication also holds if we ask that $p$ concentrates many of the nodes in $p(A^o)$ and $p' = p(b_i)$ for some $b_i \in B^o$.
The crucial observation is that a vertex in the set associated with $p'$ cannot form complex rows (resp.~columns) with vertices appearing in $A'_2$ (resp.~$B'_2$).
Hence we have the following.

\begin{claim}\label{claim:p_concentrating_is_last_swap}
If there is a node $p \in V(P)$ satisfying $\abs{\{a \in A^o : p(a) = p\}} \geq 5$, then there is no other node $p' \in V(P)$ with $p' >_T p$ and such that $p' = p(b_i)$ for some $b_i \in B^o$.
\end{claim}

If Case C occurs there are $p_A$ and $p_B$ satisfying $\abs{\{a \in A^o : \low(a) = p_A\}} \geq 5$ and $\abs{\{b \in B^o : \low(b) = p_B\}} \geq 5$, respectively.
If $p_A = p_B$ then by the choice of $A^o$ and $B^o$ the vertices in $\{v \in A'_i : p(a_i) = p_A\} \cup \{v \in B'_i : p(b_i) = p_B\}$ form a clique in $G$, implying that each zone $A_i \cap B_j$ with $a_i \in A^o$, $b_i \in B^o$, and $p(a_i) \in p_(b_i) = p_A = p_B$ cannot have large rank, a contradiction.
If $p_A \neq p_B$ and $g(t) \geq 5$ then we obtain a contradiction with either \cref{claim:p_concentrating_is_last} or \cref{claim:p_concentrating_is_last_swap}, depending on whether $p_A <_T p_B$ or $p_B < p_A$.
We now focus on solving Case A and Case B.

%
	\bigskip
	\noindent {\bf Case A.} 
  Here we need $g(t)/4 \geq (4t^2+1)^2+1$.
  This case is similar to the interval graph case.
  By discarding some vertices and renaming appropriately, we can assume without loss of generality that $A^o = \{a_1, \ldots , a_{(4t^2+1)^2+1}\}$ and $p(a_i) \neq p(a_j)$ for every $a_i,a_j \in A^o$.
  Likewise, we pick vertices $b_1, \ldots, b_{(4t^2+1)^2+1}$ of $B^o$ satisfying the same property.
	
	After applying the Erd\H{o}s–Szekeres theorem, we 
	assume that the total order on $A^o$ by $\leq_P$ is monotone, i.e.,  
	either $p(a_0) <_T \cdots <_T p(a_{4t^2+2})$ or $p(a_{4t^2+2}) <_T \cdots <_T p(a_0)$ holds. Likewise, either $p(b_0) <_T \cdots <_T p(b_{4t^2+2})$ 
	or $p(b_{4t^2+2}) <_T \cdots <_T p(b_0)$ holds. Note that we discard all $A_i$'s and $B_j$'s for which $p(a_i)$, $p(b_j)$ is not in the decreasing or increasing subsequence obtained by the Erd\"{o}s–Szekeres theorem and then we rename appropriately. 
	
	Let $p(A^o):=\{p(a_i):i\in [4t^2+2]\}$ and $p(B^o):=\{p(b_i):i\in [4t^2+2]\}$.
	We can  further discard half of each of the sets $A_{1}',\dots, A_{4t^2+2}'$ and half of $B_1',\dots,B_{4t^2+2}'$ so that either $p(A^o) <_T p(B^o)$ or $p(B^o) <_T p(A^o)$ holds after discarding the corresponding $a_i$ and $b_j$ from $A^0$ and $B^0$. 
	This can be done as follows. Choose an edge $e_A$ which divides $P$ into two subpaths so that each subpath contains half of $p(A^o)$. 
	Choose an edge $e_B$ similarly. If $e_A\leq_T e_B$, then we discard every $A_i$ for which $p(a_i)\leq_T e_A$ and $B_j$ for which $e_B\leq_T p(b_i)$. We end up in the case $p(A^o) <_T p(B^o)$. Otherwise, we discard in such a way that we end up in the case $p(B^o) <_T p(A^o)$. We assume that after discarding we  rename so that we are left with $A_1',\dots, A_{2t^2+1}'$ and $B_1',\dots, B_{2t^2+1}'$.
	Since $\leq_P$ is monotone on both $A^0$ and $B^0$ we can apply~\cref{lem:territory} to both sets. Hence we can find $t^2$ pairwise node disjoint subpaths $I_1,\dots,I_{t^2}$ of $P$ such that for every $i\in \{2i-1: i\in [t]\}$  we have that $p(a)$ is in $I_{i}$ for every $a\in A_{2i-1}'$. Likewise we choose pairwise node disjoint subpaths $J_1,\dots, J_{t^2}$ containing the $p(b)$'s. Note that the $I_i$'s and $J_j$'s are aligned on $P$ following the total order on $A^o$ and $B^o$. We  discard all sets $A_i'$ and $B_i'$ with $i$ even and rename such that  the remaining blocks are called $A_1',\dots, A_{t^2}',B_1,\dots,B_{t^2}'$. Note that the subgraph $G_I$ of $G$ induced by vertices in $\bigcup_{i \in [t^2]}A_i'\cup B_i'$ is an interval graph. Further note that the interval representation of $G_I$ that can be obtained by looking at $P$ in reverse satisfies that ${\sf start}(A_i')\subseteq I_i$ and ${\sf start}(B_j')\subseteq J_j$. Further note that the zone $A_i'\cup B_j'$  in the adjacency matrix of $G_I$ has rank at least $2$ since we obtain this zone by deleting rows and columns that are constant $0$ from the zone $A_i'\cup B_j'$ in $M$ which has rank at least $3$ by choice. Therefore \cref{lem:transversalPairInIntervalGraphs} applies to $G_I$ which implies that $G$ contains a semi-induced transversal pair $T_t$.

 \bigskip

	\noindent {\bf Case B.} 
  Let $g_1(t) = (2t^2)^{2t^2}+2$, $g_2(t) = (2t^2)^2+1$, and $g_B(t) = (g_1(t))^{g_2(t)}$.
  For this case we ask that $g(t)/4 \geq g_B(t) + 4$ and assume that there exists a node $p \in V(P)$ such that $p=p(b_i)$ for at least $g(t)/4$ distinct vertices $b_i \in B^o$.
  By assumption, we also have a strictly increase chain of size at least $g(t)/4$ on $A^o$.
  By \cref{claim:p_concentrating_is_last} we know that $p(a) <_P p$ for every~$a$ appearing in this chain.
  We repeat the steps done in Case A with regard to $A^o$ and this chain to obtain the exclusive ``territories'' for the sets $p(A'_i)$ in $P$.
  Without loss of generality, we assume that $A^o = \{a_i : i \in [(8t^2)^2+1]\}$
  by discarding and renaming vertices appropriately if needed.
  Then, we invoke the Erd\H{o}s-Szekeres theorem to obtain a monotone subsequence of size $(8t^2 + 1)$ of $A^o$ with regard to $<_P$.
  We now keep in $A^o$ only the vertices appearing in this monotone sequence.
  Finally, we apply \cref{lem:territory} to obtain the pairwise node disjoint subpaths $I_1, \ldots, I_{4t^2}$ of $P$ where $V(I_i)$ contains $p(a)$ for every~$a \in A'_i$ and $i \in [4t^2]$.
  


  \medskip
  \noindent\textbf{Cleaning $B^o$ and organizing the sets $B_i$.}
  By \cref{claim:only_two_sharing_low} at most two distinct $b,b'\in B^o$ satisfy $\low(b) = \low(b') = p$.
  We drop those two vertices from $B^o$ to guarantee that every~$b$ in the set has $\low(b) \neq p$.
  By discarding other vertices of $B^o$ and renaming when necessary, we assume that $B^o = \{b_i : i \in [0,g_B(t)+1]\}$. 
  

  
	Let $i \in [g_B(t)]$. 
  Observe that in this setting $p(b) = p$ and $p <_T \low(b)$ for  every $b \in B'_i$ since otherwise either $b_{g_B(t)+1} \prec b$ or $b \prec b_0$ and both cases contradict the choice of $\prec$. 
  We conclude that $p <_T \low(b)$ and that there is a child $q$ of $p$ with $q \not \in V(P)$ such that $p'\leq_T \low(b)$.
  Thus for every $i \in [g(t)]$ there is a set $Q_i$ of children of $p$, all of which are not in $V(P)$, such that every $b \in B_i$ satisfies $q \leq_T \low(b)$ for some $q \in Q_i$.
  In other words, the path $P_b$ associated with $b$ ends in a subtree of $T$ rooted in a child of $p$ that is not in $V(P)$ whenever $b \in B'_i$.
  

  We say that a node $q$ of $T$ \emph{hosts} a vertex $v$ if $q \leq_T \low(v).$
  As an immediate consequence of~\cref{obs:nozigzag1}, the children of any node $s$ in $T$ admits a total order by $\prec$. 
  That is, for any pair of children $q,q'$ of $s$ we write $q \prec q'$ if every $v$ with $\low(v) \geq_T q$ satisfies $v \prec v'$ whenever $\low(v') \geq_T q'$.
  \cref{obs:nozigzag1} says that any pair of children of $p$ is comparable, and clearly the transitivity and antisymmetry of $\prec$ on $V(G)$ extends to 
  to that on the children of $s$.
  This observation is later used to organize a large subset $Y$ of $B^o$ in such way that each $B'_i$ associated with a $b_i \in Y$ is hosted by an exclusive collection of children of $p$.

  
  Let $q_1, q_2, \ldots, q_\ell$ be the children of $p$ hosting at least one vertex of $B^o$, where the indexing follows $\prec$; that is, $q_1 \prec q_2 \prec \cdots \prec q_\ell$.
  For each $i \in [g_B(t)]$ we denote by $J_i$ the subset of $[\ell]$ such that at least one vertex of $B_i$ is hosted in $q_j$ for some $j \in J_i$.
  The sets $J_i$ may not be well-behaved: they may intersect since some $q_j$ can host vertices from distinct $B_i$ and $B_{i'}$.
  If we can find a collection of sets $J_i$ with $i \in [t]$ that do not overlap, then we can use this configuration, together with the previously defined subpaths $I_i$ of $P$, to find a semi-induced $T_{t,2}$ and conclude Case B.
  See~\cref{fig:Case_B_part_1} for the configuration achieved until this point.
  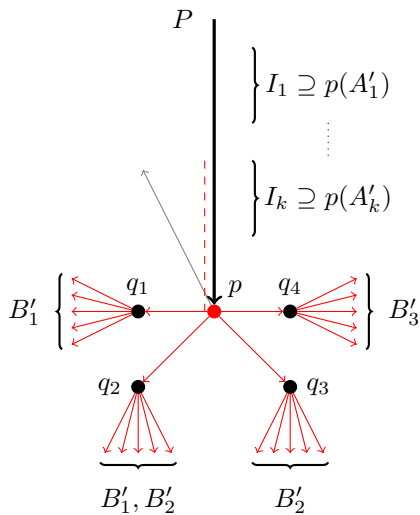
\begin{figure}[h!]
  \centering
  \begin{tikzpicture}[point/.style={circle,inner sep=0.06cm, fill}]
  \centering
  \node (P-start) at (0,4) {};
  \node[draw, point, red, label=45:{$p$}] (P-B) at (0,0) {};
  \node (P-end) at (-1,2) {};
  \draw[->, very thick] (P-start) -- (P-B) node [pos = 0, label=180:{$P$}] {};
  \draw[->, gray] (P-B) -- (P-end) {}; 
  \foreach \i/\j in {0/1,1.5/k}{
    \draw [decorate,thick,decoration={brace,amplitude=2pt},xshift=0pt,yshift=0]
    ($(P-start) + (0.5,-0.5-\i)$) -- ($(P-start) + (0.5,-1.5-\i)$) node
    [midway, xshift = 1cm] {$I_\j \supseteq p(A'_\j)$}  node[midway] (y\j) {};
  }
  \draw[dotted] ($(y1) + (1, -.5)$) -- ($(yk) + (1, .5)$);

  \node[draw,point, label=90:{$q_1$}] (q1) at ($(P-B) + (-1,0)$) {};
  \node[draw,point, label=180:{$q_2$}] (q2) at ($(P-B) + (-1,-1)$) {};
  \node[draw,point, label=0:{$q_3$}] (q3) at ($(P-B) + (1,-1)$) {};
  \node[draw,point, label=90:{$q_4$}] (q4) at ($(P-B) + (1,0)$) {};

  \foreach \i in {1,2,3,4}{
    \draw[->,red] (P-B) -- (q\i);
  }
  \foreach \j in {-2,-1,0,1,2}{
      \node (x1\j) at ($(q1) + (-1, \j/4)$) {};
      \draw[->,red] (q1) -- (x1\j);
  }
  \draw [decorate,thick,decoration={brace,amplitude=2pt},xshift=0pt,yshift=0]
    ($(x1-2) + (0,0)$) -- ($(x12) + (0,0)$) node
    [midway, xshift = -.5cm] {$B'_1$}; 
  \foreach \j in {-2,-1,0,1,2}{
      \node (x4\j) at ($(q4) + (1, \j/4)$) {};
      \draw[->,red] (q4) -- (x4\j);
  }
  \draw [decorate,thick,decoration={brace,amplitude=2pt, mirror},xshift=0pt,yshift=0]
    ($(x4-2) + (0,0)$) -- ($(x42) + (0,0)$) node
    [midway, xshift = .5cm] {$B'_3$}; 

  \foreach \j in {-2,-1,0,1,2}{
      \node (x2\j) at ($(q2) + (\j/4, -1)$) {};
      \draw[->,red] (q2) -- (x2\j);
  }
  \draw [decorate,thick,decoration={brace,amplitude=2pt, mirror},xshift=0pt,yshift=0]
    ($(x2-2) + (0,0)$) -- ($(x22) + (0,0)$) node
    [midway, yshift = -.5cm] {$B'_1, B'_2$}; 

  \foreach \j in {-2,-1,0,1,2}{
      \node (x3\j) at ($(q3) + (\j/4, -1)$) {};
      \draw[->,red] (q3) -- (x3\j);
  }
  \draw [decorate,thick,decoration={brace,amplitude=2pt, mirror},xshift=0pt,yshift=0]
    ($(x3-2) + (0,0)$) -- ($(x32) + (0,0)$) node
    [midway, yshift = -.5cm] {$B'_2$}; 
  \draw[red,dashed] ($(P-start) + (-.125, -2)$) -- ($(P-B) + (-.125, 0)$);
  \end{tikzpicture}%
  \caption{Illustration of the configuration achieved in Case B up to this point. In the example, $k = 2t^2$ and the red edges are denoting the direction the paths associated with vertices in $B'_i$ with $i \in [4]$ are taking after $p$. Here we have $J_1 = \{1,2\}, J_2 = \{2,3\}, J_3 = \{4\}$. The light gray path denotes the continuation of $P$ after $p$.}
  \label{fig:Case_B_part_1}
  \end{figure}
  Next, we show that we either find such a collection or fall back to Case A, finding a transversal pair.

  

	
  \medskip
  \noindent\textbf{Eliminating the overlaps between sets $J_i$.}	
  We begin with two simple observations, formalized by the following claim.
  For $q \in V(T)$ we denote by $J^q \subseteq [g_B(t)]$ be the set of indices $i$ such that there exists $a \in B_i$ that is hosted by $q$.
	
	\begin{claim}\label{lem:unrankedneat}
    For each $q \in \{q_1, \ldots, q_\ell\}$, the following properties hold.
		\begin{enumerate}[$(i)$]
			\item If $q$ hosts some $b \in B_i$ and some $b'\in B_j$ with $i \neq j$ then no other child $q'$ of $p$ does, and
			\item $J^q$ forms an interval of $[g_B(t)]$, i.e., consists of consecutive integers.
		\end{enumerate}
	\end{claim}
	\begin{claimproof}
  To prove $(i)$ by contradiction, assume that $q$ hosts $x, y$ with $x \in B_i$ and $y \in B_j$, and that $q'$ hosts $x',y'$ with $x' \in B_i$ and $y'\in B_j$.
  Without loss of generality we may assume that $j > i$.
  Now either $x \prec x' \prec y$ or $x' \prec x \prec y'$.
  In either case, we obtain a contradiction with \cref{obs:nozigzag2} since $q$ is an ancestor of the least common ancestor of $\low(x)$ and $\low(y)$, $q'$ is an ancestor of the least common ancestor of $\low(x')$ and $\low(y')$, and $p$ is the least common ancestor of both pairs $\low(x),\low(x')$ and $\low(y),\low(y')$.

  We prove $(ii)$ similarly.
  If there are $x \in B_i$, $y \in B_j$, and $z \in B_\ell$ with $i < j < \ell$ such that $q$ hosts $x,z$ and $q'$ hosts $y$, we obtain a contradiction by applying~\cref{obs:nozigzag2} to $x,y$ and $z$ when $q\neq q'  $, and the result follows.
	\end{claimproof}

By the definition of the order $\prec$ on $\{q_1,\ldots, q_{\ell}\}$, each $J_i$ with $i \in [g_B(t)]$ forms an interval (i.e., a set of contiguous integers) of $[\ell].$ 
Moreover, \cref{lem:unrankedneat} implies that for $i < j$, we have $q_{i'} \prec q_{j'}$ whenever $i' \in J_i$ and $j'\in J_j$.
In particular, $J_i$ and $J_j$ can overlap on at most one index.	
Our goal is now to ensure that there is no overlap between any $J_i$ and $J_j$. 
More precisely, for sufficiently large $k$ we want to extract a subset $B^\star$ of $B^o$ with size $k$ and a minimal ordered (with respect to $\prec$) subset $\{q'_1, \ldots, q'_{\ell'}\}$ of the children $\{q_1,\ldots, q_{\ell}\}$ hosting every vertex of $\{v \in B'_j : b_j \in B^\star\}$ so that the following property holds. 
	
\begin{center}
	Property $(\star)$: $J_1,\cdots , J_k$ forms a division of $[\ell']$. That is, each $J_i$ is an interval of $[\ell']$ and they are pairwise disjoint.
\end{center}

Let $\mathcal{B} = \{B'_1, \ldots, B'_{g_B(t)}\}$.
The only obstruction is the possibility that some $q_i$ hosts the vertices of arbitrarily many sets of $\mathcal{B}$, potentially all of them. 
If each $q_i$ hosts at most $g_1(t)$ sets of $\mathcal{B}$ then, assuming that $\abs{\mathcal{B}} \geq g_q(t) \cdot h(t)$, one can greedily extract $g_1(t)-2$ sets of this collection meeting Property $(\star)$ by finding a large independent set of the interval graph associated with the $J_1, \ldots, J_{\abs{\mathcal{B}}}$.
\begin{claim}\label{claim:interval_independent_set}
Let $J \subseteq [g_B(t)]$ and $q$ be a node of $T$  such that every $b_i$ with $i \in J$ is hosted by $q$.
If there is a function $h(t)$ such that every child of $q$ hosts at most $h(t)$ vertices $b_i$ and $\abs{J} \geq g_1(t) \cdot h(t)$, then there are $\{q'_1, \ldots, q'_{\ell'}\}$, $B^\star \subseteq B^o$, and $J_1, \ldots, J_{g_1(t)-2}$ satisfying Property~$(\star)$.
\end{claim}
\begin{claimproof}
By \cref{claim:only_two_sharing_low} we know that there are at most two distinct $i,j \in J$ with $\low(b_i) = \low(b_j) = q$.
Let $J' = J - \{i,j\}$ and  $\{q_1, \ldots, q_\ell\}$ be the set of children of $q$, ordered by $\prec$.
For $i \in J'$ let $J_i$ be the subset of $[\ell]$ such that at least one vertex of $B'_i$ is hosted by $q_j$ for some $j \in \ell$, and consider the interval graph $H$ with vertex set $\{J_i : i \in J'\}$
If each~$q_i$ hosts at most $h(t)$ sets of $\mathcal{B}$ then each index in $J'$ appears in at most $h(t)$ of the intervals forming the vertices of $H$ and thus $\omega(H) \leq h(t)$.
Interval graphs are known to be~\emph{perfect}\footnote{A graph $G$ is \emph{perfect} if $\chi(G') = \omega(G')$ for every induced subgraph $G'$ of $G$.}
Hence, $\chi(H) = \omega(H)$.
Finally, since $\alpha(H) \cdot \omega(H) \geq \abs{V(H)}$ we get that $\alpha(H) \geq g_1(t)-2$, as desired.

We construct the sets satisfying Property $(\star)$ from an independent set $Y$ of $H$ with $|Y| = g_1(t)$ by keeping in $B^\star$ only the vertices $b_j$ with $j$ appearing in some $J_i \in Y$ and choosing the minimal subset of children of $q$ hosting every vertex of $\{v \in B'_j : b_j \in B^\star\}$.
\end{claimproof}
Next, we show how to use those sets to find a semi-induced $T_{t,2}$ and then show that with $h(t) = g_B(t)$ we either obtain the sets satisfying Property~$(\star)$ with $k = g_1(t)-2$ or fall back to Case A.

\medskip
\noindent\textbf{Using Property $(\star)$ to find a semi-induced $T_{t,2}$.} 
In this section we aim to find a semi-induced $T_{k,2}$ $(A,B,C,D,E)$. We set $h'(t)=(2t^2)^{2t^2}$ and observe that we obtained at least $h'(t)$ sets $B_i'$ with property $(\star)$ and at least $4t^2$ different $A_i'$ with distinct regions.
The following claim is a consequence of the tie-breaking rule used to obtain $\sigma$ and it will provide us with sets $D$ and $E$ of the semi-induced $T_{k,2}$.
\begin{claim}\label{claim:consequenceOfTieBrakingRule}
	There exist a set $S\subseteq [h'(t)]$ of size $2t^2$ and a set $\{\tilde{s}_i:i\in S\}$  such that 
	\begin{enumerate}
		\item[(i)]  $\tilde{s}_i$ is hosted by some $q_{j}$ for $j\in J_i$ for every $i\in S$,
		\item[(ii)] $\high(\tilde{s}_i)<_T \high(\tilde{s}_j)$ for $i<j$,
		\item[(iii)] every $q_j$ with $j\in J_i$ hosts some path $s$ with $\high(\tilde{s}_{i})\leq_T \high(s)\leq_{T}\high(\tilde{s}_{i'})$ for $i,i'\in S$, $ i<i'$ and
		\item[(iv)] $\high(\tilde{s}_i)\leq_T p(a)$ for some $a\in A_{2t^2}'$.
	\end{enumerate}    
\end{claim}
\begin{claimproof}
Let $s_i^1$ be the vertex such that $\high(s_i^1)$ is the minimum of the set $\{\high(v): \low(v)\in T_{q_j}, j\in J_i\}$ with respect to $\leq_T$.
Since the set $\{q_j: j\in J_i\}$ hosts all vertices from $B_i'$ and for $i<i'$ the vertices in $B_i'$ get discovered before the vertices in $B_{i'}'$, 
the tie-breaking rule for obtaining $\prec$ implies that $\high(s_1^1)\leq_T \dots\leq_T \high(s_{h'(t)}^1)$. There are two cases to consider. Either there is a 
subset $S\subseteq [h'(t)]$ of size $2t^2$ such that $\high(s_i^1)<_T \high(s_j^1)$ for $i,j\in S$, $i<j$  or there is a node $p^1$ on $P$ such that there is a subset $S^1\subseteq [h'(t)]$ of size at least $h'(t)/(2t^2)$ such that $\high(s_i^1)=p^1$ for every $i\in S^1$.  In the first case we terminate. In the second case we define $s_i^2$ such that $\high(s_i^2)$ is the minimum of the set $\{\high(v): \low(v)\in T_{q_j}, j\in J_i, \}\setminus \{p^1\}$ with respect to $\leq_T$ and repeat the argument iteratively  no more than $2t^2$ times. That is in the $k$-th iteration for every $i\in S^{k-1}$ we select $s_i^k$ such that $\high(s_i^k)$ is the minimum of the set $\{\high(v): \low(v)\in T_{q_j}, j\in J_i, \}\setminus \{p^j: j<k\}$ with respect to $\leq_T$.
Again by the tie-breaking rule we used to obtain $\prec$ we infer that $\high(s_i^{k})\leq_T  \high(s_{i'}^k)$ for every $i,i'\in S^{k-1}$ for which $i<i'$. Again we either find a subset $S\subset [h[t']]$ of size $2t^2$ such that $\high(s_i^{k})<_T\high(s_j^k)$ for $i,j\in S$, $i<j$  or a node $p^k$ and a set $S^k\subseteq S^{k-1}$ of size $h'(t)/(2t^2)^k$ such that $\high(s_i^{k})=p^k$ for every $i\in S^k$.  We repeat this procedure until we obtain either 
\begin{enumerate}
	\item[(C1)] a set $S\subseteq [h'(t)]$ of size $2t^2$ and vertices $\tilde{s}_i$, $i\in S$ such that $\tilde{s}_i$ is hosted by some $q_j$ for $j\in J_i$ and $\high(\tilde{s}_i)<_T\high(\tilde{s}_j)$ for $i,j\in S$, $i<j$  or
	\item[(C2)] $2t^2$ points $p^1,\dots,p^{2t^2}$, a set $S^{2t^2}$ and for every $i\in S^{2t^2}$, $j\in [t]$ a vertex $s_i^j$ such that $\high(s_i^j)=p_j$ for every $i\in S^{2t^2}$.
\end{enumerate} 
Note that each group $q_j$, $j\in J_i$ hosts at least $2t^2$ vertices with different higher endpoints and hence the set $\{\high(v): \low(v)\in T_{q_j}, j\in J_i, \}\setminus \{\high(s_i^j): j<k\}$ from which we choose $s_i^k$ in the $k$-th iteration of the above procedure is never empty.  

In case we obtain case (C1) the properties (i),(ii) are true by construction. To prove property (iii) observe that there is a $k$ such that $\tilde{s}_i=s_i^k$ for every $i\in S$.  Since every branch $T_{q_j}$ for $j\in J_i$ is explored by the lex-DFS before $T_{q_{j'}}$ for $j'\in J_{i'}$ and $i'>i$ the tie-breaking rule implies that  $q_j$ must host some path $s$ with $p^{k'}\high(s)\leq_{T}\high(\tilde{s}_{i'})$ for $k'<k$. Since we further picked $s_i^k$  in the above procedure to be the minimum of $\{\high(v): \low(v)\in T_{q_j}, j\in J_i, \}\setminus \{p^j: j<k\}$ we obtain (iii). Hence assume we obtained case (C2). We now set $S=S^{2t^2}$ and $\tilde{s}_i:= s_i^{2t^2-i}$. Since $s_i^{2t^2-i}$ is hosted by some $q_j$ for $j\in J_i$ and by construction $\high(s_1^t)=p^t<_T\dots <_T \high(s_t^1)=p^1$ these choices also satisfy the conditions of the statements (i) and (ii). Note that in this case, (iii) follows from the observation that if in the $k$-th iteration of the above procedure we find $p^k$ such that $\high(s_i^k)=p^k$ for every $i\in S^k$ 
then the tie-breaking rule implies that every $q_j$ for $j\in \bigcup_{i\in S}J_i$ hosts a path $s$ with $\high(s)=p^k$.

The condition (iv), \ie $\high(\tilde{s}_i)\leq_P p(a)$ for some $a\in A_{2t^2}'$, follows trivially from the tie-breaking rule  as the group $\{q_i:i\in J_{2t^2}\}$ hosts an element $b\in A'_i$ for every $i\in [2t^2]$ by construction. %
\end{claimproof}
Let $S\subseteq [h'(t)]$ and $\tilde{s}_i\in V(G)$, $i\in S$ as in  \cref{claim:consequenceOfTieBrakingRule}. After renaming respectful of the order $\sigma$ we assume that $S=[2t^2]$.  For easier reading we define $A_i'':=A'_{4t^2-(i-1)}$ for every $i\in [2t^2]$ and $I''_i:=I_{4t^2-(i-1)}$. Note that this is just a renaming scheme to keep the indices in the following argument small. For $i,j \in [t]$ pick $b_{i,j}$ to be the vertex that is hosted in $q_k$ for $k\in J_{2(t^2-(j-1)t-(i-1))}$ and $\high(b_{i,j})\in I''_{2((i-1)t+j)-1}$. Now pick $a_{i,j}\in A_{2((i-1)t+j)}'$ arbitrarily. Note that by construction (and considering the renaming of the sets $A_i'$ for $i>2t^2$) $a_{i,j}$ is adjacent to $b_{i',j'}$ if and only if $(i,j)\leqslant_{\text{lex}} (i',j')$.   Set $c_{i,j}$ be a vertex with $\high(c_{i,j})=q$ where $q\in \{q_j:j\in J_j\}$ is the node hosting $b_{i,j}$. Note that $c_{i,j}$ exists by \cref{lem:tree_model_starting_paths}.
Set $d_{i,j}$ to be a vertex hosted in the same branch as $b_{i,j}$  with $\tilde{s}_{2(t^2-(j-1)t-(i-1))}\leq_P \high(d_{i,j})\leq_P \tilde{s}_{2(t^2-(j-1)t-(i-1))+1}$. The existence of $d_{i,j}$ follows from Claim~\ref{claim:consequenceOfTieBrakingRule}(iii) since $b_{i,j}$ and $d_{i,j}$ are hosted by some $q_k$ for $k\in J_{2(t^2-(j-1)t-(i-1))}$. Furthermore $d_{i,j}$ is distinct from $b_{i'j'}$, $i',j'\in [t]$ as $\high(d_{i,j})\leq_P a <_P \high(b_{i',j'})$ for some element $a\in A'_{2t^2}$ by choice of $b_{i,j}$ and Claim~\ref{claim:consequenceOfTieBrakingRule}(iv). Observe that if $(j,i)<_{\text{lex}}(j',i')$ then $\high(d_{i',j'})\leq_P \tilde{s}_{2(t^2-(j'-1)t-(i'-1))+1}<_P  \tilde{s}_{2(t^2-(j-1)t-(i-1))}\leq_P \high(d_{(i,j)})$ where the strict inequality follows from Claim~\ref{claim:consequenceOfTieBrakingRule}(ii).  Hence we can pick $e_{i,j}$ to be a vertex such that $p(e_{i,j})=\high(d_{i,j})$ and obtain that $d_{i,j}$ is adjacent to $e_{i',j'}$ if and only if $(j,i)\leqslant_{\text{lex}}(j',i')$. Note that $e_{i,j}$ exists by \cref{lem:tree_model_starting_paths}. Set $A:= \{a_{i,j}~:~i,j \in [k]\}$, $B := \{b_{i,j}~:~i,j \in [k]\}$,  $C := \{c_{i,j}~:~i,j \in [k]\}$, $D:= \{d_{i,j}~:~i,j \in [k]\}$ and $E := \{e_{i,j}~:~i,j \in [k]\}$ and observe that indeed $(A,B,C,D,E)$ is a semi-induced $T_{k,2}$. For illustration see Figure~\ref{fig:findingTk2}.

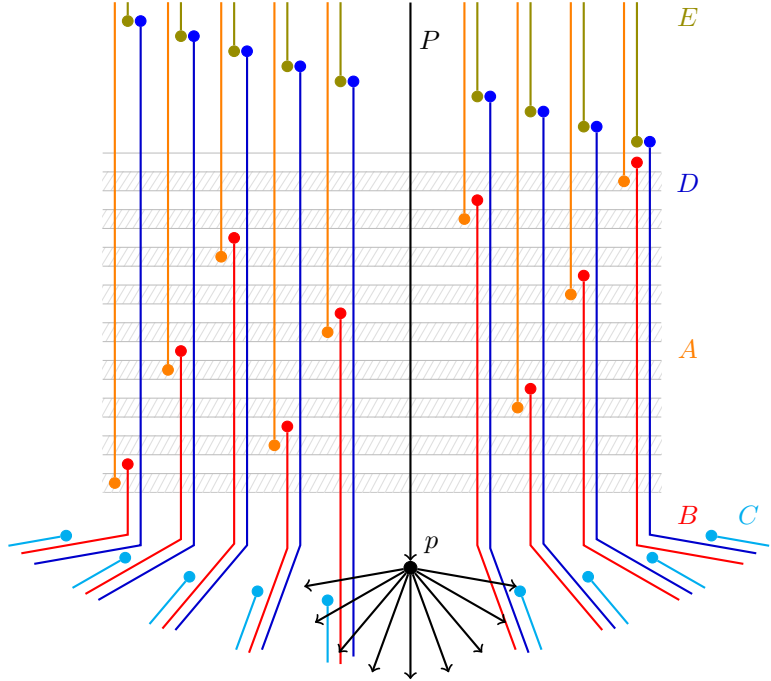
\begin{figure}[h!]
	\centering
	\begin{tikzpicture}[rotate=270,scale=1,point/.style={circle,inner sep=0.06cm, fill}]
	\def\dist{0.7}
	\def\sh{0.125}
	\def\r{1.5}
	\begin{scope}[xshift=0 cm,yshift=1 cm,rotate=270]
	
	\draw[lightgray] (0.7,9)--(-9.5*\dist,9);
	\foreach \i in {9,...,17}{
		\begin{scope}
		\clip[postaction={fill=white,fill opacity=0.2}] (0.7,0.5*\i+0.25)--(-9.5*\dist,0.5*\i+0.25)--(-9.5*\dist,0.5*\i+0.5)--(0.7,0.5*\i+0.5)--(0.7,0.5*\i+0.25);		
		\foreach \x in {-10.1,-10,...,12.2}%
		\draw[lightgray!50](\x, -5+0.5*\i)--+(12,14.4+0.5*\i);
		\end{scope}
		\draw[lightgray] (0.7,0.5*\i)--(-9.5*\dist,0.5*\i);
		\draw[lightgray] (0.7,0.5*\i+0.25)--(-9.5*\dist,0.5*\i+0.25);}
	\begin{scope}[xshift=0.2 cm,yshift=0 cm]
	\foreach \i in {0,...,4}{
		\pgfmathsetmacro{\even}{mod(\i,2)}
		\pgfmathsetmacro{\Ireg}{4.5+ 0.5*(mod((8-\i),3)*3+floor((8-\i)/3))}
		\pgfmathsetmacro{\deg}{10+20*\i}
		\node[inner sep=0pt, minimum width=4pt] [shift={(\dist*\i-0.17, -9.56-0.06*\i)}](red_\i) at (\deg:\r) {};
		\node[inner sep=0pt, minimum width=4pt] [shift={(\dist*\i, -9.7)}](blue_\i) at (\deg:\r) {};
		\node[draw,cyan,circle, fill=cyan, inner sep=0pt, minimum width=4pt] [shift={(\dist*\i-0.34, -9.46-0.08*\i)}](c_\i) at (\deg:0.65) {};
		\node[ inner sep=0pt, minimum width=4pt] [shift={(\dist*\i+0.17, -9.6)}](cend_\i) at (\deg:\r) {};
		\node[draw,red,circle, fill=red, inner sep=0pt, minimum width=4pt](redend_\i) at (-\dist*\i+0.17, \Ireg+\sh) {};
		\node[draw,blue,circle, fill=blue, inner sep=0pt, minimum width=4pt](blueend_\i) at (-\dist*\i,2.65+0.2*\i+\sh-0.025) {};
		\node[draw,olive,circle, fill=olive, inner sep=0pt, minimum width=4pt](oliveend_\i) at (-\dist*\i+0.17,2.65+0.2*\i+\sh-0.025) {};
		\node[draw,orange,circle, fill=orange, inner sep=0pt, minimum width=4pt](orangeend_\i) at (-\dist*\i+0.34, \Ireg+3*\sh) {};
		\node[ inner sep=0pt, minimum width=4pt] [shift={(\dist*\i-0.34, -9.46-0.08*\i)}](cend_\i) at (\deg:\r) {};
		\draw[cyan,thick] (c_\i)--(cend_\i);
		\draw[-,red,thick] (red_\i)--(-\dist*\i+0.17, 9.56+0.06*\i)--(redend_\i);
		\draw[-,color=black!20!blue,thick](blue_\i)-- (-\dist*\i, 9.7)--(blueend_\i);
		\draw[-,olive,thick] (oliveend_\i)--(-\dist*\i+0.17,2.5);
		\draw[-,color=orange,thick] (orangeend_\i)--(-\dist*\i+0.34,2.5);
	}
	\end{scope}
	\begin{scope}[xshift=-0.9 cm,yshift=0 cm]
	\foreach \i in {5,...,8}{
		\pgfmathsetmacro{\even}{mod(\i,2)}
		\pgfmathsetmacro{\Ireg}{4.5+ 0.5*(mod((8-\i),3)*3+floor((8-\i)/3))}
		\pgfmathsetmacro{\deg}{10+20*\i}
		\node[inner sep=0pt, minimum width=4pt] [shift={(\dist*\i-0.17, -9.7)}](red_\i) at (\deg:\r) {};
		\node[inner sep=0pt, minimum width=4pt] [shift={(\dist*\i, -10.04+0.06*\i)}](blue_\i) at (\deg:\r) {};
		\node[draw,cyan,circle, fill=cyan, inner sep=0pt, minimum width=4pt] [shift={(\dist*\i+0.17, -10.1+0.08*\i)}](c_\i) at (\deg:0.65) {};
		\node[draw,red,circle, fill=red, inner sep=0pt, minimum width=4pt](redend_\i) at (-\dist*\i+0.17, \Ireg+\sh) {};
		\node[draw,blue,circle, fill=blue, inner sep=0pt, minimum width=4pt](blueend_\i) at (-\dist*\i,2.65+0.2*\i+\sh-0.025) {};
		\node[draw,olive,circle, fill=olive, inner sep=0pt, minimum width=4pt](oliveend_\i) at (-\dist*\i+0.17,2.65+0.2*\i+\sh-0.025) {};
		\node[draw,orange,circle, fill=orange, inner sep=0pt, minimum width=4pt](orangeend_\i) at (-\dist*\i+0.34, \Ireg+3*\sh) {};
		\node[ inner sep=0pt, minimum width=4pt] [shift={(\dist*\i+0.17, -10.1+0.08*\i)}](cend_\i) at (\deg:\r) {};
		\draw[cyan,thick] (c_\i)--(cend_\i);
		\draw[-,red,thick] (red_\i)--(-\dist*\i+0.17, 9.7)--(redend_\i);
		\draw[-,color=black!20!blue,thick](blue_\i)-- (-\dist*\i, 10.04-0.06*\i)--(blueend_\i);
		\draw[-,olive,thick] (oliveend_\i)--(-\dist*\i+0.17,2.5);
		\draw[-,color=orange,thick] (orangeend_\i)--(-\dist*\i+0.34,2.5);
	}
	\end{scope}
	\begin{scope}[xshift=0.0001 cm,yshift=0 cm]
	\node[draw, point, label=45:{$p$}] (P-start) at (-3.35,10) {};
	\node (P-B) at (-3.35,2.38) {};
	\foreach \i in {10,30,...,170}{
		\node[inner sep=0pt, minimum width=4pt] [shift={(3.35,-10)}](q_\i) at (\i:\r) {};
		\draw[->, thick] (P-start) -- (q_\i);}
	\draw[<-, thick] (P-start) -- (P-B);
	\node at (-3.6,3) {$P$};
	\end{scope}
	\begin{scope}[xshift=-8 cm,yshift=0 cm]
	\node[red] at (1,9.3) {$B$};
	\node[cyan] at (0.2,9.3) {$C$};
	\node[orange] at (1,7.1) {$A$};
	\node[color=black!20!blue] at (1,4.9) {$D$};
	\node[olive] at (1,2.7) {$E$};
	\end{scope}
	
	\end{scope}	
	\end{tikzpicture}
	\caption{Schematic illustration of the tree model of a $T_{3,2}$. Here lines depict paths of unknown length while vertices emphasize known start or end points of the paths. }
	\label{fig:findingTk2}
\end{figure}

\medskip
\noindent\textbf{Bounding $h(t)$.}
In this part, we assume that there is a $q \in \{q_1, \ldots, q_\ell\}$ such that $q$ hosts more than $g_B(t)$ vertices $b_j \in B^o$.
For $i \in [0,g_2(t)]$ let $h_i(t) = (g_1(t))^{g_2(t) - i}$ and define $s_0 = q$.
Notice that $h_0(t) = g_B(t)$.

Let $J^0$ be the indices associated with the vertices of $B^o$ that are hosted by $s_0$.
If all children of $s_0$ host at most $h_1(t)$ vertices $b_i$ with $i \in J^0$ then we apply \cref{claim:interval_independent_set} with input $J^0, s_0$, and $h_0(t)$, since $g_1(t) \cdot h_1(t) = h_0(t)$, to obtain the sets satisfying Property $(\star)$. 
Otherwise, some children of $s_0$ hosts at least $h_1(t)$ vertices of $B^o$ and we choose $s_1$ to be the furthest descendant of $s_0$ hosting a set of vertices $B^1 \subseteq B^o$ with $\abs{B^1} \geq h_1(t)$.
With this choice we can guarantee that $s_0$ hosts at least one $b \in B^o$ that is not hosted by $s_1$: if this is not the case then $s_1$ hosts the same $h_0(t)$ vertices of $B^o$ as $s_0$, but no children of $s_1$ hosts more than $h_1(t)$ vertices of $B^o$ (since otherwise we can move $s_1$ further away from $s_0$) and we can apply \cref{claim:interval_independent_set}.
Let $J^1 = \{i \in J^0 : b_i \in B^1\}$. 

Assume now that $i$ vertices $\{s_1, \ldots, s_i\}$ have been chosen this way.
If $i < g_2(t)$ then we can repeat the procedure described in the previous paragraph.
If all such children host at most $h_{i+1}(t)$ we apply \cref{claim:interval_independent_set} with input $J^i, s_i$, and $h_{i}(t)$ to obtain the sets satisfying Property $(\star)$. 
Otherwise, we proceed to the next iteration.

Let $t' = g_2(t)$.
Since $h_i(t) = g_1(t) \cdot h_{i+1}(t)$, this procedure does not end before $t'$ vertices $s_1, \ldots, s_{t'}$ are found.
If this is the case, we fall back to Case A.
We form a new path $P'$ by deleting every descendant of $p$ in $P$ and appending the path from $s_1$ to $s_{t'}$ to the resulting path.
For every $v \in X$ (we remind the reader that $X = \bigcup_{i \in [g(t)]} A'_i \cup B'_i$) we denote by $p'(v)$ the node in $V(P_v) \cap V(P')$ that is closer to $\low(v)$ in $T$, and consider the order $<_{P'}$ in which such vertices appear in $P'$.
By the choice of the vertices $s_i$, we know that for each of them we can pick one $b \in B^o$ that is hosted by $s_i$ but not by any other $s_j$ with $j > i$.
Let $B^*$ be the subset of $B^o$ formed by those vertices.
Clearly $<_{P'}$ defines a total order on $A^o$ (and thus we can keep the intervals $I_1, \ldots, I_{4t^2}$ defined in the beginning of Case B) and total order on $\{b : b \in B^*\}$.
Hence  we have the necessary conditions to enter Case A with $B^*$ instead of $B^o$ and $<_{P'}$ instead of $<_P$, skipping the steps necessary to organize the nodes $p(a_i)$ and to ensure that all such nodes appear in $P'$ before every node of $B^*$, and the result follows.
\end{proof}

By \cref{thm:rooted_directed_path_graphs_delineation} and \cref{thm:seq-or-tk} we conclude that the class of rooted directed path graphs is effective delineated.

\section{Segment graphs}\label{sec:segments}

In this section, we explore the twin-width and delineation of (subclasses of) segment graphs.
Pure (hence triangle-free) axis-parallel unit segment graphs were shown to have unbounded twin-width~\cite{twin-width2}, by constructing a family of such graphs with super-exponential growth.
This family contains arbitrarily large bicliques (see~\cite[Figure 4]{twin-width2}). 
We will show that bicliques are necessary to make the twin-width large, even when we lift the requirements that the segments are axis-parallel and unit.

\begin{theorem}\label{thm:ktt-free-segments}
  $K_{t,t}$-free segment graphs have bounded twin-width.
\end{theorem}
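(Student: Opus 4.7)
\emph{Sparsity.} The plan follows the strategy sketched for Theorem~\ref{thm:biclique-free-seg}: realise $G$ as an FO interpretation of a bounded-twin-width structure. Fix a representation of $G$ by $n$ segments $\mathcal S$. By Lee's theorem~\cite{Lee17}, $K_{t,t}$-free intersection graphs of $x$-monotone curves have $O_t(n)$ edges, so the arrangement of $\mathcal S$ has $O_t(n)$ intersection points. Consequently the \emph{arrangement graph} $A$ --- with vertex set the endpoints of $\mathcal S$ together with the pairwise intersection points, and edges the maximal subsegments between two consecutive such vertices --- is planar with $O_t(n)$ vertices and edges.

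\emph{A planar structure of bounded twin-width.} We enrich $A$ into a binary structure $\tilde A$ as follows. First, split each intersection vertex $p$ into two copies $p_1, p_2$, one per segment through $p$, with $p_i$ inheriting the two arrangement edges along its own segment, and link $p_1$ and $p_2$ by a distinguished \emph{crossing} binary relation. The copies can be drawn very close to $p$ with the crossing relation realised as a short chord, so the underlying graph stays planar. Then, inside each face of this embedding, add a non-crossing ``well-behaved matching'' pairing consecutive boundary visits to the same segment. The result $\tilde A$ is a planar graph with a bounded number of auxiliary unary and binary relations; planar graphs have bounded twin-width by~\cite{twin-width1}, and this is preserved under the described local enrichment, which can be cast as an FO transduction, invoking Theorem~\ref{thm:transduction}.

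\emph{FO interpretation of $G$.} In $\tilde A$, each segment $s$ is a vertex-disjoint path, and $s \sim_G s'$ iff some $p_1$ on $s$ is linked by a crossing relation to some $p_2$ on $s'$. The per-face matchings, combined with the crossings and the arrangement edges, let an FO formula of bounded quantifier depth express ``$u$ and $v$ lie on the same segment'': the crossing relation hops across an intersection point, and the matchings stitch together successive arrangement pieces of the same segment inside each face. From this, one FO-interprets the vertex set of $G$ (one representative per segment, selected via the same-segment relation) and the adjacency of $G$ (via the crossing relation) inside $\tilde A$. Theorem~\ref{thm:transduction} then gives $\tww(G) = O_t(1)$.

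\emph{Main obstacle.} The hard part is the design of the per-face matchings so that, simultaneously, (a) they are drawn without crossings, keeping $\tilde A$ planar and of bounded twin-width, and (b) together with the crossing relation and arrangement edges they collapse what is a priori the transitive closure of a local relation --- ``lying on the same segment'' --- into a bounded-depth FO formula. The $K_{t,t}$-free hypothesis is precisely what makes this possible: through Lee's linear-edge bound, it controls the combinatorial density inside each face of the arrangement, which is exactly what is needed for the FO formulas to navigate the matchings correctly within bounded quantifier depth.
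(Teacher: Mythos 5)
Your high-level architecture matches the paper's: encode the segment representation as a planar structure augmented with matchings, show that structure has bounded twin-width, and FO-transduce $G$ back, concluding via Theorem~\ref{thm:transduction}. However, there are two genuine gaps, and the first one is fatal to the interpretation step as you describe it. You split every segment at \emph{every} intersection point. Lee's theorem gives $O_t(n)$ edges in total, but it does not bound the degree of an individual segment: a single segment can still cross $\Omega(n)$ others and hence be shattered into $\Omega(n)$ arrangement pieces. Your formula for ``$u$ and $v$ lie on the same segment'' must then follow a path of unbounded length through crossings and matchings, which is transitive closure and is not expressible by an FO formula of bounded quantifier depth --- no amount of control on ``density inside each face'' helps. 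The idea you are missing is the paper's use of Lee's theorem as a \emph{degeneracy} bound: fix a $d$-degeneracy order with $d = O(t\log t)$ and split each segment only at its crossings with \emph{earlier} segments. Then every segment has at most $d+1$ pieces, the path reconnecting them (the paper's formula ${\sf alt}(x,y)$, which additionally uses a $(d+1)$-coloring of $G$ to stay on the right segment) has bounded length, and the interpretation goes through. The many crossings with \emph{later} segments are recorded as matched vertex pairs on the facial cycle of the piece, not as further subdivisions.

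The second gap is your claim that the matching-augmented planar structure has bounded twin-width because ``this is preserved under the described local enrichment, which can be cast as an FO transduction'' of a planar graph. A transduction adds only unary predicates; an arbitrary extra binary relation (your per-face matchings and crossing pairs) on the vertices of a planar graph is not in general obtainable by transduction from the planar graph alone, and arbitrary matchings added to planar graphs can indeed blow up twin-width. The paper has to prove this step directly: it constructs a specific vertex ordering (a modified BFS whose restriction to each facial cycle is the counter-clockwise boundary order, Proposition~\ref{prop:counterplanar}), shows the planar part has bounded grid rank along that order via a $K_{3,3}$-minor argument, and then shows the matching part also has bounded grid rank along the \emph{same} order --- which requires the matching to be ``nicely aligned'' on each facial cycle --- before combining the two with Lemma~\ref{lem:union}. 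You would need to supply an analogue of this entire argument; it is the main technical content of the proof, not a corollary of planarity.
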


In the previous theorem, one cannot relax the $K_{t,t}$-freeness assumption to $H_t$-freeness.
Let $B_n$ be the graph obtained from the 2-subdivision of a biclique $K_{n,n}$ by adding back the edges of the original biclique.
The left part of~\cref{fig:ap-unit-seg-utww} shows that, for every $n \in \mathbb N$, the graph $B_n$ is realizable with axis-parallel segments of two different lengths.
Note however that $B_n$ has no semi-induced $H_4$, and that $\lim_{n \to \infty}\tww(B_n) = \infty$.

To establish the latter claim, one can for instance ``remove'' the edges of the biclique by means of an FO transduction, and invoke~\cref{thm:transduction} and the third item of~\cref{thm:unbounded-tww}.
The transduction first marks the long horizontal segments by unary relation $U_1$ (color 1), and the long vertical segments, by unary relation $U_2$ (color 2), and interpret the new edges as $\varphi(x,y) \equiv E(x,y) \land \neg (U_1(x) \land U_2(y)) \land \neg (U_1(y) \land U_2(x))$. 

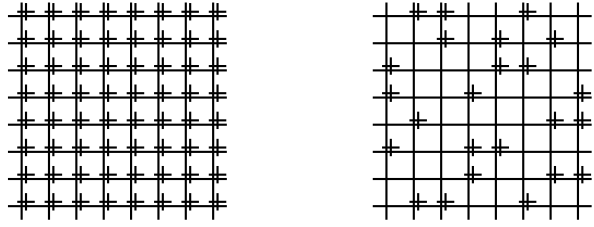
\begin{figure}[h!]
  \centering
  \begin{tikzpicture}[scale=0.6]
    \def\t{8}
    \def\s{0.6}
    \def\z{0.16}
    \pgfmathsetmacro\b{0.5 * \s}
    \pgfmathsetmacro\e{\s * \t + 0.5 * \s}
    \foreach \i in {1,...,\t}{
      \draw[thick] (\s * \i,\b) -- (\s * \i,\e) ;
      \draw[thick] (\b,\s * \i) -- (\e,\s * \i) ;
    }
    \foreach \i in {1,...,\t}{
      \foreach \j in {1,...,\t}{
        \draw[thick] (\s * \i + \z * \s,\s * \j - \z * \s) -- (\s * \i + \z * \s,\s * \j + 3 * \z * \s) ;
        \draw[thick] (\s * \i - \z * \s,\s * \j + \z * \s) -- (\s * \i + 3 * \z * \s,\s * \j + \z * \s) ;
      }
    }

    \begin{scope}[xshift=8cm]
      \foreach \i in {1,...,\t}{
      \draw[thick] (\s * \i,\b) -- (\s * \i,\e) ;
      \draw[thick] (\b,\s * \i) -- (\e,\s * \i) ;
    }
    \foreach \i/\j in {1/3,1/5,1/6, 2/1,2/4,2/8, 3/1,3/7,3/8, 4/2,4/3,4/5, 5/3,5/6,5/7, 6/1,6/6,6/8, 7/2,7/4,7/7, 8/2,8/4,8/5}{
        \draw[thick] (\s * \i + \z * \s,\s * \j - \z * \s) -- (\s * \i + \z * \s,\s * \j + 3 * \z * \s) ;
        \draw[thick] (\s * \i - \z * \s,\s * \j + \z * \s) -- (\s * \i + 3 * \z * \s,\s * \j + \z * \s) ;
      }
    \end{scope}
  \end{tikzpicture}
  \caption{Left: An axis-parallel $H_4$-free two-lengthed segment graph realizing $B_n$ (here drawn with $n=8$), whose twin-width grows with $n$. Right: Axis-parallel segment graphs are transduction equivalent to $\mathcal B_{\leqslant 3}$, thus not delineated.}
  \label{fig:ap-unit-seg-utww}
\end{figure}

Similarly the 2-subdivision of any subcubic bipartite graph, augmented with the biclique between its two partite sets, is realizable with axis-parallel segments of two different lengths (see right-hand side of~\cref{fig:ap-unit-seg-utww}).
Indeed those graphs --let us denote by $\mathcal C$ the class they form-- are induced subgraphs of some $B_n$.
We claim that $\mathcal C$ and $\mathcal B_{\leqslant 3}$ are transduction equivalent, and by~\cref{lem:not-delineated}, two-lengthed axis-parallel segments are not delineated.

To see that $\mathcal C$ transduces $\mathcal B_{\leqslant 3}$, use three extra unary relations $A, B, C$ partitioning the vertex set of some $G \in \mathcal C$ into the long vertical segments, the long horizontal segments, and the short segments, respectively.
Then the transduction fixes the new domain by $\nu(x) \equiv A(x) \lor B(x)$ and edge set by $\varphi(x,y) \equiv A(x)~\land~B(y)~\land~\exists z_1 \exists z_2~(C(z_1)~\land~C(z_2)~\land~E(x,z_1)~\land~E(z_1,z_2)$ $\land~E(z_2,y))$.

Conversely, $\mathcal B_{\leqslant 3}$ transduces $\mathcal C$.
Just observe that graphs $G$ of $\mathcal C$ are bipartite, and admits a partition in $(A_1,A_2,B_1,B_2)$ such that $(A_1 \cup A_2, B_1 \cup B_2)$ is their bipartition, $A_1 \cup A_2$ induces a biclique, and every vertex of $A_1 \cup A_2$ (resp.~$B_1 \cup B_2$) has at most three neighbors in $B_2$ (resp.~$A_2$).
Thus if one removes the biclique between $A_1$ and $B_1$, one obtains a graph $G' \in B_{\leqslant 3}$.
Given $G'$, the transduction then guesses $A_1$ and $B_1$ with two unary relations, and adds all the edges in between them. 

In the construction of~\cref{fig:ap-unit-seg-utww}, we use two different lengths for the segments.
We show that with a unique length (unit segments), axis-parallel $H_t$-freeness implies bounded twin-width.

\begin{theorem}\label{thm:ap-ht-free-unit-segments}
  Axis-parallel $H_t$-free unit segment graphs have bounded twin-width.
\end{theorem}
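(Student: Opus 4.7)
The plan is to bound the grid rank of a carefully ordered adjacency matrix of $G$ and then apply \cref{thm:mixed-number-gen2}. Fix a representation with $H$ the horizontal and $V$ the vertical segments, and for each $u \in V(G)$ let $(x_u, y_u)$ be the leftmost (resp.\ bottommost) endpoint of the segment $u$; let $\prec$ be the lexicographic order on $V(G)$ by $(x_u, y_u)$. I decompose $G$ into three edge-disjoint spanning subgraphs on $V(G)$: $G_H$ (within-$H$ edges), $G_V$ (within-$V$ edges), and $G_{HV}$ (between). By \cref{lem:union} it then suffices to bound $\gr(\adj{\prec}{G_X})$ for each $X \in \{H, V, HV\}$ by a function of $t$.

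The subgraphs $G_H$ and $G_V$ reduce to the interval case. Indeed, $G_H$ is a disjoint union of unit interval graphs indexed by the y-coordinate of each horizontal fiber, and within each fiber the restriction of $\prec$ coincides with the natural left-endpoint order. Each component is $H_t$-free as an induced subgraph of $G$, so by the interval-graph delineation (\cref{prop:delineation-interval}, via \cref{lem:transversalPairInIntervalGraphs}) any interval graph of twin-width above some $c(t)$ contains a semi-induced transversal pair $T_t$, and hence a semi-induced $H_t$. Thus each component has bounded grid rank in its natural order; adding zero rows and columns for vertices outside the fiber preserves bounded grid rank, and iterating \cref{lem:union} over y-coordinates yields that $\gr(\adj{\prec}{G_H})$ is bounded by a function of $c(t)$. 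A symmetric argument handles $G_V$.

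The core step is $G_{HV}$. The adjacency of $h \in H$ and $v \in V$ requires $x_v \in [x_h, x_h + 1]$ and $y_v \in [y_h - 1, y_h]$, so $h \prec v$ in the ordering. I would argue by contradiction: suppose $\adj{\prec}{G_{HV}}$ admits a rank-$k$ division for $k$ large relative to $t$. By \cref{thm:canonicaler-gen}, we extract an off-diagonal isomorph of some universal pattern $\mat{k'}{s}$ for $k'$ large. If $s \in \{\uparrow, \downarrow, \leftarrow, \rightarrow\}$, the pattern realizes a semi-induced $H_{k'}$ in $G_{HV} \subseteq G$, contradicting $H_t$-freeness as soon as $k' \geq t$. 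For $s \in \{0, 1\}$ the pattern produces a large semi-induced (anti-)matching between $k'^2$ horizontals and $k'^2$ verticals with all horizontals preceding all verticals in $\prec$. My plan here is to apply Erd\H{o}s--Szekeres (\cref{thm:erdos-szekeres}) to the y-coordinates of the matched horizontals and of the matched verticals, obtaining two monotone y-subsequences of length $\Theta(\sqrt{k'})$ tied together by the slack-$1$ relation $y_v \in [y_h - 1, y_h]$. A further Ramsey-type refinement on the interactions with $G_H$ and $G_V$ should isolate a fiber of horizontals sharing a common y-coordinate whose interaction with the matched verticals triggers the interval-graph transversal-pair extraction of \cref{subsec:ig}, producing the desired semi-induced $H_t$.

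The hard part will be this last extraction: $\mat{k'}{0}$ is not itself a half-graph, so producing $H_t$ from the matching case genuinely requires combining the y-axis information---which is orthogonal to $\prec$---with the unit-interval structure of the horizontal and vertical fibers, and ruling out configurations where the matched pairs are spread across many distinct y-coordinates so that neither $G_H$- nor $G_V$-edges help to align them. I expect to iterate Erd\H{o}s--Szekeres and Ramsey until either a large monochromatic y-fiber appears on one side (reducing to the already-solved unit interval case) or the monotone y-pattern combined with the unit-length bound on $|y_v - y_h|$ forces enough non-matching adjacencies between nearby matched pairs to manufacture a half-graph of height $t$ in $G$.
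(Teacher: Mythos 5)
Your overall strategy --- fix the global lexicographic order $\prec$ by $(x_u,y_u)$ and bound $\gr(\adj{\prec}{G})$ piece by piece --- cannot work, because that grid rank is genuinely unbounded even on extremely simple $H_t$-free instances. Already for $G_H$: take $k^3$ distinct heights $y_{i,j,m}$ ($i,j,m\in[k]$) and, for each, one horizontal unit segment with left endpoint near $x=i\epsilon$ and one with left endpoint near $x=\tfrac12+j\epsilon$; same-height pairs intersect and nothing else does, so $G_H$ is a perfect matching (hence $H_2$-free), yet in $\adj{\prec}{G_H}$ the $(i,j)$-cell of the natural $k$-division contains a $k\times k$ partial permutation with $k$ nonzero rows, so the grid rank is at least $k$. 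The point is that $\prec$ interleaves the $y$-fibers, so ``disjoint union of unit interval graphs'' does not transfer to bounded grid rank along $\prec$; and \cref{lem:union} cannot be iterated over unboundedly many fibers anyway, since each application incurs a Ramsey-type blow-up. The same phenomenon kills the core case $G_{HV}$: placing horizontals at heights $2\sigma(i)$ with $x_h\approx i\epsilon$ and verticals with bottom endpoints at heights $2j-\tfrac12$ and $x_v\approx\tfrac12+j\epsilon$, the crossing condition $y_v\le y_h\le y_v+1$ selects exactly the pair $j=\sigma(i)$, so an arbitrary permutation pattern --- in particular $\mat{k}{0}$ --- appears off-diagonally in $\adj{\prec}{G_{HV}}$ while the graph is again a perfect matching. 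Hence in your $s\in\{0,1\}$ branch there is no semi-induced $H_t$ to extract and no contradiction to reach; this is not a missing lemma but an obstruction intrinsic to the chosen order.

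The paper's proof is built precisely to circumvent this. It cuts the segments along a unit grid in general position (every segment, having length at least $1$, meets the grid), passes to a $1$-subdivided splitting that an FO transduction can undo (\cref{lem:splitting-bdtww}), and orders the resulting pieces cell by cell using the \emph{circular order along each cell boundary}. The global matrix then has only $O(1)$ nonzero block diagonals, and within a single cell the only hard interaction is between segments rooted at two perpendicular sides, where an Order-Claim-style forbidden pattern (\cref{fig:forced-BL}) excludes $\mat{2}{s}$ for $s\in\{0,1,\uparrow,\rightarrow\}$; by \cref{thm:rd-to-up}, large grid rank then forces $\mat{k}{\downarrow}$ or $\mat{k}{\leftarrow}$, i.e.\ a large semi-induced half-graph, contradicting $H_t$-freeness. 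Any repair of your plan must replace the lexicographic order by one enjoying such a forbidden-pattern property; the left-to-right order alone provably does not have it.
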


Actually we prove a stronger statement than the previous theorem, where segments are not necessarily unit, but the ratio between the largest and the smallest lengths is bounded.
Again it shows that the fact that this ratio is unbounded in~\cref{fig:ap-unit-seg-utww} (left) is unavoidable.

%
%

\subsection{$K_{t,t}$-free segment graphs have bounded twin-width}

The main result of this subsection is summarized in the next theorem.

\begin{theorem}\label{thm:biclique-segment}
There is a function $h\colon \mathbb{N} \rightarrow \mathbb{N}$, such that $K_{t,t}$-free  segment graphs have twin-width at most $h(t)$.
\end{theorem}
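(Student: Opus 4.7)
The plan is to derive \cref{thm:biclique-segment} as a consequence of three ingredients that are all referenced or established earlier in the paper: (i) the fact that $K_{t,t}$-free segment graphs have bounded degeneracy, due to Lee~\cite{Lee17}; (ii) the existence of an FO interpretation producing $K_{t,t}$-free segment graphs from an auxiliary class of planar graphs decorated with a bounded number of unary relations and a ``well-behaved'' matching per face; and (iii) the transduction theorem~\cref{thm:transduction}, which propagates bounded twin-width along FO interpretations and transductions. Since planar graphs are $K_{5}$-minor-free, they have bounded twin-width by the general minor-closed result recalled in the introduction, and the same will be shown to hold for the auxiliary class after the face-matching and unary enrichment.

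Concretely, I would fix a $K_{t,t}$-free segment graph $G$ with a geometric representation and let $d=d(t)$ be the degeneracy constant supplied by Lee's theorem. Ordering the segments by a degeneracy elimination, each segment $s$ has at most $d$ \emph{forward} intersections with segments larger than it in the order. I consider the planar subarrangement $P_G$ obtained by keeping only these forward intersections as subdivision points, so each segment becomes a path of length at most $d+1$ whose internal vertices are shared with exactly one other (larger) segment. I enrich $P_G$ with: unary relations marking original endpoints, marking crossing vertices, and indexing ($1,\dots,d$) the position of a crossing along its segment; and, on each face of $P_G$, a bounded-complexity matching whose sole role is to record the ``hidden'' adjacencies between the at most $d$ subsegments incident to the face coming from pairs of segments that cross in the interior of the face but whose crossing was not retained as a subdivision vertex (these are the purely ``backward'' crossings). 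Two original segments $x,y$ are then adjacent in $G$ iff either the corresponding paths in $P_G$ share an internal vertex (a forward crossing, expressible by a single FO formula of constant length in the enriched signature) or a face-matching edge witnesses a hidden crossing of $x$ and $y$ (also expressible in FO). This gives an FO interpretation $\mathsf{I}$ with $\mathsf{I}(P_G^{+})=G$, where $P_G^{+}$ denotes $P_G$ with the described enrichment.

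It then remains to argue that the class $\mathcal{P}_d$ of all such enriched structures $P_G^{+}$ has twin-width bounded by a function of $d$. The base planar graph has bounded twin-width by the minor-closed result; the constantly many unary predicates do not change this. The crucial additional structure is the face-matching: since each face is incident to at most $O(d)$ marked subsegments (at most $d$ incidences per segment bounding the face), the matching is a bounded-degree bipartite enrichment localized to a single face. Such an enrichment can be absorbed by the planar contraction sequence by performing, at each step, paired contractions of matched endpoints while bounding the red degree in terms of the face size at that stage; this is a standard twin-width preservation argument. By \cref{thm:transduction}, $\mathsf{I}(\mathcal{P}_d)\supseteq\{G:G\text{ is }K_{t,t}\text{-free segment graph}\}$ then has twin-width at most some $h(t)$.

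The main obstacle is the last step: controlling the twin-width of the face-matching enrichment. Two issues must be faced carefully: (a) even though each face-matching has bounded local complexity per segment, faces in $P_G$ may be combinatorially large, so one must ensure the matching is recorded in a ``planarly compatible'' manner (for instance via the natural cyclic order along the face boundary) so that the matched vertices can be contracted together without creating unbounded red degree; and (b) the enrichment is global over all faces, so the contraction sequence must interleave cross-face contractions carefully. Establishing this robustly is what constitutes the bulk of the technical work; everything else reduces to the degeneracy bound of \cite{Lee17} and to the interpretation machinery already recorded as \cref{thm:transduction}.
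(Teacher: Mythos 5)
Your high-level strategy is the one the paper follows (degeneracy via \cref{thm:lee}, a planar encoding with per-face matchings, an FO transduction back to $G$, and \cref{thm:transduction}), but there are two genuine gaps. First, the encoding $P_G$ as you describe it is not a well-defined planar graph. Every crossing of two segments $s\prec s'$ is ``forward'' for exactly one of them, so if it is retained as a subdivision vertex only on the bounded side, that vertex sits in the interior of an edge of the path for the other segment (and if it were retained on both, the larger segment would acquire unboundedly many subdivision points, breaking your length bound of $d+1$). There are no ``purely backward'' crossings to relegate to face matchings, and a subsegment on the bounded side of the degeneracy order can be crossed by unboundedly many larger segments; encoding those unboundedly many adjacencies planarly is the whole difficulty. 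The paper's fix is to discard the subsegments as edges altogether: each subsegment is replaced by a facial cycle carrying two boundary vertices per larger crossing segment (plus one per bounding smaller segment), crossing subsegments are joined by edges between their facial cycles, and the matching pairs up the two boundary vertices belonging to the same larger segment, in a \emph{nicely aligned} (non-crossing, nested) fashion around each face. That gadget is the missing idea; without it the interpretation has no planar structure to interpret from.

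Second, even granting a correct matching-augmented planar encoding, the step ``planar graphs have bounded twin-width, and the matching can be absorbed into the contraction sequence with red degree bounded in terms of the face size'' does not go through: the faces carrying matchings have \emph{unbounded} size, so that bound is vacuous, and a generic contraction sequence for the planar part gives no control over the red edges the matching creates. What is actually needed (and what the paper proves) is a single vertex ordering under which \emph{both} binary relations have bounded grid rank simultaneously: \cref{prop:counterplanar} constructs a modified BFS ordering that traverses each chosen facial cycle consecutively in its cyclic order, shows the planar relation has bounded grid rank along it via the universal-pattern machinery, observes that a nicely aligned matching restricted to such an ordering decomposes into a diagonal block plus a monotone sequence of entries, and then combines the two relations with \cref{lem:union}. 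Your proposal correctly identifies this as ``the bulk of the technical work,'' but the argument you sketch for it would not succeed as stated.
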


For each fixed $t$ we provide an FO transduction ${\mathsf T}_t$ from a class of 2-edge-colored graphs of bounded twin-width 
to a class containing all $K_{t,t}$-free segment graphs. Once this is done, it follows immediately from~\cref{thm:transduction} that $K_{t,t}$-free segment 
graphs have twin-width at most $h(t)$.

Our target property of 2-edge-colored graphs is \emph{matching-augmented planar graphs}.\footnote{Actually we shall  enhance such graphs with {\sl facial vertices} and transduce from 
  these enhanced graphs.}
We say that a planar graph is \emph{matching-augmented} if it can be obtained from a plane graph~$G$
and a collection $\C$ of facial cycles which are pairwise vertex-disjoint faces by adding a~matching $M$ 
such that the endpoints of any matching edge lie on a facial cycle in~$\C$. 
Let $M_C\subseteq M$ be the set of matching edges both of whose endpoints lie on a facial cycle $C\in \C$.  
The matching $M$ is \emph{nicely aligned} with respect to $\C$  
if for every $C\in \C$, there exists two vertices $s$ and $t$ such that 
$M_C$ is of the form $\{u_iv_i:i\in [\ell]\}$ or $\{st\}\cup \{u_iv_i:i\in [\ell]\}$ for some $\ell$, 
where $u_i$'s are entirely contained in one path of $C-\{s,t\}$, $v_i$'s in another path of $C-\{s,t\}$, 
and $u_1\prec \cdots u_\ell\prec v_\ell \prec \cdots \prec v_1$ for a cyclic order $\prec$ on $V(C)$.

Clearly, a matching-augmented planar graph can be seen as a relational structure with two binary relations $E$ and $M$, the former  
being the edges of the underlying planar graph and the latter being the  matching edges. 
It will be shown later that the class of matching-augmented planar graphs with nice alignment property have bounded twin-width. 
Given that, we first describe how to obtain a matching-augmented planar graph $P_G$ with the nice alignment property from any $K_{t,t}$-free segment graph $G$ 
and how the original graph $G$ can be recovered by an FO transduction ${\mathsf T}_t$.

\subsubsection{Transducing $K_{t,t}$-free segment graphs from matching-augmented planar graphs}

Let us begin by explaining how to obtain a matching-augmented planar graph from a $K_{t,t}$-free segment graph. 
We use the bounded degeneracy of $K_{t,t}$-free segment graphs.
 
\begin{theorem}[\cite{Lee17}]\label{thm:lee}
    There is a constant $c$ such that every $K_{t,t}$-free segment graph is $c t \log t$-degenerate.
\end{theorem}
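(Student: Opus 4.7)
The plan is to derive the degeneracy bound from a bound on the number of edges, and to obtain that edge bound via a separator theorem for segment graphs combined with the $K_{t,t}$-freeness hypothesis. Since the class of $K_{t,t}$-free segment graphs is hereditary (any induced subgraph of a segment graph is a segment graph, and forbidding $K_{t,t}$ is preserved under induced subgraphs), it suffices to show that every such $n$-vertex graph has at most $O(t n \log t)$ edges: picking the vertex of minimum degree iteratively will then yield $O(t \log t)$-degeneracy.

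To bound the edges, I would use that segment graphs are string graphs, together with the Fox--Pach type separator theorem: every string graph on $n$ vertices with $m$ edges admits a balanced separator of size $O(\sqrt{m} \log m)$. Let $f(n)$ denote the maximum number of edges in an $n$-vertex $K_{t,t}$-free segment graph. Take such a graph $G$ realizing $f(n)$, apply the separator theorem to split $V(G)$ into parts $A, B, S$ of size at most $2n/3$ each, with $|S| = O(\sqrt{f(n)} \log f(n))$. The edges incident to $S$ contribute at most $|S| \cdot (n-1)$, but this is too crude; instead one controls edges touching $S$ using $K_{t,t}$-freeness via the Kővári--Sós--Turán bound, giving $O(t \cdot n^{2-1/t})$ edges incident to $S$ on that side; balancing terms in the recurrence $f(n) \le f(2n/3) + f(n/3) + O(t \cdot n)$ then yields $f(n) = O(t n)$, up to the $\log t$ slack absorbed in the separator application.

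A cleaner, self-contained route that directly gives the $t \log t$ constant is the following two-stage Ramsey-style argument. First, establish the separator theorem in the refined form for \emph{$K_{t,t}$-free} string graphs: there is a balanced separator of size $O(\sqrt{tn})$ (this is the Fox--Pach--Lee improvement, where the $K_{t,t}$-freeness is used to convert the generic $\sqrt{m}$ bound into a $\sqrt{n}$ bound paying only a factor $\sqrt{t}$). Second, plug this into the recursion $f(n) \le f(2n/3) + f(n/3) + c\sqrt{t n} \cdot n$ together with a dyadic accounting of the separators at each recursion level: the logarithmic depth of the recursion introduces a $\log n$ factor, and choosing where to stop the recursion (when pieces become of size $O(t)$ and can be bounded directly by Kővári--Sós--Turán) produces the $t \log t$ constant. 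Converting the edge bound into degeneracy is then immediate by the hereditary argument above.

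The main obstacle is the refined separator theorem for $K_{t,t}$-free string graphs itself, specifically tracking the dependence on $t$ so as to obtain $O(\sqrt{tn})$ rather than the generic $O(\sqrt{m} \log m)$ bound. This uses the multicommodity-flow / LP-duality framework (à la Matoušek), where one certifies a separator from a fractional flow on the planar structure given by pairwise intersection points of segments, and the $K_{t,t}$-freeness bounds the congestion at each intersection. The bookkeeping of constants there is the delicate part; everything downstream is a recurrence solve and an iterated minimum-degree argument.
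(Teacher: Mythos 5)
This statement is not proven in the paper at all: it is imported as a black box from Lee's work on separators in region intersection graphs \cite{Lee17} (building on Fox and Pach), so there is no in-paper argument to compare against. Your overall strategy --- reduce degeneracy to a linear edge bound via hereditariness, and obtain the edge bound from a separator theorem for string graphs combined with the K\H{o}v\'ari--S\'os--Tur\'an bound --- is indeed the route taken in the literature, and your first reduction (hereditary class with at most $dn$ edges per $n$-vertex member implies $2d$-degeneracy) is fine.

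However, the quantitative core of your sketch does not work as written. The recurrence $f(n)\le f(2n/3)+f(n/3)+O(tn)$ solves to $O(tn\log n)$, not $O(tn)$: the additive term is linear, the recursion has $\Theta(\log n)$ levels, and each level contributes $\Theta(tn)$, so the slack is a $\log n$ factor that cannot be ``absorbed'' into a $\log t$. Worse, the recurrence $f(n)\le f(2n/3)+f(n/3)+c\sqrt{tn}\cdot n$ in your ``cleaner route'' solves to $\Theta(\sqrt{t}\,n^{3/2})$, which is superlinear; charging each separator vertex up to $n$ incident edges is fatally lossy, and no stopping rule or dyadic accounting repairs an additive term that is already $n^{3/2}$ at the top level. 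The actual argument never bounds edges incident to separators this way: one first passes to a subgraph of minimum degree roughly the average degree $d$, recursively separates until pieces are smaller than a threshold comparable to $d$, bounds the \emph{number of vertices} absorbed into separators by a sum that is geometric across levels (this is exactly where the $O(\sqrt{m})$ form of the separator, as opposed to $O(m)$ or $O(\sqrt{m}\log m)$, prevents a $\log n$ loss), and only then invokes $K_{t,t}$-freeness via K\H{o}v\'ari--S\'os--Tur\'an on the small final pieces and on the bipartite graph between surviving vertices and separator vertices to reach a contradiction. Finally, your ``refined separator theorem of size $O(\sqrt{tn})$ for $K_{t,t}$-free string graphs, with $K_{t,t}$-freeness bounding the congestion'' misattributes where the hypothesis enters: Lee's separator theorem is $O(\sqrt{m})$ for \emph{all} string graphs and makes no use of $K_{t,t}$-freeness; the $O(\sqrt{tn})$ form is a corollary of the very edge bound you are trying to prove, so invoking it as an ingredient is circular.
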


Let $d:=c t \log t$ with the constant $c$ from~\cref{thm:lee}. 
Let $G$ be a $K_{t,t}$-free segment graph  and $<$ be a $d$-degeneracy order of $V(G)$ 
so that $N^<(v)=\{u: uv\in E(G), u<v\}$ is of size at most~$d$.
Fix a segment representation $\mathcal S=\{S_v \subseteq \mathbb{R}^2:v\in V(G)\}$. 
Without loss of generality, the intersection of any two segments is internal to both segments.

\bigskip

\noindent {\bf Constructing a planar graph on sub-segments of $\mathcal S$.} 
Split each segment $S_v \in \mathcal S$ into at most $d+1$ parts, by breaking it at every intersection with the segment $S_u$ with $u<v$. 
For notational simplicity, let us assume that each segment $S_v\in \mathcal S$ except the segment of the first vertex in $<$ is broken into exactly $d+1$ parts, which we refer to 
as $S_v^1,\ldots , S_v^{d+1}$ in the order of appearance in $S$; this does not affect the validity of the argument henceforth. 
Let $\mathcal S'$ be the at most $1+(d+1)(n-1)$ newly created segments. 

Note that for each $v\in V(G)$ and $i\in [d+1]$, 
there are at most two neighbors of $v$, preceding it in $<$, whose segments break $S_v$ to yield $S_v^i$. Let 
$N^<(S_v^i)$ be the set of such neighbors of $v$.
Let $N^>(S_v^i)$ be the set of neighbors $u$ of $v$ such that $S_v^i$ splits $S_u$ into new segments put in $\mathcal S$.  
Notice that any $u\in N^>(S_v^i)$ comes after $v$ in the order $<$, 
and $v$ may have arbitrarily many such neighbors. Clearly, we have $N^>(v)=\bigcup_{i\in [d+1]} N^>(S_v^i).$ 
We remark that for the first vertex $v$ in the order $<$, $N^<(v)$ is empty. It is clear that 
$u\in N^>(S_v^i)$ for some $i\in [d]$ if and only if $v\in N^<(S_u^j)$ for some $j\in [d]$.


To construct a matching-augmented planar graph $P_G$, let us shrink 
 each newly created segment in $\mathcal S'$ by $\delta$ at the endpoints, 
where $\delta$ is an arbitrarily small constant. 
Now, around every segment $S\in \mathcal S'$ place a cell $F(S)\subseteq \mathbb{R}^2$ whose boundary is away 
from $S$ by $\delta/4$. Note that the cells obtained from $\mathcal S'$ are pairwise disjoint. 

We construct a plane graph $P_G$, in which   the cells $\{F(S):S\in \mathcal S'\}$ are faces, as follows. 
For every $S:=S_v^i\in \mathcal S'$, the boundary of each cell $F(S)$ becomes a facial cycle of $P_G$ 
on $\abs{N^<(S)}+2\abs{N^>(S)}$ vertices; one vertex $x^S_u$ for each vertex of $u\in N^<(S)$ and two vertices $x^S_{u,1},x^S_{u,2}$ for each of $u\in N^>(S)$. 
These  vertices are placed around the facial boundary in a canonical way governed by the segment representation $\mathcal S$, namely:
\begin{itemize}
\item for each $u\in N^<(S)$, consider two intersection points of $S_v\in \mathcal S$ with the boundary of the cell $F(S)$. 
The vertex $x^S_u$ is located at the intersection point closer to $S_u\cap S_v,$
\item for each $u\in N^>(S)$, the vertices $x^S_{u,1}$ and $x^S_{u,2}$ are respectively located at the two intersection points 
of $S_u$ with the boundary of the cell $F(S),$
\item the vertex sets $\{x^S_{u,1}:u\in N^>(v)\}$ and  $\{x^S_{u,2}:u\in N^>(v)\}$ are consecutive, respectively. 
\end{itemize}

The vertex set $V(P_G)$ of $P_G$ consists of the union of sets $V(S):=\{x^S_u: u\in N^<(S)\}\cup \{x^S_{u,1}, x^S_{u,2}: u\in N^>(S)\}$ over 
all segments $S\in \mathcal S'.$ The edge set $E(P_G)$ of $P_G$ contains the edge set on the facial cycle of $F(S)$ for every $S\in \mathcal S'$. 
Additionally, $P_G$ contains two edges for every edge $uv\in E(G)$ with $u<v$: 
recall that $S_u$ breaks $S_v$, and let $S_v^j$ and $S_v^{j+1}$ be the two resulting segments in $\mathcal S'$. 
Let $i\in [d+1]$ be such that $v\in N^>(S_u^i)$. Notice that $u$ belongs to both $N^<(S_v^j)$ and $N^<(S_v^{j+1}),$ 
and thus the boundaries of $F(S_v^j)$ and $F(S_v^{j+1})$ contain $x^{S_v^j}_u$ and $x^{S_v^{j+1}}_u$, respectively.
We also note that the boundary of $F(S_u^i)$ contains $x^{S_u^i}_{v,1}$ and $x^{S_u^i}_{v,2}$ as $v$ belongs 
to $N^>(S_u^i)$. Now we connect $x^{S_u^i}_{v,1}$ with $x^{S_v^j}_u$, and $x^{S_u^i}_{v,2}$ with $x^{S_v^{j+1}}_u$ 
(or change the order of $j$ and $j+1$ so that the edges are disjoint on the plane). It is straightforward to verify that 
$P_G=(V(P_G),E(P_G))$ with the current embedding is a plane graph.
The facial cycles $\C$ of the cells $\{F(S):S\in \mathcal S'\}$ form a collection of pairwise vertex-disjoint facial cycles. 

On top of the edges added so far, we define a matching $M$ as follows: for every $S\in \mathcal S'$ 
and for every $u\in N^>(S)$, 
$M$ contains the edge between $x^{S}_{u,1}$, and $x^{S}_{u,2}$. If $\abs{N^<(S)}=2$, 
then $M$ contains an edge connecting the two vertices on the boundary of $F(S)$ corresponding to the vertices of $N^<(S).$ 
It is straightforward to see that $M$ is nicely aligned with respect to~$\C$. This completes the description of the matching-augmented planar graph $P_G$ with nice alignment property. 

\bigskip

\noindent {\bf Reconstructing $G$ from $P_G$ via FO transduction.} Consider the 2-edge-colored graph on $V(P_G)$ 
with the 2-colored edge set $E(P_G)\cup M$. 
Notice that the adjacency of a vertex $v$ of $G$ is encoded in $P_G$ as the adjacency of the vertex sets
$V(S):=\{x^S_u: u\in N^<(S)\}\cup \{x^S_{u,1}, x^S_{u,2}: u\in N^>(S)\}$, where $S=S_v^i$ over all $i\in [d].$ 
To ease the FO transduction, we enhance $P_G$ with a facial vertex $x_S$ for each $S\in \mathcal S'$ 
which is adjacent with every vertex in the facial cycle of $F(S)$ and with no other vertices. It is clear that the enhanced graph $P^*_G$ 
is a plane graph (before adding $M$). 

The transduction ${\sf T}$ first colors all facial vertices $x_S$ with a unary relation $F$. 
Among the facial vertices, those of the form $x_{S_v^1}$ for all $v\in V(G)$ will be distinguished with a unary relation $F^o$; 
these vertices will be used as the vertex set on which $G$ is defined. 
Due to $d$-degeneracy of $G$, we know that $G$ is $d+1$-colorable. Fix a $(d+1)$-coloring $c:V(G)\rightarrow [d+1]$ 
and color the facial vertices and some additional vertices accordingly. That is, for each $S\in \mathcal S'$, the associated facial vertex $x_S$ and 
at most two vertices in $\{x^S_u: u\in N^<(S)\}$ have color $c(v)$ 
if $S\subseteq S_v\in \mathcal S$. For every $i\in [d+1]$, let $F_i$  be the unary relation that distinguishes vertices of color $i$ in $V(P^*_G)$.

First, the local adjacency between segments of $ \mathcal S'$ is defined with the FO formula. 
\[
\varphi_{\mathcal S'}(x,y) \equiv x,y\in F \wedge  {\sf dist}_3(x,y),
\]
where ${\sf dist}_k(x,y)$ is a binary relation on $V(P^*_G)$ for vertices of distance exactly $k$ in $P^*_G$. Clearly, 
the relation ${\sf dist}_k$ is FO definable for every constant $k$. From the construction of $P^*_G$, 
it is clear that two segments of $\mathcal S'$ have facial vertices of distance three if and only if 
they are locally adjacent, i.e., $v'\in N^<(S)$ or $v\in N^<(S')$, where $S\subseteq S_v$ and $S'\subseteq S_{v'}$.

Now, we want to aggregate the adjacency of all facial vertex $x_S$ for all $S\in \mathcal S'$ contained in $S_v$ 
as the adjacency of the facial vertex $x_{S_v^1}\in F^o$. For this, we use the path connecting 
$x_{S_v^1}$ with $x_S$ via which traverses a matching edge when crossing a face originating from a segment of $\mathcal S$ other than $S_v$ itself. 
Specifically, for any $S,S'\in \mathcal S'$ from the same segment $S_v\in \mathcal S$,
there exists a path $Q$ in $(V(P^*_G),E(P^*_G)\cup M)$ connecting $x_S$ and $x_{S'}$ which satisfies the following conditions: 
$(i)$ $Q$ has length $3k-1$ for some $k\in [d]$, $(ii)$ precisely every third edge on $Q$ belongs to $M$, and 
$(iii)$ every vertex on $Q$ which is not the endpoint of a matching edge in $E(Q)\cap M$ has color $c(v)$. Moreover, no such path exists 
if $S$ and $S'$ are obtained from distinct segments of $\mathcal S$.

Let ${\sf alt}(x,y)$ be a formula which states that between $x\in F^o$ and $y\in F$, there exists a path satisfying the  conditions $(i)$-$(iii)$. 
It is easy to see that such a formula ${\sf alt}(x,y)$ can be written using the additional unary relations $F_i$'s and the binary relation $M$. 
Now, the formula 
\[
\varphi_{\mathcal S}(x,y) \equiv x,y\in F^o  \wedge  \exists w_x,w_y\in F \big ( {\sf alt}(x,w_x) \wedge {\sf alt}(y,w_y)  \wedge \varphi_{\mathcal S'}(w_x,w_y) \big ).
\]
defines the adjacency on $F^o$ which is isomorphic to the edge relation of $G$. 
%

It remains to see that the class of matching-augmented planar graphs that can be obtained from $K_{t,t}$-free segment graphs 
have bounded twin-width. This will be the focus in the remainder of this subsection.

\subsubsection{Planar graph with a packing of facial cycles}

Let $G$ be a planar graph with a fixed embedding $\Sigma$ on the plane 
and $\C$ be a packing of facial cycles which are pairwise vertex-disjoint. 
We assume that $G$ is connected and without parallel edges. We also assume 
that a (counter-clockwise) cyclic order of  edges incident with each vertex $v$ under the embedding $\Sigma$ is given. 
To minimize verbosity, we shall talk about cyclic order of the neighbors of $v$, which means the cyclic order of the edges incident with the neighbors 
whose other endpoints meet at $v$.  
Therefore, we can query the ternary relations such as the following: for three neighbors $w_1,w_2,w_3$ of $v$, 
$w_2$ is {\sf after} $w_1$ and {\sl before} $w_3$ in the counter-clockwise transversal around $v$. 
We write $(w_1,w_2,w_3)_v$ if this query is returned in positive. 
For a facial cycle $C$ and a vertex $v\in V(C)$, denote by ${\sf succ}_C(v)$ (resp. ${\sf pred}_C(v)$) the vertex on $C$ which immediately succeeds (resp. precedes) $v$ 
in the counter-clockwise transversal of $C$ under the embedding $\Sigma$.
%
%

Below, we prove that a planar graph with $\C$ admits a vertex ordering which induces a counter-clockwise total order 
around each facial cycle of $\C$ and have bounded twin-width.

\begin{proposition}\label{prop:counterplanar}
Let $G$ be a planar graph with a fixed planar embedding $\Sigma$ and $\C$ be a collection of pairwise vertex-disjoint facial cycles. 
Then there exists a vertex ordering $\prec$ on $V(G)$ with the following properties.
\begin{enumerate}[$(i)$]
\item $\adj{\prec}{G}$ has bounded twin-width, and
\item for every $C\in \C$, the vertex ordering $\prec$ restricted to $V(C)$ is a counter-clockwise transversal order along the facial cycle $C$ under $\Sigma$.
\end{enumerate}
\end{proposition}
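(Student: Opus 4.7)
The approach is to reduce the proposition to the bounded twin-width of planar graphs by contracting the cycles in $\mathcal C$, then lifting a good ordering back. First I form the plane multigraph $G^*$ obtained from $G$ by contracting each $C \in \mathcal C$ into a single vertex $v_C$; this preserves planarity and the cyclic order of edges at $v_C$ inherits from the order of edges leaving $V(C)$ under $\Sigma$. Since planar graphs are proper minor-closed, they have bounded twin-width~\cite{twin-width1}, hence by~\cref{thm:mixed-number-gen2} there is an ordering $\prec^*$ of $V(G^*)$ with $\gr(\adj{\prec^*}{G^*}) \leqslant c$ for some absolute constant $c$. I then lift $\prec^*$ to an ordering $\prec$ of $V(G)$ by replacing each $v_C$ with the vertices of $C$ listed consecutively in counter-clockwise order along $C$ under $\Sigma$, starting from a canonically chosen vertex of $C$ (for instance the one closest in $\Sigma$ to the first neighbor of $v_C$ in $\prec^*$). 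Vertices outside every $C \in \mathcal C$ keep their relative position from $\prec^*$. Property $(ii)$ holds by construction.

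For $(i)$, consider any $k$-division of $\adj{\prec}{G}$ and classify each zone according to how many ``super-blocks'' (maximal intervals of $\prec$ corresponding to a single $C$, or to a single vertex of $V(G^*) \setminus \{v_C : C \in \mathcal C\}$) it straddles on each side. A zone contained in a single super-block $V(C) \times V(C)$ encodes the adjacency of a cycle along its counter-clockwise order, which is a pattern of bounded grid rank. A zone straddling several super-blocks on both sides coarsens, after merging each super-block into a single row or column, into a zone of $\adj{\prec^*}{G^*}$, whose grid rank is at most $c$. The mixed case, where one side sits inside a single super-block and the other spans several, is handled by repeated application of~\cref{lem:x-vs-yz,lem:union}, which let me decompose the contribution of each cycle into a bounded number of pieces and reassemble a grid rank bound, absorbing the multiplicative loss into the function $f$ of~\cref{thm:mixed-number-gen2}.

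The main obstacle I anticipate is the expansion step itself: a single row of $\adj{\prec^*}{G^*}$ indexed by $v_C$ can blow up into $|V(C)|$ rows of $\adj{\prec}{G}$, and I must show that these rows have only boundedly many distinct restrictions to any column interval. The critical geometric input is that, under a plane embedding, the neighbors of any single outside vertex $w$ inside $V(C)$ form a contiguous arc along the counter-clockwise traversal of $C$. Hence the biadjacency between $\{w\}$ and $V(C)$ appears in $\adj{\prec}{G}$ as a single interval of $1$s, and a Ramsey-style argument in the spirit of~\cref{lem:x-vs-yz} bounds the number of distinct such intervals that can contribute to a large rank in any zone. I expect this to be the most delicate step, and where the final constant in $(i)$ grows, but the bound will depend only on~$c$.
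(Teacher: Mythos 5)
Your reduction breaks at the lifting step, and for a structural reason rather than a technical one. The order $\prec^*$ you obtain for the contracted graph $G^*$ from \cref{thm:mixed-number-gen2} is an arbitrary witness of bounded grid rank; it carries no information about the embedding $\Sigma$. In $G^*$ every outside vertex $w$ adjacent to some vertex of $C$ is simply a neighbor of the single vertex $v_C$, so \emph{any} relative order of these outside vertices is acceptable for $G^*$. After you blow $v_C$ back up into $V(C)$ listed counter-clockwise, the biadjacency between $V(C)$ and those outside vertices reappears, and it can realize an arbitrary permutation matrix: take $C$ of length $2k^2$ and $k^2$ pendant-like vertices $w_1,\ldots,w_{k^2}$ outside $C$, with $w_i$ attached to $c_{2\pi(i)}$ for a complex permutation $\pi$. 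This is planar, the $w_i$ collapse to twins of $v_C$ in $G^*$ (so $\prec^*$ may order them any way at all), yet the zone $V(C)\times\{w_1,\ldots,w_{k^2}\}$ of $\adj{\prec}{G}$ contains $\mat{k}{0}$ and hence has grid rank $k$. So the grid rank of the lifted order is genuinely unbounded, not merely harder to bound.

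Your proposed rescue — that the neighbors of a single outside vertex $w$ on $V(C)$ form a contiguous arc of the counter-clockwise traversal — is also false: with $C=c_1c_2c_3c_4c_5$ a facial cycle and $w$ outside adjacent exactly to $c_1$ and $c_3$, the neighborhood $\{c_1,c_3\}$ is not an interval of the cyclic order, and the configuration is planar. Even if each row were an interval, interval structure on one side with an uncontrolled order on the other side does not bound grid rank. The paper avoids all of this by never invoking bounded twin-width of planar graphs as a black box: it builds $\prec$ directly by a modified counter-clockwise BFS that discovers each facial cycle of $\C$ as a batch, and the work goes into proving (\cref{lem:order2,lem:order1,lem:niceorder}) that this single order is simultaneously coherent with the rotation system at every vertex and every cycle of $\C$, so that a large universal pattern in $\adj{\prec}{G}$ would yield a $K_{3,3}$ minor. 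The coherence between the order on $V(C)$ and the order on the rest of the graph is exactly what your construction discards and cannot recover.
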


Most of this subsection is devoted to proving~\cref{prop:counterplanar}. For this, we first construct a vertex ordering $\prec$ which naturally satisfies the property $(ii)$, 
and then establish $(i)$. Recall that the actual planar graph we use for the FO transduction is one which is enhanced by 
adding a facial vertex on each $C\in \C$ and making the facial vertex adjacent with every vertex on $C$. 
For such enhanced graph, essentially the same ordering with a trivial modification works to bound the twin-width. We discuss how to 
handle the enhanced graphs at the end of this subsection.

\bigskip

\noindent {\bf Constructing a vertex ordering $\prec$.}
We aim to present a vertex ordering $\prec$ of $G$ which has the two properties indicated in~\cref{prop:counterplanar}. 
For simplicity, we assume that $\C$ partitions the vertex set of $G$ by treating each vertex as a trivial facial cycle.
For each vertex $v$ of $G$, let $C_v\in \C$ denote the facial cycle containing $v$. 

Along with a vertex ordering $\prec$, we construct a spanning tree $T$ of $V(G)$ akin to BFS tree of $G$ and a layered decomposition 
$\mathcal L=L_0\uplus \cdots \uplus L_p$. 
Essentially, our tree $T$ will be a natural extension of a BFS tree in which the neighbors of a discovered vertex $v$ are 
visited in the counter-clockwise order around $v$ under the  embedding $\Sigma$, starting from (the neighbor after) the parent of $v$ in the BFS tree constructed thus far. 
The associated order is the discovery order of vertices during the BFS tree construction.
The crucial difference of our desired $T$ from the aforementioned BFS tree is twofold. First, 
when a vertex $v$ from a facial cycle $C$ of $\C$ is discovered, the remaining vertices of $C$ will be discovered only by $v$ 
by temporarily deviating from BFS. Second, when $v$ of a facial cycle $C$ in $\C$ 
is explored, the neighbors of $v$ in $C_v$ are ignored except for ${\sf succ}_{C}(v)$. Moreover, once ${\sf succ}_C(v)$ is discovered (after discovering 
the neighbors of $v$ before ${\sf succ}_C(v)$), we fully discover the vertices of $C$ in the counter-clockwise order of $C$ before we continue exploring $v$. 
In other words, once a single vertex $v$ from $C$ is discovered, all the remaining $V(C)\setminus v$ are deemed {\sl privately discovered} by~$v$ and thus  
quarantined against further discovery by other vertices. But instead of inserting $V(C)$ 
in the queue at this point, only $v$ is placed in the queue and the remaining $V(C)\setminus v$ will wait to be put in the queue 
till ${\sf succ}_{C_v}(v)$ is (publicly) discovered by $v$ during its exploration phase. 
The layered decomposition  represents the partition of $V(G)$ according to the the distance of each vertex to the root $r$ in $T$, where the distance between 
${\sf succ}_C(v)$ to any vertex of $V(C)\setminus \{v,{\sf succ}_C(v)\}$ is deemed zero. 

Let us give a formal description on how to construct of $T$, a total order $\prec$ on $V(G)$ and the layered decomposition $\mathcal L$ as a partition of $V(G)$.
As usual BFS, there are three states: \white\ means undiscovered, which is the initial state for all vertices of $G$, 
\gray\ means discovered but not explored (it is the state of vertices in the queue), and \black\ is the state of fully explored vertex. 
On top of these three states, there is one more state, \reserved($v$). When a vertex $u$ is marked \reserved($v$), it means that 
$v$ is in the same facial cycle of $\C$ with $u$ and $v$ is already discovered. Additionally, we give to each vertex $v$ an attribute $\parent(v)$, 
to construct $T$. The initial value of $\parent(v)$ is null for all $v$. 

Choose an arbitrary vertex $r$ of $G$, mark $r$ \gray\ and mark all vertices of $V(C_r)\setminus \{r\}$ by \reserved($r$). 
In general, we explore the first vertex $v$, marked \gray, in the queue and do the following. Examine the 
edges incident with $v$ in the counter-clockwise order starting from the edge connecting $v$ with $\parent(v)$; if null, then choose an arbitrary edge. 
Let $u$ be the other endpoint of the edge at hand. 
\begin{enumerate}[(a)]
\item If $u$ is marked by \black, \gray,  \reserved($v'$) with $v'\neq v$, or \reserved($v$) but $u\neq {\sf succ}_{C_v}(v)$, we do nothing and continue with 
the counter-clockwise exploration of $v$.
\item If $u$ is marked by \white, then change $u$'s label to \gray\ and mark all $V(C_u)\setminus u$ by \reserved($u$). Insert $u$ in the queue with the attribute 
$\parent(u)=v$.
\item If $u$ is marked by \reserved($v$) and $u= {\sf succ}_{C_v}(v)$, 
then mark $V(C)\setminus v$ \gray\ and let $\parent(z)={\sf pred}_{C_v}(z)$ for every $z\in V(C)\setminus v.$ 
Put $V(C_v)\setminus v$ in the queue in the counter-clockwise order. 
\end{enumerate}

Once $v$ is done with exploration, then mark it \black. 
Because save $u= r$, the attribute $\parent(u)$ changes to a non-null value, i.e., to the vertex which contributed to the discovery of $u$ 
exactly once, namely when $u$ is put in the queue,  it defines a spanning tree $T$ rooted at~$r$. The total order $\prec$ is the discovery order. 
It is clear from the construction that $\prec$ satisfies the condition $(ii)$ of~\cref{prop:counterplanar}, namely 
the vertices of each facial cycle $C\in \C$ are ordered under $\prec$ counter-clockwise.

To define the layered decomposition $\mathcal L$, we distinguish the tree edges of $T$ into two types. When $u$ is discovered 
during the exploration of $v$ and $\parent(u)=v$, then the tree edge $uv$ is called \emph{explored}. If $\parent(u)\neq v$, 
then we call the tree edge between $u$ and $\parent(u)$ a \emph{fast-track} edge. Note that 
if $u$ is discovered in the case (b) above, the tree edge between $u$ and $\parent(u)$ is an explored edge. If $u$ is discovered in the case (c) above,
then the tree edge between $u$ and $\parent(u)$ is a fast-track edge except for the first $u$ discovered as a successor of $v$ on $C_v$. 
Now, the layer $L_i$ consists of all vertices $v$ for which the path connecting $v$ and $r$ in $T$ has exactly $i$ explored edges.

\bigskip

\noindent {\bf Properties of $\prec$, $T$ and $\mathcal L$.}
We begin with some basic properties of $\mathcal L=L_0\uplus \cdots \uplus L_p.$

\begin{lemma}
The layered decomposition $\mathcal L$ satisfies the following properties.
\begin{enumerate}[$(i)$]
\item $L_0\prec L_1 \prec \cdots \prec L_p$.
\item There is no edge of $G$ connecting a vertex in $L_i$ and a vertex in $L_{i+3}$ for any $i$.
\end{enumerate}
\end{lemma}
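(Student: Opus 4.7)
The plan for this lemma is to trace the modified BFS carefully, tracking vertex statuses (white, gray, black, reserved) at each moment and distinguishing the two kinds of tree edges, using the key fact that an explored edge increases the layer by exactly one while a fast-track edge preserves it.

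For (i), I shall argue by induction on $i$ that every vertex of $L_{i+1}$ is discovered strictly after every vertex of $L_i$. The crucial observation is that when some $v\in L_i$ is being explored, a case~(b) step turns a white neighbor gray via an explored edge (yielding a new $L_{i+1}$ vertex), and a case~(c) step at ${\sf succ}_{C_v}(v)$ simultaneously moves all of $V(C_v)\setminus v$ into the gray status, with ${\sf succ}_{C_v}(v)$ connected to $v$ by an explored edge and every other $z\in V(C_v)\setminus\{v,{\sf succ}_{C_v}(v)\}$ connected to ${\sf pred}_{C_v}(z)$ by a fast-track edge---so the whole batch lives in $L_{i+1}$. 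Hence explorations of $L_i$-vertices produce only $L_{i+1}$-discoveries; combined with the FIFO nature of the queue and the base case $L_0=\{r\}$, this yields $L_0\prec L_1\prec\cdots\prec L_p$.

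For (ii), fix an edge $uv\in E(G)$ and assume without loss of generality that $u\prec v$, so that $u$'s exploration examines $v$; by (i), $i:=\text{layer}(u)\leq \text{layer}(v)$. I shall show $\text{layer}(v)\leq i+2$. If $v$ is white at the examination, case~(b) places $v$ in $L_{i+1}$. If $v\in V(C_u)$ then $v\in\{{\sf pred}_{C_u}(u),{\sf succ}_{C_u}(u)\}$, and regardless of which of these is encountered first in the counter-clockwise sweep of $u$'s neighbors, $v$ ends up in $L_{i+1}$. If $v\notin V(C_u)$ and $v$ is already gray when examined, then the action that turned $v$ gray must have occurred in some $w$'s exploration with $w\prec u$ (since $u$'s own actions either fire case~(b) making $uv$ a tree edge, or case~(c) requiring $v\in V(C_u)$), giving $\text{layer}(v)\leq \text{layer}(w)+1\leq i+1$.

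The only remaining---and decisive---subcase is $v$ reserved$(v')$ with $v'\neq u$ and $v\notin V(C_u)$. Since $v$ is still reserved rather than gray, $v'$ cannot have completed its exploration (otherwise case~(c) of $v'$ would have turned $v$ gray), so $v'$ is waiting in the queue behind $u$, whence $u\prec v'$ and $v'$ is necessarily discovered during $u$'s own exploration. Because the facial cycles in $\mathcal C$ are pairwise vertex-disjoint and $v\in V(C_v)\setminus V(C_u)$, we have $V(C_u)\cap V(C_v)=\emptyset$, so $v'\in V(C_v)$ lies outside $V(C_u)$, ruling out a case~(c) discovery of $v'$ by $u$. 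Hence $v'$ is born as a white neighbor of $u$ via case~(b), landing in $L_{i+1}$, and consequently $v\in V(C_{v'})\setminus v'\subseteq L_{i+2}$. This is the delicate case that drives the bound, where the vertex-disjointness of $\mathcal C$ is exactly what caps the layer jump at $+2$ and yields the desired absence of edges between $L_i$ and $L_{i+3}$.
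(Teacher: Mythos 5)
Your overall strategy --- part (i) via the batch/FIFO structure of the modified BFS, part (ii) via an exhaustive case analysis on the status of $v$ at the moment $u$'s exploration examines the edge $uv$ --- is the same mechanism the paper uses (the paper phrases (ii) as a short contradiction argument centred on the first discovered vertex of $C_v$, but the underlying point is identical), and your case analysis does reach the correct bound. Two intermediate deductions, however, are not justified as written.

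First, in the subcase $v\in V(C_u)$ you assert $v\in\{{\sf pred}_{C_u}(u),{\sf succ}_{C_u}(u)\}$; a facial cycle of a plane graph may have chords, so this need not hold. It is harmless: whenever $v\in V(C_u)$, the whole of $V(C_u)$ lies in two consecutive layers (the first-discovered vertex of $C_u$ one layer below the rest), so $\text{layer}(v)\le \text{layer}(u)+1$ no matter which cycle vertex $v$ is. Second, and more substantively, in the decisive subcase you infer that ``$v'$ is necessarily discovered during $u$'s own exploration'' from the fact that $v'$ sits in the queue behind $u$. This does not follow: $v'$ may have been discovered during the exploration of a vertex $w$ dequeued \emph{before} $u$ yet enqueued \emph{after} $u$ was (for instance $w,u\in L_i$ with $w\prec u$, and $w$'s exploration discovers $v'\in L_{i+1}$ before $u$ is dequeued). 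So you cannot conclude that $v'$ arises from a case~(b) event fired by $u$. The conclusion you actually need, $\text{layer}(v')\le i+1$, is nonetheless true and follows from facts you have already established: $v'$ is discovered strictly before the current moment, by (i) and the FIFO discipline only vertices of layers at most $i$ have begun their exploration so far, and every discovery made while exploring a layer-$j$ vertex lands in layer $j+1$. Substituting this argument closes the case and yields $\text{layer}(v)\le i+2$ throughout, which is the statement of (ii).
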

\begin{proof} 
To see $(i)$, recall that the attribute $\parent(u)$ is a pointer to the vertex during whose (fast-track) exploration 
$u$ was discovered. Clearly, before the discovery of~$u$ its ancestor should be discovered. Therefore, the discovery order $\prec$ 
is a linear extension of the tree order imposed by~$T$. The property $(i)$ follows by the construction of layers. 

To see $(ii)$, suppose not and let $uv$ be an edge of $G$ with $u\in L_i$ and $v\in L_{i+3}$.
Clearly, we have $C_u\neq C_v$ because any facial cycle of $\C$ has exactly one explored tree edge. 
On the other hand, the first discovered vertex $u_o$ of $C_v$ is in $L_{i+2}$. However, the existence of the edge $uv$ 
indicates that $v$ could have been discovered during the exploration of $u$, therefore before $u_o$ is discovered, a contradiction.
\end{proof}

For a vertex $x$ in the layer $L_i$, let $T_x$ be the connected component of $T$ containing $x$ obtained by 
removing all edges whose both endpoints are contained in $L_{\leq i}$. The next lemma follows easily  
from the first-come-first-served nature of the queue (discovery order is the exploration order), 
and the fact that the vertices of a facial cycle $C\in \C$ is put in the queue as a batch once the successor of 
the first discovered vertex of $C$ is discovered.

\begin{lemma}\label{lem:order2}
Let  $x, y$ be two vertices in the layer $L_i$. If  $x\prec y$,  
then $V(T_x)\cap L_k \prec V(T_y)\cap L_k$ for every $k\geq i$.
\end{lemma}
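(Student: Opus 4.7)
The plan is to proceed by induction on $k \geq i$, using the FIFO property of the queue. The base case $k=i$ reduces to the claim that $V(T_x) \cap L_i = \{x\}$ (and symmetrically $V(T_y) \cap L_i = \{y\}$). Indeed, if $z \in L_i$ is distinct from $x$, the unique $T$-path between them passes through their least common ancestor, which lies in $L_{\leq i}$ since layer weakly decreases along $\parent$-pointers; all vertices on that path therefore lie in $L_{\leq i}$, so every edge of the path has both endpoints in $L_{\leq i}$ and is deleted in forming $T_x$. Hence $z \notin V(T_x)$, and $x \prec y$ yields the base case.

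For the inductive step, assume the conclusion for layer $k$ and fix $u \in V(T_x) \cap L_{k+1}$, $w \in V(T_y) \cap L_{k+1}$. I would assign to $u$ a \emph{discoverer} $u^* \in V(T_x) \cap L_k$, namely the unique $L_k$-vertex during whose exploration $u$ enters the queue, and symmetrically a discoverer $w^* \in V(T_y) \cap L_k$. Then the inductive hypothesis gives $u^* \prec w^*$; since $u$ is appended to the queue during $u^*$'s exploration, while $w$ is appended later during $w^*$'s exploration, FIFO yields $u \prec w$.

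The main subtlety is that facial-cycle batching makes $u^*$ depend on the type of $T$-parent edge of $u$. If $\parent(u) \in L_k$, the edge $u\parent(u)$ is an explored edge and we set $u^* := \parent(u)$ (the explorer that handled $u$ via case~(b), or as $u = {\sf succ}_{C_{u^*}}(u^*)$ in case~(c)). If $\parent(u) \in L_{k+1}$, the edge is a fast-track inside $C_u$, and $u$ was batched into the queue during the exploration of the unique first-discovered vertex of $C_u$; that vertex lies in $L_k$ and is our $u^*$. In both cases the $T$-path from $u$ to $u^*$ consists of fast-track edges inside $L_{k+1}$ followed by a single explored edge to $L_k$, all of which survive in $T_x$; hence $u^*$ lies in $V(T_x) \cap L_k$, as required.

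It remains to justify that $u^* \prec w^*$ implies every $L_{k+1}$-vertex queued during $u^*$'s exploration precedes every $L_{k+1}$-vertex queued during $w^*$'s exploration in $\prec$. This follows from the invariant that the queue never contains vertices from more than two consecutive layers: exploring an $L_j$-vertex can only enqueue $L_{j+1}$-vertices (either as white neighbors in case~(b) or as a batched cycle in case~(c)), and no $L_{j+1}$-vertex is explored before all $L_j$-vertices are exhausted. Consequently the $L_{k+1}$-additions from successive $L_k$-explorations form contiguous blocks of $\prec$ in the exploration order of their respective $L_k$-discoverers. The principal obstacle I anticipate is cleanly bookkeeping the two cases around the batching rule, but the invariant that batched cycle mates land in the layer immediately above their first-discovered representative keeps the discoverer well-defined and the argument self-contained.
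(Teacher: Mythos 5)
Your proof is correct and is essentially the rigorous version of the argument the paper only gestures at (the paper omits a formal proof, merely citing the first-come-first-served queue and the batched enqueueing of facial cycles); your induction on layers via the well-defined ``discoverer'' map $u \mapsto u^*$, with the two cases distinguishing explored versus fast-track parent edges, correctly supplies the missing details. One cosmetic remark: the layer-block invariant in your last paragraph is not actually needed, since sequential exploration plus FIFO already guarantees that every vertex enqueued during the exploration of $u^*$ precedes every vertex enqueued during the exploration of $w^*$.
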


The next lemma is immediate from~\cref{lem:order2}.

\begin{lemma}\label{lem:order1}
Let $v$ be a vertex in the layer $L_i$ and $x,y$ be two children of $v$ in $T$ with $x\prec y$. 
If $x,y\in L_{i+1}$, then $V(T_x)\cap L_k \prec V(T_y)\cap L_k$ for every $k\geq i$.
\end{lemma}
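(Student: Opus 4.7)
The plan is to observe that Lemma \ref{lem:order1} is essentially a restriction of Lemma \ref{lem:order2} to the special situation where $x$ and $y$ share a parent $v$ one layer above them, and then to split the range of $k$ into two regimes: the ``boundary'' case $k=i$ which will turn out to be vacuous, and the ``interior'' cases $k \geq i+1$ which fall immediately within the scope of Lemma \ref{lem:order2}.

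First I would unfold the definition of $T_x$ and $T_y$. Since $x \in L_{i+1}$, the tree $T_x$ is obtained from $T$ by removing all edges both of whose endpoints lie in $L_{\leq i+1}$ and keeping the component containing~$x$. In particular, no edge of $T_x$ joins $x$ to a vertex of $L_{\leq i}$, and every other vertex of $T_x$ sits in $L_{\geq i+2}$. Consequently $V(T_x) \cap L_j = \emptyset$ for every $j \leq i$, and symmetrically for $T_y$. This handles the case $k=i$ at once: both sides of the claimed inequality are empty, so $V(T_x) \cap L_i \prec V(T_y) \cap L_i$ holds vacuously.

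For the remaining range $k \geq i+1$, I would simply invoke Lemma \ref{lem:order2}. Its hypothesis asks for two vertices in a common layer with a given $\prec$-order; by assumption $x,y \in L_{i+1}$ with $x \prec y$, so the lemma applies with $i$ replaced by $i+1$ and yields $V(T_x) \cap L_k \prec V(T_y) \cap L_k$ for every $k \geq i+1$. Combining the two regimes gives the lemma.

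There is essentially no obstacle here; the only point that deserves a line of care is to make sure that the notation $T_x$ is consistently the one attached to $x$'s own layer $L_{i+1}$ (rather than to $v$'s layer $L_i$), because that is what allows the definition of $T_x$ to line up exactly with the hypothesis of Lemma \ref{lem:order2}. Once this matching is made explicit, the statement follows in two lines, which is consistent with the paper's remark that the result is immediate.
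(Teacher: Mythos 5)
Your proposal is correct and follows exactly the route the paper intends: the paper gives no separate argument for this lemma beyond declaring it immediate from Lemma~\ref{lem:order2}, and your reduction — applying Lemma~\ref{lem:order2} to $x,y\in L_{i+1}$ for $k\geq i+1$ and noting that $V(T_x)\cap L_i=V(T_y)\cap L_i=\emptyset$ makes the case $k=i$ vacuous — is precisely the missing two lines. The care you take about $T_x$ being defined relative to $x$'s own layer $L_{i+1}$ is the right point to make explicit.
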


The above two lemmas lead to the statement below, which is the key for proving that $\adj{\prec}{G}$ has bounded twin-width. 
It essentially says that if there are vertices $v_1 \prec \cdots \prec v_\ell$ from the same layer, 
then in the boundary of the disk covering $T$ truncated after the said layer, the order $\prec$ coincides with the cyclic order on 
the boundary vertices in the counter-clockwise order.  

\begin{lemma}\label{lem:niceorder}
Let $v_1 \prec \cdots \prec v_\ell$ be vertices in the same layer of $\mathcal L$, and let $T'$ 
be a minimal subtree of $T$ containing all $v_i$'s. Then the graph $H$ on the vertex set $V(T')$ 
with the edge set $E(T')\cup \{v_iv_{i+1}:i\in [\ell]\}$ is planar, where $v_{\ell+1}=v_1$. 
Furthermore, there exists an embedding $\Sigma'$ of $H$ 
in which the embedding of $T$ coincides with that in $\Sigma$ and all the vertices of $\{v_i:i\in [\ell]\}$ lie on the outer facial cycle. 
\end{lemma}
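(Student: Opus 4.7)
The plan is to derive the statement directly from the planar embedding $\Sigma$ together with the counter-clockwise discovery rule used to build $T$. I will show that when $T'$ is drawn following the restriction of $\Sigma$, the ordering $\prec$ restricted to any horizontal slice of $T'$ agrees with the counter-clockwise cyclic order of that slice along the outer contour of $T'$; routing the cyclic edges $v_1v_2,\dots,v_\ell v_1$ just outside $T'$ will then produce the desired embedding $\Sigma'$.

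First I would make the embedding of $T'$ precise. Restrict $\Sigma$ to $T'$: each vertex $v$ of $T'$ inherits from $\Sigma$ the cyclic order of its incident $T'$-edges. Thicken the edges of $T'$ into pairwise disjoint ribbons, obtaining a closed topological disk~$D$ containing $T'$. The boundary $\partial D$ gives a canonical \emph{contour walk} of $T'$: a counter-clockwise closed walk that traverses each leaf once and each edge twice, and which induces a cyclic order on any subset of $V(T')$ that meets $\partial D$. Since all $v_i$ lie in the same layer $L_k$ and $T'$ is minimal containing them, every $v_i$ is either a leaf of $T'$ or a branching vertex with all descendants in $L_{\geq k}$; either way each $v_i$ appears on $\partial D$.

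Next I would prove, by induction on $i \leq k$, the key claim: the trace of $V(T') \cap L_i$ along the contour walk of $T'$, started after the edge $r\parent(r)$ (fixed arbitrarily if $r$ is the root), is exactly $\prec$-increasing. The base $i=0$ reduces to $\{r\} \cup V(C_r)$, whose vertices are enqueued in the counter-clockwise order of $C_r$ by the \reserved-release step, and this matches $\Sigma$ since $C_r$ bounds a face. For the inductive step, fix $v \in L_{i-1}$ and consider its children $x_1 \prec \cdots \prec x_s$ in $T$ lying in $L_i$. The BFS rule examines the neighbors of $v$ in counter-clockwise order around $v$ starting just after the edge to $\parent(v)$, so $x_1,\dots,x_s$ appear in this same cyclic order in $\Sigma$ around $v$; combined with~\cref{lem:order1,lem:order2}, the subtrees $T_{x_1},\dots,T_{x_s}$ occupy consecutive arcs of $\partial D$ in this order at every later layer. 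Concatenating over all $v \in L_{i-1}$, and using the inductive hypothesis to know $\prec$ on $L_{i-1}$ matches the contour order, gives the claim at level $i$. The \reserved-batch release for a facial cycle $C \in \C$ is the sole delicate case: when ${\sf succ}_C(v)$ is discovered during the exploration of $v$, all of $V(C) \setminus v$ is enqueued in one shot in the counter-clockwise order of $C$; since $C$ bounds a face in $\Sigma$, this is exactly the order in which these vertices meet $\partial D$ as the contour exits $v$ along the edge to ${\sf succ}_C(v)$ and wraps around the interior of $C$.

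Applying the claim at $i=k$ yields that $v_1,v_2,\dots,v_\ell$ appear along the contour walk of $T'$ in this cyclic counter-clockwise order. Drawing the edges $v_iv_{i+1}$ as short disjoint arcs outside $D$ in the outer face therefore produces a planar embedding $\Sigma'$ of $H$ that extends the embedding of $T'$ (hence agrees with $\Sigma$ on $T$), with all $v_i$ on the outer face, as required. The main obstacle will be the inductive step around \reserved-batches: I must carefully verify that splicing the counter-clockwise walk around $C$ into the ongoing counter-clockwise exploration at $v$ preserves the global contour order, i.e., that the vertices of $V(C)\setminus v$ are correctly inserted between $v$'s neighbors-before-${\sf succ}_C(v)$ and $v$'s neighbors-after-${\sf succ}_C(v)$, rather than interleaved with them in a different way.
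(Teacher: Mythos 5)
Your setup --- thicken $T'$ into a disk $D$, show that the counter-clockwise contour order of the $v_i$ on $\partial D$ coincides with $\prec$, then route the cycle edges $v_iv_{i+1}$ just outside $D$ --- is exactly the paper's setup. Where you genuinely differ is in how the central claim (contour order $=$ $\prec$) is proved: the paper fixes arbitrary $i<j$, passes to the minimal subtree spanning $v_1,v_i,v_j$ (a subdivided star), and runs a case analysis on the ternary cyclic relation at its degree-three vertex, distinguishing whether that vertex lies in the lowest layer met by the star and which of its three incident edges are intra-layer (facial-cycle) edges. Your layer-by-layer induction following the BFS is a more global route to the same conclusion and is sound, but two points need to be made explicit. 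First, the invariant should not be phrased as ``the trace of $V(T')\cap L_i$ along the contour is $\prec$-increasing'': internal vertices of $T'$ occur at several corners of $\partial D$, and a vertex of a facial-cycle batch is interleaved on the contour with its own fast-track descendants in the same layer; the usable statement is that the arcs $\partial D\cap T_x$ are pairwise disjoint, consecutive, and ordered by $\prec$ of their roots $x$, which is what \cref{lem:order1,lem:order2} deliver. (Relatedly, $L_0=\{r\}$ only; $V(C_r)\setminus r$ sits in $L_1$, since the edge from $r$ to ${\sf succ}_{C_r}(r)$ is explored.) Second, the ``delicate case'' you defer is precisely where the paper spends its effort (its Case~A), and it does go through: because $C$ is a facial cycle traversed counter-clockwise, at each $z\in V(C)\setminus v$ the edge to ${\sf succ}_C(z)$ is the counter-clockwise-\emph{last} edge in the rotation at $z$ after the parent edge to ${\sf pred}_C(z)$, so the contour exhausts all subtrees hanging off $z$ before descending to ${\sf succ}_C(z)$ and returns along the face side of the path meeting nothing new --- exactly matching the FIFO discovery order, in which $z$'s non-cycle children are discovered before those of ${\sf succ}_C(z)$. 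With these two points written out your argument is complete; note that, like the paper (which declares the general case ``tedious''), you are implicitly assuming the $v_i$ pairwise $T$-incomparable when you assign each of them a single well-defined position on $\partial D$.
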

\begin{proof}
We assume that any pair of vertices in $\{v_i:i\in [\ell]\}$ are pairwise incomparable in $T$; the extension of the proof for the general case 
is tedious. Therefore, the set $\{v_i:i\in [\ell]\}$ is the set of the leaves of $T'$. 

Consider a closed disk on the Euclidean plane that contains $T'$ and whose boundary intersects $V(T')$ exactly on the leaves $\{v_i:i\in [\ell]\}$ of $T'$. 
Such a disk can be obtained in the usual way of splitting the edges of $T'$ into two half-edges to obtained an Eulerian tour of $T'$ along half-edges, 
then bypassing all vertices except for the leaves of $T'$. 
Let $u_1\prec_{T'} u_2 \prec_{T'} \cdots \prec_{T'} u_{\ell}$ be a counter-clockwise transversal of $\{v_i:i\in [\ell]\}$ along the boundary 
with $u_1=v_1$. We claim that for every $2\leq i<j \leq \ell$, it holds that $v_i \prec_{T'} v_j$ and thus $\prec_{T'}$ coincides with $\prec.$ 
Choose arbitrary $v_i$ and $v_j$ with $i<j$ and let $T''$ be the minimal subtree of $T'$ containing $v_1, v_i$ and $v_j.$ 
Note that $T''$ is a subdivided star with three leaves 
and let $v$ be the vertex of $T''$ of degree three. Let $v'_1, v'_j$ and $v'_i$ be the three neighbors of $v$ 
which is closer to $v_1,v_j$ and $v_i$ respectively. 

Let $L_{k'}$ be the  minimum layer among all layers of $\mathcal L$ with non-empty intersection with $V(T'')$. 
Let $L_{k}$ be the layer containing $v$. We consider a few cases.  

\smallskip

\noindent {\bf Case A. $L_k=L_{k'}$}. 
Note that at most one of the three edges $vv'_1, vv'_i$ and $vv'_j$ can be fully contained in $L_k$. 
Indeed, any tree edge fully contained in a single layer is an edge of a facial cycle in $\C$. 
Therefore, if any of these edges fully contained in $L_k$, then $C_v$ must contain such an edge. 
This implies that if all three edges fully contained in $L_k$, then two of then coincides and thus $v$ cannot have degree three in $T''$.

Suppose none of these three edges is fully contained in $L_k$, and thus 
all three vertices $v'_1,v'_j$ and $v'_i$ are in the layer $L_{k+1}$. By~\cref{lem:order1}, 
and the condition $v_1\prec v_i \prec v_j$, we have $v'_1\prec v'_i \prec v'_j$. 
As $\prec$ is the discovery order and we discover the neighbors of $v$ in the counter-clockwise order, 
this imposes $v_1 \prec_{T'} v_i \prec_{T'} v_j$. 

Therefore, we may assume that at least one of $vv'_1, vv'_i$ and $vv'_j$ is an edge of $C_v$. 
If $vv'_1$ is an edge on $C_v$ and the others are not, observe that $v'_1={\sf pred}_{C_v}(v)$ and $v'_j$ and $v'_i$ are both in $L_{k+1}$. 
Note that~\cref{lem:order1} applied to $v$ and $v'_j,v'_i$ and the fact $v_i\prec v_j$ imply that $v'_i\prec v'_j.$
As we construct $T$ and explore the neighbors of $v$, we discover them in the counter-clockwise order starting from the parent of $v$, 
which is $v'_1.$ Therefore, the ternary relation $(v'_1,v'_i,v'_j)_v$ holds and thus  
$v_1 \prec_{T'} v_i \prec_{T'} v_j$. 
If $vv'_i$ is an edge on $C_v$, then both $v'_1$ and $v'_j$ are in $L_{k+1}$. No matter whether $v'_i$ is a predecessor or successor of $v$ on $C$, 
~\cref{lem:order2} yields a contradiction either with $v'_1$ or $v'_3$. Lastly, if $vv'_j$ is an edge on $C$,~\cref{lem:order2} implies that 
$v'_j={\sf succ}_{C_v}(v)$. Observe that the ternary relation $(v'_1,v'_i,v'_j)_v$.

Suppose that two edges of $vv'_1, vv'_i$ and $vv'_j$ are fully contained in $L_k$. Using~\cref{lem:order2}, it is easy to verify 
that $v'_1={\sf pred}(v)$, $v'_j={\sf succ}(v)$ and $v'_i$ is in $L_{k+1}$. Therefore, again  we observe the ternary relation $(v'_1,v'_i,v'_j)_v$. 

\smallskip

\noindent {\bf Case B. $L_k\neq L_{k'}$}. For the sake of contradiction, suppose that we observe $v_1\prec_{T'}v_j\prec_{T'} v_i$ 
on the boundary of the disk. This impose the ternary relation around $v$ as $(v'_1,v'_j,v'_i)_v$. Note 
that exactly one of the paths connecting $v$ and $\{v_1,v_i,v_j\}$ in $T''$ intersects with $L_{k-1}$ due to the assumption $L_k\neq L_{k'}$ 
and the fact that $v\in L_k$ has degree three.  

First, suppose that the path connecting $v$ and $v_1$ intersects $L_{k-1}$ and note that $v'_1$ is the parent of $v$ in $T$. Therefore, 
if both $v'_j$ and $v'_i$ are in $L_{k+1}$, the ternary relation around $v$ mandates $v'_j\prec v'_i$, which is a contradiction with $v_i\prec v_j$ 
by~\cref{lem:order1}. If this is not the case, then we have $(v'_1,v'_j,v'_i)_v$ implies $v'_i={\sf succ}_{C_v}(v)$. Now~\cref{lem:order2} applied 
to $v$ and the first vertex in $L_k$ which is met by the path from $v_i$ to $v$ in $T''$ 
demands $v_j\prec v_i$, a contradiction. 

Second, suppose that the path connecting $v$ and $v_i$ intersects $L_{k-1}$ and thus $v'_i$ is the parent of $v$ in $T$.
Now either~\cref{lem:order1} or~\cref{lem:order2} applied to suitable vertices at $L_{k'}$ imply that 
$v_i$ cannot be between $v_1$ and $v_j$, a contradiction. 

Lastly, suppose that the path connecting $v$ and $v_j$ intersects $L_{k-1}$ and thus $v'_j$ is the parent of $v$ in $T$. 
In this case, the same argument as in the first case applies with the ternary relation $(v'_j,v'_i,v_1)_v$. This completes the proof. 
%
%
%
\end{proof}

\bigskip

\noindent {\bf $\adj{\prec}{G}$ has bounded twin-width.} 
Now we turn to show that $\adj{\prec}{G}$ has bounded twin-width. 
Specifically, we will show that for some $t$ whose value will be determined later, one cannot extract any universal pattern $\mat{t}{s}$ as an off-diagonal submatrix of $\adj{\prec}{G}$ 
for any $s\in \{0,1,\uparrow,\downarrow,\leftarrow,\rightarrow\}.$ 
The proof of this claim immediately implies that $\adj{\prec}{G}$ has constant twin-width by~\cref{thm:canonicaler}. 
For $s\neq 0$, a universal pattern $\mat{4}{s}$ already entails 
the existence of $K_{3,3}$ as a semi-induced subgraph of $G$, which is impossible. 

So, fix $t=3$ and suppose that $\adj{\prec}{G}$ contains $\mat{3t}{0}$ as an off-diagonal submatrix of $\adj{\prec}{G}.$ 
Let $\mathcal A=\{A_1,\ldots , A_{3t}\}$ be the row parts and $\mathcal B=\{B_1,\ldots , B_{3t}\}$ be the column parts in the $3t\times 3t$ subdivision of $\mat{3t}{0}$. 
Without loss of generality, we assume $A_i \prec B_j$ for any $i,j\in [3t]$ and thus 
$A_1\prec \cdots \prec A_{3t} \prec B_1 \prec \cdots B_{3t}$. 

Recall that there is no edge between $L_j$ and $L_{j+3}.$ Therefore, because every vertex of $A:=\bigcup_{i\in [3t]}A_i$ 
is adjacent with some vertex of $B:=\bigcup_{j\in [3t]} B_j$ and vice versa, 
the union $A\cup B$ is contained in three consecutive layers of $\mathcal L$. Without loss of generality, 
we assume $A\cup B\subseteq L_{p-2}\cup L_{p-1}\cup L_p$. As at most two part from $\mathcal A \cup \mathcal B$ can intersect two layers,  
we may assume that each $A_i$, respectively $B_j$, is entirely contained either in $L_{p-2}$, $L_{p-1}$ or $L_p$. 

Notice that at least one of the three layers which contains at least $t$ parts of $\mathcal A$, and the same holds with $\mathcal B$. 
Let $L_A$ and $L_B$ be such layers. 
Let $\mathcal A'=\{A_i:i\in [t]\}$ (respectively, $\mathcal B'=\{B_j: j\in [t]\}$)
be the $t$ parts of $\mathcal A$ in $L_A$ (respectively, $t$ parts of $\mathcal B$ in $L_B$).
There are two cases to consider: $L_a=L_b$ and $L_a\prec L_b.$

Let us first consider the case when $L_a=L_b$. As the vertex set $\bigcup_{i\in [t]}A_i \cup \bigcup_{i\in [t]}B_i$ 
is fully contained in the same layer, we can apply~\cref{lem:niceorder} and obtained a new planar graph $H$ 
on the vertex set $\bigcup_{i\in [t]}A_i \cup \bigcup_{i\in [t]}B_i$ and with the edge set $E(A,B)$ as 
well as the edges connecting two consecutive (under $\prec$) vertices. 
Now we are allowed to contract each part $A_i$ and $B_j$ with the newly added edges. This creates $K_{t,t}$ as a minor with $t=3$, a contradiction 
to the planarity of $H$.

Suppose $L_a\prec L_b$. We create two copies of $G$, and from the first copy of $G$ we create a new planar graph as above $H_a$ 
whose vertex set is $\bigcup_{i\in [t]}A_i$. Likewise, from the second copy of $G$ we create $H_b$ whose vertex set is $\bigcup_{i\in [t]}B_i$. 
By flipping the embedding of $H_b$, one can add back the edge set $E(A,B)$ between the vertex sets of $H_a$ and $H_b$ while 
maintaining the planarity. Again, the facial cycles superposed on $\bigcup_{i\in [t]}A_i$ and $\bigcup_{i\in [t]}B_i$ allow us to 
contract each part $A_i$ and $B_j$ and obtained $K_{3,3}$ as a minor, a contradiction. This proves that $\adj{\prec}{G}$ has bounded twin-width, 
and complete the proof of~\cref{prop:counterplanar}.

We remark on how to extend $\prec$ when $G$ is enhanced by adding a universal facial vertex for each facial cycle $C\in \C$. 
Let $v_C$ be a facial vertex associated with $C\in \C$. While constructing~$T$, when vertices of $C$ are marked \reserved($v$) 
we also mark $v_C$ \reserved($v$). When $v$ is explored and $V(C)\setminus v$ are discovered, $v_C$ is discovered in the same batch 
and put to the queue while setting $\parent(v_C)=v.$ Therefore in $T$, all facial vertices will be a leaf whose parent is the first discovered vertex 
in the corresponding facial cycle. It is easy to see that this modification does not affect any part of the subsequent arguments. 
This modification is summarized in the statement below.

\begin{proposition}\label{prop:augplanar}
Let $G$ be a planar graph with a fixed planar embedding $\Sigma$ and $\C$ be a collection of pairwise vertex-disjoint facial cycles of $G$. 
Let $G^\C$ be a planar graph obtained from $G$ by adding a vertex $v_C$ for each $C\in \C$ and making it adjacent with all vertices of $V(C)$.
Then there exists a vertex ordering $\prec$ on $V(G^\C)$ with the following properties.
\begin{enumerate}[(i)]
\item $\adj{\prec}{G^\C}$ has bounded twin-width, and
\item for every $C\in \C$, the vertex ordering $\prec$ restricted to $V(C)$ is a counter-clockwise transversal order along the facial cycle $C$ under $\Sigma$.
\end{enumerate}
\end{proposition}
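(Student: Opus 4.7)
The plan is to reproduce the construction of $\prec$ from the proof of Proposition \ref{prop:counterplanar} on the augmented graph $G^\C$, with a small modification to handle each facial vertex $v_C$. Specifically, during the BFS-like procedure, whenever case (c) is triggered---the batch of $V(C)\setminus v$ is inserted into the queue during the exploration of the first discovered vertex $v$ of the cycle $C$---we also append $v_C$ to the end of this batch, set $\parent(v_C):=v$, and declare the tree edge $vv_C$ to be explored. A planar embedding of $G^\C$ extending $\Sigma$ is obtained by placing each $v_C$ in the face of $\Sigma$ bounded by $C$; in this embedding, in the cyclic order of neighbors around any $u\in V(C)$, the vertex $v_C$ sits between ${\sf pred}_C(u)$ and ${\sf succ}_C(u)$.

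Property $(ii)$ follows immediately: the order in which vertices of $V(C)$ are inserted into the queue is unchanged---first $v$, then $V(C)\setminus v$ in counter-clockwise order, with $v_C$ appended at the very end of the batch---so $\prec$ restricted to $V(C)$ is still a counter-clockwise transversal of $C$ under $\Sigma$.

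For $(i)$, all structural ingredients of the twin-width argument for Proposition \ref{prop:counterplanar} extend verbatim. The spanning tree $T$ of $G^\C$ now has each $v_C$ as a leaf attached by an explored edge to the first discovered vertex $v$ of $C$, so $v_C$ lies in the same layer of $\mathcal L$ as $V(C)\setminus v$. Since $v_C$ is adjacent only to $V(C)\subseteq L_d\cup L_{d+1}$ when $v\in L_d$, the property that no edge of $G^\C$ connects $L_i$ to $L_{i+3}$ and the two ordering lemmas (analogues of Lemmas \ref{lem:order2} and \ref{lem:order1}) hold without change. In the analogue of Lemma \ref{lem:niceorder}, including any facial vertex $v_C$ among the chosen vertices $v_1\prec\cdots\prec v_\ell$ merely adds a leaf to the minimal subtree $T'$, so the resulting auxiliary graph $H$ remains planar when embedded consistently with the augmented embedding of $G^\C$.

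The final contradiction is then identical to the one in Proposition \ref{prop:counterplanar}: assume $\adj{\prec}{G^\C}$ contains $\mat{3t}{s}$ as an off-diagonal submatrix for $t=3$. For $s\neq 0$ a semi-induced $K_{3,3}$ appears, contradicting planarity of $G^\C$. For $s=0$, extracting $t$ row parts and $t$ column parts concentrated in one or two consecutive layers and applying the extended niceorder lemma produces a planar auxiliary graph $H$; contracting each part through the added cyclic edges yields $K_{3,3}$ as a minor of a planar graph, the desired contradiction. The main technical point to verify is that row or column parts containing facial vertices still contract meaningfully, which they do because each $v_C$ is a leaf of $T$ and merges into its parent without disturbing the bipartite adjacency pattern.
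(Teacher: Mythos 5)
Your proposal is correct and follows essentially the same route as the paper: the paper likewise discovers $v_C$ in the same batch as $V(C)\setminus v$ with $\parent(v_C)=v$, making each facial vertex a leaf of $T$ attached to the first discovered vertex of its cycle, and then observes that the layering, the ordering lemmas, the planarity lemma, and the final $K_{3,3}$-minor contradiction all go through unchanged. Your additional explicit checks (placement of $v_C$ inside the face, its layer membership, and its behaviour under contraction) are consistent with what the paper leaves implicit.
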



\bigskip

\noindent{\bf Augmenting chosen faces with nicely aligned matchings.} 
We have so far shown that $(V, E ,\prec)$ have bounded twin-width, where $G^\C$ is a planar graph enhanced by facial vertices as in~\cref{prop:augplanar}, $V=V(G^\C)$ and $E=E(G^\C)$. 
We turn to prove that the matching-augmented planar graphs (or their enhancements)
have bounded twin-width with the vertex ordering $\prec$ if the matching is nicely aligned with respect to some facial cycle packing $\C$. 

\begin{lemma}
Let $G$ be a plane graph and $\C$ be a collection of pairwise vertex-disjoint facial cycles of $G$.
Suppose that $M\subseteq V(G)\times V(G)$ is a pairwise disjoint vertex pairs 
which is nicely aligned with respect to $\C$. 
Then there exists a vertex ordering $\prec$ on $V:=V(\G^\C)$ such that 
$(V,E\cup M, \prec)$ has bounded twin-width, where $E:=E(G^\C)$.
\end{lemma}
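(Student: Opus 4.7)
The plan is to reuse the vertex ordering $\prec$ produced by \cref{prop:augplanar} for the enhanced planar graph $G^\C$ and to show that augmenting the binary relation with the nicely aligned matching $M$ does not destroy boundedness of the grid rank along $\prec$. Concretely, \cref{prop:augplanar} already gives that $\adj{\prec}{(V,E)}$ has bounded grid rank, so by \cref{lem:union} it suffices to establish that $\adj{\prec}{(V,M)}$ also has bounded grid rank; \cref{thm:mixed-number-gen2} will then convert this into the desired bound on the twin-width of $(V, E \cup M, \prec)$.

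To bound $\gr(\adj{\prec}{(V,M)})$, I would first observe that since $M$ is a matching, every row and column of $\adj{\prec}{(V,M)}$ contains at most one $1$ entry. If $\adj{\prec}{(V,M)}$ had arbitrarily large grid rank, then \cref{thm:rd-to-up} would force an arbitrarily large $\mat{k}{s}$ as an off-diagonal submatrix; the choices $s \in \{1, \uparrow, \downarrow, \leftarrow, \rightarrow\}$ each produce rows or columns with $\Omega(k^2)$ ones and are therefore immediately ruled out, so only $s = 0$ survives, corresponding to a large permutation pattern realized by matching edges.

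I would then use the nice alignment together with \cref{prop:augplanar}$(ii)$, which guarantees that $\prec$ restricted to each $V(C)$ traces $C$ counter-clockwise, matching the cyclic order on $V(C)$ witnessing nice alignment. After discarding from $M$ the at most one special edge $\{s,t\}$ per cycle (a bounded-degeneracy perturbation that changes the grid rank by at most a constant, via \cref{lem:union}), the remaining edges of $M_C$ form a cyclically nested matching, and cutting the cycle at the position prescribed by $\prec$ splits them into at most two linearly nested sub-matchings inside the $\prec$-interval spanned by $V(C)$. The BFS-like construction of $\prec$ in the proof of \cref{prop:counterplanar} further ensures that $V(C)\setminus v_C$ forms a contiguous batch in $\prec$, with only the first-discovered vertex $v_C$ detached; hence, up to another single edge per cycle that touches $v_C$, every remaining matching edge lies inside the batch-interval of its cycle and is nested there. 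A pairwise disjoint union of nested matchings inside disjoint intervals has bounded grid rank: a large $\mat{k}{0}$ would have to be realized by matching edges inside a single cycle's batch, but a nested matching behaves (after restriction to its support) like a reversal permutation matrix whose $1$-entries lie on an anti-diagonal, and such matrices contain no $\mat{k}{0}$ for $k \geq 2$.

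The main obstacle will be cleanly handling the interleaving of different cycles' batches under $\prec$ and the small set of "stray" edges incident to the detached vertices $v_C$. To dispose of these cleanly, I would partition $M$ into a bounded number of sub-matchings ($M^\star$ comprising the intra-batch edges, $M^\circ$ the edges incident to some $v_C$, and $M^{st}$ the special $\{s,t\}$ edges), argue boundedness of the grid rank of each along $\prec$ separately (using a Ramsey-type argument on the relative placement of pairs $(v_C, \mathrm{batch}(C))$ for $M^\circ$, and the observation above for $M^\star$), and then assemble via iterated applications of \cref{lem:union}. Combining with $\adj{\prec}{(V,E)}$ through one final application of \cref{lem:union} bounds the grid rank of $\adj{\prec}{(V, E \cup M)}$, and \cref{thm:mixed-number-gen2} concludes that $(V, E \cup M, \prec)$ has bounded twin-width.
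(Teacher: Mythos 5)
Your proposal is correct and follows essentially the same route as the paper: take the ordering from \cref{prop:augplanar}, bound the grid rank of the matching relation separately (diagonal per-cycle blocks are monotone/nested, the remaining entries form a monotone sequence), and assemble everything with \cref{lem:union}. One caution: for the stray edges joining a first-discovered vertex $v_C$ to its batch $V(C)\setminus v_C$, a ``Ramsey-type argument on relative placement'' is not the right tool---Ramsey extraction cannot by itself exclude a large permutation pattern; what is actually needed, and what the paper uses, is the first-come-first-served property of the BFS-like construction, namely that $u\prec v$ for first-discovered vertices implies $V(C_u)\setminus u \prec V(C_v)\setminus v$, so that these $1$-entries occupy a monotone sequence of positions.
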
 
\begin{proof}
We use the vertex ordering $\prec$ from~\cref{prop:augplanar} for which $(V,E,\prec)$ has bounded twin-width. It suffices to show 
that  $(V,M,\prec)$ has bounded twin-width because then the 2-edge-colored graph $(V, E\cup M, \prec)$ have bounded twin-width by~\cref{lem:union}. 
Let $\adj{\prec}{H}$ denote the adjacency matrix of the ordered graph $H=(V,M,\prec)$. 
From the construction of~$\prec$, note that for every $C\in \C$ and the first discovered vertex $v$ of $C$, 
the vertices of $V(C)\setminus v$ appears continuously in $\prec$ while $v$, which precedes $V(C)\setminus v$ 
may not immediately precede $V(C)\setminus v$. Moreover, if $u$ and $v$ are the first discovered vertices of 
facial cycles $C_u$ and $C_v$ respectively and $u\prec v$, then $V(C_u)\setminus u \prec V(C_v)\setminus v$ 
due to the first-come-first-served nature of 
the discovery order $\prec$. 

Consider an arbitrary off-diagonal submatrix $M'$ of $\adj{\prec}{H}$ which is above the diagonal of $\adj{\prec}{H}$. 
Let us call a principal submatrix of $\adj{\prec}{H}$ \emph{diagonal matching block} if its rows and columns correspond 
to the vertex set $V(C)\setminus v$ for some $C\in \C$ and the first discovered vertex $v$ of $C$. 
It is clear that $M'$ can overlap with at most one diagonal matching block. 
Non-zero entries of $M'$ which are not contained in the intersection with a diagonal matching block 
correspond to edges of the form $vw$, where $v$ is the first discovered vertex of some $C\in \C$ 
and $w\in V(C)\setminus v$. Moreover, the $1\times \abs{V(C)\setminus v}$ submatrix of $M'$ 
contains at most one non-zero entry. Now, due to the property mentioned in the first paragraph, 
the non-zero entries of $M'$ which is not contained in the intersection with a diagonal matching block occupies a monotone sequence of positions of $M'$. 
Therefore, $M'$ is a union of two ordered binary relational structures (over the same universe and with the same order) encoded as matrices, 
one corresponding to the intersection with a diagonal matching block, 
another corresponding to a monotone sequence of non-zero entries. 
Both of them trivially have bounded twin-width, and thus $M'$ has bounded twin-width as well by~\cref{lem:union}. 
Then~\cref{thm:canonicaler} implies that $\adj{\prec}{H}$ has bounded twin-width, thus we conclude that $\adj{\prec}{H}$ has bounded twin-width. 
\end{proof}

This Lemma combined with the FO-transduction from matching-augmented planar graphs to $K_{t,t}$-free segment graphs complete the proof of~\cref{thm:biclique-segment}.

\subsection{Axis-parallel $H_t$-free unit segment graphs have bounded twin-width}

To show what the naming of the section claims, we face, like in the previous section, the challenging task of finding a  ``good'' linear order on objects from a two-dimensional space.

We place a virtual grid whose cells are of size $1 \times 1$, and cut the segments along this grid, adding some junction vertices in between the cut pieces corresponding to the same segment.
We first prove by FO transduction that if this new graph has bounded twin-width, then the original segment graph has bounded twin-width.
For the newly built graph, a natural order consists of locally enumerating the segments counter-clockwise according to where they cross the grid, and globally enumerating the cells of the grid row by row.
Note that the dimension of the grid cells imposes that every segment crosses the grid.

The crux is then to argue that the circular order along the boundary of a cell yields adjacency matrices with bounded grid rank.
Somewhat surprisingly this part leverages the same argument as we will later use for $H_t$-free visibility graphs of terrains; a forbidden pattern like the Order Claim (see~\cref{subsec:terrains}).
We write the lemmas in greater generality that what we actually need, for them to be possibly used in future work (in different settings).

\medskip

A~\emph{$t$-splitting of a graph $G$} replaces each of its vertices $v$, by a path $P_v: v_1v_2 \ldots v_{t'}$ on at least one and at most~$t$ vertices.
The edges incident to $v$ are then distributed between the~$v_i$'s.
(Note that $P_v$ may consist of a single vertex, in case no modification occurs at $v$.)
A~\emph{1-subdivided $t$-splitting} further subdivides exactly once each edge of each substituted $P_v$.
The~\emph{length} of a (1-subdivided) $t$-splitting is simply $t$.

The following lemma says that 1-subdivided splittings of bounded length can be undone by a~transduction.
\begin{lemma}
  Let $t$ be a fixed positive integer.
  There is an FO transduction $\mathsf T_t$ such that for every graph $G$, and every $G'$, a~1-subdivided $t$-splitting of $G$, it holds that $G \in \mathsf T_t(G')$. 
\end{lemma}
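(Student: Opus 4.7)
The plan is to have the transduction's monadic lift single out, inside $G'$, one ``representative'' vertex per substituted path $P_v$, together with the set of all skeleton vertices of $G'$; then, using a bounded-depth FO formula, walk inside each $P_v$ to recover the adjacencies of $G$.

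First I would introduce two unary relations in the monadic lift:
\begin{compactitem}
 \item $R$, marked at exactly one vertex per $P_v$ (say $v_1$, the representative of the original vertex $v$);
 \item $S$, marked at all skeleton vertices $v_1,\dots,v_{t'}$ of each $P_v$ (so $R \subseteq S$).
\end{compactitem}
The complement $V(G')\setminus S$ then consists precisely of the subdivision vertices $x_{i,i+1}$ added by the 1-subdivision. The key structural observation is the \emph{confinement property}: each $x_{i,i+1}$ has degree exactly $2$ in $G'$, with both neighbors $v_i, v_{i+1}$ lying in $S\cap V(P_v)$; in particular, no subdivision vertex ever connects two distinct paths $P_u$ and $P_v$.

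Second, since $|V(P_v)|\le t$ before subdivision, any vertex of $P_v$ is reachable from $v_1$ by an alternating walk of length at most $2(t-1)$ which visits $S$ and $V(G')\setminus S$ in turn, and such a walk is confined to $P_v$ by the observation above. This is expressible by a bounded FO formula
\[
\mathrm{reach}(x,w) \equiv \bigvee_{i=0}^{t-1} \exists y_0,z_1,y_1,\dots,z_i,y_i\colon (y_0=x)\wedge(y_i=w) \wedge \bigwedge_{j=0}^{i} S(y_j) \wedge \bigwedge_{j=1}^{i}\bigl(\neg S(z_j)\wedge E(y_{j-1},z_j)\wedge E(z_j,y_j)\bigr).
\]
The transduction $\mathsf T_t$ then restricts the domain to representatives and defines edges between them by
\[
\nu(x) \equiv R(x), \qquad \varphi_E(x,y) \equiv \exists w_1,w_2\colon S(w_1)\wedge S(w_2)\wedge \mathrm{reach}(x,w_1)\wedge \mathrm{reach}(y,w_2)\wedge E(w_1,w_2).
\]

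Finally, I would verify correctness in both directions. If $uv\in E(G)$, the corresponding distributed edge in $G'$ joins some $v_i\in S\cap V(P_v)$ to some $u_j\in S\cap V(P_u)$; taking $w_1=v_i$, $w_2=u_j$ witnesses $\varphi_E(v_1,u_1)$. Conversely, if $\varphi_E(v_1,u_1)$ holds, then by the confinement property any witnesses $w_1,w_2$ must lie in $S\cap V(P_v)$ and $S\cap V(P_u)$ respectively, and the edge $w_1w_2$ is not a subdivision edge (since both endpoints are in $S$, while every subdivision edge has exactly one endpoint outside $S$), so it comes from an original $G$-edge, giving $uv\in E(G)$. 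The only conceptual obstacle, which the confinement property resolves cleanly, is ensuring that the $\mathrm{reach}$ formula cannot accidentally jump from $P_v$ to some $P_u$; bounding the formula depth by the constant $t$ is then a routine unrolling.
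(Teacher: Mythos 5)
Your proposal is correct and follows essentially the same route as the paper: mark the representatives and distinguish the subdivision vertices by a unary relation, then recover adjacency via a bounded-length alternating-walk formula whose confinement to a single $P_v$ is guaranteed by the degree-2 subdivision vertices. The only cosmetic difference is that you mark the skeleton vertices ($S$) where the paper marks their complement, the subdivision vertices ($D$); the formulas and the correctness argument are otherwise the same.
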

\begin{proof}
  Let $G'$ be a~1-subdivided $t$-splitting of $G$.
  For every $P_v: v_1v_2 \ldots v_{t'}$ in $G'$, the vertex $v_1$ is marked by a unary relation $U$, while every degree-2 vertices resulting from the 1-subdivision of $P_v$ are marked with a unary relation $D$.
  Transduction $\mathsf T_t$ defines a new domain by $\nu(x) \equiv U(x)$ and a new edge set by $$\varphi(x,y) \equiv U(x)~\land~U(y)~\land~\exists x' \exists y'~E(x',y')~\land~P(x,x')~\land~P(y,y')$$
  where $P(x,y)$ is $$x=y~\lor~\bigvee_{s \in [t-1]} \exists d_1 \exists x_1 \exists d_2 \exists x_2 \ldots \exists d_s~E(x,d_1)~\land~E(d_s,y)~\land~$$ $$\bigwedge_{i \in [s-1]} D(d_i)~\land~E(d_i,x_i)~\land~E(x_i,d_{i+1}).$$
  As the only paths in $G'$ alternating between vertices in and out of $D$ are subpaths of some~$P_v$, we get that $G \in \mathsf T_t(G')$. 
\end{proof}

In particular, using~\cref{thm:transduction}, if a class of 1-subdivided $O(1)$-splittings have bounded twin-width, then the class of the original graphs have bounded twin-width, too.
\begin{lemma}\label{lem:splitting-bdtww}
  Let $t$ be a fixed positive integer, and $\mathcal C$ be a graph class.
  Let $\mathcal D$ be a class which contains for every $G \in \mathcal C$, a~1-subdivided $t$-splitting of $G$.
  Then $\mathcal C$ has bounded twin-width if $\mathcal D$ has bounded twin-width. 
\end{lemma}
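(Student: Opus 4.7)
This is essentially an immediate consequence of the preceding lemma combined with \cref{thm:transduction}. The plan is to exhibit $\mathcal C$ as a subclass of $\mathsf T_t(\mathcal D)$ and then invoke closure of bounded twin-width under FO transductions.

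More concretely, I would proceed as follows. Given any $G \in \mathcal C$, by hypothesis on $\mathcal D$ there exists some $G' \in \mathcal D$ which is a 1-subdivided $t$-splitting of $G$. Applying the previous lemma to this pair $(G, G')$ yields $G \in \mathsf T_t(G')$. Since $G' \in \mathcal D$, we obtain $G \in \mathsf T_t(\mathcal D)$, and as $G \in \mathcal C$ was arbitrary this gives the inclusion $\mathcal C \subseteq \mathsf T_t(\mathcal D)$. Now \cref{thm:transduction} asserts that every FO transduction of a class of bounded twin-width has bounded twin-width, so if $\mathcal D$ has bounded twin-width then so does $\mathsf T_t(\mathcal D)$, and hence the subclass $\mathcal C$ as well.

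There is no genuine obstacle here: the work has already been done in constructing the transduction $\mathsf T_t$ in the previous lemma (where the unary relations $U$ and $D$ let us mark the ``anchor'' endpoint of each substituted path and the subdivision vertices, and then a bounded-quantifier formula $P(x,y)$ detects paths of length at most $2(t-1)$ alternating through $D$). The present lemma simply packages that transduction-level fact as a statement about twin-width boundedness of classes, which is exactly what \cref{thm:transduction} is designed for.
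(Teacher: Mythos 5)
Your proposal is correct and matches the paper's argument exactly: the paper derives \cref{lem:splitting-bdtww} as an immediate corollary of the preceding lemma (which gives $G \in \mathsf T_t(G')$ for any 1-subdivided $t$-splitting $G'$ of $G$) together with \cref{thm:transduction}. Nothing further is needed.
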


A \emph{$t$-grid} is an infinite planar square grid (or square tessellation) whose every cell dimension is $t \times t$.
We may call 1-grids \emph{unit grids}.
A~grid is a $t$-grid for some $t$, and we may refer to the cells of $\Gamma$ as the \emph{$\Gamma$-cells}.
A~\emph{clip} of a family $\mathcal F$ of objects in the plane by a grid $\Gamma$, or \emph{$\Gamma$-clip} of $\mathcal F$, is any multiset $C_f = \{\{x \cap f~:~x \in \mathcal F\}\}$ where $f$ is a face (or cell) of $\Gamma$.
We denote by $y^{\mathcal F}$ the object of $\mathcal F$ giving rise to (the particular copy of) $y \in C_f$. 
We refer to $C_f$ as the \emph{$f$-clip} of~$\mathcal F$.

Let $\partial f$ be the boundary of $f$.
For every $x, y \in C_f$, denote $x \prec_\circ y$ if a point of $x \cap \partial f$ is met before a point of $y \cap \partial f$ when starting at the top-left corner of $f$ and going counter-clockwise along $\partial f$.
Note that if $(1)$ every object of $C_f$ intersects $\partial f$, and $(2)$ every pair of objects of $C_f$ intersect $\partial f$ at disjoint sets, then $\prec_\circ$ is a total order.
For our purposes, we will manage to always enforce $(1)$, but not always $(2)$.
As we need to build total orders, we will break ties arbitrarily (but consistently) with any total order $\prec_{\mathcal F}$ over $\mathcal F$.
Then the \emph{circular order of $f$ over $C_f$}, denoted by $\prec_f$, is a total order defined by: $x \prec_f y$ if and only if $x \prec_\circ y \lor (y \not\prec_\circ x \land x' \prec_{\mathcal F} y')$ where $x'=x^{\mathcal F}$ and $y'=y^{\mathcal F}$.
We say that $x \in C_f$ is \emph{rooted at a side $B$} of $f$, if the first point $p$ of $x \cap \partial f$ along the circular order is on $B$.
We may also write that \emph{$x$ is rooted at $p$}.

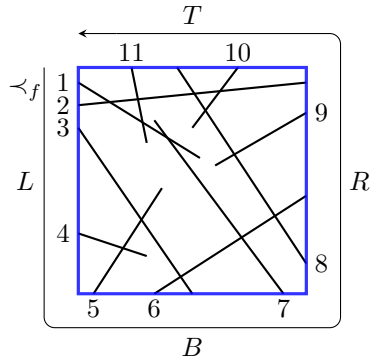
\begin{figure}[h!]
  \centering
  \begin{tikzpicture}
    \foreach \i/\j in {{0,2.8}/{1.6,1.8}, {0,2.5}/{3,2.8}, {0,2.2}/{1.5,0}, {0,0.8}/{0.9,0.5}, {0.2,0}/{1.1,1.4}, {1,0}/{3,1.3}, {2.7,0}/{1,2.3}, {3,0.4}/{1.3,3},
      {3,2.4}/{1.8,1.7}, {2.1,3}/{1.5,2.2}, {0.7,3}/{0.9,2}}{
      \draw[thick] (\i) -- (\j) ;
    }
    \foreach \i/\j/\l in {-0.2/2.8/1, -0.2/2.5/2, -0.2/2.2/3, -0.2/0.8/4, 0.2/-0.2/5, 1/-0.2/6, 2.7/-0.2/7, 3.2/0.4/8, 3.2/2.4/9, 2.1/3.2/10, 0.7/3.2/11}{
      \node at (\i,\j) {\l} ;
    }
    \draw[opacity=0.8,very thick,blue] (0,3) -- (0,0) -- (3,0) -- (3,3) -- cycle ;

    \draw[rounded corners] (-0.45,3) -- (-0.45,-0.45) -- (3.45,-0.45) -- (3.45,3.45) -- (0.5,3.45) ;
    \draw[-stealth] (0.5,3.45) -- (0,3.45) ;

    \foreach \i/\j/\l in {-0.7/1.5/L, 1.5/-0.7/B, 3.7/1.5/R, 1.5/3.7/T, -0.7/2.7/{\prec_f}}{
      \node at (\i,\j) {$\l$} ;
    }
  \end{tikzpicture}
  \caption{The $f$-clip of a set $S$ of segments, with the circular order $\prec_f$. Four segments are rooted at $L$ (1 to 4), three at $B$ (5 to 7), two at $R$ (8 and 9), and two at $T$ (10 and 11).}
  \label{fig:circular-order}
\end{figure}

Grid $\Gamma$ is in \emph{general position with respect to $S$} if no segment endpoint or proper crossing within $S$ lies at the boundary of a $\Gamma$-cell.
Note that the general position still allows two or more segments and a $\Gamma$-cell boundary to have a common intersection, provided the segments are collinear.
A grid $\Gamma$ \emph{hits} a set of segments $S$ if every segment of $S$ intersects (the 1-skeleton of)~$\Gamma$. 

\begin{figure}[h!]
  \centering
  \begin{tikzpicture}
    \def\s{0.2}
    \foreach \i in {1,2}{
    \begin{scope}[yshift=1.5 * \s cm, xshift=-4.5 * \i cm]
     \foreach \i/\j in {{0.2,0.1}/{1.4,1.2},{0.3,2.1}/{1.5,3.5},{1.8,2.8}/{0.4,3.9},{2.3,1.1}/{1.8,3.1},{1.9,0.1}/{2.5,1.2},{3.1,2.4}/{3.8,0.7},
      {1.4,2.7}/{2.8,2.2},{1.3,2.5}/{3.5,2.8},{2.4,3.6}/{3.2,1.2},{3.8,2.2}/{2.9,3.6},{3.1,1.7}/{2.6,3.8},{3.7,1.3}/{2.1,0.8},
      {0.1,2.2}/{0.3,3.5},{1.1,1.5}/{1.5,2.9},{3.4,0.6}/{3.2,1.8},{1.6,0.4}/{2.9,0.5},{1.4,1.9}/{3.6,2.8},{0.7,1.7}/{2.4,1.2},
      {1.4,0.9}/{0.1,2.5}}{
      \draw[thick] (\i) -- (\j) ;
     }
    \end{scope}
    }
    
      \begin{scope}[yshift=1.5 * \s cm, xshift=-4.5 cm]
     \foreach \i in {0,...,4}{
      \draw[opacity=0.8,very thick,blue] (-0.1,\i) -- (4.1,\i) ;
      \draw[opacity=0.8,very thick,blue] (\i,-0.1) -- (\i,4.1) ;
     }
     \end{scope}

    \foreach \i [count = \ip] in {0,...,3}{
      \foreach \j [count = \jp] in {0,...,3}{
    \begin{scope}[xshift=\i * \s cm, yshift=\j * \s cm]
      \clip (\i,\j) rectangle (\ip,\jp) ;
    \foreach \i/\j in {{0.2,0.1}/{1.4,1.2},{0.3,2.1}/{1.5,3.5},{1.8,2.8}/{0.4,3.9},{2.3,1.1}/{1.8,3.1},{1.9,0.1}/{2.5,1.2},{3.1,2.4}/{3.8,0.7},
      {1.4,2.7}/{2.8,2.2},{1.3,2.5}/{3.5,2.8},{2.4,3.6}/{3.2,1.2},{3.8,2.2}/{2.9,3.6},{3.1,1.7}/{2.6,3.8},{3.7,1.3}/{2.1,0.8},
      {0.1,2.2}/{0.3,3.5},{1.1,1.5}/{1.5,2.9},{3.4,0.6}/{3.2,1.8},{1.6,0.4}/{2.9,0.5},{1.4,1.9}/{3.6,2.8},{0.7,1.7}/{2.4,1.2},
      {1.4,0.9}/{0.1,2.5}}{
      \draw[thick] (\i) -- (\j) ;
    }
    \foreach \i in {0,...,4}{
      \draw[opacity=0.8,very thick,blue] (-0.1,\i) -- (4.1,\i) ;
      \draw[opacity=0.8,very thick,blue] (\i,-0.1) -- (\i,4.1) ;
    }
    \end{scope}
      }
      }
    
  \end{tikzpicture}
  \caption{Left: A set $S$ of segments. Center: Same set with a grid $\Gamma$ hitting $S$ and in general position w.r.t. $S$. Right: The 16 non-empty $\Gamma$-clips of $S$.}
  \label{fig:clips}
\end{figure}
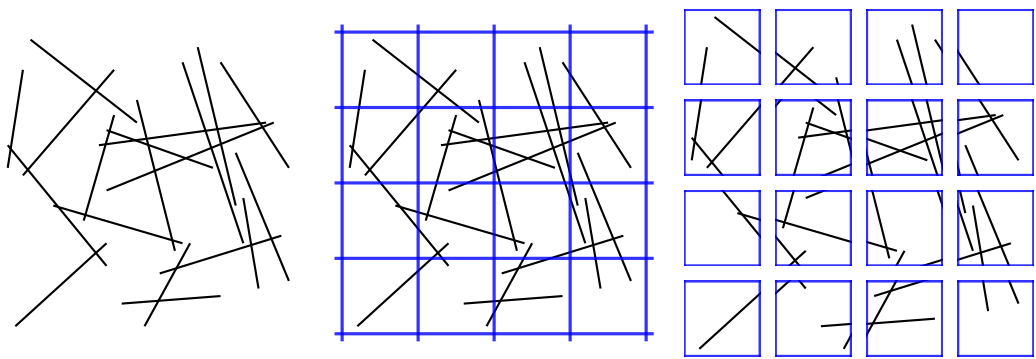

Let $S$ be a set of segments and $\Gamma$ be a grid hitting $S$, and in general position with respect to $S$.
The $\Gamma$-splitting of $S$ is the graph $H(S,\Gamma)$ constructed as follows.
First, every segment $s \in S$ is partitioned into $s_1,x_1,s_2,x_2,s_3,\ldots,s_{\ell-1},x_{\ell-1},s_\ell$ such that the $x_i$'s (with $i \in [\ell-1]$) are the intersection points of $s$ with (the skeleton of) $\Gamma$, for every $i \in [2,\ell-1]$, $s_i$ is the open segment between $x_{i-1}$ and $x_i$, and $s_1, s_\ell$ are the remaining segment parts, respectively incident to $x_1$ and $x_\ell$.
The graph $H(S,\Gamma)$ is then the intersection graph of the $s_i$'s for every $s \in S$, where one vertex $d_j$ is added per $x_j$, and made adjacent to the vertices of $s_j$ and $s_{j+1}$.
In particular, even if two segments $s, s' \in S$ intersect at the same point $x$ of the grid, the vertices $d_j$ and $d'_{j'}$ associated to $x$ remain of degree~2.
We call $s_1,s_2,\ldots$ \emph{$s$-vertices} or \emph{segment vertices}, and $d_1,d_2,\ldots$, \emph{$d$-vertices} or \emph{dividing vertices}. 

Note that $H(S,\Gamma)$ is a~1-subdivided splitting of the intersection graph of~$S$.
It is further a~1-subdivided $t$-splitting if every segment of $S$ is contained in at most $t$ $\Gamma$-cells.
Even if $H(S,\Gamma)$ is an abstract graph without a segment representation, it will help to often identify its vertices to segments within a cell ($s$-vertices) and points at the boundary of a cell ($d$-vertices). 

\begin{lemma}\label{lem:d-vs-s}
  Let $S$ be a set of segments and $\Gamma$ be a grid in general position.
  Let $B$ be a side of a $\Gamma$-cell $f$.
  Let $X$ be the $d$-vertices on $B$, and let $Y$ be the $s$-vertices of $f$ in $H = H(S,\Gamma)$ restricted to $f$.
  Assume that the intersection graph of the $f$-clip of $S$, along the circular order~$\prec_f$, has grid rank at most~$k$.
  Then $\adj{\prec}{H \langle X,Y \rangle}$ has grid rank bounded by a function of~$k$, where $\prec$ orders $Y$ by $\prec_f$, and $X$ by $\prec_f$ or its mirror.
  
  (The tie-break is imposed by any total order over $S$, thus is consistent between $X$ and $Y$.)
\end{lemma}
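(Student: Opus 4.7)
My plan is to exploit the rigid adjacency structure of $H\langle X, Y\rangle$: every vertex of $X$ has exactly one neighbor in $Y$, and the resulting map $X \to Y$ preserves $\prec_f$, so the biadjacency matrix reduces to a partial identity along the chosen orderings, whose grid rank is an absolute constant---a fortiori bounded by any function of $k$.

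First I would unpack the local adjacency. By the definition of $H(S,\Gamma)$, a $d$-vertex is adjacent only to the two $s$-vertices obtained by cutting its parent segment at the corresponding division point, one on each side. A vertex $d \in X$ sits on the shared side $B$ between $f$ and exactly one neighboring $\Gamma$-cell $f'$, so one of its two $s$-vertex neighbors lies in $f$ and the other in $f'$. By convexity of $f$, the parent segment $s^d$ contributes exactly one piece $p_d \in C_f$, and the unique neighbor of $d$ inside $Y$ is the $s$-vertex representing $p_d$; call it $y(d)$. General position forces distinct $d$-vertices on $B$ to come from distinct segments (a straight segment meets the line supporting $B$ at most once) and hence from distinct pieces of $C_f$, so $y \colon X \to Y$ is injective.

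Next I would argue that $y$ preserves $\prec_f$. Under the natural identifications $d \leftrightarrow p_d$ for $d \in X$ and ``$s$-vertex $\leftrightarrow$ its piece'' for the elements of $Y$, both the order on $X$ and the order on $Y$ induced by $\prec_f$ are restrictions of the same total order $\prec_f$ on $C_f$, and $y$ is the identity at the level of pieces. Consequently, in $\adj{\prec}{H\langle X, Y\rangle}$ with rows indexed by $Y$ in $\prec_f$-order and columns by $X$ in $\prec_f$-order, the $|X|$ nonzero entries lie along a strictly increasing staircase; replacing $\prec_f$ with its mirror on $X$ simply reverses the staircase.

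A direct rank-division argument then finishes the proof. For any $q \geq 2$, the cells of a $q$-division visited by a strictly monotone staircase of 1-entries form a monotone path in the $q \times q$ cell grid and therefore occupy at most $2q-1 < q^2$ cells. Hence at least one cell of any such division is entirely zero; such a cell has a single distinct row and a single distinct column and fails the rank-$q$ condition. Therefore the grid rank of $\adj{\prec}{H \langle X, Y \rangle}$ is at most one, which is trivially bounded by any $f(k) \geq 1$. The assumption on the grid rank of the intersection graph of $C_f$ is in fact not used in this biadjacency estimate itself; it becomes pertinent only when this bound is combined with the bounds on the remaining adjacency blocks of $H$ via \cref{lem:union}. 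The only mildly delicate step---hardly an obstacle---is invoking general position carefully so that the identifications between $X$, $Y$, and $C_f$ are genuinely single-valued and $\prec_f$ is a true total order rather than a quasi-order with unintended ties, which is what makes the staircase strict.
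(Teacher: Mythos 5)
Your first structural observation is correct: each $d$-vertex of $X$ has exactly one neighbour in $Y$, so $\adj{\prec}{H \langle X,Y \rangle}$ is a partial permutation matrix. The gap is in the claim that the induced map $y\colon X\to Y$ is order-preserving. The order on $X$ is the positional order of the $d$-vertices along the side $B$ (this is how $\prec$ is built in \cref{lem:rooted-to-unit}: $d$-vertices are appended as they are met along $\partial f$ or along the boundary of the adjacent cell, whence the ``or its mirror''), whereas the order on $Y$ is $\prec_f$, i.e., each $s$-vertex is placed according to the point where its piece \emph{first} meets $\partial f$. These two orders agree only for the pieces \emph{rooted at} $B$. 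A piece that enters $f$ through an earlier side $B'$ and leaves through $B$ is positioned in $\prec_f$ by its entry point on $B'$, but its unique neighbour in $X$ sits at its exit point on $B$; the correspondence between entry points on $B'$ and exit points on $B$ is an essentially arbitrary permutation (two such pieces appear in opposite orders on the two sides exactly when they cross inside $f$). Permutation matrices have unbounded grid rank in general, so your conclusion that the grid rank is at most one fails: $n$ chords from the left side to the bottom side of a cell realize any permutation of $[n]$ in this biadjacency matrix.

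This is precisely why the hypothesis you discard is indispensable --- something you half-noticed when remarking that the grid-rank assumption on the clip ``is in fact not used.'' The paper's proof splits $Y$ into $Y_1$ (pieces touching $B$ and a side preceding $B$ along $\prec_f$) and $Y_2$ (the rest): the $Y_2$-block is exactly the monotone staircase you describe, while the $Y_1$-block is the permutation matrix associated with the permutation graph $H[Y_1]$, whose grid rank is bounded in terms of $k$ only because the circularly-ordered adjacency matrix of the $f$-clip has grid rank at most $k$ (via the equivalence between the grid rank of a permutation matrix and that of the ordered adjacency matrix of the corresponding permutation graph, from \cite{twin-width1,twin-width4}). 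A rank-$q$ division of the whole matrix then contains a rank-$\lceil q/2 \rceil$ division avoiding the staircase entries, so $q$ is bounded by a function of $k$. Your argument handles only the $Y_2$ part and cannot be repaired without invoking the hypothesis on the clip.
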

\begin{proof}
  Let $q$ be the maximum integer such that $\adj{\prec}{H \langle X,Y \rangle}$ has a rank-$q$ division, and let $\mathcal D$ be such a division.
  Let $Y_1 \subseteq Y$ be the $s$-vertices touching both $B$ and a side of $f$ that comes before $B$ along~$\prec_f$, and let $Y_2 = Y \setminus Y_1$.
  By definition of~$\prec_f$ (with the consistent tie-break), the 1 entries of $\adj{\prec}{H \langle X,Y_2 \rangle}$ form a strictly increasing sequence.
  Hence $\mathcal D$ contains a rank-$\lceil q/2 \rceil$ division $\mathcal D'$ without any 1 entry from $H\langle X,Y_2 \rangle$.

  The 1 entries of this division $\mathcal D'$ then comes from $s$-vertices \emph{not} rooted at $B$ adjacent to $d$-vertices on $B$.
  Besides, $\adj{\prec}{H \langle X,Y_1 \rangle}$ is the permutation matrix associated to the permutation graph $H[Y_1]$.
  It was shown that the permutation matrices $M$ associated to permutation graphs $P$ have bounded grid rank if and only if so do have the adjacency matrices of $P$ along the natural order (see~\cite[Section 6.1]{twin-width1} and \cite{twin-width4}).
  By assumption the permutation graph $H[Y_1]$ along $\prec_f$ has grid rank at most~$k$.
  Therefore $\gr(\adj{\prec}{H \langle X,Y_1 \rangle})$, and hence $q$, are bounded by a function of $k$.
\end{proof}

The $\Gamma$-bounding box of $S$ is the minimal rectangular region of $\Gamma$-cells containing $S$.
For any $\ell \geqslant 1$, a~\emph{$(1,\ell)$-segment} is a segment of length at least 1 and at most $\ell$.
We are now ready to show the main technical lemma of this section. 

\begin{lemma}\label{lem:rooted-to-unit}
  Let $\ell \geqslant 1$ and $k$ be fixed integers. 
  Let $\mathcal C$ be a class of graphs representable as the intersection graph of a set $S$ of $(1,\ell)$-segments such that every grid $\Gamma$ hitting $S$, and in general position with respect to $S$, satisfies the following.
  For every face $f$ of $\Gamma$, the adjacency matrix of the intersection graph of the $f$-clip of $S$ along the circular order of $f$ has grid rank at most~$k$.
  Then $\mathcal C$ has bounded twin-width.
\end{lemma}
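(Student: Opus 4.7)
My plan is to bound the twin-width of $G$ via an auxiliary graph $H := H(S,\Gamma)$ for a well-chosen grid $\Gamma$, and then to exhibit an ordering of $V(H)$ along which $\adj{\prec}{H}$ has bounded grid rank. First I would choose $\Gamma$ with cell size $1/2$, in general position with respect to $S$. Since every segment of $S$ has length at least~$1$ and the diagonal of a cell is $\sqrt{2}/2 < 1$, no segment fits inside a single cell, so $\Gamma$ hits $S$. Since all segment lengths are at most $\ell$, each segment is contained in $O(\ell^2)$ cells; thus $H$ is a $1$-subdivided $t$-splitting of $G$ with $t = O(\ell^2)$, and by~\cref{lem:splitting-bdtww} it suffices to prove $\tww(H)$ is bounded in $\ell, k$.

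Next I would define the order $\prec$ on $V(H)$ as follows. Assign each $d$-vertex on a shared boundary consistently to one of the two incident cells (e.g., the one that comes first in a row-by-row enumeration of the cells meeting the bounding box of $S$). For each cell $f$, let $V_f$ be the set of $s$-vertices of $H$ inside $f$ together with the $d$-vertices assigned to $f$; order $V_f$ by the circular order $\prec_f$, breaking ties using a fixed total order on $S$. Finally, list the blocks $V_f$ in the cell order, each internally sorted by $\prec_f$. Under $\prec$, the edges of $H$ split into \emph{intra-cell} edges (both endpoints in the same $V_f$) and \emph{inter-cell} edges (a $d$-vertex in $V_f$ joined to the unique $s$-vertex in an adjacent cell $f'$ containing the corresponding grid point). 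By hypothesis, the $s$-vs-$s$ part of each diagonal block $V_f\times V_f$ has grid rank at most $k$; applying~\cref{lem:d-vs-s} once per side of $f$ and combining via~\cref{lem:x-vs-yz} bounds the $s$-vs-$d$ part as well, so the full diagonal block has grid rank bounded in $k$. Moreover, the inter-cell part is sparse: each $d$-vertex contributes at most one inter-cell edge, and each $s$-vertex receives at most two such edges.

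The main obstacle is passing from these local bounds to a global one on $\gr(\adj{\prec}{H})$: block-diagonality alone does not preserve bounded grid rank (a block-diagonal of $1 \times 1$ identity blocks is the identity matrix, which has unbounded grid rank), and the inter-cell glue has to be handled simultaneously. My intended remedy is to mimic~\cref{prop:counterplanar,prop:augplanar}: build a planar scaffold from the bounded-box subgrid of $\Gamma$, equipped with one facial cycle per cell and a matching nicely aligned with these facial cycles that encodes the inter-cell $d$-to-$s$ connections. That scaffold has bounded twin-width via a counter-clockwise BFS order compatible with each $\prec_f$. The intra-cell structures, each of bounded grid rank by hypothesis, are inserted inside the corresponding faces and glued to the scaffold through the $d$-vertices; the resulting enriched binary structure has bounded twin-width by~\cref{prop:augplanar} and repeated applications of~\cref{lem:union}. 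Then $H$ is recovered from this enriched structure by a single FO-interpretation (marking $s$- and $d$-vertices with unary colors and reading off adjacency via short bounded patterns), so~\cref{thm:transduction} yields $\tww(H) = O_{\ell,k}(1)$, and~\cref{lem:splitting-bdtww} concludes that $\mathcal C$ has bounded twin-width.
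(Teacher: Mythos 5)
Your setup — the choice of a $\tfrac12$-grid to guarantee that $\Gamma$ hits $S$, the reduction to the splitting $H(S,\Gamma)$ via \cref{lem:splitting-bdtww}, the cell-by-cell order refined inside each cell by the circular order, and the treatment of the diagonal blocks via the hypothesis together with \cref{lem:d-vs-s,lem:x-vs-yz} — coincides with the paper's proof. But two gaps remain. The minor one: for the off-diagonal blocks (the $d$-vertices listed with one cell against the $s$-vertices of an adjacent cell), ``sparse'' is not an argument. A perfect matching is as sparse as possible and, badly ordered, is exactly the universal pattern $\mat{k}{0}$, which has unbounded grid rank. You need \cref{lem:d-vs-s} here as well — this is precisely why that lemma allows $X$ to be ordered by $\prec_f$ \emph{or its mirror}: a $d$-vertex is listed in the circular order of the first cell that encounters it, which from the adjacent cell's viewpoint is the mirrored circular order.

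The serious gap is the global assembly. Your motivating counterexample is false: the identity matrix has grid rank $1$ (in any $k$-division with $k\geq 2$ some off-diagonal cell is entirely $0$), and a block-diagonal matrix whose blocks have grid rank less than $k$ has grid rank $O(k)$. What is true is that this needs a proof, and the paper supplies a direct one: the block division of $\adj{\prec}{H}$ has only $5$ non-empty diagonals of blocks, so the dividing lines of any rank-$q'$ division can stab at most $11$ block rows and $11$ block columns (otherwise one extracts $6$ non-empty blocks in a strictly decreasing pattern, contradicting the $5$-diagonal structure); hence many consecutive dividing lines fall inside a single block, which then inherits a large rank division, contradicting the per-block bounds. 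Your proposed substitute — a planar scaffold à la \cref{prop:counterplanar,prop:augplanar} with the intra-cell structures ``inserted inside the faces'' and $H$ recovered by FO interpretation — does not work. The intra-cell intersection graphs are arbitrary dense graphs of bounded grid rank (they may contain large cliques); unlike the $K_{t,t}$-free setting there is no degeneracy order allowing you to encode them by bounded-size gadgets inside faces of a planar graph, so no FO interpretation from the scaffold can recreate them. If you instead keep them as extra binary relations and glue with \cref{lem:union}, you must first bound, along the single global order, the grid rank of the union of \emph{all} intra-cell relations — which is exactly the block-diagonal problem you set out to avoid — and applying \cref{lem:union} once per cell is not an option, since its bound degrades with each application and the number of cells is unbounded.
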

\begin{proof}
  Let $H=H(\Gamma,S)$ be the $\Gamma$-splitting of $S$, where $\Gamma$ satisfies the preconditions of the lemma. 
  We show that $H$ has bounded twin-width (in $\ell$ and $k$), and conclude with~\cref{lem:splitting-bdtww}.

  Let $\{C_{i,j}~:~i \in [N],~j \in [M]\}$ be the set of cells of the $\Gamma$-bounding box of $S$, indexed by their relative $x$- and $y$-coordinates.
  Let $\prec$ be the following total order of $V(H)$, given by listing the vertices of $V(H)$ from smallest to largest. 
  For $i$ going from 1 to $N$, for $j$ going from 1 to $M$, we list the $s$-vertices of the $C_{i,j}$-clip of $S$ and the $d$-vertices (that were not previously listed) on $\partial C_{i,j}$ along the circular order $\prec_{C_{i,j}}$, with tie-break imposed by an arbitrary but fixed order $\prec_S$ of $S$.
  More specifically, as one goes counter-clockwise along $\partial C_{i,j}$, starting at the top-left corner of $C_{i,j}$, every time a $d$-vertex is met, it is appended to the order, and immediately followed by its corresponding $s$-vertex in $C_{i,j}$.
  When several $d$-vertices ``occupy'' the same point of $\partial C_{i,j}$, their relative order is imposed by $\prec_S$. 
  We denote by $Z_{i,j}$ the vertices that are listed at iteration $(i,j)$. 

  We now show that $M = \adj{\prec}{H}$ has bounded grid rank.
  For every $i \in [N],~j \in [M]$, let $X_{i,j}$ (resp~$Y_{i,j}$) be the row part (resp.~column part) of $M$ corresponding to $Z_{i,j}$.
  Let $\mathcal D$ be the corresponding division of $M$, that is, with cells $X_{i,j} \cap Y_{i',j'}$ for $i,i' \in [N]$ and $j,j' \in [M]$.

  We observe that only 5 ``diagonals'' of zones $X_{i,j} \cap Y_{i',j'}$ are not full 0: the main diagonal, the first superdiagonal and subdiagonal, and the $M$-th superdiagonal and subdiagonal.
  We thus claim that a rank-$11(q+1)$ division $\mathcal D'=(\mathcal D'^R,\mathcal D'^C)$ of $M$ induces a rank-$q$ of a submatrix $X_{i,j} \cap Y_{i',j'}$.

  Indeed assume the dividing lines of $\mathcal D'^R$ stabs 12 distinct row parts of $\mathcal D$ (see~\cref{subsec:rd} for the corresponding definitions).
  Then there is a coarsening of $\mathcal D'^R$ of six parts, each entirely containing a row part of $\mathcal D$.
  Now no part of $\mathcal D'^C$ can be entirely contained in a column part of $\mathcal D$.
  Otherwise its intersection with at least one of these six parts lies in a~full 0 zone, contradicting that $\mathcal D'$ is a rank division.
  Thus the dividing lines of $\mathcal D'^C$ stabs 12 distinct column parts.
  Therefore there is a coarsening of $\mathcal D'^C$ (resp.~$\mathcal D'^R$) of twelve column parts $C_1, \ldots, C_{12}$ (row parts $R_1, \ldots, R_{12}$) such that every dividing line is in a distinct part of~$\mathcal D$.
  Removing every other dividing line (in row and column), we get a 6-division where each cell fully contains a non full 0 cell of~$\mathcal D$.
  In particular, the anti-diagonal of this 6-division (displaying 6 non-empty cells of $\mathcal D$ in a strictly decreasing pattern) is a contradiction to $\mathcal D$ having only 5 non-empty diagonals.

  Thus at most 11 row parts of $\mathcal D$ can be stabbed by the dividing lines of~$\mathcal D'^R$.
  Symmetrically the same reasoning shows that at most 11 column parts of $\mathcal D$ can be stabbed by the dividing lines of~$\mathcal D'^C$.
  Hence there is a zone $X_{i,j} \cap Y_{i',j'}$ of $\mathcal D$ stabbed by at least $q+1$ dividing lines of $\mathcal D'^R$ and at least $q+1$ dividing lines of~$\mathcal D'^C$.
  In particular, $X_{i,j} \cap Y_{i',j'}$ has a rank-$q$ division.

  By~\cref{lem:d-vs-s}, every (non-empty) non-diagonal cell $X_{i,j} \cap Y_{i',j'}$ has bounded grid rank.
  Indeed, $X_{i,j} \cap Y_{i',j'}$ deprived of its full 0 rows and columns corresponds to a biadjacency matrix between $s$-vertices of a cell $f$, and $d$-vertices of an adjacent cell $f'$, where the $s$-vertices are ordered by $\prec_f$, and the $d$-vertices are ordered by $\prec_{f'}$.
  (This is where it is important that the arbitrary tie-breaker $\prec_S$ is consistent across the different $\Gamma$-cells.)
  In particular, $\prec_{f'}$ orders the $d$-vertices along $\prec_f$ or its mirror, hence we can use~\cref{lem:d-vs-s}.

  We finally show that every diagonal cell $X_{i,j} \cap Y_{i,j}$ has bounded grid rank.
  Let $L, B, R, T$ be a partition of $Z_{i,j}$ into $s$-vertices rooted at, and $d$-vertices lying on, the left, bottom, right, and top sides of $C_{i,j}$, respectively.
  We shall just bound the grid rank of $\adj{\prec}{H \langle X,Y \rangle}$ for every $X,Y \in \{L,B,R,T\}$.
  This holds for $X=Y$ since then, $H \langle X,Y \rangle$ is an induced subgraph of a circularly-ordered adjacency matrix of a $\Gamma$-clip, augmented by rows and columns (the $d$-vertices on $X$) with 1 entries forming a strictly monotone sequence.

  For $X \neq Y$, assume without loss of generality, that $X$ comes before $Y$ in the circular order.
  Then observe that there is no edge between the $d$-vertices of $X$ and the $s$-vertices of $Y$.  
  Let $Y_s$ be the $s$-vertices of $Y$, and $Y_d$ be its $d$-vertices.
  $\adj{\prec}{H \langle X,Y_s \rangle}$ has bounded grid rank by the hypotheses of the current lemma, and $\adj{\prec}{H \langle X,Y_d \rangle}$ has bounded grid rank by~\cref{lem:d-vs-s}.
  Thus, by~\cref{lem:x-vs-yz}, $H \langle X,Y \rangle$ has bounded grid rank.
  
  Hence overall, the grid rank of $\adj{\prec}{H}$ is bounded (by a function of $\ell$ and $k$), and we conclude by \cref{thm:mixed-number-gen2}. 
\end{proof}

We say that a collection $S$ of segments is \emph{square-rooted} if $S$ is an $f$-clip (for $f$, a square grid cell) such that every segment of $S$ intersects $\partial f$ at least once.
We say that \emph{square-rooted} segments are \emph{axis-aligned}, if both the segments and the square grid cell are axis-parallel. 
In the next lemma, we show that the circular order is a ``good order'' for $H_t$-free square-rooted axis-aligned segments.
By~\cref{lem:rooted-to-unit} this implies that $H_t$-free axis-parallel unit segment (even $(1,\ell)$-segment) graphs have bounded twin-width. 

\begin{theorem}\label{thm:ht-free-uap}
  Let $\ell$ be a fixed integer.
  $H_t$-free axis-parallel $(1,\ell)$-segment graphs have bounded twin-width.
  Furthermore, if a geometric representation is given, $O(1)$-sequences can be computed in polynomial time.
\end{theorem}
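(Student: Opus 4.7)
The plan is to apply \cref{lem:rooted-to-unit} with a unit grid $\Gamma$ in general position with respect to the segment set $S$ (such a $\Gamma$ exists after a negligible perturbation, and since every $(1,\ell)$-segment has axis-parallel extent of length at least $1$, $\Gamma$ hits $S$). It then suffices to bound, for each $\Gamma$-cell $f$, the grid rank along the circular order $\prec_f$ of the intersection graph on the $f$-clip. Within a cell $f$ all clipped segments are axis-parallel, so the intersection graph of the clip is bipartite between the horizontals $H$ and verticals $V$, and only this biadjacency matters.

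First I would partition $H$ according to which vertical sides of $f$ each horizontal meets, into $H_L$, $H_R$, $H_{LR}$ (length at least $1$ in a unit cell forces every horizontal, in general position, to meet at least one of them), and partition $V$ symmetrically into $V_B, V_T, V_{BT}$. Using \cref{lem:union} and \cref{lem:x-vs-yz} iteratively, it is enough to bound the grid rank of each of the nine biadjacency blocks $G\langle H_\sigma, V_\tau\rangle$ along $\prec_f$. The five blocks involving $H_{LR}$ or $V_{BT}$ collapse to a biclique or to a single-threshold relation (for instance, $h \in H_{LR}$ meets $v \in V_B$ iff $y_h \leq T_v$), which is a semi-induced half-graph whose height is capped at $t-1$ by $H_t$-freeness, hence of bounded grid rank.

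The four remaining blocks (typified by $H_L \times V_B$) have conjunctive-threshold adjacency: $h \sim v$ iff $x_v \leq R_h$ \emph{and} $y_h \leq T_v$, with $H_L$ listed under $\prec_f$ by decreasing $y_h$ and $V_B$ by increasing $x_v$. To bound the grid rank, from any supposed large rank division I would apply \cref{thm:erdos-szekeres} twice to extract a long subsequence of rows on which $R_h$ is monotone together with a long subsequence of columns on which $T_v$ is monotone. In each of the four resulting monotonicity cases, both threshold conditions become simultaneously coordinate-monotone, so their conjunction produces a chain of nested neighborhoods; keeping only rows with strictly distinct neighborhoods yields a semi-induced $H_m$ whose height $m$ equals the subsequence length, contradicting $H_t$-freeness once $m \geq t$. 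Via Erdős--Szekeres this caps the size of any rank division of the block by a function of $t$.

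The hard part will be the uniform four-case monotonicity analysis, where one must certify that the conjunction genuinely produces an $H_m$ in each case rather than degenerating into a trivially ranked constant block; the recombination of the nine blocks via \cref{lem:union,lem:x-vs-yz} is then routine. For the algorithmic claim, given a geometric representation the grid $\Gamma$, the $\Gamma$-splitting, and the order $\prec_f$ are computable in polynomial time; one then invokes the algorithmic part of \cref{thm:mixed-number-gen2} to turn the rank bound into a contraction sequence of the splitting, and \cref{thm:transduction} to transfer it back to the original graph, yielding the desired $O(1)$-sequence in polynomial time.
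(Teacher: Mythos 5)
Your overall architecture is exactly the paper's: put down an axis-aligned unit grid in general position, reduce via \cref{lem:rooted-to-unit} to bounding the grid rank of each $f$-clip along the circular order, and use $H_t$-freeness to cap whatever half-graphs arise. Your finer six-class partition ($H_L,H_R,H_{LR},V_B,V_T,V_{BT}$) refines the paper's four rooting classes $L,B,R,T$, and your treatment of the single-threshold blocks (those involving $H_{LR}$ or $V_{BT}$) and of the parallel--parallel blocks is fine, modulo the small imprecision that the clip graph is not literally bipartite between horizontals and verticals (collinear overlapping parallel segments do intersect; they only contribute disjoint bicliques with disjoint projections, which is how the paper dispatches the $\{L,L\},\{L,R\},\dots$ blocks). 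Where you genuinely diverge is the perpendicular block of type $H_L\times V_B$: the paper does not use Erd\H{o}s--Szekeres at all. It observes that for $a\prec b$ in $L$ and $c\prec d$ in $B$, the conjunction of the two thresholds forces the implication ``$a\sim c$ and $b\sim d$ $\Rightarrow$ $b\sim c$'' with no monotonicity assumption on $R_h$ or $T_v$, so the pattern $\left(\begin{smallmatrix}1&?\\0&1\end{smallmatrix}\right)$ is forbidden outright; then \cref{thm:rd-to-up} converts large grid rank into a universal pattern $\mat{k}{s}$, the forbidden pattern kills $s\in\{0,1,\uparrow,\rightarrow\}$, and the surviving $s\in\{\downarrow,\leftarrow\}$ are half-graphs contradicting $H_t$-freeness.

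Your route has two concrete gaps. First, the bridge from ``the block has a large rank division'' to ``I may run the monotone-subsequence analysis'' is missing: picking one representative row per row part and applying \cref{thm:erdos-szekeres} does not preserve the rank-division property (the representatives may all have identical neighborhoods even though each cell has many distinct rows), so the extracted submatrix can be trivial and yields no contradiction. The clean bridge is precisely \cref{thm:rd-to-up}, which hands you a universal pattern, i.e.\ a submatrix in which every row and column is active in a controlled way; once you have that, the forbidden-pattern observation finishes without any Erd\H{o}s--Szekeres. Second, your four-case dichotomy (``either a semi-induced $H_m$ or a constant block'') fails in one case: with $H_L$ ordered by decreasing $y_h$ and $V_B$ by increasing $x_v$, if the free thresholds come out anti-aligned (say $R_h$ nonincreasing along the rows while $T_v$ is increasing along the columns), the row neighborhoods are intervals $[q_i,k_i]$ with \emph{both} endpoints nonincreasing --- sliding, pairwise distinct, not nested, hence no half-graph, and certainly not constant. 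That block still has bounded grid rank (it is the intersection of two monotone staircases), but not for the reason you give, and establishing this again requires relating it back to the original rank division. Both gaps are repairable, but the repair essentially is the paper's argument; I would drop the Erd\H{o}s--Szekeres detour and argue directly via the forbidden $2\times 2$ pattern and \cref{thm:rd-to-up}. The algorithmic paragraph at the end is correct.
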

\begin{proof}
  Let $S'$ be any finite set of axis-parallel $(1,\ell)$-segment whose intersection graph is $H_t$-free. 
  Let $\Gamma$ be an axis-parallel unit grid (axis-aligned with the segments) in general position with respect to $S'$.
  This can always be obtained by slightly moving $\Gamma$.
  As the segments of $S'$ are axis-parallel, the intersection graph $G$ of any $\Gamma$-clip of $S'$ is also $H_t$-free.
  (Observe that this is not necessarily true if the segments can have other slopes.) 
  We show that the circularly-ordered adjacency matrix of $G$ has bounded grid rank, and conclude with~\cref{lem:rooted-to-unit}.

  Let $S$ be the $\Gamma$-clip of $S'$ of square-rooted axis-aligned segments representing $G$.
  We materialize by $(0,1)(0,0)(1,0)(1,1)$ the square where $S$ is rooted.
  As the segments of $S'$ are axis-aligned and of length at least 1, every segment of $S$ intersects $(0,1)(0,0)(1,0)(1,1)$ at least once (thus applying~\cref{lem:rooted-to-unit} is possible).
  
  Let $L, B, R, T$ be a partition of $S$, that we identify to $V(G)$, such that every segment in $L$ is rooted at the left side $(0,1)(0,0)$, in $B$, rooted at the bottom side $(0,0)(1,0)$, in $R$, rooted at the right side $(1,0)(1,1)$, and in $T$, rooted at the top side $(1,1)(0,1)$.
  
  Let $\prec$ be the circular order of the cell $(0,1)(0,0)(1,0)(1,1)$ on $S$ (starting at $(0,1)$), with tie-break imposed by any total order over $S'$. 
  We assume, for the sake of contradiction, that $\adj{\prec}{G}$ has grid rank at least $4 f(\lceil \sqrt t \rceil)$, where $f$ is the function of~\cref{thm:rd-to-up}.
  This implies that $\adj{\prec}{G\langle X,Y \rangle}$ has grid rank at least $f(\lceil \sqrt t \rceil)$ for $X,Y \in \{L,B,R,T\}$.
  We first observe that $\adj{\prec}{G\langle X,Y \rangle}$ has grid rank less than 2 for $\{X,Y\} \in \{\{L,L\},\{L,R\},\{R,R\},\{B,B\},\{B,T\},$ $\{T,T\}\}$.
  Indeed, in that case the 1 entries of $\adj{\prec}{G\langle X,Y \rangle}$ can be partitioned into rectangles whose projections (horizontal and vertical) are pairwise disjoint.
  (Recall that the segments are axis-aligned, hence the corresponding graphs are disjoint unions of bicliques.)

  By symmetry, we can thus assume that $\adj{\prec}{G\langle B,L \rangle}$ has grid rank at least $f(\lceil \sqrt t \rceil)$.
  Observe that for every $a,b \in L$ and $c,d \in B$ such that $a \prec b$ and $c \prec d$, if $ac, bd \in E(G)$ then $ad \in E(G)$; see~\cref{fig:forced-BL}.
  Note also that this property holds even when $a$ and $b$ overlap, and/or when $c$ and $d$ overlap, regardless how two overlapping horizontal (resp.~vertical) segments are relatively ordered.
  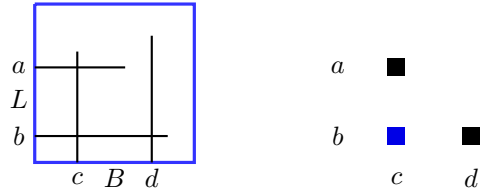
\begin{figure}[h!]
    \centering
    \begin{tikzpicture}[scale=0.7]
      \draw[opacity=0.8,very thick,blue] (0,3) -- (0,0) -- (3,0) -- (3,3) -- (0,3) ;
      \begin{scope}[thick]
      \draw (0.8,0) -- (0.8,2.1) ;
      \draw (2.2,0) -- (2.2,2.4) ;
      \draw (0,0.5) -- (2.5,0.5) ;
      \draw (0,1.8) -- (1.7,1.8) ;
      \end{scope}
      \node at (-0.3,1.8) {$a$} ;
      \node at (-0.3,0.5) {$b$} ;
      \node at (0.8,-0.3) {$c$} ;
      \node at (2.2,-0.3) {$d$} ;

      \node at (1.5,-0.3) {$B$} ;
      \node at (-0.3,1.2) {$L$} ;

      \begin{scope}[xshift=6cm]
      \node at (-0.3,1.8) {$a$} ;
      \node at (-0.3,0.5) {$b$} ;
      \node at (0.8,-0.3) {$c$} ;
      \node at (2.2,-0.3) {$d$} ;

      \node[fill] at (0.8,1.8) {} ;
      \node[fill] at (2.2,0.5) {} ;
      \node[fill,black!10!blue] at (0.8,0.5) {} ;
      \end{scope}
    \end{tikzpicture}
    \caption{Left: Four segments $a \prec b \in L, c \prec d \in B$ satisfy that if $a$ and $c$ cross, and $b$ and $d$ cross, then $b$ and $c$ also cross. Right: In $\adj{\prec}{G\langle B,L \rangle}$, the 1 entries at $(a,c)$ and $(b,d)$ imply a 1~entry at $(b,c)$.}
    \label{fig:forced-BL}
  \end{figure}

  For $s \in \{0,1,\uparrow,\rightarrow\}$, $\mat{2}{s}$ contains the pattern $\left( \begin{smallmatrix} 1&? \\ 0&1 \end{smallmatrix} \right)$ forbidden by~\cref{fig:forced-BL}, thus impossible in $\adj{\prec}{G\langle B,L \rangle}$. 
  
  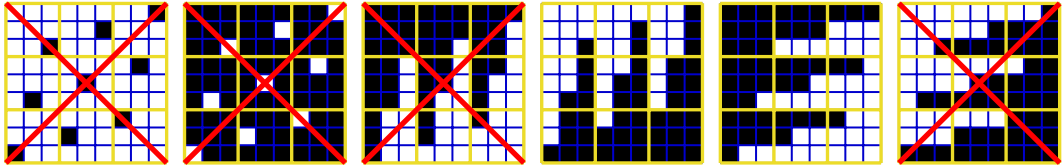
\begin{figure}[h!]
  \centering
    \begin{tikzpicture}[scale=.235]

   \foreach \symb/\b/\xsh/\ysh/\r in {{==}/0/-30/-10/100, {>}/0/-10/-10/100,{<}/0/0/-10/0,{<}/1/10/-10/0,{>}/1/20/-10/100}{
   \begin{scope}[xshift=\xsh cm,yshift=\ysh cm]     
    \foreach \i in {0,...,8}{
      \foreach \j in {0,...,8}{
        \pgfmathsetmacro{\ip}{\i+1}
        \pgfmathsetmacro{\jp}{\j+1}
        \pgfmathsetmacro{\col}{ifthenelse(\i == mod(\j,3)*3+floor(\j/3),"black",ifthenelse(\b==1,ifthenelse(\i \symb mod(\j,3)*3+floor(\j/3),"black","white"),ifthenelse(\j \symb mod(\i,3)*3+floor(\i/3),"black","white")))}
        \fill[\col] (\i,\j) -- (\i,\jp) -- (\ip,\jp) -- (\ip,\j) -- cycle;
      }
    }
    \draw[line width=0.75pt, scale=1, color=black!20!blue] (0, 0) grid (9, 9);
    \draw[line width=1.25pt, scale=3, color=black!10!yellow] (0, 0) grid (3, 3);

    \draw[red,line width=2pt,opacity=\r] (0,0) -- (9,9) ;
    \draw[red,line width=2pt,opacity=\r] (0,9) -- (9,0) ;
   \end{scope}
   }
   \foreach \symb/\b/\xsh/\ysh/\r in {{!=}/0/-20/-10/100}{
   \begin{scope}[xshift=\xsh cm,yshift=\ysh cm]     
    \foreach \i in {0,...,8}{
      \foreach \j in {0,...,8}{
        \pgfmathsetmacro{\ip}{\i+1}
        \pgfmathsetmacro{\jp}{\j+1}
        \pgfmathsetmacro{\col}{ifthenelse(\i == mod(\j,3)*3+floor(\j/3),"white","black")}
        \fill[\col] (\i,\j) -- (\i,\jp) -- (\ip,\jp) -- (\ip,\j) -- cycle;
      }
    }
    \draw[line width=0.75pt, scale=1, color=black!20!blue] (0, 0) grid (9, 9);
    \draw[line width=1.25pt, scale=3, color=black!10!yellow] (0, 0) grid (3, 3);

    \draw[red,line width=2pt,opacity=\r] (0,0) -- (9,9) ;
    \draw[red,line width=2pt,opacity=\r] (0,9) -- (9,0) ;
   \end{scope}
   }
    \end{tikzpicture}
    \caption{Four of the six universal patterns are not realizable with segments rooted at $B$ and $L$.}
    \label{fig:forb-BL-rooted}
  \end{figure}
  Thus, by~\cref{thm:rd-to-up}, it should be that $\mat{\lceil \sqrt k \rceil}{s}$ is a submatrix of $\adj{\prec}{G\langle B,L \rangle}$ for an $s \in \{\downarrow,\leftarrow\}$.
  This contradicts that $G$ is $H_t$-free.
  Finally one can observe that every step described in this section is effective, as well as \cref{thm:mixed-number-gen2,thm:transduction}.
\end{proof}

\section{Visibility graphs}\label{sec:visibility}

We now deal with visibility graphs.
These classes tend to \emph{not} be hereditary. 
This is why delineation was defined with ``hereditary closure of subclasses'' and not simply with ``hereditary subclasses.''

We show that visibility graphs of terrains without arbitrarily large ladders have bounded twin-width.
In stark contrast, we exhibit a subclass of visibility graphs of simple polygons whose hereditary closure has unbounded twin-width but is monadically dependent, and even monadically stable.
Finally, we show that the twin-width of simple polygons is bounded by a function of their independence number, which, combined with the FO model checking algorithm in~\cite{twin-width1}, generalizes a conjecture of Hliněný, Pokrývka, and Roy~\cite{hlinveny2019fo}, and shows in particular that \textsc{$k$-Independent Set} is FPT on visibility graphs of simple polygons.

\subsection{$H_t$-free visibility graphs of 1.5D terrains have bounded twin-width}\label{subsec:terrains}

Let $p_1, p_2, \ldots, p_n$ be the vertices of the terrain ordered by the $x$-monotone polygonal chain, i.e., such that $p_ip_{i+1}$ is an edge of the terrain for every $i \in [n-1]$.
We denote by $\prec$ the left-right order on the vertices of the terrain, i.e., $p_i \prec p_j$ holds whenever $i < j$. 
We extend the order to sets: $A \prec B$ holds when for every $a \in A$ and $b \in B$, $a \prec b$.
The following easy lemma is the sole required property to show that $H_t$-free terrain graphs have bounded twin-width.

\begin{lemma}[Order Claim~\cite{Ben-Moshe07}, see Figure~\ref{fig:orderClaim}]\label{lem:orderClaim}
If $a \prec b \prec c \prec d$, $a$ see $c$, and $b$ see $d$, then $a$ and $d$ also see each other. 
\end{lemma}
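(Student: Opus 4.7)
The plan is to reduce the visibility claim to pointwise comparisons at terrain vertices. Since both the terrain and segment $ad$ are piecewise linear, it suffices to prove that segment $ad$ lies weakly above every terrain vertex $v$ with $x_v \in [x_a,x_d]$; once this is checked, linearity on each terrain edge propagates the inequality to every point of the terrain in that $x$-range. For any such $v$ we have either $x_v \le x_c$, in which case $v$ lies on or below segment $ac$ because $a$ sees $c$, or $x_v \ge x_b$, in which case $v$ lies on or below segment $bd$ because $b$ sees $d$; since $x_b \le x_c$, at least one case always applies.

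Hence it suffices to establish the two ``line'' comparisons $ac(x) \le ad(x)$ on $[x_a,x_c]$ and $bd(x) \le ad(x)$ on $[x_b,x_d]$, where I use the names of segments for the corresponding affine functions. Both are equivalent to slope inequalities: writing $s_{pq}$ for the slope of segment $pq$, the first amounts to $s_{ac}\le s_{ad}$ (since $ac$ and $ad$ share the endpoint $a$) and the second symmetrically to $s_{ad}\le s_{bd}$ (since $bd$ and $ad$ share the endpoint $d$). To prove these I will decompose $s_{ad}$ as a convex combination of the three consecutive leg-slopes: with $\alpha = x_b-x_a$, $\beta = x_c-x_b$, $\gamma = x_d-x_c$, one has
\[
s_{ad} \;=\; \frac{\alpha\,s_{ab}+\beta\,s_{bc}+\gamma\,s_{cd}}{\alpha+\beta+\gamma},
\quad
s_{ac} \;=\; \frac{\alpha\,s_{ab}+\beta\,s_{bc}}{\alpha+\beta},
\quad
s_{bd} \;=\; \frac{\beta\,s_{bc}+\gamma\,s_{cd}}{\beta+\gamma}.
\]

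The hypothesis ``$b$ on or below segment $ac$'' unpacks as $s_{ab}\le s_{ac}$, which, because $s_{ac}$ is itself a convex combination of $s_{ab}$ and $s_{bc}$, forces $s_{ab}\le s_{bc}$. Analogously, ``$c$ on or below segment $bd$'' yields $s_{bc}\le s_{cd}$. With the monotone chain $s_{ab}\le s_{bc}\le s_{cd}$ in hand, the desired sandwich $s_{ac}\le s_{ad}\le s_{bd}$ is a short calculation on the weighted averages above. I do not foresee a genuine obstacle: the essential content of the Order Claim is precisely that monotonicity of consecutive slopes forced by the two given visibilities propagates, via the convex-combination representation, to the global slope $s_{ad}$; the only place requiring care is the bookkeeping that translates hypotheses and targets into slope form and correctly tracks the weights $\alpha,\beta,\gamma$.
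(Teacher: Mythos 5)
Your proof is correct. Note that the paper does not prove this lemma at all: it is quoted as folklore from the cited reference [Ben-Moshe07], so there is no in-paper argument to compare against. Your derivation is a complete and valid elementary one: the two visibility hypotheses place the terrain vertices $b$ and $c$ weakly below the chords $ac$ and $bd$ respectively, which via the convex-combination identities forces $s_{ab}\le s_{bc}\le s_{cd}$ and hence the sandwich $s_{ac}\le s_{ad}\le s_{bd}$; the case split $x_v\le x_c$ or $x_v\ge x_b$ (always applicable since $x_b\le x_c$) then pushes every intermediate terrain vertex weakly below $ad$, and piecewise linearity of the terrain between the vertices $a$ and $d$ finishes the argument. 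The only point worth making explicit is that $\alpha,\beta,\gamma>0$ because the chain is $x$-monotone, so the divisions in the weighted averages are legitimate.
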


\begin{figure}[h!]
\centering
\begin{tikzpicture}
\draw[thick] (-0.3,0.8) -- (0.2,1) -- (0.7,0.6) -- (1,-0.2) -- (1.3,0.2) -- (2.1,0.2) -- (2.4,-0.2) -- (3,0) -- (4.4,-0.1) -- (5,0.4) -- (5.5,0.2) -- (6,-0.4) -- (6.3,-0.4) -- (7.1,-0.2) -- (7.7,-0.4) -- (8,0) -- (8.4,0.5) -- (9,1) -- (9.7,0.6);

\node[fill,circle,inner sep=-0.03cm] (a) at (0.2,1) {} ;
\node at (0.2,0.8) {$a$} ;

\node[fill,circle,inner sep=-0.03cm] (b) at (2.1,0.2) {} ;
\node at (2.1,0) {$b$} ;

\node[fill,circle,inner sep=-0.03cm] (c) at (4.4,-0.1) {} ;
\node at (4.4,-0.3) {$c$} ;

\node[fill,circle,inner sep=-0.03cm] (c) at (4.4,-0.1) {} ;
\node at (4.4,-0.3) {$c$} ;

\node[fill,circle,inner sep=-0.03cm] (d) at (9,1) {} ;
\node at (9,0.8) {$d$} ;

\draw[thin,dashed] (a) -- (c) ;
\draw[thin,dashed] (b) -- (d) ;
\draw[dashed,blue,thick] (a) -- (d) ;

\begin{scope}[xshift=11.5cm,yshift=-0.4cm]
\node at (0,0) {$a$} ;
\node at (0.5,0) {$b$} ;

\node at (-0.4,0.6) {$c$} ;
\node at (-0.4,1.4) {$d$} ;

\node[fill] at (0,0.6) {} ;
\node[fill] at (0.5,1.4) {} ;
\node[fill,black!10!blue] at (0,1.4) {} ;
\end{scope}

\end{tikzpicture}
\caption{Left: The Order Claim. The dashed black edges imply the dashed blue edge. Right: In the thus ordered adjacency matrix, the 1 entries at $(a,c)$ and $(b,d)$ implies the 1 entry at $(a,d)$.}
\label{fig:orderClaim}
\end{figure}
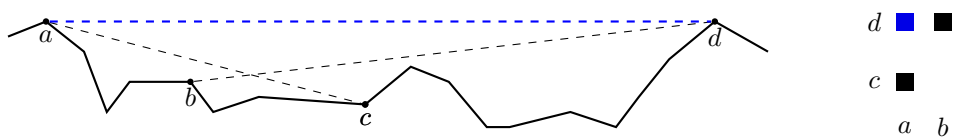

Let $G$ be the visibility graph of any terrain and $\prec$ the left-right order of its boundary.
The subsequent algorithm does not need a full geometric representation: the data of a left-right order in at least one representation is enough.

\begin{figure}[h!]
  \centering
    \begin{tikzpicture}[scale=.235]

   \foreach \symb/\b/\xsh/\ysh/\r in {{==}/0/-30/-10/100, {>}/0/-10/-10/0,{<}/0/0/-10/100,{<}/1/10/-10/0,{>}/1/20/-10/100}{
   \begin{scope}[xshift=\xsh cm,yshift=\ysh cm]     
    \foreach \i in {0,...,8}{
      \foreach \j in {0,...,8}{
        \pgfmathsetmacro{\ip}{\i+1}
        \pgfmathsetmacro{\jp}{\j+1}
        \pgfmathsetmacro{\col}{ifthenelse(\i == mod(\j,3)*3+floor(\j/3),"black",ifthenelse(\b==1,ifthenelse(\i \symb mod(\j,3)*3+floor(\j/3),"black","white"),ifthenelse(\j \symb mod(\i,3)*3+floor(\i/3),"black","white")))}
        \fill[\col] (\i,\j) -- (\i,\jp) -- (\ip,\jp) -- (\ip,\j) -- cycle;
      }
    }
    \draw[line width=0.75pt, scale=1, color=black!20!blue] (0, 0) grid (9, 9);
    \draw[line width=1.25pt, scale=3, color=black!10!yellow] (0, 0) grid (3, 3);

    \draw[red,line width=2pt,opacity=\r] (0,0) -- (9,9) ;
    \draw[red,line width=2pt,opacity=\r] (0,9) -- (9,0) ;
   \end{scope}
   }
   \foreach \symb/\b/\xsh/\ysh/\r in {{!=}/0/-20/-10/100}{
   \begin{scope}[xshift=\xsh cm,yshift=\ysh cm]     
    \foreach \i in {0,...,8}{
      \foreach \j in {0,...,8}{
        \pgfmathsetmacro{\ip}{\i+1}
        \pgfmathsetmacro{\jp}{\j+1}
        \pgfmathsetmacro{\col}{ifthenelse(\i == mod(\j,3)*3+floor(\j/3),"white","black")}
        \fill[\col] (\i,\j) -- (\i,\jp) -- (\ip,\jp) -- (\ip,\j) -- cycle;
      }
    }
    \draw[line width=0.75pt, scale=1, color=black!20!blue] (0, 0) grid (9, 9);
    \draw[line width=1.25pt, scale=3, color=black!10!yellow] (0, 0) grid (3, 3);

    \draw[red,line width=2pt,opacity=\r] (0,0) -- (9,9) ;
    \draw[red,line width=2pt,opacity=\r] (0,9) -- (9,0) ;
   \end{scope}
   }
    \end{tikzpicture}
    \caption{Four of the six universal patterns contradict the Order Claim.}
    \label{fig:forb-LNk-terrain}
  \end{figure}
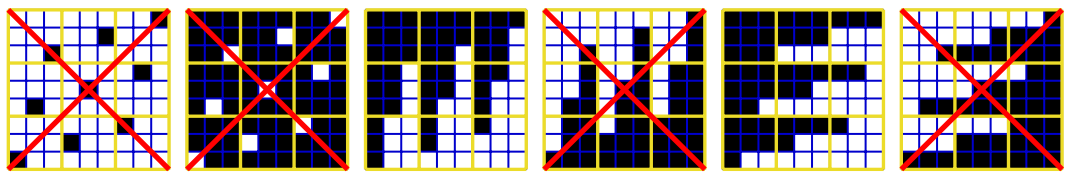

\begin{lemma}\label{lem:more-forb}
  Let $A, B \subseteq V(G)$ be such that $A \prec B$.
  $\adj{\prec}{G \langle A,B \rangle}$ cannot be equal to $\mat{2}{0}$ nor to $\mat{2}{1}$.
\end{lemma}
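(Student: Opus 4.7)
The plan is to convert the Order Claim into a matrix-level forbidden pattern and exhibit it inside both $\mat{2}{0}$ and $\mat{2}{1}$. The whole argument is essentially index bookkeeping; no new idea is required.

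First, I would translate Lemma~\ref{lem:orderClaim} into a statement about entries of $\adj{\prec}{G \langle A,B \rangle}$. Pick any two columns $a_{i_1} \prec a_{i_2}$ of $A$ and two rows $b_{j_1} \prec b_{j_2}$ of $B$; since $A \prec B$, the quadruple $a_{i_1} \prec a_{i_2} \prec b_{j_1} \prec b_{j_2}$ satisfies the hypothesis of the Order Claim with $(a,b,c,d) = (a_{i_1}, a_{i_2}, b_{j_1}, b_{j_2})$. It follows that whenever the entries at $(b_{j_1},a_{i_1})$ and $(b_{j_2},a_{i_2})$ are both $1$, the entry at $(b_{j_2},a_{i_1})$ is also $1$. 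Using the bottom-left-is-$(1,1)$ convention of the matrix, this reads: in any $2 \times 2$ submatrix cut out by two $\prec$-ordered columns and two $\prec$-ordered rows of $\adj{\prec}{G \langle A,B \rangle}$, a $1$ at the bottom-left together with a $1$ at the top-right forces a $1$ at the top-left.

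Second, I would locate this forbidden configuration inside each of $\mat{2}{0}$ and $\mat{2}{1}$. Unfolding the definition, the four $1$-entries of $\mat{2}{0}$ sit at (column, row) coordinates $(1,1), (3,2), (2,3), (4,4)$. Restricting to rows $1,2$ and columns $1,3$, the induced $2 \times 2$ block has bottom-left $=1$ (at $(1,1)$), top-right $=1$ (at $(3,2)$), and top-left $=0$ (position $(1,2)$ is empty in $\mat{2}{0}$), which is precisely the forbidden pattern. For $\mat{2}{1}$, which is obtained from $\mat{2}{0}$ by flipping every bit, I would instead pick rows $1,2$ and columns $3,4$: the bottom-left entry at $(3,1)$ equals $1$, the top-right entry at $(4,2)$ equals $1$, and the top-left entry at $(3,2)$ equals $0$ (because $\mat{2}{0}$ had its $1$ there), again contradicting the constraint derived from the Order Claim.

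The only delicate point is bookkeeping: rows of $\adj{\prec}{G \langle A,B \rangle}$ are indexed by $B$ and columns by $A$, so one must apply the Order Claim to $(a_{i_1}, a_{i_2}, b_{j_1}, b_{j_2})$ (using $A \prec B$) rather than in a naive ``column-then-row'' fashion. Beyond keeping the indexing conventions straight, no real obstacle is anticipated, and together with the four patterns already ruled out in Figure~\ref{fig:forb-LNk-terrain}, this closes the last gap needed to exclude every universal pattern via~\cref{thm:canonicaler}.
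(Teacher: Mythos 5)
Your proposal is correct and follows exactly the paper's (one-line) argument: the Order Claim forbids any $2\times 2$ submatrix on $\prec$-ordered columns from $A$ and rows from $B$ with bottom-left $=1$, top-right $=1$, top-left $=0$, and your coordinate bookkeeping correctly exhibits this pattern in both $\mat{2}{0}$ (rows $1,2$, columns $1,3$) and $\mat{2}{1}$ (rows $1,2$, columns $3,4$). The paper simply points to \cref{fig:forb-LNk-terrain} for the same verification.
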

\begin{proof}
  These matrices have the pattern forbidden by the Order Claim; see~\cref{fig:forb-LNk-terrain}.
\end{proof}

\begin{theorem}\label{thm:ht-free-terrains}
  $H_t$-free visibility graphs of 1.5D terrains have bounded twin-width.
  Furthermore, $O(1)$-sequences can be computed in polynomial time on such graphs given with a~left-right ordering $\prec$.
\end{theorem}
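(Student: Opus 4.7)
The plan is to apply \cref{thm:canonicaler} directly to the adjacency matrix $M := \adj{\prec}{G}$, where $\prec$ is the supplied left-right ordering. For a threshold $k = k(t)$ to be chosen in a moment, the algorithm either returns a $g(k)$-sequence of $G$ (so $\tww(G) \le g(k)$ and we are done), or extracts an off-diagonal copy of some $\mat{k}{s}$, $s \in \{0,1,\uparrow,\downarrow,\leftarrow,\rightarrow\}$, in $M$. Since $M$ is symmetric we may assume this copy lies above the diagonal, so its row set $A$ and column set $B$ satisfy $A \prec B$. The plan is then to rule out, for $k$ sufficiently large in $t$, every one of the six choices of $s$, which forces the favourable alternative.

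The six cases split into two families, each handled by a tool already at our disposal. For $s \in \{\uparrow,\downarrow,\leftarrow,\rightarrow\}$, a direct inspection of the construction in \cref{subsec:rd} (see also the four rightmost pictures of \cref{fig:canonicaler}) shows that $\mat{k}{s}$ semi-induces a half-graph of height polynomial in $k$: the monotone ``filling'' rule turns the isolated $1$-entries of $\mat{k}{0}$ into a ladder pattern. Hence for $k$ polynomial in $t$, $G\langle A,B\rangle$ semi-induces $H_t$, contradicting the $H_t$-freeness hypothesis. For $s \in \{0,1\}$ and $k \ge 2$, it suffices to observe that $\mat{k}{s}$ contains $\mat{2}{s}$ as a submatrix (pick any two row parts and any two column parts of the $k$-division of $\mat{k}{s}$, restricted to the relevant rows and columns); this exhibits $A' \subseteq A$, $B' \subseteq B$ with $A' \prec B'$ and $\adj{\prec}{G\langle A', B'\rangle} = \mat{2}{s}$, in direct contradiction with \cref{lem:more-forb}.

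Choosing $k$ as a suitable polynomial in $t$ (large enough to force $H_t$ in the first case, and at least $2$ in the second) therefore forces the first outcome of \cref{thm:canonicaler}, yielding $\tww(G) = O_t(1)$. For the ``furthermore'' clause, \cref{thm:canonicaler} runs in time $f(k) \cdot n^{O(1)}$, which is polynomial for fixed~$t$ once $\prec$ is supplied; this gives polynomial-time $O(1)$-sequences as required. I expect no real obstacle: the only step that needs more than quoting previous results is verifying that the four directional patterns carry a large semi-induced half-graph, and this is a routine unpacking of how the filling operation interacts with the permutation $\mat{k}{0}$. All the geometric input of the theorem is concentrated in \cref{lem:more-forb}, which is itself a one-line corollary of the Order Claim (\cref{lem:orderClaim}).
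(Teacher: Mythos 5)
Your proposal is correct and follows essentially the same route as the paper: run \cref{thm:canonicaler} on $\adj{\prec}{G}$ with a threshold depending on $t$, exclude $s\in\{\uparrow,\downarrow,\leftarrow,\rightarrow\}$ by $H_t$-freeness and $s\in\{0,1\}$ by \cref{lem:more-forb}, leaving only the contraction-sequence outcome. The paper's only difference is the explicit choice $k=\max(\lceil\sqrt t\rceil,2)$, exploiting that $\mat{k}{s}$ is $k^2\times k^2$; your ``suitable polynomial in $t$'' covers this.
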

\begin{proof}
  We run~\cref{thm:canonicaler} with $\adj{\prec}{G}$ (where $G$ is drawn from this class) and integer $k := \max(\lceil \sqrt t \rceil,2)$.
  We claim that the algorithm cannot return a universal pattern $\mat{k}{s}$ for some $s \in \{0,1,\uparrow,\downarrow,\leftarrow,\rightarrow\}$.
  Indeed by~\cref{lem:more-forb}, $s$ cannot be in $\{0,1\}$, and by $H_t$-freeness $s$~cannot be in $\{\uparrow,\downarrow,\leftarrow,\rightarrow\}$.
  The only option left is that a contraction sequence of $G$ is returned in time $f(t) n^{O(1)}$ witnessing that $\tww(G) \leqslant g(t)$, for some computable functions~$f, g$.  
\end{proof}

\begin{theorem}\label{thm:fpt-alg-terrain}
  \textsc{$k$-Ladder} and \textsc{$k$-Biclique} are FPT in visibility graphs of 1.5D terrains given with a~left-right ordering.
\end{theorem}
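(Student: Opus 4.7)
The plan is to derive this as a direct application of the win-win framework (\cref{thm:win-win}) to the class $\mathcal C$ of visibility graphs of 1.5D terrains, using \cref{thm:ht-free-terrains} as the structural ingredient. A geometric representation of a terrain trivially yields a left-right ordering $\prec$ in polynomial time, so we may treat $\mathcal C$ as a class of graphs on which $O(1)$-sequences can be computed whenever the graph is $H_t$-free for some fixed $t$. The two parameters of interest, $\lambda$ and $\beta$, are both effectively FO definable: the sentences $\varphi_k$ witnessing $\lambda(G) \geqslant k$ and $\beta(G) \geqslant k$ are exactly the \textsc{$k$-Ladder} and \textsc{$k$-Biclique} sentences written in~\cref{subsec:prelim-fmt}, and producing them from~$k$ is trivial.

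For \textsc{$k$-Ladder}, I will establish $\tww \sqsubseteq^{\mathcal C}_{\text{eff}} \lambda$. If $\lambda(G) < t$, then by definition $G$ is $H_t$-free, so \cref{thm:ht-free-terrains} produces, from the left-right ordering, a $g(t)$-sequence of $G$ in polynomial time, witnessing $\tww(G) \leqslant g(t)$. Specialising to $t = \lambda(G)+1$ gives $\tww(G) \leqslant g(\lambda(G)+1)$ with the required effectivity. Plugging this into \cref{thm:win-win} yields the FPT algorithm.

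For \textsc{$k$-Biclique}, I will reduce to the ladder case via the universal inequality $\lambda \leqslant 2\beta + 1$. The key observation, which I plan to verify in one short line, is that for any $s$, the half-graph $H_{2s}$ with vertices $a_1,\dots,a_{2s},b_1,\dots,b_{2s}$ contains the semi-induced biclique $K_{s,s}$ on $\{a_1,\dots,a_s\}$ and $\{b_{s+1},\dots,b_{2s}\}$ (every $i \leqslant s < j$ satisfies $i \leqslant j$, so every such $a_ib_j$ is an edge). Thus every $K_{s+1,s+1}$-free graph is $H_{2s+2}$-free, which gives $\lambda(G) \leqslant 2\beta(G)+1$. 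Composing with the previous bound, $\tww(G) \leqslant g(2\beta(G)+2)$, again effectively, and \cref{thm:win-win} finishes \textsc{$k$-Biclique}.

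I do not anticipate a serious obstacle: the structural heavy lifting is already done in \cref{thm:ht-free-terrains}, and the bipartite-Ramsey-style inclusion $K_{s,s} \subseteq H_{2s}$ as semi-induced subgraphs is immediate. The only point requiring a moment of care is to ensure that the effectivity in $\tww \sqsubseteq^{\mathcal C}_{\text{eff}} \beta$ is preserved through the $\lambda$-detour, which it is because the algorithm producing the contraction sequence depends only on a bound on $t$ and the left-right ordering, both of which are available.
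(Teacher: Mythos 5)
Your proposal is correct and follows essentially the same route as the paper: the paper's proof consists exactly of noting that \cref{thm:ht-free-terrains} gives $\tww \sqsubseteq_{\text{eff}} \lambda$ (hence $\tww \sqsubseteq_{\text{eff}} \beta$, via the same observation that a semi-induced $H_{2s}$ contains a semi-induced $K_{s,s}$) and then invoking \cref{thm:win-win}. You have merely spelled out the details the paper leaves implicit, and all of them check out.
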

\begin{proof}
  \cref{thm:ht-free-terrains} shows that $\tww \sqsubseteq_{\text{eff}} \lambda$ (hence, $\tww \sqsubseteq_{\text{eff}} \beta$) on visibility graphs of 1.5D terrains given with a left-right order.
  We thus conclude by~\cref{thm:win-win}.
\end{proof}

One can observe that visibility graphs of 1.5D terrains have, in general, unbounded twin-width.
Indeed, it is a relatively easy exercise to build, for every integer $t$, a terrain on $4t^2+O(1)$ vertices that contains a semi-induced $T_k$, where $3k^2$ vertices are used for the tripartition $(A,B,C)$ of $T_k$, and $k^2$ vertices are used to strategically block the visibility of $B$ toward $C$. 

\subsection{Visibility graphs of simple polygons are not delineated}\label{subsec:polygons}

We exhibit a family of simple polygons whose visibility graphs are transduction equivalent to the class of all bipartite subcubic graphs $\mathcal B_{\leqslant 3}$, and conclude by~\cref{lem:not-delineated}.
This shows something a priori stronger than that visibility graphs of simple polygons are not delineated. 

\begin{theorem}
  There is a class $\mathcal P$ of visibility graphs of special simple polygons, such that
  \begin{compactitem}
  \item(i) there is a first-order transduction $\mathsf T_1$ satisfying $\mathsf T_1(\mathcal P) \supset \mathcal B_{\leqslant 3}$, and
  \item(ii) there is a first-order transduction $\mathsf T_2$ satisfying $\mathsf T_2(\mathcal \mathcal \mathcal B_{\leqslant 3}) \supset \mathcal P$.
  \end{compactitem}
  In particular, the class $\mathcal P$, and its hereditary closure, have unbounded twin-width --due to $(i)$-- and are monadically dependent, and even monadically stable --due to $(ii)$.
\end{theorem}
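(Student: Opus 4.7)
The plan is to exhibit an explicit geometric encoding $G \mapsto P(G)$ that turns a bipartite subcubic graph $G=(A\cup B,E)$ into a simple polygon $P(G)$ whose visibility graph semi-induces $G$ on the vertex sets representing $A$ and $B$; I would take $\mathcal{P}$ to be the class of visibility graphs of all such polygons. A natural candidate is a ``channel'' polygon: an elongated (rectilinear or staircase) strip with the vertices of $A$ placed at prescribed reflex/convex corners along the bottom boundary, the vertices of $B$ placed along the top boundary, and small constant-size blocker gadgets (notches on the boundary) inserted so that each pair $(a,b)$ has an unobstructed line of sight precisely when $ab\in E$. Because $G$ is subcubic, one only needs $O(|E|)=O(|V(G)|)$ such gadgets, so $|V(P(G))|=O(|V(G)|)$, and the layout can be made \emph{modular}: the visibility between two polygon vertices is determined by a bounded-size combinatorial neighborhood in $G$ (the gadgets they live in and the $O(1)$ edges of $G$ that specify those gadgets' interaction).

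For the transduction $\mathsf{T}_1$, I would mark with two unary predicates $U_A$ and $U_B$ the polygon vertices representing $A$ and $B$, restrict the output domain to $U_A\cup U_B$ with the formula $\nu(x)\equiv U_A(x)\vee U_B(x)$, and keep the edge formula $\varphi(x,y)\equiv U_A(x)\wedge U_B(y)\wedge E(x,y)$. By construction this recovers $G$, so $\mathsf{T}_1(\mathcal{P})\supseteq \mathcal{B}_{\leq 3}$.

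For the transduction $\mathsf{T}_2$, I would work on a padded version $G'\in\mathcal{B}_{\leq 3}$ of $G$ (obtained by a disjoint union of $G$ with a subcubic ``scratch'' graph of size $\Theta(|V(G)|)$; this is still in $\mathcal{B}_{\leq 3}$), so that $|V(G')|\geq |V(P(G))|$. Using a constant number of unary predicates, I label each vertex of $G'$ with its role in $P(G)$: which gadget it encodes, which position within the gadget it occupies, and which copy of $a$ or $b$ it represents. Thanks to the modularity of the construction, whether two polygon vertices see each other is equivalent to a constant-depth first-order query in $G'$ enriched with these labels (checking local gadget types and looking up at most $O(1)$ adjacencies of $G'$). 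This FO formula defines the visibility graph of $P(G)$ inside $G'$, yielding $\mathsf{T}_2(\mathcal{B}_{\leq 3})\supseteq \mathcal{P}$.

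The main obstacle is clearly the geometric construction: designing $P(G)$ so that it is genuinely a simple polygon (no boundary self-intersection, no unintended obstructions), keeps a linear number of vertices, and is \emph{locally} describable so that the visibility relation is FO-definable from $G$. Standard polygon templates used in NP-hardness reductions for visibility problems give a starting point, but care is needed to separate the ``channels'' enough that no long-range feature of the polygon can simultaneously block visibility between many unrelated pairs — without modularity, FO definability of $\mathsf{T}_2$ breaks down.

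Once both transductions are in place, the theorem follows from Lemma~\ref{lem:not-delineated} applied to $\mathcal{P}$: by Theorem~\ref{thm:transduction} together with Theorem~\ref{thm:unbounded-tww}, the hereditary closure of $\mathcal{P}$ has unbounded twin-width, while it is monadically stable as a transduction of the monadically stable class $\mathcal{B}_{\leq 3}$. Hence visibility graphs of simple polygons are not delineated.
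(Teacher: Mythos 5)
Your high-level strategy (encode every bipartite subcubic graph as a polygon, give the two transductions, and conclude via Lemma~\ref{lem:not-delineated} together with \cref{thm:transduction,thm:unbounded-tww}) is the same as the paper's, but both key steps are left as sketches and each hides a genuine problem. First, the geometric construction. You want each $a\in A$ to be a \emph{single} polygon vertex whose visible set among the $B$-vertices is exactly its (up to three, possibly non-contiguous) neighbourhood. From the bottom of a spike or notch, a vertex sees a contiguous arc of the opposite chain; to delete the middle of that arc you need a blocker hanging between two $b$'s, and such a blocker also censors the visibility of every other $a'$ whose cone passes through that gap. This interference is exactly why the paper does \emph{not} realize $G$ directly: it splits each $a_i$ into up to three spike vertices $p_i,p_i',p_i''$, each seeing exactly \emph{one} $q_j$, joined by a short path of auxiliary $d$-vertices; consequently $\mathsf T_1$ cannot be the trivial restriction you propose, but must re-aggregate the three copies by an FO formula that follows that path. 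Until you exhibit a polygon in which a single vertex sees an arbitrary prescribed subset of three ceiling vertices without side effects on other pairs, item $(i)$ is not established.

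Second, your $\mathsf T_2$ is broken as stated. Transductions here are non-copying, so padding is indeed needed, but padding by a \emph{disjoint} scratch graph cannot work: a constant number of unary predicates can record only the \emph{type} of a gadget vertex, not \emph{which} of the $\Theta(n)$ edges of $G$ it encodes, and an isolated scratch vertex has no FO-definable link to the endpoints of ``its'' edge, so no first-order formula over $G'$ can decide whether that vertex should be adjacent to the polygon vertex representing $a$ or to the one representing $a'$. The fix is to choose the input structure so that every $G$-dependent adjacency is already an edge of a subcubic graph: the paper arranges its polygon so that the visibility graph is exactly a bipartite subcubic graph plus a clique on a marked set $D\cup Q$, and $\mathsf T_2$ merely adds the clique between vertices carrying the mark. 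You would need an analogous global decomposition of your polygon's visibility graph --- including all gadget-to-gadget and gadget-to-terminal visibilities --- which your appeal to ``modularity'' asserts but does not prove.
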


\begin{proof}
  The class $\mathcal P$ consists of each image of a bipartite subcubic graph $G=(A,B,E(G))$ by the following transformation $\Pi$.
  Let $A=\{a_1,\ldots,a_s\}$ and $B=\{b_1,\ldots,b_t\}$.
  For each vertex $a_i \in A$, we create the $2d(a_i)+1 \in \{1,3,5,7\}$ first vertices of the list: $d_i, p_i, d'_i, p'_i, d''_i, p''_i, d'''_i$.
  For each vertex $b_j \in B$, we create a vertex $q_j$.
  Let $D$ be the set of vertices of the form $d_i, d'_i, d''_i, d'''_i$, let $P$ be the set of vertices of the form $p_i, p'_i, p''_i$, and $Q = \{q_j~:~b_j \in B\}$.
  
  We make $p_i$ (resp.~$p'_i$, resp.~$p''_i$) adjacent to $q_j$ such that $b_j$ is the neighbor of $a_i$ with largest (resp. second largest, resp.~third largest) index (if they exist).
  We make $d_i$ adjacent to $p_i$, $d'_i$ adjacent to $p_i$ and $p'_i$, $d''_i$ adjacent to $p'_i$ and $p''_i$, and $d'''_i$ adjacent to $p''_i$ (if they exist).
  Finally we turn $D \cup Q$ into a clique.
  This finishes the construction of $\Pi(G)$.
  \cref{fig:class-p} shows how to represent, for every $G \in \mathcal B_{\leqslant 3}$, the graph $\Pi(G)$ as the visibility graph of a simple polygon.
  
\begin{figure}[h!]
  \centering
  \begin{tikzpicture}[point/.style={circle,fill,inner sep=0.03cm}]
    \def\s{1}
    \def\h{1.5}
    \foreach \i in {1,...,4}{
      \node[draw,circle,inner sep=0.03cm] (a\i) at (\i * \s, 0) {$a_\i$} ;
      \node[draw,circle,inner sep=0.03cm] (b\i) at (\i * \s, \h) {$b_\i$} ;
    }
    \foreach \i/\j in {a1/b1,a1/b3,a1/b4,a2/b2,a2/b3,a3/b1,a3/b4,a4/b2,a4/b4}{
      \draw (\i) -- (\j) ;
    }
    \def\z{-0.5}
    \def\t{1.5}
    \pgfmathsetmacro\y{- \z * \t / \h}   
    \def\e{0.1}
    \begin{scope}[xshift=6cm]
      \foreach \i in {1,...,4}{
        \node[point] (pb\i) at (\t * \i,\h) {} ;
        \node at (\t * \i,\h + 0.25) {$q_\i$} ;
      }
      \draw (\t - 1.5 * \e,0) -- (\t - 3 * \y,\z) -- (\t - 0.5 * \e,0) -- (\t - 2 * \y,\z) -- (\t + 0.5 * \e,0) -- (\t - 0 * \y + 3 * \e * \y,\z) -- (\t + 1.5 * \e,0) --
      (2 * \t - \e,0) -- (2 * \t - \y,\z) -- (2 * \t,0) -- (2 * \t + 1.5 * \e * \y,\z) -- (2 * \t + \e,0) --
      (3 * \t - \e,0) -- (3 * \t - \y,\z) -- (3 * \t,0) -- (3 * \t + 2 * \y,\z) -- (3 * \t + \e,0) --
      (4 * \t - \e,0) -- (4 * \t - \e /2,\z) -- (4 * \t,0) -- (4 * \t + 2 * \y,\z) -- (4 * \t + \e,0) -- (pb4) -- (pb3) -- (pb2) -- (pb1) -- (\t - 1.5 * \e,0) ;
    \end{scope}
  \end{tikzpicture}
  \caption{The transformation $\Pi$ and how its images are representable by simple polygons.}
  \label{fig:class-p}
\end{figure}
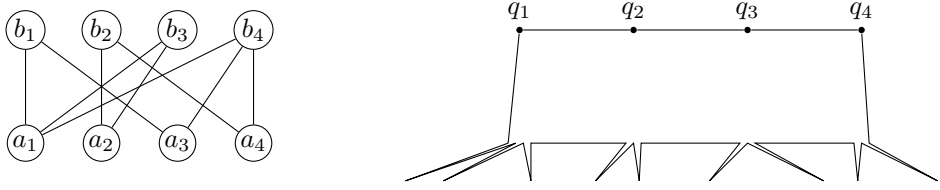

\medskip

$(i)$. We now prove the first item, which in particular implies that $\mathcal P$ has unbounded twin-width by~\cref{thm:transduction} and the second item of~\cref{thm:unbounded-tww}.
Let $G$ be any graph in $\mathcal B_{\leqslant 3}$ and $\Pi(G) \in \mathcal P$.
we want to design an FO transduction $\mathsf T_1$ that undoes $\Pi$.

The transduction $\mathsf T_1$ adds four unary relations $A, A', B, C$ to its input graph $\Pi(G)$.
In the run we are interested in, they form a partition of the vertex set, where $B$ colors the vertices $q_j$, $A$ colors the vertices $p_i$, $A'$ colors the vertices $p'_i$ and $p''_i$, and $C$ colors the remaining vertices (those in $D$); see~\cref{fig:unary-interp-p}.
The first-order unary formula that defines the vertex set is $\nu_1(x) = A(x) \lor B(x)$.
The new edge set is defined by:
$$\varphi_1(x,y) = A(x) \land B(y) \land \Bigg(E(x,y)$$
$$\lor \bigg(\exists z_1 \exists x_1~C(z_1) \land A'(x_1) \land E(x,z_1) \land E(z_1,x_1) \land \Big(E(x_1,y) $$
$$\lor \big(\exists z_2 \exists x_2~C(z_1) \land A'(x_2) \land E(x_1,z_2) \land E(z_2,x_2) \land E(x_2,y)\big)\Big)\bigg)\Bigg).$$
$\mathsf T_1$ indeed creates the bipartite graph $G \in \mathcal B_{\leqslant 3}$ back, since vertices in $C$ are adjacent to at most two vertices in $A \cup A'$ corresponding to the same vertex in $G$. 

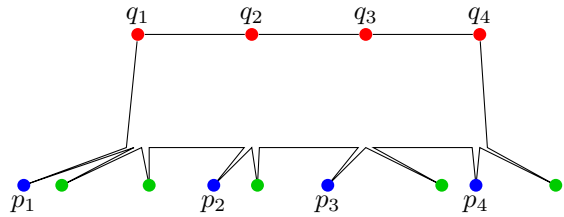
\begin{figure}[h!]
  \centering
  \begin{tikzpicture}[point/.style={circle,fill,inner sep=0.06cm}]
    \def\z{-0.5}
    \def\h{1.5}
    \def\t{1.5}
    \pgfmathsetmacro\y{- \z * \t / \h}   
    \def\e{0.1}
    \begin{scope}
      \foreach \i in {1,...,4}{
        \node[color=red,point] (pb\i) at (\t * \i,\h) {} ;
        \node at (\t * \i,\h + 0.25) {$q_\i$} ;
      }
      \draw (\t - 1.5 * \e,0) -- (\t - 3 * \y,\z) -- (\t - 0.5 * \e,0) -- (\t - 2 * \y,\z) -- (\t + 0.5 * \e,0) -- (\t - 0 * \y + 3 * \e * \y,\z) -- (\t + 1.5 * \e,0) --
      (2 * \t - \e,0) -- (2 * \t - \y,\z) -- (2 * \t,0) -- (2 * \t + 1.5 * \e * \y,\z) -- (2 * \t + \e,0) --
      (3 * \t - \e,0) -- (3 * \t - \y,\z) -- (3 * \t,0) -- (3 * \t + 2 * \y,\z) -- (3 * \t + \e,0) --
      (4 * \t - \e,0) -- (4 * \t - \e /2,\z) -- (4 * \t,0) -- (4 * \t + 2 * \y,\z) -- (4 * \t + \e,0) -- (pb4) -- (pb3) -- (pb2) -- (pb1) -- (\t - 1.5 * \e,0) ;

      \node[color=blue,point] at (\t - 3 * \y,\z) {} ;
      \node[color=blue,point] at (2 * \t - \y,\z) {} ;
      \node[color=blue,point] at (3 * \t - \y,\z) {} ;
      \node[color=blue,point] at (4 * \t - \e /2,\z) {} ;

      \node at (\t - 3 * \y,\z - 0.25) {$p_1$} ;
      \node at (2 * \t - \y,\z - 0.25) {$p_2$} ;
      \node at (3 * \t - \y,\z - 0.25) {$p_3$} ;
      \node at (4 * \t - \e /2,\z- 0.25) {$p_4$} ;

      \node[color=black!20!green,point] at (\t - 2 * \y,\z) {} ;
      \node[color=black!20!green,point] at (\t - 0 * \y + 3 * \e * \y,\z) {} ;
      \node[color=black!20!green,point] at (2 * \t + 1.5 * \e * \y,\z) {} ;
      \node[color=black!20!green,point] at (3 * \t + 2 * \y,\z) {} ;
      \node[color=black!20!green,point] at (4 * \t + 2 * \y,\z) {} ;
    \end{scope}
  \end{tikzpicture}
  \caption{The desired interpretation of unary relations $A$ in blue, $A'$ in green, $B$ in red, and $C$, the remaining 13 vertices.}
  \label{fig:unary-interp-p}
\end{figure}

\medskip

$(ii)$. We now want to build every graph $\Pi(G) \in \mathcal P$ by applying a transduction $\mathsf T_2$ to a bipartite subcubic graph.
We just need to observe that $\Pi(G)$ deprived of the clique on $D \cup Q$ is itself a bipartite subcubic graph, say $H$.
The transduction $\mathsf T_2$ inputs $H$ and adds one unary relation $U$, which in the correct run corresponds to $D \cup Q$.
The unary relation $\nu_2$ does not change the domain, while the binary relation $\varphi_2(x,y)$ adds all edges between pairs in $U$. 
\end{proof}

\subsection{Twin-width is $\alpha$-bounded in visibility graphs of simple polygons}

Here we bound the twin-width of visibility graphs of simply polygons by a function of their independence number.

\begin{theorem}\label{thm:alpha-polygon}
  Twin-width is $\alpha$-bounded in visibility graphs of simple polygons, and effectively $\alpha$-bounded if a geometric representation is given.
\end{theorem}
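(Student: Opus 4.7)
The plan is to order $V(G)$ by a counter-clockwise traversal of the boundary of $\mathcal P$, obtaining an order $\prec$, and to bound $\gr(\adj{\prec}{G})$ by a computable function of $\alpha(G)$. By~\cref{thm:mixed-number-gen2}, this bounds $\tww(G)$ accordingly.

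The main geometric ingredient I will prove first is a forbidden pattern in $\adj{\prec}{G}$: whenever $i \neq j$ and $(i,j) = 0$ (i.e., the vertices $p_i$ and $p_j$ do not see each other), the four cardinal directions around position $(i,j)$ in $\adj{\prec}{G}$ cannot all host a 1 entry. Equivalently, we cannot simultaneously have $u_1 \prec p_i \prec u_2$ both seeing $p_j$, and $v_1 \prec p_j \prec v_2$ both seen by $p_i$. This should follow from a short argument about the visibility polygons of $p_i$ and $p_j$: the simple-connectivity of $\mathcal P$ forces those visibility regions to wrap around any obstruction between $p_i$ and $p_j$ in a way that is incompatible with the four witnesses above.

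Assuming $\gr(\adj{\prec}{G}) \geq N$ with $N$ large, \cref{thm:canonicaler} either returns a contraction sequence (done) or exhibits a large off-diagonal $\mat{k}{s}$ submatrix for some $s \in \{0,1,\uparrow,\downarrow,\leftarrow,\rightarrow\}$, with $k$ growing with $N$. The case $s = 1$ is dispatched immediately: every 0 in $\mat{k}{1}$ with $k \geq 2$ has 1 entries in all four directions within the submatrix itself, violating the forbidden pattern. For each remaining value of $s$, the plan is to combine the forbidden pattern with Ramsey and Erd\H{o}s--Szekeres arguments. Every off-diagonal 0 entry in $\mat{k}{s}$ must, by the forbidden pattern, be deficient in at least one of the four directions of $\adj{\prec}{G}$, and by pigeonhole a constant fraction of those 0s share the same missing direction. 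Fixing, say, the ``no 1 above in column'' case, I would read off a coherent family of non-edges and, via Erd\H{o}s--Szekeres (\cref{thm:erdos-szekeres}) applied to the boundary positions of the involved vertices, extract a monotone chain of pairwise non-visible vertices of size growing with $k$ --- the desired independent set.

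The hard part will be the careful case analysis for $s \in \{0, \uparrow, \downarrow, \leftarrow, \rightarrow\}$: the specific 0/1 shape of each $\mat{k}{s}$ dictates which extraction works, and one must ensure that the resulting independent set is a genuine set of pairwise non-adjacent vertices in $G$ rather than a combinatorial artifact of the matrix. Once every case is addressed, effectiveness is automatic --- computing $\prec$ from a given representation, running the algorithmic version of \cref{thm:canonicaler}, and performing the Ramsey extraction are all algorithmic --- which yields $\tww \sqsubseteq_{\text{eff}} \alpha$ on visibility graphs of simple polygons when a geometric representation is provided.
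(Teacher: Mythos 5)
Your setup coincides with the paper's: order $V(G)$ along the boundary of $\mathcal P$, prove that no off-diagonal $0$ entry of $\adj{\prec}{G}$ is surrounded by $1$ entries in all four directions (the paper's ``double-X property''), run \cref{thm:canonicaler}, and dispose of $s=1$ because a large $\mat{k}{1}$ contains such a surrounded $0$. Up to there the proposal is sound. The genuine gap is the plan for the five remaining patterns, and it is not just ``a hard case analysis'': the tools you list (forbidden pattern, pigeonhole, Ramsey, Erd\H{o}s--Szekeres on positions in $\prec$) cannot produce the independent set. Inside $\mat{k}{0}$ and the four directional patterns the forbidden pattern is vacuous (a $0$ of $\mat{k}{0}$ has at most one $1$ in its row and in its column; a $0$ of $\mat{k}{\uparrow}$ by definition has no $1$ below it), so all you retain is that each non-edge $(a,b)$ with $a\prec b$, $a\in A\prec B\ni b$, lacks a witness in one of four directions of the \emph{full} matrix. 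Two of those outcomes (``$a$ has no neighbor before $b$'', ``$b$ has no neighbor after $a$'') would indeed yield an independent set inside $A$ or $B$, but the pigeonhole can land entirely on the other two. Concretely, take cliques $A=\{a_1\prec\cdots\prec a_p\}$ and $B=\{b_p\prec\cdots\prec b_1\}$ with $A\prec B$ and $a_ib_j\in E$ iff $i=j$: this ordered graph satisfies the forbidden pattern everywhere, every non-edge is deficient only in an unproductive direction, it realizes exactly the anti-diagonal induced matching that a large $\mat{k}{0}$ gives you, and its independence number is $2$. So no purely combinatorial extraction can close these cases.

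What is missing is a second geometric input. The paper first distills from the universal pattern an anti-diagonal induced matching or half-graph $a_1\prec\cdots\prec a_p\prec b_p\prec\cdots\prec b_1$, then applies Ramsey twice: unless one side already contains an independent set of size exceeding $\alpha(G)$, it finds $4$-cliques $\{\alpha_1,\dots,\alpha_4\}$ and $\{\beta_1,\dots,\beta_4\}$ on matching indices. It then uses that a clique of a polygon visibility graph is in convex position, that the boundary order forces $\alpha_2\alpha_3\beta_3\beta_2$ to be a convex, non-self-intersecting quadrilateral all of whose sides are visibility edges, and hence that both diagonals $\alpha_2\beta_3$ and $\alpha_3\beta_2$ are edges of $G$ --- contradicting the matching/half-graph between the two sides. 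Some such use of the polygon beyond the double-X property is indispensable; you should add this (or an equivalent geometric lemma) to make the five remaining cases go through.
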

\begin{proof}
  Let $\mathcal P$ be a simple polygon, and $G$ its visibility graph.
  We identify a vertex of $G$ with its corresponding geometric vertex of $\mathcal P$.
  Let $\prec$ be the total order whose successor relation is a Hamiltonian path of the boundary of $\mathcal P$. 
  Visibility graphs of simple polygons satisfy the double-X property:\footnote{The name is chosen since the \emph{Order Claim} of 1.5D terrains is sometimes called the \emph{X property}.}
  If $b' \prec a \prec b \prec c \prec d \prec c'$, and $ac$, $bd$, $ac'$, $db'$ are all in $E(G)$, then $ad$ is also an edge of $G$ (see~\cref{fig:double-X}).

  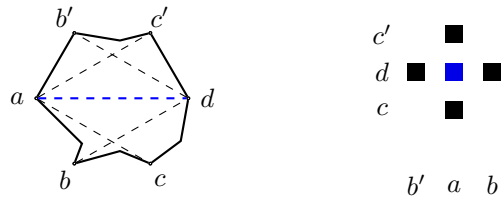
\begin{figure}[h!]
    \centering
    \begin{tikzpicture}
     \foreach \a/\l in {180/a,240/b,300/c,0/d,60/{c'},120/{b'}}{ 
        \node (\l) [draw,circle,inner sep=0.01cm] at (\a:1) {} ;
      }
     \draw[thick] (a) -- (-0.4,-0.6) -- (b) -- (0.1,-0.7) -- (c) --++(0.4,0.3) -- (d) -- (c') --++(-0.4,-0.1) -- (b') -- (a) ;
      \foreach \a/\l in {180/a,240/b,300/c,0/d,60/{c'},120/{b'}}{ 
        \node (\l) [draw,circle,inner sep=0.01cm] at (\a:1) {} ;
        \node at (\a:1.25) {$\l$} ;
      }
      \foreach \i/\j in {a/c,b/d,a/{c'},d/{b'}}{
        \draw[thin,dashed] (\i) -- (\j) ;
      }
      \draw[blue,thick,dashed] (a) -- (d) ;

      \begin{scope}[xshift=4cm, yshift=-0.65cm]
        \foreach \i/\j\l in {0/-0.5/{b'}, 0.5/-0.52/a, 1/-0.5/b, -0.45/0.5/c, -0.45/1/d, -0.45/1.5/{c'}}{
          \node at (\i,\j) {$\l$} ;
        }
        \foreach \i/\j in {0/1,0.5/0.5,0.5/1.5,1/1}{
          \node[fill] at (\i,\j) {} ;
        }
        \node[fill,black!10!blue] at (0.5,1) {} ;
      \end{scope}
    \end{tikzpicture}
    \caption{Left: The double-X property. Right: What it implies in the adjacency matrix ordered along the boundary of the polygon; the four 1 entries in black force the central one in blue.}
    \label{fig:double-X}
  \end{figure}

  Indeed, the boundary order imposes that the line segments $ac$ and $bd$ cross, and that the line segments $ac'$ and $b'd$ cross.
  Let us call $p$ and $q$ these two intersection points.
  It can be observed that $p$ and $q$ have to be on different sides of the line extending $ad$, and that the interior of $apdq$ cannot be intersected by the boundary of $\mathcal P$.
  Hence $ad$ has to be an edge.
  
  This excludes that the complement of a(n arbitrary) permutation is realized by the adjacency matrix of $G$ ordered along $\prec$ (see~\cref{fig:forb-LNk-polygon}).
  \begin{figure}[h!]
  \centering
    \begin{tikzpicture}[scale=.235]

   \foreach \symb/\b/\xsh/\ysh in {{==}/0/-30/-10, {>}/0/-10/-10,{<}/0/0/-10,{<}/1/10/-10,{>}/1/20/-10}{
   \begin{scope}[xshift=\xsh cm,yshift=\ysh cm]     
    \foreach \i in {0,...,8}{
      \foreach \j in {0,...,8}{
        \pgfmathsetmacro{\ip}{\i+1}
        \pgfmathsetmacro{\jp}{\j+1}
        \pgfmathsetmacro{\col}{ifthenelse(\i == mod(\j,3)*3+floor(\j/3),"black",ifthenelse(\b==1,ifthenelse(\i \symb mod(\j,3)*3+floor(\j/3),"black","white"),ifthenelse(\j \symb mod(\i,3)*3+floor(\i/3),"black","white")))}
        \fill[\col] (\i,\j) -- (\i,\jp) -- (\ip,\jp) -- (\ip,\j) -- cycle;
      }
    }
    \draw[line width=0.75pt, scale=1, color=black!20!blue] (0, 0) grid (9, 9);
    \draw[line width=1.25pt, scale=3, color=black!10!yellow] (0, 0) grid (3, 3);
   \end{scope}
   }
   \foreach \symb/\b/\xsh/\ysh in {{!=}/0/-20/-10}{
   \begin{scope}[xshift=\xsh cm,yshift=\ysh cm]     
    \foreach \i in {0,...,8}{
      \foreach \j in {0,...,8}{
        \pgfmathsetmacro{\ip}{\i+1}
        \pgfmathsetmacro{\jp}{\j+1}
        \pgfmathsetmacro{\col}{ifthenelse(\i == mod(\j,3)*3+floor(\j/3),"white","black")}
        \fill[\col] (\i,\j) -- (\i,\jp) -- (\ip,\jp) -- (\ip,\j) -- cycle;
      }
    }
    \draw[line width=0.75pt, scale=1, color=black!20!blue] (0, 0) grid (9, 9);
    \draw[line width=1.25pt, scale=3, color=black!10!yellow] (0, 0) grid (3, 3);

    \draw[red,line width=2pt] (0,0) -- (9,9) ;
    \draw[red,line width=2pt] (0,9) -- (9,0) ;
   \end{scope}
   }
    \end{tikzpicture}
    \caption{The universal pattern forbidden by the double-X property.}
    \label{fig:forb-LNk-polygon}
\end{figure}
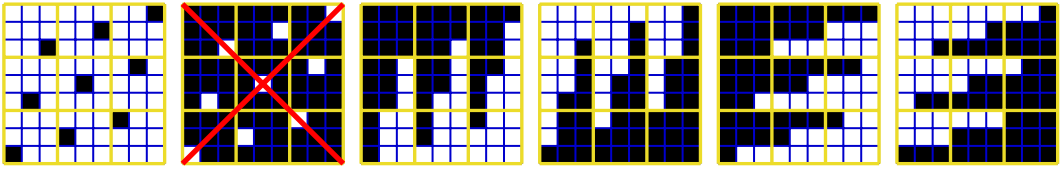

  We now upperbound the size of a universal pattern in $\adj{\prec}{G}$ (among the other five patterns) in terms of $\alpha(G)$, and conclude by~\cref{thm:canonicaler}.
  We will actually not need the universal pattern in its whole, but simply a decreasing subsequence of it.
  (This is made formal in the next paragraph, where we extract a large anti-diagonal induced matching or half-graph.)
  This is convenient since we can thus apply Ramsey's theorem while keeping the ``complexity'' of the initial structure. 

Let $p = \text{Ram}(\text{Ram}(4,\alpha(G)),\alpha(G))$, where $\text{Ram}$ is the function of~Ramsey's theorem (see~\cref{thm:Ramsey}).
Note that if the twin-width of $G$ is larger than a certain function of $p$, we can find in each of the five allowed universal patterns $2p$ vertices of $G$: $a_1 \prec a_2 \prec \ldots \prec a_{p-1} \prec a_p \prec b_p \prec b_{p-1} \prec \ldots \prec b_2 \prec b_1$ such that $a_ib_j \in E(G)$ if and only if $i = j$ (resp. $i \leqslant j$, resp.~$i \geqslant j$).
We denote $\{a_1, \ldots, a_p\}$ (resp.~$\{b_1, \ldots, b_p\}$) by $A$ (resp.~$B$).
We now work toward finding a contradiction.

Let $A' \subseteq A$ induce a clique in $G$ with $|A'|=\text{Ram}(\alpha(G),4)$.
Let $B'$ be the vertices of~$B$ with the same index as a vertex of $A'$, and let $B'' \subseteq B'$ induce a clique in $G$ of size~4.
Finally let $A''$ be the vertices in~$A'$ (or~$A$ for that matter) with same index as a vertex in~$B''$.
We relabel the eight vertices of $A'' \cup B''$ by $\alpha_1 \prec \alpha_2 \prec \alpha_3 \prec \alpha_4 \prec \beta_4 \prec \beta_3 \prec \beta_2 \prec \beta_1$.

First observe that, since they form a clique, $\alpha_1, \alpha_2, \alpha_3, \alpha_4$ are in convex position.
For $\alpha_2\beta_2$ and $\alpha_3\beta_3$ to be in $E(G)$, the vertices $\beta_2$ and $\beta_3$ have to be in the convex (possibly infinite) region delimited by the line segment $\alpha_2 \alpha_3$, the ray starting at $\alpha_2$ and passing through $\alpha_1$, and the ray starting at $\alpha_3$ and passing through $\alpha_4$.
Since $\beta_2$ comes after $\beta_3$ in the boundary order, the quadrangle $\alpha_2 \alpha_3 \beta_3 \beta_2$ has to be non self-intersecting (otherwise $\alpha_2\beta_2$ and $\alpha_3\beta_3$ cannot both be edges, see left of~\cref{fig:visib-contradiction}).
We now claim that $\alpha_2 \alpha_3 \beta_3 \beta_2$ is a convex quadrangle.
Assume for the sake of contradiction that $\beta_2$ is in the interior of the triangle $\alpha_2 \alpha_3 \beta_3$ (this is without loss of generality).
As $\alpha_2\beta_2$ is an edge of $G$, the line segment $\alpha_2\beta_2$ cuts $\mathcal P$ into two simple polygons: $\mathcal P^-$ containing $\alpha_1$, and $\mathcal P^+$ containing $\alpha_3$.
Observe that no line segment starting at $\beta_3$ and fully contained in $\mathcal P$ can intersect $\mathcal P^- \setminus \{\beta_2\}$.
Indeed, since $\beta_2$ is in the interior of $\alpha_2 \alpha_3 \beta_3$, the ray starting at $\beta_3$ and passing through $\beta_2$ remains entirely within $\mathcal P^+$.
However $\alpha_1$ is in $\mathcal P^-$.
Therefore $\beta_3$ and $\beta_1$ cannot see each other; a contradiction (see middle of~\cref{fig:visib-contradiction}).
\begin{figure}[h!]
  \centering
  \begin{tikzpicture}[scale=.85,point/.style={fill,circle,inner sep=0.03cm}]
    \foreach \i/\j/\l/\t/\oh/\ov in {0/3/{a1}/{\alpha_1}/{-0.25}/0, 0.5/2/{a2}/{\alpha_2}/{-0.25}/0,  1.3/1.3/{a3}/{\alpha_3}/0/{-0.25}, 2.1/1/{a4}/{\alpha_4}/0/{-0.25},
    2.5/1.4/{b2}/{\beta_2}/0.25/0, 4/2.2/{b3}/{\beta_3}/{0.25}/0}{
      \node[point] (\l) at (\i,\j) {} ;
      \node at (\i + \oh,\j + \ov) {$\t$} ;
    }
    \draw[thin] (a1) --++(-0.2,-0.5) node (x) {} -- (a2) --++(0.1,-1.2) -- (a3) -- (a4) ;
    \draw[thin] (b3) --++(-0.4,0.5) --++(-0.4,0.1) -- (b2) ;
    \draw[dotted,thin] (a4) --++ (2,0.3) -- (b3) ;
    \draw[dotted,thin] (b2) --++ (-0.3,1.3) node (y) {} --++(-0.3,0.1) node (z) {} -- (a1) ;
    \draw[dashed] (a2) -- (b2) ;
    \draw[dashed,red] (a3) -- (b3) ;

    \begin{scope}[xshift=5.5cm]
    \foreach \i/\j/\l/\t/\oh/\ov in {0/3/{a1}/{\alpha_1}/{-0.25}/0, 0.5/2/{a2}/{\alpha_2}/{-0.25}/0,  1.3/1.3/{a3}/{\alpha_3}/0/{-0.25}, 2.1/1/{a4}/{\alpha_4}/0/{-0.25},
    2.5/1.85/{b2}/{\beta_2}/0/0.25, 4/2.2/{b3}/{\beta_3}/{0.25}/0}{
      \node[point] (\l) at (\i,\j) {} ;
      \node at (\i + \oh,\j + \ov) {$\t$} ;
    }
    \draw[thin] (a1) --++(-0.2,-0.5) node (x) {} -- (a2) --++(0.1,-1.2) -- (a3) -- (a4) ;
    \draw[thin] (b3) --++(-0.4,0.5) --++(-0.4,0.1) -- (b2) ;
    \draw[dotted,thin] (a4) --++ (2,0.3) -- (b3) ;
    \draw[dotted,thin] (b2) --++ (-0.3,1.3) node (y) {} --++(-0.3,0.1) node (z) {} -- (a1) ;
    \draw[dashed] (a2) -- (b2) ;
    \draw[dashed] (a3) -- (b3) ;

    \fill[red,opacity=0.2] (0,3) -- (-0.2,2.5) -- (0.5,2) -- (2.5,1.85) -- (2.2,3.15) -- (1.9,3.25) -- cycle ;
    \node at (1.3,2.65) {$\mathcal P^-$} ;
    \fill[blue,opacity=0.2] (0.5,2) -- (0.6,0.8) -- (1.3,1.3) -- (2.1,1) -- (4.1,1.3) -- (4,2.2) -- (3.6,2.7) -- (3.2,2.8) -- (2.5,1.85) -- cycle ;
    \node at (2.65,1.4) {$\mathcal P^+$} ;
    \end{scope}

    \begin{scope}[xshift=11cm]
    \foreach \i/\j/\l/\t/\oh/\ov in {0/3/{a1}/{\alpha_1}/{-0.25}/0, 0.5/2/{a2}/{\alpha_2}/{-0.25}/0,  1.3/1.3/{a3}/{\alpha_3}/0/{-0.25}, 2.1/1/{a4}/{\alpha_4}/0/{-0.25},
    2.5/2.8/{b2}/{\beta_2}/0/0.25, 4/2.2/{b3}/{\beta_3}/{0.25}/0}{
      \node[point] (\l) at (\i,\j) {} ;
      \node at (\i + \oh,\j + \ov) {$\t$} ;
    }
    \draw[thin] (a1) --++(-0.2,-0.5) node (x) {} -- (a2) --++(0.1,-1.2) -- (a3) -- (a4) ;
    \draw[thin] (b3) --++(-0.4,0.5) --++(-0.4,0.1) -- (b2) ;
    \draw[dotted,thin] (a4) --++ (2,0.3) -- (b3) ;
    \draw[dotted,thin] (b2) --++ (-1.5,-0.2) node (y) {} -- (a1) ;
    \draw[dashed] (a2) -- (b2) ;
    \draw[dashed] (a3) -- (b3) ;
    \draw[dashed] (a2) -- (b3) ;
    \draw[dashed] (a3) -- (b2) ;
    \draw[dashed] (a2) -- (a3) ;
    \draw[dashed] (b2) -- (b3) ;
    \fill[blue,opacity=0.2] (0.5,2) -- (1.3,1.3) -- (4,2.2) -- (2.5,2.8) -- cycle ;
    \end{scope}
  \end{tikzpicture}
  \caption{Left: If $\alpha_2 \alpha_3 \beta_3 \beta_2$ is self-intersecting, at least one edge of $\alpha_2\beta_2, \alpha_3\beta_3$ (here $\alpha_3 \beta_3$) is missing from $G$. Center: If $\beta_2$ lies in the interior of $\alpha_2 \alpha_3 \beta_3$, vertices $\beta_1$ (in $\mathcal P^+$) and $\beta_3$ cannot see each other without blocking the edge $\alpha_2\beta_2$. Right: If $\alpha_2 \alpha_3 \beta_3 \beta_2$ is in convex position (in this order), then both $\alpha_2\beta_3$ and $\alpha_2\beta_3$ are edges; another contradiction.}
  \label{fig:visib-contradiction}
\end{figure}
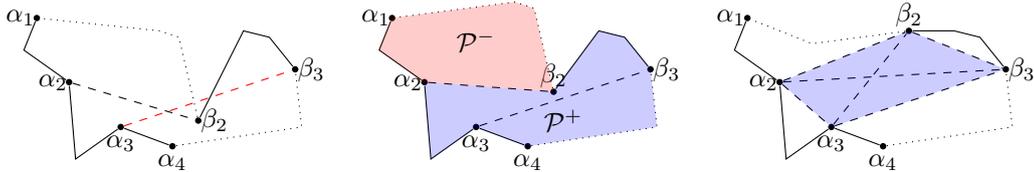

Since the four sides of the convex, non self-intersecting quadrangle $\alpha_2 \alpha_3 \beta_3 \beta_2$ are edges of~$G$, the two diagonals $\alpha_2\beta_3$ and $\alpha_3\beta_2$ are also edges (since $\mathcal P$ cannot intersect the interior of $\alpha_2 \alpha_3 \beta_3 \beta_2$, see right of~\cref{fig:visib-contradiction}); a contradiction to the induced matching or half-graph in between $A$ and $B$. 
\end{proof}

\begin{theorem}\label{thm:fpt-alg-polygon}
  \textsc{$k$-Independent Set} is FPT in visibility graphs of simple polygons given with a geometric representation.
\end{theorem}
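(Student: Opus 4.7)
The plan is to invoke the win-win framework (\cref{thm:win-win}) directly. The parameter $p = \alpha$ is effectively FO definable, as the first-order sentences $\varphi_k$ expressing ``there is an independent set of size $k$'' are given explicitly at the beginning of \cref{subsec:prelim-fmt} and are easily computable from $k$. So the only thing to check is that twin-width is effectively $\alpha$-bounded on the class $\mathcal C$ of visibility graphs of simple polygons \emph{given with a geometric representation}. This is exactly the content of \cref{thm:alpha-polygon}.

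More concretely, first I would recall from \cref{thm:alpha-polygon} that, given a simple polygon $\mathcal P$ with visibility graph $G$, one can compute in polynomial time the boundary ordering $\prec$ of $V(G)$, and the resulting adjacency matrix $\adj{\prec}{G}$ contains no universal pattern $\mat{k}{s}$ of size depending only on $\alpha(G)$ (the argument via Ramsey and the double-X property rules out all six patterns once $k$ exceeds an explicit function of $\alpha(G)$). By \cref{thm:canonicaler}, running the approximation algorithm on $\adj{\prec}{G}$ with parameter $k$ chosen to be slightly above this threshold must return a $g(k)$-contraction sequence in time $f(k) \cdot |V(G)|^{O(1)}$ rather than a large universal pattern. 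Hence there are computable functions $g_\alpha, h_\alpha$ and an algorithm that, on input $\mathcal P$, produces a $g_\alpha(\alpha(G))$-sequence of $G$ in time $h_\alpha(\alpha(G)) \cdot |V(G)|^{O(1)}$, establishing $\tww \sqsubseteq^{\mathcal C}_{\text{eff}} \alpha$.

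Now apply \cref{thm:win-win} with $p = \alpha$ and this class $\mathcal C$: deciding whether $\alpha(G) \geqslant k$ takes time $f(\alpha(G)) \cdot |V(G)|^{O(1)}$ for some computable function $f$. Since in the \textsc{$k$-Independent Set} problem we only care whether $\alpha(G) \geqslant k$, the $f(\alpha(G))$ dependency can be capped at $f(k)$: if the approximation algorithm for twin-width, called with a threshold derived from $k$, fails to produce a contraction sequence, we already know $\alpha(G) > k$ and answer YES; otherwise we run FO model checking (\cref{thm:fomc}) with the sentence $\varphi_k$ for \textsc{$k$-Independent Set} on the resulting bounded-twin-width contraction sequence. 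The total running time is $f(k) \cdot |V(G)|^{O(1)}$, which is FPT in $k$ as desired.

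There is essentially no obstacle: the heavy lifting has been done in \cref{thm:alpha-polygon} and \cref{thm:win-win}. The only care needed is to observe that the cutoff trick above lets us bound the parameter dependency by a function of $k$ rather than of $\alpha(G)$, which is standard for FPT win-wins and was already implicit in the proof of \cref{thm:win-win}.
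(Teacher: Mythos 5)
Your proposal is correct and follows exactly the paper's route: the paper proves this theorem in one line by combining \cref{thm:alpha-polygon} (twin-width is effectively $\alpha$-bounded on visibility graphs of simple polygons) with the win-win framework of \cref{thm:win-win}. Your additional remark about capping the parameter dependency at $f(k)$ via the cutoff trick is a sensible clarification of a point already implicit in the proof of \cref{thm:win-win}, not a deviation from the paper's argument.
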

\begin{proof}
  This is a consequence of~\cref{thm:alpha-polygon,thm:win-win}, and holds more generally if the input graph only comes with a Hamiltonian path corresponding to the boundary in one of its representations.
\end{proof}

\section{Open questions}

We conclude with some open problems.
We wonder whether the classes of unit segment graphs, visibility graphs of 1.5D terrains, and all tournaments are delineated.

Our way of refuting delineation was to show transduction equivalence to subcubic graphs.
We ask whether this method is complete for hereditary classes.
\begin{conjecture}
  A hereditary class is not delineated if and only if it is transduction equivalent to all subcubic graphs.
\end{conjecture}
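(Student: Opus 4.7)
The backward direction of the conjecture is essentially a strengthening of \cref{lem:not-delineated} to the class level. If a hereditary class $\mathcal C$ is transduction equivalent to $\mathcal G_{\leqslant 3}$, I would argue as follows. Since $\mathcal C$ transduces $\mathcal G_{\leqslant 3}$ and the latter has unbounded twin-width (\cref{thm:unbounded-tww}), \cref{thm:transduction} implies that $\mathcal C$ itself has unbounded twin-width. Simultaneously, because $\mathcal G_{\leqslant 3}$ is monadically stable and transduces $\mathcal C$, the class $\mathcal C$ is monadically stable (hence monadically dependent). Viewing $\mathcal C$ as the hereditary closure of itself then yields a monadically dependent class of unbounded twin-width, witnessing non-delineation.

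For the forward direction, suppose $\mathcal C$ is hereditary and not delineated. Then there is a hereditary subclass $\mathcal D \subseteq \mathcal C$ that is monadically dependent yet of unbounded twin-width. The plan is to build the two required transductions separately. For $\mathcal C \leadsto \mathcal G_{\leqslant 3}$, one would exploit the unbounded twin-width of $\mathcal D$: iterating \cref{thm:canonicaler-gen} and pigeonholing over the six shapes $s \in \{0,1,\uparrow,\downarrow,\leftarrow,\rightarrow\}$, one obtains arbitrarily large semi-induced universal patterns $\mat{k}{s}$ in members of $\mathcal D$. The goal is then to upgrade such patterns into an encoding of $2$-subdivided augmented bicliques (in the spirit of \cref{fig:ap-unit-seg-utww} and the construction behind the third item of \cref{thm:unbounded-tww}), from which $\mathcal G_{\leqslant 3}$ is readily transducible via additional colourings and projection. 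Monadic dependence of $\mathcal D$, via \cref{thm:baldwin-shelah} and \cref{thm:tk-mi}, should rule out the ``too rich'' occurrences that would force encodings of \emph{all} graphs, while leaving enough structure to extract a subcubic one.

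For the converse transduction $\mathcal G_{\leqslant 3} \leadsto \mathcal C$, one must show that $\mathcal C$ as a whole is monadically dependent (even monadically stable, since the transducer is). My intended angle is a contradiction: if $\mathcal C$ transduced all graphs, then composing this transduction with the pattern-extraction inside $\mathcal D$ --~together with the unbounded-twin-width structure provided by the first half~-- should allow cooking up arbitrarily large semi-induced generalised transversal pairs $T_{k,\ell}$ within the enriched $\mathcal D$, falling foul of \cref{thm:tk-mi}. Composition of transductions then yields a transduction of $\mathcal C$ from $\mathcal G_{\leqslant 3}$ itself.

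The main obstacle is the first transduction, $\mathcal C \leadsto \mathcal G_{\leqslant 3}$. It amounts to a structural dichotomy: inside any monadically dependent hereditary class of unbounded twin-width, the canonical source of high twin-width is (up to transduction) a subcubic-graph construction. This statement is intimately tied to \cref{conj:fomc} and to open structural questions on monadically dependent hereditary classes; I expect that a full proof will require new combinatorial or model-theoretic input beyond the universal-pattern machinery of \cref{thm:canonicaler-gen}, and it seems unlikely that the conjecture can be settled in isolation from the broader classification program for FPT FO model checking.
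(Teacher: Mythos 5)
This statement is a \emph{conjecture}, stated in the paper's ``Open questions'' section; the paper offers no proof of it, and your proposal does not constitute one either. Your backward direction is correct, but it is nothing more than \cref{lem:not-delineated} instantiated with the subclass being $\mathcal C$ itself (transducing $\mathcal G_{\leqslant 3}$ gives unbounded twin-width via \cref{thm:transduction,thm:unbounded-tww}; being transducible from $\mathcal G_{\leqslant 3}$ gives monadic stability). The forward direction is the entire content of the conjecture, and your sketch of it is not a proof: the universal patterns of \cref{thm:canonicaler-gen} live in an adjacency matrix along an order that is \emph{not} part of the structure and need not be FO-definable, so no transduction onto $\mathcal G_{\leqslant 3}$ ``falls out'' of them; turning unbounded twin-width plus monadic dependence into a definable encoding of subcubic graphs is precisely the open problem, as you yourself concede in your last paragraph.

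There is also a concrete flaw in your plan for the converse transduction $\mathcal G_{\leqslant 3} \leadsto \mathcal C$. You propose to show that every non-delineated hereditary class $\mathcal C$ is monadically dependent (even monadically stable) by deriving a contradiction from $\mathcal C$ transducing all graphs. But the class of \emph{all} finite graphs is hereditary and is not delineated --- it contains the monadically stable, unbounded-twin-width class $\mathcal G_{\leqslant 3}$ as a hereditary subclass --- yet it obviously transduces all graphs and is not transducible from $\mathcal G_{\leqslant 3}$. So a non-delineated hereditary class need not be monadically dependent, and your intended contradiction cannot be reached. For the conjecture to be consistent with this example it must be read in the spirit of \cref{lem:not-delineated}, i.e., as asserting that a hereditary class is not delineated if and only if it \emph{admits a subclass} (or a hereditary subclass) transduction equivalent to $\mathcal G_{\leqslant 3}$; under either reading, your strategy for the second transduction has to be abandoned, not just refined. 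In short: the easy direction of your write-up is fine, the hard direction remains open, and the intermediate claim you rely on for it is false as stated.
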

This would in particular imply that \emph{monadically independent} can be equivalently replaced by \emph{monadically stable} in the definition of delineation.



\end{document}